\begin{document}

\newtheorem{rhp}{Riemann-Hilbert Problem}
\newtheorem{theorem}{Theorem}
\newtheorem{lemma}{Lemma}
\newtheorem{proposition}{Proposition}
\newtheorem{corollary}{Corollary}
\newtheorem{conjecture}{Conjecture}

\theoremstyle{definition}
\newtheorem{mydef}{Definition}

\newenvironment{remark}{\medskip$\triangleleft$\textit{Remark:}}{$\triangleright$\medskip}

\newcommand{\ii}{\mathrm{i}}
\newcommand{\ee}{\mathrm{e}}
\newcommand{\dd}{\mathrm{d}}

\title{Extreme Superposition:  Rogue Waves of Infinite Order and the Painlev\'e-III Hierarchy}
\author{Deniz Bilman}
\address{Department of Mathematics, University of Michigan, East Hall, 530 Church St., Ann Arbor, MI 48109}
\email{bilman@umich.edu}
\author{Liming Ling}
\address{Department of Mathematics, University of Michigan, East Hall, 530 Church St., Ann Arbor, MI 48109\footnote{Permanent address: Department of Mathematics, South China University of Technology, Guangzhou, China 510641.}}
\email{limingl@umich.edu}
\author{Peter D. Miller}
\address{Department of Mathematics, University of Michigan, East Hall, 530 Church St., Ann Arbor, MI 48109}
\email{millerpd@umich.edu}
\date{\today}

\begin{abstract}
We study the fundamental rogue wave solutions of the focusing nonlinear Schr\"odinger equation in the limit of large order.  Using a recently-proposed Riemann-Hilbert representation of the rogue wave solution of arbitrary order $k$, we establish the existence of a limiting profile of the rogue wave in the large-$k$ limit when the solution is viewed in appropriate rescaled variables capturing the near-field region where the solution has the largest amplitude.  The limiting profile is a new particular solution of the focusing nonlinear Schr\"odinger equation in the rescaled variables --- the rogue wave of infinite order --- which also satisfies ordinary differential equations with respect to space and time.  The spatial differential equations are identified with certain members of the Painlev\'e-III hierarchy.  We compute the far-field asymptotic behavior of the near-field limit solution and compare the asymptotic formul\ae\ with the exact solution with the help of numerical methods for solving Riemann-Hilbert problems.  In a certain transitional region for the asymptotics the near field limit function is described by a specific globally-defined tritronqu\'ee solution of the Painlev\'e-II equation.  These properties lead us to regard the rogue wave of infinite order as a new special function.  
\end{abstract}

\maketitle

\section{Introduction}
The focusing nonlinear Schr\"odinger equation in the form:
\begin{equation}
\ii\frac{\partial\psi}{\partial t} +\frac{1}{2}\frac{\partial^2\psi}{\partial x^2} + (|\psi|^2-1)\psi=0,\quad (x,t)\in\mathbb{R}^2
\label{eq:NLS}
\end{equation}
and subject to the boundary conditions $\psi(x,t)\to 1$ as $|x|\to\infty$ is a model for the study of spatially-localized perturbations of Stokes waves, i.e., uniform periodic wavetrains, in diverse physical systems where \eqref{eq:NLS} arises as a weakly-nonlinear complex amplitude equation.  The exact solution $\psi=\psi_0(x,t)\equiv 1$ consistent with these boundary conditions is called the \emph{background}, and it represents the unperturbed Stokes wave.  One exact solution representing a nontrivial perturbation of the background is the \emph{Peregrine solution} \cite{Peregrine83}
\begin{equation}
\psi=\psi_1(x,t):=1-4\frac{1+2\ii t}{1+4x^2+4t^2},
\end{equation}
which represents a disturbance localized near the origin in both space $x$ and time $t$.  The maximum amplitude of $\psi_1(x,t)$ occurs at the origin $(x,t)=(0,0)$ and has a value of three times the unit background amplitude. As such, Peregrine's solution $\psi_1(x,t)$ is a model for \emph{rogue waves}, i.e., large-amplitude spatio-temporally localized disturbances of a uniform background state.  In general rogue waves are of great interest because they are known to have caused damage to ships and they represent one of the basic modes of nonlinear saturation of the well-known modulational instability of the background $\psi_0(x,t)$.  The latter instability is sometimes called the Benjamin-Feir instability in the context of water waves \cite{KharifP03}.  

The focusing nonlinear Schr\"odinger equation \eqref{eq:NLS} is an integrable nonlinear equation, and it therefore comes with a nonlinear analogue of a linear superposition principle known as a \emph{B\"acklund transformation}.  B\"acklund transformations of solutions can be iterated, especially when the transformation is implemented at the level of the Lax pair eigenfunctions underlying the complete integrability via a so-called \emph{Darboux transformation}.  Iterated B\"acklund/Darboux transformations can produce a zoo of increasingly-complicated solutions of \eqref{eq:NLS}; in particular via a limiting technique known as a \emph{generalized Darboux transformation} \cite{GuoLL12} it is possible to iterate the transformation at the distinguished value of the spectral parameter that produces the Peregrine solution $\psi_1(x,t)$ from the background $\psi_0(x,t)$ producing ``higher-order'' rogue wave solutions of \eqref{eq:NLS}.  Such solutions can resemble multiple copies of the Peregrine solution centered at distant space-time points, but it is also possible to choose the auxiliary parameters introduced at each iteration to concentrate the disturbance near the origin (say).  Thus one arrives at a sequence of ``fundamental'' higher-order rogue wave solutions of \eqref{eq:NLS}, $\psi_k(x,t)$, $k=0,1,2,3,\dots$, in which the effect of nonlinear superposition is maximized in a sense.  These solutions are especially interesting in applications because the spatio-temporal concentration turns out to coincide with large amplitude.

Iterated Darboux transformations of a simple solution such as the background $\psi_0(x,t)$ have both an analytic character and an algebraic character, and the latter is especially popular because it leads to closed-form formul\ae\ in which $\psi_k(x,t)$ is expressed, say, in terms of determinants of matrices with simple entries.  For instance, the following algebraic characterization of $\psi_k(x,t)$  can be found in \cite{GuoLL12}.
Let quantities $F_\ell(x,t)$ and $G_\ell(x,t)$, $\ell\in\mathbb{Z}_{\ge 0}$, be defined by entire generating functions as follows:
\begin{equation}
\begin{split}
(1-\ii\lambda)\frac{\sin((x+\lambda t)\sqrt{\lambda^2+1})}{\sqrt{\lambda^2+1}}&=\sum_{\ell=0}^\infty
\left(\frac{1}{2}\ii\right)^\ell F_\ell(x,t)(\lambda-\ii)^\ell\\
\cos((x+\lambda t)\sqrt{\lambda^2+1})&=\sum_{\ell=0}^\infty
\left(\frac{1}{2}\ii\right)^\ell G_\ell(x,t)(\lambda-\ii)^\ell.
\end{split}
\label{eq:series-expansions}
\end{equation}
It is easy to see that the coefficients $F_\ell(x,t)$ and $G_\ell(x,t)$ are polynomials in $(x,t)$.  Define a $k\times k$ matrix $\mathbf{K}^{(k)}(x,t)$ by
\begin{equation}
K^{(k)}_{pq}(x,t):=\sum_{\mu=0}^{p-1}\sum_{\nu=0}^{q-1}\binom{\mu+\nu}{\mu}\left(F_{q-\nu-1}(x,t)^*F_{p-\mu-1}(x,t)+G_{q-\nu-1}(x,t)^*G_{p-\mu-1}(x,t)\right),\quad 1\le p,q\le k,
\end{equation}
and a $k\times k$ rank-one perturbation $\mathbf{H}^{(k)}(x,t)$ by 
\begin{equation}
H^{(k)}_{pq}(x,t):=-2\left(F_{p-1}(x,t)+G_{p-1}(x,t)\right)\left(F_{q-1}(x,t)^*-G_{q-1}(x,t)^*\right),\quad 1\le p,q\le k.
\end{equation}
We take the the following as a definition.
\begin{mydef}[Fundamental rogue waves]
The fundamental rogue wave solution of \eqref{eq:NLS} of order $k$ is 
\begin{equation}
\psi_k(x,t):=(-1)^k\frac{\det(\mathbf{K}^{(k)}(x,t)+\mathbf{H}^{(k)}(x,t))}{\det(\mathbf{K}^{(k)}(x,t))}.
\label{eq:psi-k-determinants}
\end{equation}
\label{def:rogue-wave}
\end{mydef}
In the Appendix, we show that $\det(\mathbf{K}^{(k)}(x,t))\neq 0$ for all $(x,t)\in\mathbb{R}^2$, so $\psi_k(x,t)$ is well-defined.  The square modulus also has a compact representation as
\begin{equation}
|\psi_k(x,t)|^2 = 1+\frac{\partial^2}{\partial x^2}\ln\det(\mathbf{K}^{(k)}(x,t)).
\end{equation}
The latter equation shows that $\det(\mathbf{K}^{(k)}(x,t))$ is a ``$\tau$-function'' for the fundamental rogue wave solutions.  
We now describe the same solution $\psi_k(x,t)$ from a more analytical perspective.
Let $\Sigma_\mathrm{c}$ denote the vertical line segment connecting the points $\pm\ii$, with upward orientation.  Let $\rho(\lambda)$ be the function analytic for $\lambda\in\mathbb{C}\setminus\Sigma_\mathrm{c}$ satisfying $\rho(\lambda)^2=\lambda^2+1$ and $\rho(\lambda)=\lambda+O(\lambda^{-1})$ as $\lambda\to\infty$.  Let $f(\lambda)$ be the function analytic for $\lambda\in\mathbb{C}\setminus\Sigma_\mathrm{c}$ that satisfies $f(\lambda)^2=(\lambda+\rho(\lambda))/(2\rho(\lambda))$ and $f(\lambda)\to 1$ as $\lambda\to\infty$.  Let $\mathbf{E}(\lambda)$ denote the matrix function defined for $\lambda\in\mathbb{C}\setminus\Sigma_\mathrm{c}$ by
\begin{equation}
\mathbf{E}(\lambda):= f(\lambda)\begin{bmatrix}1 & \ii (\lambda-\rho(\lambda))\\\ii(\lambda-\rho(\lambda)) & 1\end{bmatrix},\quad\lambda\in\mathbb{C}\setminus\Sigma_\mathrm{c}.
\end{equation}
This matrix is analytic in its domain of definition and has unit determinant.  We define the constant orthogonal matrix $\mathbf{Q}$ by
\begin{equation}
\mathbf{Q}:=\frac{1}{\sqrt{2}}\begin{bmatrix}1 & -1\\1 & 1\end{bmatrix},\quad\mathbf{Q}^{-1}=\mathbf{Q}^\top,\quad\det(\mathbf{Q})=1.
\label{eq:Q-define}
\end{equation}
Finally, let $\Sigma_\circ$ denote a clockwise-oriented circular contour centered at the origin and having radius greater than $1$.  
In \cite{BilmanM17} the following Riemann-Hilbert problem was proposed as an alternative characterization of the rogue wave solution of order $k$.  Here and below, we use subscripts $+$/$-$ to refer to boundary values taken on an oriented jump contour from the left/right.  We also make frequent use of the Pauli spin matrices:
\begin{equation}
\sigma_1:=\begin{bmatrix}0&1\\1&0\end{bmatrix},\quad\sigma_2:=\begin{bmatrix}0 & -\ii\\\ii&0\end{bmatrix},\quad\text{and}\quad
\sigma_3:=\begin{bmatrix}1&0\\0&-1\end{bmatrix}.
\end{equation}
\begin{rhp}[Rogue wave of order $k$]
Let $(x,t)\in\mathbb{R}^2$ be arbitrary parameters, and let $k\in\mathbb{Z}_{\ge 0}$.  Find a $2\times 2$ matrix $\mathbf{M}^{(k)}(\lambda;x,t)$ with the following properties:
\begin{itemize}
\item[]\textbf{Analyticity:}  $\mathbf{M}^{(k)}(\lambda;x,t)$ is analytic in $\lambda$ for $\lambda\in\mathbb{C}\setminus(\Sigma_\circ\cup\Sigma_\mathrm{c})$, and it takes continuous boundary values on $\Sigma_\circ\cup\Sigma_\mathrm{c}$.
\item[]\textbf{Jump conditions:}  The boundary values on the jump contour $\Sigma_\circ\cup\Sigma_\mathrm{c}$ are related as follows:
\begin{equation}
\mathbf{M}_+^{(k)}(\lambda;x,t)=\mathbf{M}_-^{(k)}(\lambda;x,t)\ee^{2\ii\rho_+(\lambda)(x+\lambda t)\sigma_3},\quad \lambda\in\Sigma_\mathrm{c},
\label{eq:jump-cut}
\end{equation}
and if $k=2n$, $n\in\mathbb{Z}_{\ge 0}$,
\begin{equation}
\mathbf{M}_+^{(k)}(\lambda;x,t)=\mathbf{M}_-^{(k)}(\lambda;x,t)\ee^{-\ii\rho(\lambda)(x+\lambda t)\sigma_3}\mathbf{Q}\left(\frac{\lambda-\ii}{\lambda+\ii}\right)^{n\sigma_3}\mathbf{Q}^{-1}\mathbf{E}(\lambda)\ee^{\ii\rho(\lambda)(x+\lambda t)\sigma_3},\quad
\lambda\in\Sigma_\circ
\end{equation}
while if instead $k=2n-1$, $n\in\mathbb{Z}_{>0}$,
\begin{equation}
\mathbf{M}_+^{(k)}(\lambda;x,t)=\mathbf{M}_-^{(k)}(\lambda;x,t)\ee^{-\ii\rho(\lambda)(x+\lambda t)\sigma_3}\mathbf{Q}
\left(\frac{\lambda+\ii}{\lambda-\ii}\right)^{n\sigma_3}\mathbf{Q}^{-1}\mathbf{E}(\lambda)\ee^{\ii\rho(\lambda)(x+\lambda t)\sigma_3},\quad\lambda\in\Sigma_\circ.
\end{equation}
\item[]\textbf{Normalization:}  $\mathbf{M}^{(k)}(\lambda;x,t)\to\mathbb{I}$ as $\lambda\to\infty$. 
\end{itemize}
\label{rhp:rogue-wave}
\end{rhp}
It turns out (cf., Proposition~\ref{prop:Equivalence} below) that the rogue wave solution of order $k$ is given in terms of the solution of this problem by the formula
\begin{equation}
\psi(x,t)=\psi_{k}(x,t):=2\ii\lim_{\lambda\to\infty}\lambda M^{(k)}_{12}(\lambda;x,t),\quad k\in\mathbb{Z}_{\ge 0}.
\label{eq:rogue-wave-recover}
\end{equation}
The rogue wave of order $k=0$ coincides with the background solution.  Indeed, if $k=0$, then the solution of Riemann-Hilbert Problem~\ref{rhp:rogue-wave} is 
\begin{equation}
\mathbf{M}^{(0)}(\lambda;x,t)=\begin{cases}
\mathbf{E}(\lambda),&\quad\text{$\lambda$ exterior to $\Sigma_\circ$}\\
\mathbf{E}(\lambda)\ee^{-\ii\rho(\lambda)(x+\lambda t)\sigma_3}\mathbf{E}(\lambda)^{-1}\ee^{\ii\rho(\lambda)(x+\lambda t)\sigma_3},
&\quad\text{$\lambda$ in the interior of $\Sigma_\circ$}.
\end{cases}
\label{eq:M-zero}
\end{equation}
In verifying the jump condition \eqref{eq:jump-cut} one should make use of the fact that the first three factors appearing on the second line of the right-hand side in \eqref{eq:M-zero} combine, perhaps despite appearances, to form an entire function $\mathbf{U}(\lambda;x,t)$ of $\lambda$:
\begin{equation}
\mathbf{U}(\lambda;x,t):=\mathbf{E}(\lambda)\ee^{-\ii\rho(\lambda)(x+\lambda t)\sigma_3}\mathbf{E}(\lambda)^{-1}=(x+\lambda t)\frac{\sin(\theta)}{\theta}
\begin{bmatrix}-\ii\lambda & 1\\-1 & \ii\lambda\end{bmatrix} + \cos(\theta)\mathbb{I},\quad\theta:=\rho(\lambda)(x+\lambda t),
\label{eq:entire}
\end{equation}
noting that analyticity follows because $\sin(\theta)/\theta$ and $\cos(\theta)$ are even in $\theta$ and hence entire functions of $\theta^2=(\lambda^2+1)(x+\lambda t)^2$.  Applying the formula \eqref{eq:rogue-wave-recover} for $k=0$ then gives
\begin{equation}
\psi_0(x,t)=2\ii\lim_{\lambda\to\infty}\lambda M^{(0)}_{12}(\lambda;x,t)=2\ii\lim_{\lambda\to\infty}\lambda E_{12}(\lambda) = 1.
\label{eq:background-recover}
\end{equation}
In \cite{BilmanM17}, the conditions of Riemann-Hilbert Problem~\ref{rhp:rogue-wave} were translated into a finite-dimensional linear algebra problem via a suitable rational ansatz for the matrix $\mathbf{M}^{(k)}(\lambda;x,t)\mathbf{E}(\lambda)^{-1}$ in the exterior domain that builds in poles of order $n$ at $\lambda=\pm\ii$ (only visible upon analytic continuation into the interior domain through $\Sigma_\circ$).  The coefficients in the partial-fraction expansion of this rational ansatz are determined so that the jump condition produces a matrix in the interior domain that is consistent with the required analyticity and continuity at $\lambda=\pm\ii$.  It turns out that the Taylor coefficients of the entire function \eqref{eq:entire} at $\lambda=\pm\ii$ appear when these conditions are implemented, and in fact we can recognize these coefficients in the quantities $F_\ell(x,t)$ and $G_\ell(x,t)$ defined by \eqref{eq:series-expansions}.  Thus it is possible to show the following.
\begin{proposition}
The function $\psi_k(x,t)$ obtained from the solution of Riemann-Hilbert Problem~\ref{rhp:rogue-wave} by \eqref{eq:rogue-wave-recover} coincides with the determinantal formula \eqref{eq:psi-k-determinants}.
\label{prop:Equivalence}
\end{proposition}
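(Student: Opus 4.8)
The plan is to solve Riemann-Hilbert Problem~\ref{rhp:rogue-wave} explicitly by the rational ansatz indicated in the paragraph preceding the statement, and then to match the resulting formula for $\psi_k$ against the determinantal expression \eqref{eq:psi-k-determinants}. First I would set $\mathbf{N}(\lambda):=\mathbf{M}^{(k)}(\lambda;x,t)\mathbf{E}(\lambda)^{-1}$ in the region exterior to $\Sigma_\circ$. Because $\mathbf{E}(\lambda)$ is analytic and unimodular off $\Sigma_\mathrm{c}$ with $\mathbf{E}(\lambda)\to\mathbb{I}$ as $\lambda\to\infty$, the Analyticity and Normalization conditions guarantee that $\mathbf{N}$ is analytic in the exterior region and satisfies $\mathbf{N}(\lambda)\to\mathbb{I}$ at infinity. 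The jump condition on $\Sigma_\circ$ then says that the inward analytic continuation of $\mathbf{N}$ across $\Sigma_\circ$ is obtained by right multiplication by the jump matrix, which --- after using the identity $\mathbf{E}(\lambda)\ee^{-\ii\rho(\lambda)(x+\lambda t)\sigma_3}=\mathbf{U}(\lambda;x,t)\mathbf{E}(\lambda)$ that follows from \eqref{eq:entire} --- is assembled from the entire factor $\mathbf{U}(\lambda;x,t)$, the matrices $\mathbf{E}(\lambda)$ and $\mathbf{Q}$, and the scalar power $r^{\pm n\sigma_3}$ with $r:=(\lambda-\ii)/(\lambda+\ii)$. Since a direct computation gives $\mathbf{Q}\sigma_3\mathbf{Q}^{-1}=\sigma_1$, conjugation by $\mathbf{Q}$ turns the diagonal power into $\tfrac12(r^n+r^{-n})\mathbb{I}+\tfrac12(r^n-r^{-n})\sigma_1$, which has poles of order $n$ at both $\lambda=\pm\ii$. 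The only way the continued interior matrix can satisfy the Analyticity requirement there is for $\mathbf{N}$ itself to be rational with its sole singularities at $\lambda=\pm\ii$, so I would posit
\begin{equation*}
\mathbf{N}(\lambda)=\mathbb{I}+\sum_{j=1}^{n}\frac{\mathbf{A}_j}{(\lambda-\ii)^j}+\sum_{j=1}^{n}\frac{\mathbf{B}_j}{(\lambda+\ii)^j},
\end{equation*}
with matrix coefficients $\mathbf{A}_j=\mathbf{A}_j(x,t)$ and $\mathbf{B}_j=\mathbf{B}_j(x,t)$ to be determined.

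Next I would impose the Analyticity requirement at $\lambda=\pm\ii$ on the continued interior matrix by expanding the entire factor $\mathbf{U}(\lambda;x,t)$ in Taylor series about $\lambda=\ii$ and about $\lambda=-\ii$ and setting the principal part of the relevant product to zero, producing a finite system of linear equations for the entries of the $\mathbf{A}_j$ and $\mathbf{B}_j$. The decisive observation, already anticipated in \eqref{eq:series-expansions}, is that the Taylor coefficients of $\mathbf{U}(\lambda;x,t)$ at $\lambda=\ii$ are precisely the polynomials $F_\ell(x,t)$ and $G_\ell(x,t)$: comparing the entries of $\mathbf{U}$ in \eqref{eq:entire}, which are built from $(x+\lambda t)\sin(\theta)/\theta$ and $\cos(\theta)$, with the two generating functions in \eqref{eq:series-expansions} shows that exactly the combinations recorded by $F_\ell$ and $G_\ell$ appear. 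I would moreover exploit the Schwarz-type symmetry of the problem --- available because $x$ and $t$ are real and the branch points $\pm\ii$ form a complex-conjugate pair --- to express the coefficients $\mathbf{B}_j$ at $-\ii$ through the complex conjugates of the coefficients $\mathbf{A}_j$ at $+\ii$. This is the mechanism by which the conjugated quantities $F_\ell^*$ and $G_\ell^*$ appearing in $\mathbf{K}^{(k)}$ and $\mathbf{H}^{(k)}$ enter, and after these reductions the conditions are equivalent to a $k\times k$ linear system governed by the matrix $\mathbf{K}^{(k)}$.

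With the linear system in hand I would solve for the data needed in \eqref{eq:rogue-wave-recover}. Since $\mathbf{E}(\lambda)=\mathbb{I}+O(\lambda^{-1})$, that formula reduces $\psi_k(x,t)$ to an explicit affine expression in the $(1,2)$ entry of $\mathbf{A}_1+\mathbf{B}_1$, the coefficient of $\lambda^{-1}$ in $\mathbf{N}$, and Cramer's rule then expresses this entry as a ratio of determinants. Here it is useful that $\mathbf{H}^{(k)}$ is manifestly rank one --- it is the outer product of the column with entries $-2(F_{p-1}+G_{p-1})$ and the row with entries $F_{q-1}^*-G_{q-1}^*$ --- so that the matrix determinant lemma gives $\det(\mathbf{K}^{(k)}+\mathbf{H}^{(k)})=\det(\mathbf{K}^{(k)})(1+\mathbf{w}^\top(\mathbf{K}^{(k)})^{-1}\mathbf{u})$ for the associated vectors $\mathbf{u}$ and $\mathbf{w}$. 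This is exactly the shape of the Cramer solution, so matching it against $\psi_k$ identifies the coefficient matrix of the linear system with $\mathbf{K}^{(k)}$ and the inhomogeneous data with $\mathbf{u}$ and $\mathbf{w}$, while simultaneously accounting for the prefactor $(-1)^k$ and for the background value $\psi_0=1$.

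The main obstacle is precisely this final identification. The bilinear, Gram-type entries $F_{q-\nu-1}^*F_{p-\mu-1}+G_{q-\nu-1}^*G_{p-\mu-1}$ of $\mathbf{K}^{(k)}$ must be shown to arise because each equation of the reduced system couples a Taylor coefficient at $+\ii$ with a conjugate coefficient at $-\ii$ through the two matrix components, while the weights $\binom{\mu+\nu}{\mu}$ are produced by the Cauchy products that occur when the factor $(\lambda\pm\ii)^{-n}$ is re-expanded about the opposite point in forming the higher-order residue conditions. Carrying out this reduction of the $2\times2$-block system to the scalar Gram form, tracking the action of $\mathbf{Q}$ that rotates $F_\ell$ and $G_\ell$ into the sum and difference combinations $F_\ell\pm G_\ell$ visible in $\mathbf{H}^{(k)}$, and verifying the combinatorial identity for the binomial weights all require careful and somewhat lengthy bookkeeping. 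I expect the even and odd cases $k=2n$ and $k=2n-1$ to differ only in which scalar power $r^{\pm n}$ supplies the dominant principal part and to be treated in parallel. Once the two determinants are matched, \eqref{eq:rogue-wave-recover} and Definition~\ref{def:rogue-wave} coincide, establishing the proposition.
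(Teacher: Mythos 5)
Your overall strategy is the paper's own: use analyticity of the interior continuation to force the exterior unknown $\mathbf{N}=\mathbf{M}^{(k)}\mathbf{E}^{-1}$ to be rational with poles only at $\lambda=\pm\ii$, translate the analyticity requirements into finite-dimensional linear algebra, solve for the single residue entry needed in \eqref{eq:rogue-wave-recover}, and convert the resulting affine expression into a ratio of determinants via the rank-one structure of $\mathbf{H}^{(k)}$ and the matrix determinant lemma, recognizing the Taylor coefficients of $\mathbf{U}(\lambda;x,t)$ at $\lambda=\pm\ii$ as (normalizations of) $F_\ell$ and $G_\ell$. Your execution differs in organization: you derive the partial-fraction form by a Liouville argument and halve the unknowns by Schwarz symmetry, whereas the paper posits a structured Darboux--Toeplitz ansatz and satisfies all conditions at $\lambda=-\ii$ outright by the explicit choice $\mathbf{x}^{(j)}=\ii\sigma_2\mathbf{v}^{(2)-}_{j-1}$ of \eqref{eq:x-vector-define-even} (using $\mathbf{x}^\top\ii\sigma_2\mathbf{x}=0$), so that only the kernel conditions at $\lambda=+\ii$ survive and immediately form the $k\times k$ system \eqref{eq:My=b}. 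Your account of where the Gram entries and the binomial weights $\binom{\mu+\nu}{\mu}$ come from (re-expansion of powers of $(\lambda+\ii)^{-1}$ about $\lambda=\ii$) is correct and matches the paper's coefficients $\gamma_{\ell m}$.

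However, two concrete steps are missing. First, the odd case is not parallel to the even case in the way you expect. For $k=2n-1$ your generic ansatz still carries poles of order $n$ at both $\pm\ii$, so after the symmetry reduction the analyticity conditions form square systems of size $2n=k+1$ per row, not $k$; naive Cramer would then produce a ratio of $(k+1)\times(k+1)$ determinants, and matching the $k\times k$ formula \eqref{eq:psi-k-determinants} requires an additional structural reduction that your outline does not supply. The paper handles this by modifying the ansatz with the extra rational factor $\mathbf{T}(\lambda)$ of \eqref{eq:T-define} together with $\sigma_3$ conjugations, whose algebraic properties (Lemma~\ref{lemma:T}) convert the order-$2n$ vanishing condition at $\lambda=\ii$ into an order-$k$ condition and restore the correct dimension. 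Second, your use of Cramer's rule presupposes that the $k\times k$ coefficient matrix --- up to diagonal factors, $\mathbf{K}^{(k)}(x,t)$ --- is invertible for every $(x,t)\in\mathbb{R}^2$; this is also exactly what guarantees that the rational ansatz furnishes a solution of the Riemann--Hilbert problem at all, i.e., existence, without which the claimed identity is only conditional. The paper proves this separately by writing $\widetilde{\mathbf{K}}=\mathbf{F}\mathbf{S}\mathbf{F}^\dagger+\mathbf{G}\mathbf{S}\mathbf{G}^\dagger$ with $\mathbf{S}$ the positive-definite symmetric Pascal matrix, $\mathbf{G}$ invertible everywhere and $\mathbf{F}$ invertible off the origin, so the sum is positive definite and $\det(\widetilde{\mathbf{K}})\neq 0$. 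Your proposal needs both of these ingredients before it closes.
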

We give the proof in the Appendix.

\subsection{Qualitative properties of high-order fundamental rogue waves}
Using the determinantal formula \eqref{eq:psi-k-determinants}, it is easy to make plots that reveal certain qualitative features of fundamental rogue waves.  Figure~\ref{fig:surface-plots} shows surface plots of the modulus $|\psi_k(x,t)|$ over the $(x,t)$-plane for $k=1,2,3,4$.  
\begin{figure}[h]
\begin{center}
\includegraphics{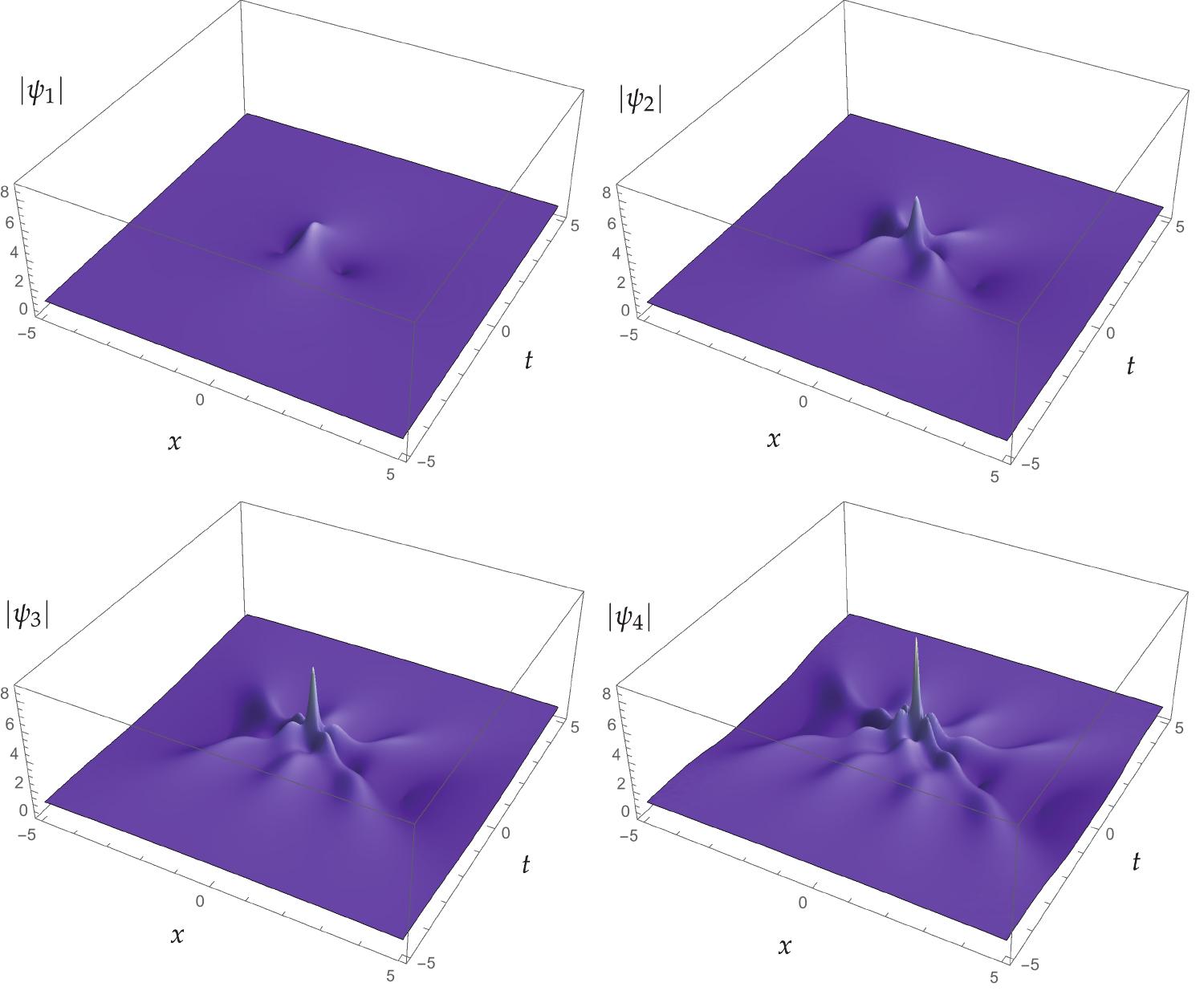}
\end{center}
\caption{The modulus $|\psi_k(x,t)|$ as a function of $(x,t)\in\mathbb{R}^2$ for $k=1,2,3,4$.}
\label{fig:surface-plots}
\end{figure}
These plots display the key characteristic that the amplitude of the fundamental rogue wave of order $k$ increases with $k$, and also shows that the extreme amplitude is achieved at a central peak that also concentrates as $k$ increases.  However, it is also clear that the solution becomes more complex as $k$ increases, with the formation of more and more subordinate peaks in amplitude.  One can also see that the rogue wave of order $k$ is not very symmetrical with respect to the roles of the coordinates $(x,t)$; indeed the amplitude seems to form a double ``shelf'' in the $t$-direction and a double ``channel'' in the $x$-direction.  

Features such as the space-time distribution of maxima on the shelves can more easily be seen in two-dimensional plots in which the amplitude is indicated with a grayscale.  Such plots are shown in Figure~\ref{fig:density-plots}.
\begin{figure}[h]
\begin{center}
\includegraphics{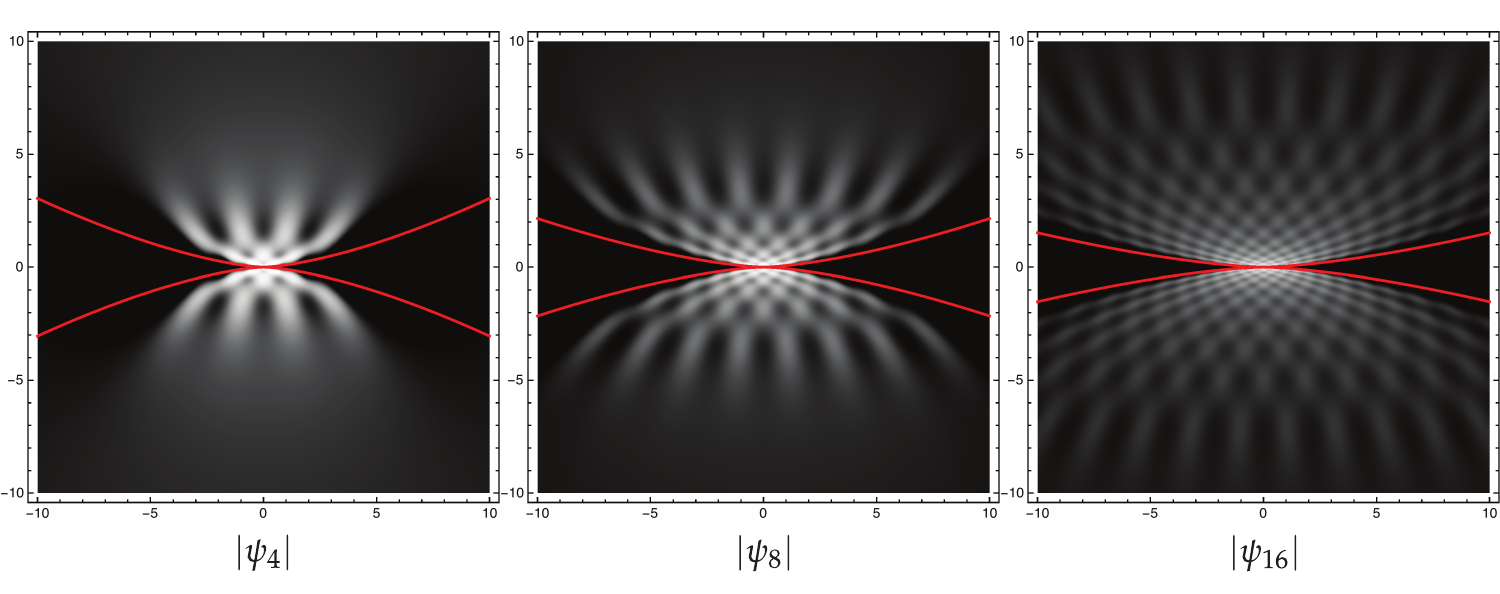}
\end{center}
\caption{$|\psi_k(x,t)|$ plotted over the $x$ (horizontal coordinate) and $t$ (vertical coordinate) plane for $k=4,8,16$ (or $n=2,4,8$).  Black means $|\psi_k|=0$ and lighter color corresponds to higher peaks of amplitude.  Superimposed in red are transitional curves $t=\pm|x|^{3/2}/\sqrt{54n}$ for the near-field asymptotics (cf., Section~\ref{sec:Painleve}).}
\label{fig:density-plots}
\end{figure}
These plots clearly show that the ``shelves'' in the amplitude $|\psi_k|$ that form before and after the amplitude peak at the origin have a boundary that apparently becomes more sharply-defined the larger the order $k$.  The shelves develop a regular crystalline pattern of local maxima, and meanwhile the ``channels'' near the $x$-axis become more clearly defined.  

The channels appear featureless in these plots by comparison with the shelves, but the rogue wave actually displays remarkable structure in these regions, as can be seen in one-dimensional plots of the restriction of the rogue wave to the $x$-axis.  Such plots are shown in Figures~\ref{fig:Odds-tzero} and \ref{fig:Evens-tzero}.
\begin{figure}[h]
\begin{center}
\includegraphics{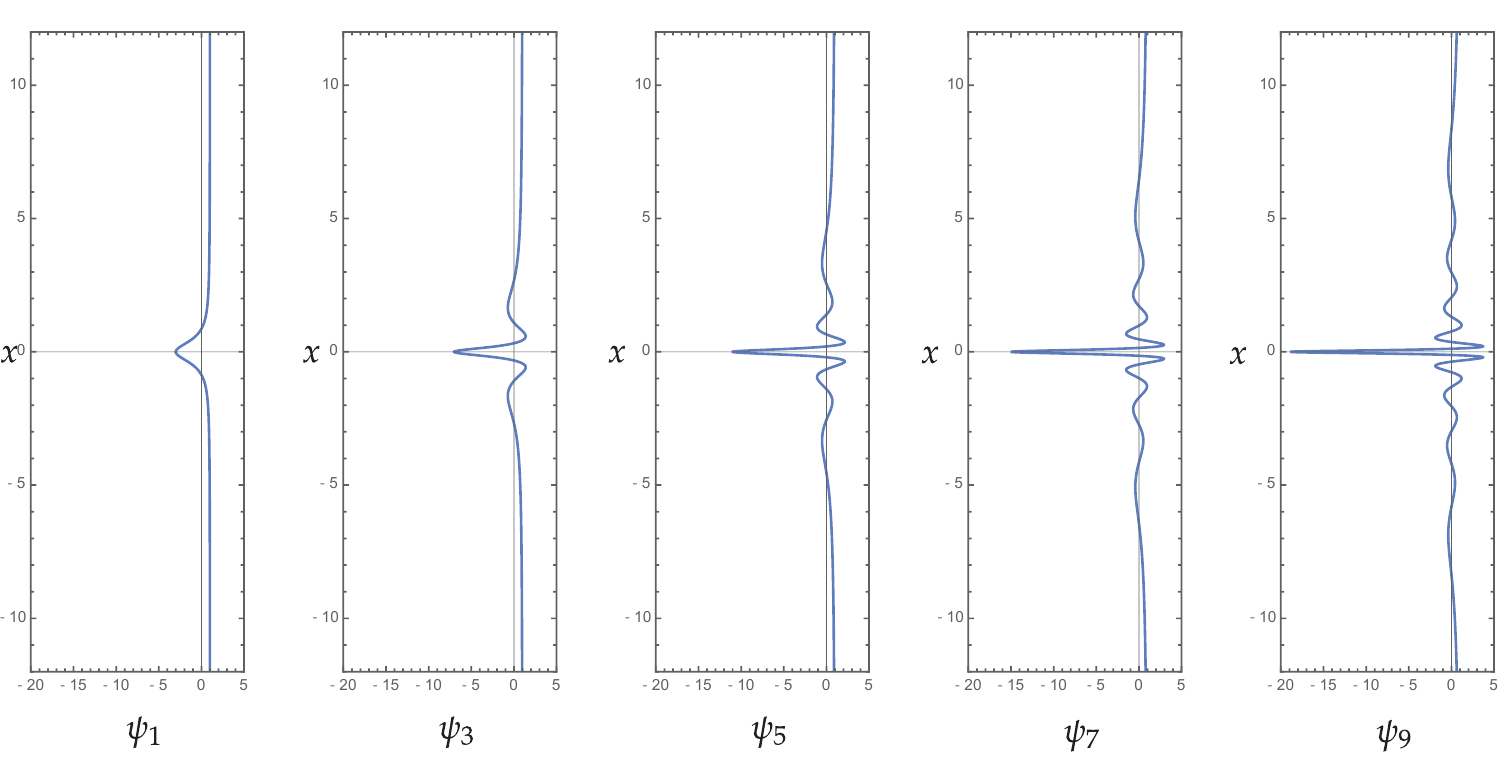}
\end{center}
\caption{Plots of $\psi_k(x,0)$ (real-valued) versus $x$ for $k=1,3,5,7,9$.}
\label{fig:Odds-tzero}
\end{figure}
\begin{figure}[h]
\begin{center}
\includegraphics{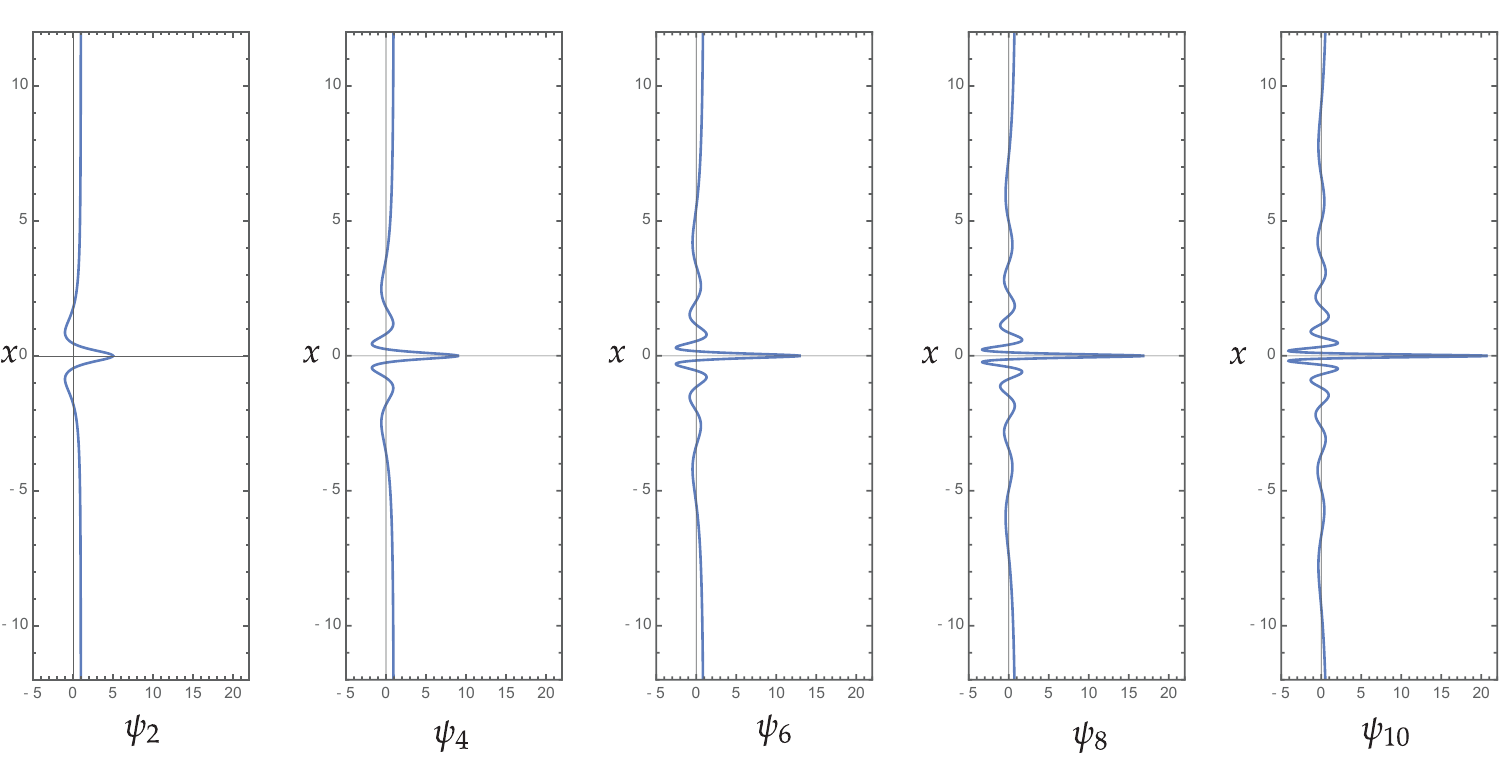}
\end{center}
\caption{Plots of $\psi_k(x,0)$ (real-valued) versus $x$ for $k=2,4,6,8,10$.}
\label{fig:Evens-tzero}
\end{figure}
These figures show that the rogue wave is highly oscillatory in the channels near the $x$-axis, with a number of zeros increasing with $k$.  In fact, there appear to be $2k$ zeros, and the largest zero appears to occur at approximately $x=\pm k$, beyond which the solution tends to the background value of $\psi=1$.  On the other hand, we will show in this paper that the zeros are by no means asymptotically equally spaced; the zeros near the origin in fact have spacing proportional to $k^{-1}$.
Similar plots of $\psi_k(x,t)$ restricted to the $t$-axis are shown in Figures~\ref{fig:Odds-xzero} and \ref{fig:Evens-xzero}.
\begin{figure}[h]
\begin{center}
\includegraphics{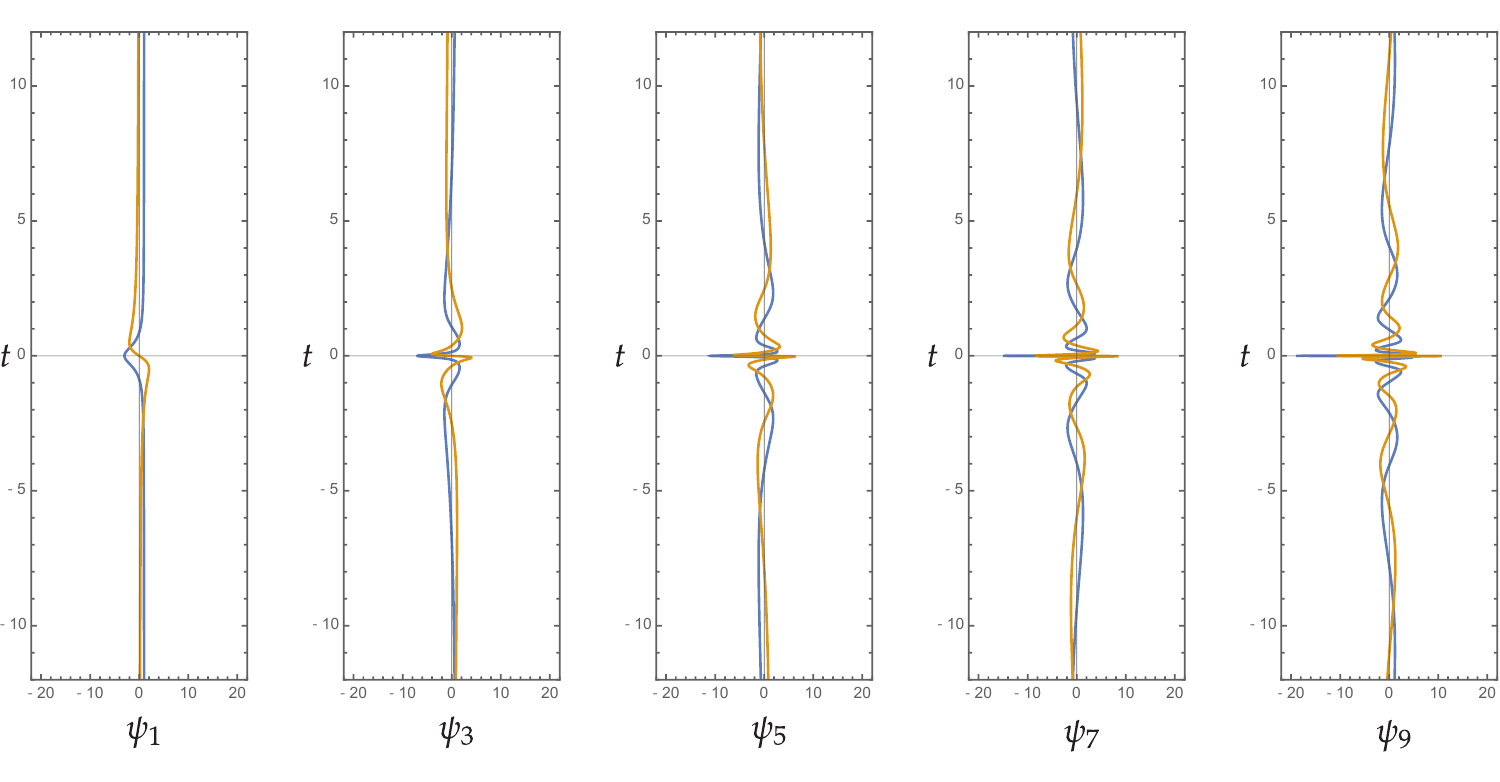}
\end{center}
\caption{Plots of $\mathrm{Re}(\psi_k(0,t))$ (blue) and $\mathrm{Im}(\psi_k(0,t))$ (maize) versus $t$ for $k=1,3,5,7,9$.}
\label{fig:Odds-xzero}
\end{figure}
\begin{figure}[h]
\begin{center}
\includegraphics{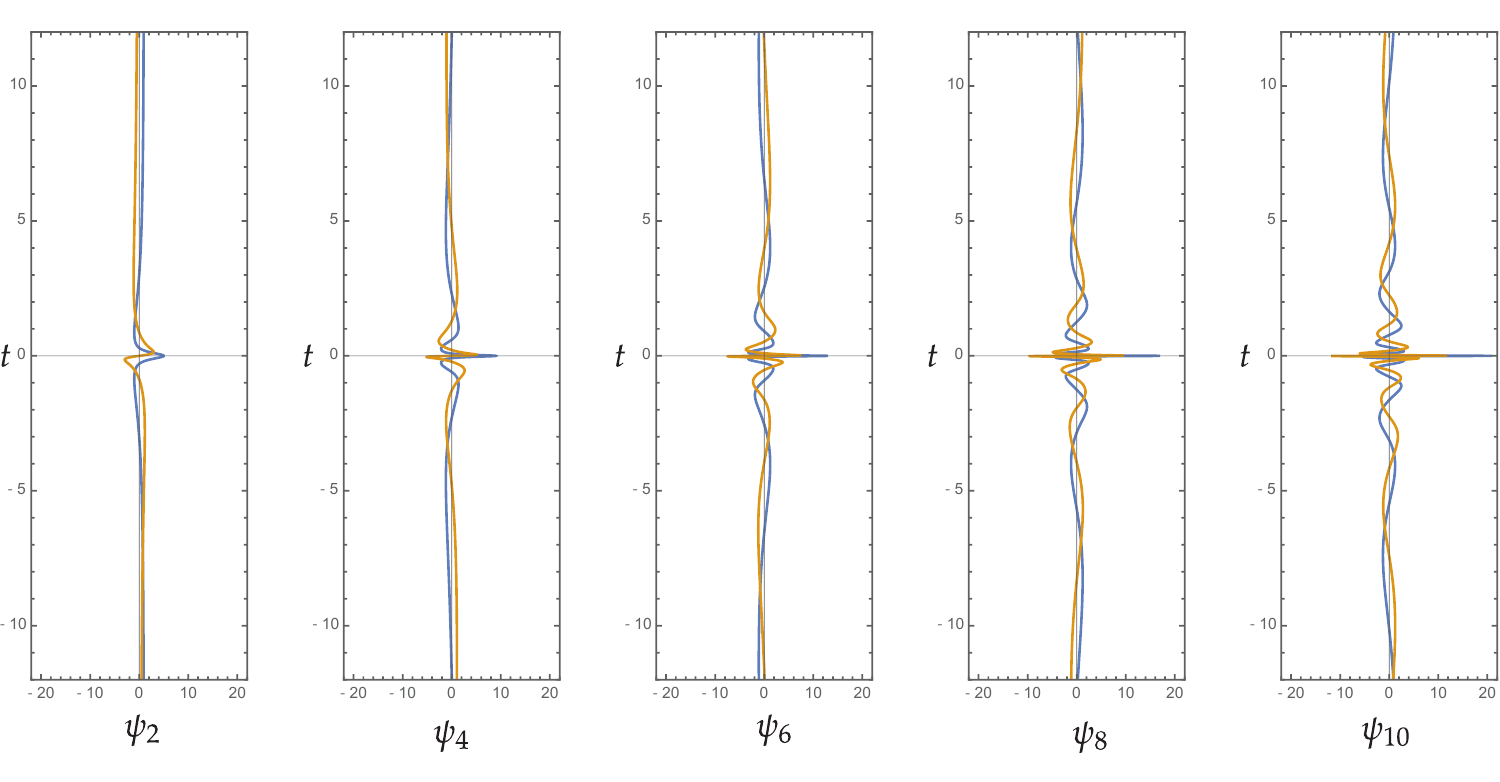}
\end{center}
\caption{Plots of $\mathrm{Re}(\psi_k(0,t))$ (blue) and $\mathrm{Im}(\psi_k(0,t))$ (maize) versus $t$ for $k=2,4,6,8,10$.}
\label{fig:Evens-xzero}
\end{figure}
These figures show that the rogue waves are also highly oscillatory in the $t$-direction when $k$ is large, and one can clearly observe that the frequency of the oscillations is greater near the origin than in the plots shown in Figures~\ref{fig:Odds-tzero} and \ref{fig:Evens-tzero}.  We will show in this paper that the time frequency of the rogue wave near $t=0$ scales like $k^2$.

The fundamental rogue wave of order $k$ clearly displays remarkable complexity when $k$ is large, and yet it also clearly demonstrates many of the hallmark features of a multiscale structure.  Such features are very difficult to extract from the determinantal formula \eqref{eq:psi-k-determinants} because the natural limit $k\to\infty$ involves computing determinants of larger and larger dimension.  On the other hand, the representation of $\psi_k(x,t)$ via Riemann-Hilbert Problem~\ref{rhp:rogue-wave} turns out to be a more fruitful avenue for large-$k$ asymptotic analysis of the fundamental rogue wave of order $k$.  In this paper, we take the first steps in such analysis by giving an asymptotic description of $\psi_k(x,t)$ in the \emph{near-field limit}, i.e., for $(x,t)$ in a small neighborhood (shrinking in size as $k\to\infty$) of the origin $(0,0)$.  This analysis reveals something nontrivial, namely a particular pair of opposite transcendental solutions of the focusing nonlinear Schr\"odinger equation that we call the \emph{rogue waves of infinite order}.  This paper is devoted to the proof of this result and the detailed description of these special limiting solutions.

\subsection{Removing the branch cut}
An equivalent Riemann-Hilbert problem is easily formulated in which the unknown has no jump across $\Sigma_\mathrm{c}$, the branch cut for $\rho$ and $f$.  To this end, we use the matrix $\mathbf{M}^{(0)}(\lambda;x,t)$ as a parametrix for $\mathbf{M}^{(k)}(\lambda;x,t)$ and hence consider the matrix
\begin{equation}
\mathbf{N}^{(k)}(\lambda;x,t):=\mathbf{M}^{(k)}(\lambda;x,t)\mathbf{M}^{(0)}(\lambda;x,t)^{-1}.
\label{eq:N-M-relation}
\end{equation}
It is easy to check that since the jump condition \eqref{eq:jump-cut} is independent of $k$, $\mathbf{N}^{(k)}_+(\lambda;x,t)=\mathbf{N}^{(k)}_-(\lambda;x,t)$ for all $\lambda\in\Sigma_\mathrm{c}$.  Since the boundary values taken on $\Sigma_\mathrm{c}$ are continuous, a Morera argument shows that $\mathbf{N}^{(k)}(\lambda;x,t)$ can be defined on $\Sigma_\mathrm{c}$ in such a way that $\mathbf{N}^{(k)}(\lambda;x,t)$ becomes analytic for $\lambda\in\mathbb{C}\setminus\Sigma_\circ$.  Similarly, since $\mathbf{M}^{(k)}(\lambda;x,t)\to\mathbb{I}$ as $\lambda\to\infty$ independent of $k$, it follows that $\mathbf{N}^{(k)}(\lambda;x,t)\to\mathbb{I}$ as $\lambda\to\infty$ also.  It only remains to compute the jump condition satisfied by $\mathbf{N}^{(k)}(\lambda;x,t)$ across $\Sigma_\circ$ to formulate the following equivalent problem.
\begin{rhp}[Rogue wave of order $k$ --- Reformulation]
Let $(x,t)\in\mathbb{R}^2$ be arbitrary parameters, and let $k\in\mathbb{Z}_{\ge 0}$.  Find a $2\times 2$ matrix $\mathbf{N}^{(k)}(\lambda;x,t)$ with the following properties:
\begin{itemize}
\item[]\textbf{Analyticity:}  $\mathbf{N}^{(k)}(\lambda;x,t)$ is analytic in $\lambda$ for $\lambda\in\mathbb{C}\setminus\Sigma_\circ$, and it takes continuous boundary values on $\Sigma_\circ$ from the interior and exterior.
\item[]\textbf{Jump condition:}  The boundary values on $\Sigma_\circ$ (recall clockwise orientation) are related as follows.  If $k=2n$, $n\in\mathbb{Z}_{\ge 0}$,
\begin{equation}
\mathbf{N}^{(k)}_+(\lambda;x,t)=\mathbf{N}^{(k)}_-(\lambda;x,t)\mathbf{U}(\lambda;x,t)\mathbf{Q}\left(\frac{\lambda-\ii}{\lambda+\ii}\right)^{n\sigma_3}\mathbf{Q}^{-1}\mathbf{U}(\lambda;x,t)^{-1},\quad\lambda\in\Sigma_\circ,
\label{eq:N-jump-even}
\end{equation}
while if instead $k=2n-1$, $n\in\mathbb{Z}_{>0}$,
\begin{equation}
\mathbf{N}^{(k)}_+(\lambda;x,t)=\mathbf{N}^{(k)}_-(\lambda;x,t)\mathbf{U}(\lambda;x,t)\mathbf{Q}\left(\frac{\lambda+\ii}{\lambda-\ii}\right)^{n\sigma_3}\mathbf{Q}^{-1}\mathbf{U}(\lambda;x,t)^{-1},\quad\lambda\in\Sigma_\circ,
\label{eq:N-jump-odd}
\end{equation}
where the entire unit-determinant matrix $\mathbf{U}(\lambda;x,t)$ is defined in \eqref{eq:entire}.
\item[]\textbf{Normalization:}  $\mathbf{N}^{(k)}(\lambda;x,t)\to\mathbb{I}$ as $\lambda\to\infty$.
\end{itemize}
\label{rhp:renormalized}
\end{rhp}
Clearly, if $k=0$, then the jump condition on $\Sigma_\circ$ simply reads $\mathbf{N}^{(0)}_+(\lambda;x,t)=\mathbf{N}^{(0)}_-(\lambda;x,t)$ so the solution of the problem is simply $\mathbf{N}^{(0)}(\lambda;x,t)\equiv\mathbb{I}$.  Using \eqref{eq:rogue-wave-recover} and \eqref{eq:background-recover} shows that
\begin{equation}
\psi_k(x,t)=1+2\ii\lim_{\lambda\to\infty}\lambda N^{(k)}_{12}(\lambda;x,t),\quad k\in\mathbb{Z}_{\ge 0}.
\label{eq:rogue-wave-recover-2}
\end{equation}
This formulation immediately gives a new and very simple proof of a recent result \cite{WangYWH17} characterizing the maximum amplitude of the rogue wave of order $k$, which turns out to be achieved at the origin $(x,t)=(0,0)$.
\begin{proposition}
$\psi_k(0,0)=(-1)^k(2k+1)$.
\label{prop:origin-value}
\end{proposition}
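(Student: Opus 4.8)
The plan is to evaluate Riemann-Hilbert Problem~\ref{rhp:renormalized} directly at $(x,t)=(0,0)$, where it degenerates into an explicitly solvable constant-coefficient problem. The key observation is that the entire matrix $\mathbf{U}(\lambda;x,t)$ in \eqref{eq:entire} collapses at the origin: since $x+\lambda t=0$ and hence $\theta=\rho(\lambda)(x+\lambda t)=0$ there, the first term vanishes identically while $\cos(\theta)=1$, so $\mathbf{U}(\lambda;0,0)=\mathbb{I}$. Both jump conditions \eqref{eq:N-jump-even} and \eqref{eq:N-jump-odd} therefore reduce to the purely rational jump
\begin{equation*}
\mathbf{N}^{(k)}_+(\lambda;0,0)=\mathbf{N}^{(k)}_-(\lambda;0,0)\,\mathbf{Q}\,r(\lambda)^{\pm n\sigma_3}\,\mathbf{Q}^{-1},\quad\lambda\in\Sigma_\circ,\qquad r(\lambda):=\frac{\lambda-\ii}{\lambda+\ii},
\end{equation*}
with the upper sign for $k=2n$ and the lower sign for $k=2n-1$.

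To solve this constant-jump problem I would diagonalize by conjugation, setting $\tilde{\mathbf{N}}(\lambda):=\mathbf{Q}^{-1}\mathbf{N}^{(k)}(\lambda;0,0)\mathbf{Q}$; then the jump becomes the diagonal matrix $r(\lambda)^{\pm n\sigma_3}$ and the normalization $\tilde{\mathbf{N}}\to\mathbb{I}$ is preserved. Because the jump has unit determinant, $\det\mathbf{N}^{(k)}(\cdot;0,0)\equiv 1$ by Liouville's theorem and the solution is unique; since both the jump and the normalization are diagonal, that unique solution is itself diagonal, and the problem decouples into two scalar Riemann-Hilbert problems. For $k=2n$ these read $a_+=a_-r^n$ and $d_+=d_-r^{-n}$ on $\Sigma_\circ$. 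Using that the clockwise orientation of $\Sigma_\circ$ places the exterior region on the $+$ side, together with the fact that $r^{\pm n}$ has all of its zeros and poles at $\lambda=\pm\ii$ interior to $\Sigma_\circ$ (so that each is analytic in the exterior and tends to $1$ at $\infty$), one reads off the unique solution $a=r^n$, $d=r^{-n}$ in the exterior and $a=d=1$ in the interior, i.e., $\tilde{\mathbf{N}}(\lambda)=r(\lambda)^{n\sigma_3}$ outside $\Sigma_\circ$. The case $k=2n-1$ is identical after the replacement $r\mapsto r^{-1}$, giving $\tilde{\mathbf{N}}(\lambda)=r(\lambda)^{-n\sigma_3}$ outside $\Sigma_\circ$.

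It then remains to undo the conjugation and apply the recovery formula \eqref{eq:rogue-wave-recover-2}. A direct computation of $\mathbf{Q}\,r^{\pm n\sigma_3}\,\mathbf{Q}^{-1}$ gives, in the exterior,
\begin{equation*}
N^{(k)}_{12}(\lambda;0,0)=\pm\tfrac{1}{2}\bigl(r(\lambda)^n-r(\lambda)^{-n}\bigr),
\end{equation*}
again with the upper/lower sign for even/odd $k$. Since $r(\lambda)=1-2\ii/\lambda+O(\lambda^{-2})$ as $\lambda\to\infty$, one has $r^n-r^{-n}=-4\ii n/\lambda+O(\lambda^{-2})$, whence $\lim_{\lambda\to\infty}\lambda N^{(k)}_{12}(\lambda;0,0)=\mp 2\ii n$, and \eqref{eq:rogue-wave-recover-2} yields $\psi_k(0,0)=1\pm 4n$. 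For $k=2n$ this is $1+4n=2k+1=(-1)^k(2k+1)$, while for $k=2n-1$ it is $1-4n=-(2k+1)=(-1)^k(2k+1)$, as claimed. The only points requiring care—and hence the main, if modest, obstacle—are the orientation bookkeeping on $\Sigma_\circ$ and the verification that the scalar solutions have their singularities correctly placed inside the contour; the determinant and uniqueness argument is what licenses restricting to the diagonal ansatz.
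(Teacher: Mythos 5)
Your proof is correct and takes essentially the same route as the paper's own: at $(x,t)=(0,0)$ one has $\mathbf{U}\equiv\mathbb{I}$, the constant jump is diagonalized by conjugation with $\mathbf{Q}$, the resulting scalar problems are solved by taking the identity inside $\Sigma_\circ$ and $r(\lambda)^{\pm n\sigma_3}$ outside, and the recovery formula \eqref{eq:rogue-wave-recover-2} yields $1\pm 4n=(-1)^k(2k+1)$. The only cosmetic differences are that you spell out the uniqueness/diagonality justification and compute the conjugated off-diagonal entry $\tfrac{1}{2}\bigl(r^n-r^{-n}\bigr)$ directly, whereas the paper expands first and uses $\mathbf{Q}\sigma_3\mathbf{Q}^{-1}=\sigma_1$.
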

\begin{proof}
Set $(x,t)=(0,0)$ in Riemann-Hilbert Problem~\ref{rhp:renormalized}.  Since $\mathbf{U}(\lambda;0,0)=\mathbb{I}$, the jump condition then becomes simply
\begin{equation}
\mathbf{N}^{(k)}_+(\lambda;0,0)=\mathbf{N}^{(k)}_-(\lambda;0,0)\mathbf{Q}\left(\frac{\lambda-\ii}{\lambda+\ii}\right)^{n\sigma_3}\mathbf{Q}^{-1},\quad \lambda\in\Sigma_\circ,\quad k=2n,
\end{equation}
or
\begin{equation}
\mathbf{N}^{(k)}_+(\lambda;0,0)=\mathbf{N}^{(k)}_-(\lambda;0,0)\mathbf{Q}\left(\frac{\lambda+\ii}{\lambda-\ii}\right)^{n\sigma_3}\mathbf{Q}^{-1},\quad \lambda\in\Sigma_\circ,\quad k=2n-1,
\end{equation}
depending on whether $k$ is even or odd.  Either way, it is clear that the jump is diagonalized by a constant conjugation, which also preserves the normalization at $\lambda=\infty$:  $\mathbf{N}^{(k)}(\lambda;0,0) = \mathbf{Q}\mathbf{D}^{(k)}(\lambda)\mathbf{Q}^{-1}$.  Then one solves the resulting diagonal problem for $\mathbf{D}^{(k)}(\lambda)$ explicitly by setting $\mathbf{D}^{(k)}(\lambda)\equiv\mathbb{I}$ in the interior of $\Sigma_\circ$ and 
\begin{equation}
\mathbf{D}^{(k)}(\lambda)=\left(\frac{\lambda-\ii}{\lambda+\ii}\right)^{n\sigma_3},\quad \text{$\lambda$ exterior to $\Sigma_\circ$},\quad k=2n,
\end{equation}
or
\begin{equation}
\mathbf{D}^{(k)}(\lambda)=\left(\frac{\lambda+\ii}{\lambda-\ii}\right)^{n\sigma_3},\quad \text{$\lambda$ exterior to $\Sigma_\circ$},\quad k=2n-1.
\end{equation}
Since in the limit $\lambda\to\infty$, 
\begin{equation}
\mathbf{D}^{(k)}(\lambda)=\begin{cases}
\mathbb{I}-2\ii n\sigma_3\lambda^{-1} + O(\lambda^{-2}),&\quad k=2n\\
\mathbb{I}+2\ii n\sigma_3\lambda^{-1}+O(\lambda^{-2}),&\quad k=2n-1,
\end{cases}
\end{equation}
conjugating by $\mathbf{Q}$ and using $\mathbf{Q}\sigma_3\mathbf{Q}^{-1}=\sigma_1$ gives
\begin{equation}
\mathbf{N}^{(k)}(\lambda;0,0)=\begin{cases}\mathbb{I}-2\ii n\sigma_1\lambda^{-1}+O(\lambda^{-2}),&\quad k=2n\\
\mathbb{I}+2\ii n\sigma_1\lambda^{-1}+O(\lambda^{-2}),&\quad k=2n-1
\end{cases}
\end{equation}
as $\lambda\to\infty$.
Applying the formula \eqref{eq:rogue-wave-recover-2} finishes the proof.
\end{proof}
It is also true that $\psi_k(x,t)\to 1$ as $x^2+t^2\to\infty$, although the shortest proof of this that we know so far comes from the algebraic representation \eqref{eq:psi-k-determinants} and is not very enlightening in the present context. 

\subsection{Summary of results}
The main result of our paper is Theorem~\ref{theorem:main}, which is formulated and proved in Section~\ref{sec:near-field} with the help of the Riemann-Hilbert representation of $\psi_k(x,t)$.  This result asserts that, when examined on spatial scales $x=O(k^{-1})$ and temporal scales $t=O(k^{-2})$, a suitable rescaling of $\psi_k(x,t)$ actually has a nontrivial limit as $k\to\infty$ along subsequences of even and odd $k$.  The two ``near-field'' limits are functions $\Psi^\pm(X,T)$ of rescaled space and time variables that are well-defined transcendental solutions of the focusing nonlinear Schr\"odinger equation in the rescaled variables.  They are \emph{rogue waves of infinite order}, and they have a natural Riemann-Hilbert characterization (cf., Riemann-Hilbert Problem~\ref{rhp:limit}).  Heuristically, the near-field limit is capturing the central peak of the rogue wave $\psi_k(x,t)$ and an arbitrary finite number of neighboring peaks; all of this interesting behavior is occurring just within the bright spot near the origin in the plots in Figure~\ref{fig:density-plots}!

In Section~\ref{sec:Exact-Properties}, we establish several important exact properties of the functions $\Psi^\pm(X,T)$.  First, in Section~\ref{sec:symmetries} we show that $\Psi^-(X,T)=-\Psi^+(X,T)$ (Corollary~\ref{cor:Psi-pm}), that $\Psi^\pm(-X,T)=\Psi^\pm(X,T)$ (Corollary~\ref{cor:Psi-even}), that $\Psi^\pm(X,-T)=\Psi^\pm(X,T)^*$ (Corollary~\ref{cor:Psi-real}), and that $\Psi^\pm(0,0)=\pm 4$ (Proposition~\ref{prop:Psi-peak}).  Then, in Section~\ref{sec:ODE} we show that not only do the functions $\Psi^\pm(X,T)$ satisfy the focusing nonlinear Schr\"odinger equation, but they also satisfy simple ordinary differential equations with respect to $X$ for fixed $T$ (Theorem~\ref{theorem:X-ODEs}) and with respect to $T$ for fixed $X$ (Theorem~\ref{theorem:Psi-ODEs-T}).  We identify the differential equations with respect to $X$ as belonging to the Painlev\'e-III hierarchy in the sense of Sakka \cite{Sakka09}.  In particular, when $T=0$, the latter reduces to a special case of the classical Painlev\'e-III equation in which the formal monodromy parameters both vanish:  $\Theta_0=\Theta_\infty=0$; see Corollary~\ref{corollary:PIII}.

Then, in Section~\ref{sec:Psi-asymptotic}, we specify the rogue waves of infinite order $\Psi^\pm(X,T)$ more precisely by determining their asymptotic behavior as $X,T\to\infty$.  Such asymptotic formul\ae\ would perhaps describe the rogue wave of order $k$ when $k$ is large in a certain overlap domain\footnote{See Conjecture~\ref{conjecture:numerical} in Section~\ref{sec:numerics} which concerns such overlap domains.} where the near-field asymptotic of Theorem~\ref{theorem:main} gives way to a far-field description that is the subject of ongoing research \cite{BilmanLMT18}.  It turns out that the large $(X,T)$ behavior of $\Psi^\pm(X,T)$ depends on whether $(X,T)$ tends to infinity primarily in the $T$-direction (thus matching onto the ``shelves'' visible in Figures~\ref{fig:surface-plots} and \ref{fig:density-plots}) or primarily in the $X$-direction (matching onto the ``channels'').  The large-$X$ asymptotic regime is described in Theorem~\ref{theorem:large-X} which is formulated and proved in Section~\ref{sec:large-X}.  The large-$T$ asymptotic regime is described in Theorem~\ref{theorem:large-T} which is formulated and proved in Section~\ref{sec:large-T}.  The latter results become even more explicit if $T=0$ (Corollary~\ref{corollary:large-X-T0}) or $X=0$ (Corollary~\ref{corollary:large-T-X0}) respectively.  The two regimes meet along curves $T=\pm 54^{-1/2}|X|^{3/2}$, and in a neighborhood of these curves neither asymptotic result is valid.  In Section~\ref{sec:Painleve} we therefore consider the asymptotic regime of large $(X,T)$ with $T\approx \pm 54^{-1/2}|X|^{3/2}$ and we formulate and prove Theorem~\ref{theorem:PII} where we show that the transitional asymptotics are described by a certain \emph{tritronqu\'ee} solution of the Painlev\'e-II equation.  All of the results in Section~\ref{sec:Psi-asymptotic} are obtained by applying elements of the Deift-Zhou steepest descent method \cite{DeiftZ93} to Riemann-Hilbert Problem~\ref{rhp:limit-simpler}, which is equivalent to Riemann-Hilbert Problem~\ref{rhp:limit} and characterizes uniquely the rogue waves of infinite order.  These results lead us to regard $\Psi^\pm(X,T)$ as new special functions.

In Section~\ref{sec:numerics}, we apply numerical methods for Riemann-Hilbert problems to reliably compute these new special functions.  We first produce accurate plots of rogue waves of infinite order.  We then compare these solutions with finite-order rogue waves and also with large-$X$ asymptotic formul\ae\ for $\Psi^\pm(X,T)$ obtained in Section~\ref{sec:Psi-asymptotic}.  We also use numerics to formulate a conjecture generalizing our main convergence result, asserting its validity on larger sets than predicted by Theorem~\ref{theorem:main}.

In an appendix, we give a proof of Proposition~\ref{prop:Equivalence}.

\subsection*{Acknowledgements}  The work of D. Bilman was supported by a travel grant from the Simons Foundation.  The work of L. Ling was supported by the National Natural Science Foundation of China (Contact Nos. 11771151, 11401221), Guangdong Natural Science Foundation (Contact No.\@ 2017A030313008), China Scholarship Council under Grant 201706155005, Guangzhou Science and Technology Program (No.\@ 201707010040).  The work of P. D. Miller was supported by the National Science Foundation under grant DMS-1513054.

\section{Near-field asymptotic behavior of fundamental rogue waves}
\label{sec:near-field}
Writing $k=2n$ for $k$ even and $k=2n-1$ for $k$ odd, consider the following substitutions in Riemann-Hilbert Problem~\ref{rhp:renormalized}:
\begin{equation}
x=\frac{X}{n},\quad t=\frac{T}{n^2},\quad \lambda=n\Lambda.
\end{equation}
We choose the contour $\Sigma_\circ$ to be the circle of radius $n$.  Observe the following asymptotic behavior of the jump matrix:
\begin{multline}
\left.\mathbf{E}(\lambda)\ee^{-\ii\rho(\lambda)(x+\lambda t)\sigma_3}\mathbf{E}(\lambda)^{-1}\mathbf{Q}\left(\frac{\lambda-\ii}{\lambda+\ii}\right)^{\pm n\sigma_3}\mathbf{Q}^{-1}\mathbf{E}(\lambda)\ee^{\ii\rho(\lambda)(x+\lambda t)\sigma_3}\mathbf{E}(\lambda)^{-1}\right|_{\lambda=n\Lambda,x=n^{-1}X,t=n^{-2}T}\\
= (\mathbb{I}+O(n^{-1}))\ee^{-\ii(\Lambda X+\Lambda^2T)\sigma_3}\mathbf{Q}\ee^{\mp 2\ii \Lambda^{-1}\sigma_3}\mathbf{Q}^{-1}\ee^{\ii (\Lambda X+\Lambda^2T)\sigma_3}
\label{eq:jump-approximation}
\end{multline}
which holds uniformly for $|\Lambda|=1$ and $(X,T)$ in compact subsets of $\mathbb{R}^2$.  Considering $k$ and hence $n$ large, and neglecting the error term results in the following model Riemann-Hilbert problem.
\begin{rhp}[Rogue waves of infinite order]
Let $(X,T)\in\mathbb{R}^2$ be fixed.  Find a $2\times 2$ matrix $\mathbf{P}^\pm(\Lambda;X,T)$ with the following properties:
\begin{itemize}
\item[]\textbf{Analyticity:}  $\mathbf{P}^\pm(\Lambda;X,T)$ is analytic in $\Lambda$ for $|\Lambda|\neq 1$, and it takes continuous boundary values on the unit circle from the interior and exterior.
\item[]\textbf{Jump condition:}  Assuming clockwise orientation of the unit circle $|\Lambda|=1$, the boundary values are connected by the following formula:
\begin{equation}
\mathbf{P}^\pm_+(\Lambda;X,T)=\mathbf{P}^\pm_-(\Lambda;X,T)\ee^{-\ii (\Lambda X+\Lambda^2T)\sigma_3}\mathbf{Q}\ee^{\mp 2\ii \Lambda^{-1}\sigma_3}\mathbf{Q}^{-1}\ee^{\ii (\Lambda X+\Lambda^2 T)\sigma_3},\quad |\Lambda|=1.
\label{eq:P-jump}
\end{equation}
\item[]\textbf{Normalization:}  $\mathbf{P}^\pm(\Lambda;X,T)\to\mathbb{I}$ as $\Lambda\to\infty$.
\end{itemize}
\label{rhp:limit}
\end{rhp}
The matrix $\mathbf{P}^+(\Lambda;X,T)$ will correspond to the large-$k$ asymptotics of rogue waves of even order $k=2n$, while $\mathbf{P}^-(\Lambda;X,T)$ will correspond to the large-$k$ asymptotics of rogue waves of odd order $k=2n-1$.  In fact, these two matrices are explicitly related, as we will show below.  The basic properties of Riemann-Hilbert Problem~\ref{rhp:limit} are summarized in the following proposition.
\begin{proposition}
Riemann-Hilbert Problem~\ref{rhp:limit} has a unique solution for each choice of sign $\pm$ and for each $(X,T)\in\mathbb{R}^2$.  The solution satisfies $\det(\mathbf{P}^\pm(\Lambda;X,T))=1$, and
for every compact subset $K\subset\mathbb{R}^2$,
\begin{equation}
\sup_{|\Lambda|\neq 1,(X,T)\in K}\|\mathbf{P}^\pm(\Lambda;X,T)\| = C_K<\infty.
\label{eq:P-bound}
\end{equation}
The function $\Psi(X,T)=\Psi^\pm(X,T)$ defined from $\mathbf{P}^\pm(\Lambda;X,T)$ by the limit
\begin{equation}
\Psi^\pm(X,T):=2\ii\lim_{\Lambda\to\infty} \Lambda P^\pm_{12}(\Lambda;X,T)
\label{eq:Psi-pm-define}
\end{equation}
is a global solution of the focusing nonlinear Schr\"odinger equation in the form
\begin{equation}
\ii\frac{\partial\Psi}{\partial T} + \frac{1}{2}\frac{\partial^2\Psi}{\partial X^2} + |\Psi|^2\Psi=0.
\label{eq:Psi-NLS}
\end{equation}
\label{prop:basic}
\end{proposition}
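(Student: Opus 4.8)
The plan is to treat the four assertions in turn, leaning throughout on the explicit factored form of the jump. Since $\mathbf{Q}\ee^{\mp 2\ii\Lambda^{-1}\sigma_3}\mathbf{Q}^{-1}=\ee^{\mp 2\ii\Lambda^{-1}\sigma_1}$ by \eqref{eq:Q-define} together with the identity $\mathbf{Q}\sigma_3\mathbf{Q}^{-1}=\sigma_1$, the jump matrix in \eqref{eq:P-jump} is $\mathbf{V}^\pm(\Lambda;X,T)=\ee^{-\ii\phi\sigma_3}\ee^{\mp 2\ii\Lambda^{-1}\sigma_1}\ee^{\ii\phi\sigma_3}$ with $\phi:=\Lambda X+\Lambda^2 T$. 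Each factor is unimodular (traceless exponent), so $\det\mathbf{V}^\pm\equiv 1$; hence $\det\mathbf{P}^\pm$ has no jump across $|\Lambda|=1$, is therefore entire, and tends to $1$ at infinity, so $\det\mathbf{P}^\pm\equiv 1$ by Liouville's theorem. Uniqueness then follows the standard route: if $\mathbf{P}_1,\mathbf{P}_2$ both solve the problem, the unit determinant makes $\mathbf{P}_2$ invertible, the product $\mathbf{P}_1\mathbf{P}_2^{-1}$ has matching boundary values across the circle (the jumps cancel), and hence extends to an entire matrix tending to $\mathbb{I}$ at infinity, forcing $\mathbf{P}_1\equiv\mathbf{P}_2$.

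For existence I would realize Riemann--Hilbert Problem~\ref{rhp:limit} as a Beals--Coifman singular integral equation $(I-C_{\mathbf{V}^\pm})\mu=\mathbb{I}$ on the compact, smooth circle; since $\mathbf{V}^\pm$ is Hölder continuous with $\det\mathbf{V}^\pm\equiv 1$, the operator $I-C_{\mathbf{V}^\pm}$ is Fredholm of index zero, so solvability reduces to triviality of the kernel --- Zhou's vanishing lemma. The engine is the reflection symmetry $\overline{\mathbf{V}^\pm(\bar\Lambda;X,T)}^\top=\mathbf{V}^\pm(\Lambda;X,T)^{-1}$, which I would verify directly from the factored form using $X,T\in\mathbb{R}$ and $\sigma_1^\top=\sigma_1$, $\sigma_3^\top=\sigma_3$: conjugation sends $\phi(\bar\Lambda)\mapsto\phi(\Lambda)$ and flips the sign of the middle exponent, while the transpose restores the order. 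Granting this, let $\mathbf{P}^0$ solve the homogeneous problem (jump $\mathbf{V}^\pm$ but $\mathbf{P}^0=O(\Lambda^{-1})$ at infinity) and set $\mathbf{A}(\Lambda):=\mathbf{P}^0(\Lambda)\overline{\mathbf{P}^0(\bar\Lambda)}^\top$, which is analytic off the circle and $O(\Lambda^{-2})$ at infinity. A short boundary-value computation, in which $\mathbf{V}(\Lambda_0)\,\overline{\mathbf{V}(\bar\Lambda_0)}^\top=\mathbb{I}$ by the symmetry, shows $\mathbf{A}_+=\mathbf{A}_-$ on $|\Lambda|=1$, so $\mathbf{A}$ is entire and hence $\mathbf{A}\equiv 0$ by Liouville. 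Restricting to real $\Lambda$ with $|\Lambda|\neq 1$, where $\bar\Lambda=\Lambda$, gives $\mathbf{P}^0(\Lambda)\mathbf{P}^0(\Lambda)^\dagger=0$, whence $\mathbf{P}^0$ vanishes on the real segments; analyticity and the identity theorem then give $\mathbf{P}^0\equiv 0$ in both the interior and the exterior. I expect this to be the main obstacle, both because the Fredholm and index bookkeeping must be set up with care and because the vanishing lemma must be seen to hold for \emph{every} $(X,T)\in\mathbb{R}^2$, reflecting the global smoothness of the rogue wave of infinite order.

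The determinant identity makes $\mathbf{P}^\pm$ and $(\mathbf{P}^\pm)^{-1}$ simultaneously controllable, which I would exploit for the uniform estimate \eqref{eq:P-bound}. Since $\phi=\Lambda X+\Lambda^2 T$ is jointly smooth and, on the compact circle, depends continuously on $(X,T)$, the matrix $\mathbf{V}^\pm(\cdot;X,T)$ varies continuously in an appropriate Hölder norm over the compact set $K$. Combined with the invertibility of $I-C_{\mathbf{V}^\pm}$ (finite for each $(X,T)$ by the previous paragraph, and continuous in $(X,T)$, hence bounded on $K$), the representation $\mathbf{P}^\pm(\Lambda)=\mathbb{I}+\tfrac{1}{2\pi\ii}\int_{|\zeta|=1}\mu(\zeta)(\mathbf{V}^\pm(\zeta)-\mathbb{I})(\zeta-\Lambda)^{-1}\,\dd\zeta$ yields a bound uniform in $(X,T)\in K$ for $\Lambda$ bounded away from the circle. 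For $\Lambda$ near the circle I would bound the continuous boundary values and invoke the maximum-modulus principle on each of the two regions (using the normalization at infinity on the exterior piece), producing the stated finite $C_K$.

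Finally, to show that $\Psi^\pm$ defined by \eqref{eq:Psi-pm-define} solves \eqref{eq:Psi-NLS}, I would use that the entire prefactor carries all the $(X,T)$-dependence of the jump. Setting $\mathbf{W}:=\mathbf{P}^\pm\ee^{-\ii(\Lambda X+\Lambda^2 T)\sigma_3}$ turns the jump into the matrix $\ee^{\mp 2\ii\Lambda^{-1}\sigma_1}$, which is independent of $(X,T)$, so the logarithmic derivatives $\mathbf{W}_X\mathbf{W}^{-1}$ and $\mathbf{W}_T\mathbf{W}^{-1}$ have no jump and are entire. Inserting the large-$\Lambda$ expansion $\mathbf{P}^\pm=\mathbb{I}+\mathbf{P}_1\Lambda^{-1}+\mathbf{P}_2\Lambda^{-2}+\cdots$ shows they grow at most like $\Lambda$ and $\Lambda^2$ respectively, so by Liouville they are polynomials: $\mathbf{W}_X\mathbf{W}^{-1}=-\ii\Lambda\sigma_3-\ii[\mathbf{P}_1,\sigma_3]$ and $\mathbf{W}_T\mathbf{W}^{-1}=-\ii\Lambda^2\sigma_3-\ii\Lambda[\mathbf{P}_1,\sigma_3]+\mathcal{T}_0$, where the off-diagonal part $-\ii[\mathbf{P}_1,\sigma_3]$ reproduces $\Psi^\pm=2\ii(\mathbf{P}_1)_{12}$ and $\mathcal{T}_0$ is explicit in $\mathbf{P}_1,\mathbf{P}_2$. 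Applying the same reflection symmetry to $\mathbf{P}^\pm$ itself (uniqueness gives $\mathbf{P}^\pm(\Lambda)=\overline{\mathbf{P}^\pm(\bar\Lambda)}^{-\top}$, i.e.\ unitarity on the real axis) forces $\mathbf{P}_1^\dagger=-\mathbf{P}_1$, hence $(\mathbf{P}_1)_{21}=-\overline{(\mathbf{P}_1)_{12}}$, which is exactly the focusing (skew-Hermitian) form of the potential. The compatibility condition $\mathbf{W}_{XT}=\mathbf{W}_{TX}$, i.e.\ the zero-curvature relation for this Lax pair, then matches powers of $\Lambda$, and its $\Lambda^0$ coefficient collapses to \eqref{eq:Psi-NLS}. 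The one analytic point needing care here is the differentiability of $\mathbf{P}^\pm$ in $(X,T)$, which I would obtain from the continuous --- indeed smooth --- dependence of $(I-C_{\mathbf{V}^\pm})^{-1}$ on the parameters established in the preceding paragraph.
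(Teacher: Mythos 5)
Your proposal is correct and follows essentially the same route as the paper: unimodularity and Liouville for $\det\mathbf{P}^\pm\equiv 1$ and uniqueness, Zhou's vanishing lemma within a Fredholm framework for existence, parameter-continuity of the resolvent for the uniform bound \eqref{eq:P-bound}, and the dressing/Lax-pair argument for \eqref{eq:Psi-NLS}. The only differences are cosmetic: you keep the clockwise orientation and verify the equivalent symmetry $\mathbf{V}^\pm(\Lambda^*;X,T)^\dagger=\mathbf{V}^\pm(\Lambda;X,T)^{-1}$ (re-deriving the vanishing lemma inline rather than citing it as the paper does after reorienting the contour to make it Schwarz-symmetric), and your symmetry $\mathbf{P}^\pm(\Lambda)=\bigl[\mathbf{P}^\pm(\Lambda^*)^\dagger\bigr]^{-1}$ coincides with the paper's $\sigma_2\mathbf{P}^\pm(\Lambda^*)^*\sigma_2=\mathbf{P}^\pm(\Lambda)$ by unimodularity.
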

\begin{proof}
To prove unique solvability, we will show that the jump conditions and the jump matrices in Riemann-Hilbert Problem~\ref{rhp:limit} satisfy the hypotheses of Zhou's \emph{Vanishing Lemma} \cite[Theorem 9.3]{Zhou89}. To this end, we reorient the jump contour $|\Lambda|=1$ to have clockwise orientation in the upper half plane and counter-clockwise orientation in the lower half plane.  This makes the reoriented jump contour invariant, including orientation, under Schwarz reflection symmetry in the real axis.  Reversing the orientation on the lower semicircle means exchanging the boundary values or equivalently replacing the jump matrix there with its inverse; hence the jump matrix in \eqref{eq:P-jump} when defined on the reoriented jump contour becomes:
\begin{equation}
\mathbf{V}^{\pm}(\Lambda;X,T):= \begin{cases}\ee^{-\ii(\Lambda X + \Lambda^2 T)\sigma_3}\mathbf{Q}\ee^{\mp 2\ii\Lambda^{-1}\sigma_3}\mathbf{Q}^{-1} \ee^{\ii(\Lambda X + \Lambda^2 T)\sigma_3},&|\Lambda|=1,\quad\mathrm{Im}(\Lambda)>0\\
\ee^{-\ii(\Lambda X + \Lambda^2 T)\sigma_3}\mathbf{Q}\ee^{\pm 2\ii\Lambda^{-1}\sigma_3}\mathbf{Q}^{-1} \ee^{\ii(\Lambda X + \Lambda^2 T)\sigma_3},&|\Lambda|=1,\quad\mathrm{Im}(\Lambda)<0.
 \end{cases}
\end{equation}
For $|\Lambda|=1$ with $\mathrm{Im}(\Lambda)>0$, using the fact that $\mathbf{Q}$ is a real orthogonal matrix, we have
\begin{equation}
\begin{aligned}
\mathbf{V}^{\pm}(\Lambda^*;X,T) &=\left[ \ee^{\ii(\Lambda X +\Lambda^2 T)\sigma_3} \mathbf{Q} \ee^{\mp 2\ii \Lambda^{-1}} \mathbf{Q}^{\top} \ee^{-\ii(\Lambda X + \Lambda^2 T)\sigma_3}\right]^*\\
&=\left[ \ee^{-\ii(\Lambda X +\Lambda^2 T)\sigma_3} \mathbf{Q} \ee^{\mp 2\ii \Lambda^{-1}} \mathbf{Q}^{\top} \ee^{\ii(\Lambda X + \Lambda^2 T)\sigma_3}\right]^\dagger\\
&= \mathbf{V}^{\pm}(\Lambda;X,T)^\dagger,
\end{aligned}
\label{eq:limit-jump-Schwarz}
\end{equation}
where the superscript ``$^\dagger$" denotes the conjugate transpose of the matrix. Thus, whenever $(X,T)\in\mathbb{R}^2$, the identity $\mathbf{V}^{\pm}(\Lambda^*;X,T)=\mathbf{V}^{\pm}(\Lambda;X,T)^\dagger$ holds on the reoriented Schwarz-symmetric jump contour $|\Lambda|=1$.
Taking into account the normalization condition $\mathbf{P}^{\pm}(\Lambda;X,T)\to \mathbb{I}$ as $\Lambda\to\infty$, we have confirmed all the hypotheses of the vanishing lemma. Consequently, Riemann-Hilbert Problem~\ref{rhp:limit} is uniquely solvable for all $(X,T)\in\mathbb{R}^2$.  

Because as a polynomial in analytic matrix entries $\det(\mathbf{P}^\pm(\Lambda;X,T))$ is analytic for $|\Lambda|\neq 1$, and since the jump matrix is unimodular, Morera's Theorem shows that $\det(\mathbf{P}^\pm(\Lambda;X,T))$ can be extended to $|\Lambda|=1$ as an entire function.  Applying the normalization condition and invoking Liouville's Theorem then shows that $\det(\mathbf{P}^{\pm}(\Lambda;X,T))\equiv 1$ holds for $|\Lambda|\neq 1$ and for all $(X,T)\in\mathbb{R}^2$.

Moreover, since the jump contour is compact and the jump matrix depends analytically on $X$ and $T$, it follows from analytic Fredholm theory applied to the system of singular integral equations equivalent to Riemann-Hilbert Problem~\ref{rhp:limit} that the solution $\mathbf{P}^\pm(\Lambda;X,T)$ is real-analytic in $(X,T)$; in particular it is continuous and hence bounded on compact sets $K$ in the $(X,T)$-plane.  This fact, together with the continuous manner in which the boundary values of $\mathbf{P}^\pm$ are achieved on the unit circle in the $\Lambda$-plane (actually, the boundary values can easily be seen to extend analytically through the jump contour from both directions) proves the estimate \eqref{eq:P-bound}.  Being analytic in $\Lambda$ outside of the unit circle, the matrix $\mathbf{P}^\pm(\Lambda;X,T)$ admits a convergent Laurent expansion of the form
\begin{equation}
\mathbf{P}^\pm(\Lambda;X,T) = \mathbb{I}+\sum_{j=1}^\infty\mathbf{P}^{\pm[j]}(X,T)\Lambda^{-j},\quad |\Lambda|>1,
\label{eq:P-Laurent}
\end{equation}
and analytic Fredholm theory implies that each coefficient $\mathbf{P}^{\pm[j]}(X,T)$ is real-analytic on $\mathbb{R}^2$ and that the series \eqref{eq:P-Laurent} is differentiable term-by-term with respect to $X$ and/or $T$.  In particular, the function $\Psi^\pm(X,T)$ obtained from $\mathbf{P}^\pm(\Lambda;X,T)$ via the limit \eqref{eq:Psi-pm-define} is simply $\Psi^\pm(X,T)=2\ii P^{\pm[1]}_{12}(X,T)$, which is a real-analytic function on $\mathbb{R}^2$.

We will now use a ``dressing'' argument to show that $\Psi^{\pm}(X,T)=2\ii P^{\pm[1]}_{12}(X,T)$ is a solution of the focusing nonlinear Schr\"odinger equation in the form \eqref{eq:Psi-NLS}. To this end, we define 
\begin{equation}
\mathbf{W}^{\pm}(\Lambda;X,T):= \mathbf{P}^{\pm}(\Lambda;X,T)\ee^{-\ii(\Lambda X+\Lambda^2 T)\sigma_3},
\end{equation}
and observe that $\mathbf{W}^{\pm}(\Lambda;X,T)$ is analytic for $|\Lambda|\neq 1$, satisfying a jump condition across the unit circle with jump matrix $\mathbf{Q}\ee^{\mp 2\ii\Lambda^{-1}\sigma_3}\mathbf{Q}^{-1}$ (assuming clockwise orientation) that is independent of $(X,T)\in\mathbb{R}^2$. The  partial derivatives $\mathbf{W}^{\pm}_X(\Lambda;X,T)$ and $\mathbf{W}^{\pm}_T(\Lambda;X,T)$ are both analytic in the same domain and, by differentiation of the jump condition for $\mathbf{W}^\pm$ with respect to $X$ and $T$, they satisfy the same jump condition as $\mathbf{W}^{\pm}(\Lambda;X,T)$ does. It then follows that the matrices
\begin{equation}
\mathbf{A}^{\pm}(\Lambda;X,T):= \mathbf{W}^{\pm}_X(\Lambda;X,T)\mathbf{W}^{\pm}(\Lambda;X,T)^{-1}\quad\text{and}\quad\mathbf{B}^{\pm}(\Lambda;X,T):=\mathbf{W}^{\pm}_T(\Lambda;X,T)\mathbf{W}^{\pm}(\Lambda;X,T)^{-1}
\label{eq:def-A-B}
\end{equation}
can be defined by continuity for $|\Lambda|=1$ so that they become entire functions of $\Lambda$. 
Since the series \eqref{eq:P-Laurent} is differentiable term-by-term with respect to $X$ and $T$, we obtain
\begin{equation}
\begin{aligned}
\mathbf{A}^{\pm}(\Lambda;X,T) &= -\ii\Lambda\sigma_3 + \ii[\sigma_3,\mathbf{P}^{\pm[1]}(X,T)] + O(\Lambda^{-1}),\quad \Lambda\to\infty \\
&=-\ii\Lambda\sigma_3 + \ii[\sigma_3,\mathbf{P}^{\pm[1]}(X,T)]
\end{aligned}
\label{eq:dress-X}
\end{equation}
and
\begin{equation}
\begin{aligned}
\mathbf{B}^{\pm}(\Lambda;X,T) &= -\ii \Lambda^2 \sigma_3 + \ii\Lambda[\sigma_3, \mathbf{P}^{\pm[1]}(X,T)]+\ii[\mathbf{P}^{\pm[1]}(X,T), \sigma_3\mathbf{P}^{\pm[1]}(X,T)] + \ii[\sigma_3,\mathbf{P}^{\pm[2]}(X,T)] \\
&\quad\quad\quad\quad{}+ O(\Lambda^{-1}), \quad \Lambda\to\infty\\
&=-\ii \Lambda^2 \sigma_3 + \ii\Lambda[\sigma_3, \mathbf{P}^{\pm[1]}(X,T)]+\ii[\mathbf{P}^{\pm[1]}(X,T), \sigma_3\mathbf{P}^{\pm[1]}(X,T)] + \ii[\sigma_3,\mathbf{P}^{\pm[2]}(X,T)] 
\end{aligned}
\end{equation}
where the last equality in each case is a consequence of Liouville's Theorem. The dependence on the matrix $\mathbf{P}^{\pm[2]}(X,T)$ can be removed because the coefficient of $\Lambda^{-1}$ in the $O(\Lambda^{-1})$ error term in \eqref{eq:dress-X} is
\begin{equation}
\ii[\sigma_3, \mathbf{P}^{\pm[2]}(X,T)] + \ii[\mathbf{P}^{\pm[1]}(X,T),\sigma_3 \mathbf{P}^{\pm[1]}(X,T)] + \mathbf{P}^{\pm[1]}_{X}(X,T)
\label{eq:Lax-X-error}
\end{equation}
which must vanish again by Liouville's Theorem. Therefore, setting to zero the off-diagonal terms in \eqref{eq:Lax-X-error} allows $\mathbf{B}^{\pm}(\Lambda;X,T)$ to be expressed as the following quadratic polynomial in $\Lambda$:
\begin{equation}
\mathbf{B}^{\pm}(\Lambda;X,T) = -\ii \Lambda^2 \sigma_3 + \ii\Lambda[\sigma_3, \mathbf{P}^{\pm[1]}(X,T)] - \mathbf{P}^{\pm[1]}_{X}(X,T).
\end{equation}
Similarly, setting to zero the diagonal part of \eqref{eq:Lax-X-error} gives the differential identities
\begin{equation}
P^{\pm[1]}_{11,X}(X,T)=2\ii P^{\pm[1]}_{12}(X,T)P^{\pm[1]}_{21}(X,T) \quad\text{and}\quad P^{\pm[1]}_{22,X}(X,T)=- 2\ii P^{\pm[1]}_{12}(X,T)P^{\pm[1]}_{21}(X,T).
\label{eq:psi1X-diagonal}
\end{equation}
Because $\mathbf{Q}$ is invariant under conjugation by $\sigma_2$, $\mathbf{P}^{\pm}(\Lambda;X,T)$ and $\sigma_2 \mathbf{P}^{\pm}(\lambda^*;X,T)^*\sigma_2$ satisfy the same jump condition on $|\Lambda|=1$ and they enjoy the same analyticity properties and normalization as $\Lambda\to\infty$. Thus, by uniqueness $\sigma_2 \mathbf{P}^{\pm}(\Lambda^*;X,T)^*\sigma_2 = \mathbf{P}^{\pm}(\Lambda;X,T)$, which together with \eqref{eq:Psi-pm-define} implies $\Psi(X,T)^*=2\ii P^{\pm[1]}_{21}(X,T)$ and consequently the identities \eqref{eq:psi1X-diagonal} take the form
\begin{equation}
P^{\pm[1]}_{11,X}(X,T)=-\frac{\ii}{2}|\Psi^{\pm}(X,T)|^2 \quad\text{and}\quad P^{\pm[1]}_{22,X}(X,T)=\frac{\ii}{2}|\Psi^{\pm}(X,T)|^2.
\label{eq:P-diagonal}
\end{equation}
Finally, substituting $\mathbf{P}^{\pm[1]}(X,T)$ in \eqref{eq:def-A-B} we see that $\mathbf{W}^{\pm}(\Lambda;X,T)$ is for $|\Lambda|\neq 1$ a simultaneous fundamental solution matrix for the following system of first order linear differential equations
\begin{align}
\mathbf{w}_X &=\mathbf{A}^\pm\mathbf{w}=\begin{bmatrix}-\ii \Lambda & \Psi^{\pm} \\ -\Psi^{\pm} & \ii\Lambda  \end{bmatrix}\mathbf{w}\label{eq:Lax-X}\\
\mathbf{w}_T&=\mathbf{B}^\pm\mathbf{w}=\begin{bmatrix}-\ii \Lambda^2 +\ii \frac{1}{2}|\Psi^{\pm}|^2& \Lambda\Psi^{\pm} + \ii\frac{1}{2}\Psi^{\pm}_X \\ -\Lambda\Psi^{\pm*} + \ii\frac{1}{2}\Psi^{\pm*}_X & \ii \Lambda^2 -\ii \frac{1}{2}|\Psi^{\pm}|^2  \end{bmatrix}\mathbf{w}\label{eq:Lax-T}
\end{align}
which constitute the Lax pair for the nonlinear Schr\"odinger equation. The simultaneous solvability of the Lax pair implies that the matrices $\mathbf{A}^\pm$ and $\mathbf{B}^\pm$ satisfy the (zero-curvature) compatibility condition $\mathbf{A}^\pm_T-\mathbf{B}^\pm_X + [\mathbf{A}^\pm,\mathbf{B}^\pm]=\mathbf{0}$, which
is precisely the partial differential equation \eqref{eq:Psi-NLS} for $\Psi=\Psi^\pm(X,T)$.
\end{proof}
\begin{remark}
The jump matrix in Riemann-Hilbert Problem~\ref{rhp:limit} has an essential singularity at the origin, which although not on the jump contour is a point in the continuous spectrum for the associated Zakharov-Shabat scattering problem.  This suggests that $\Psi^\pm(X,T)$ might be related to solutions of the focusing nonlinear Schr\"odinger equation \eqref{eq:Psi-NLS} that generate spectral singularities of the particularly severe sort described by Zhou \cite{Zhou89a}.   On the other hand, the slow decay of $\Psi^\pm(X,T)$ as $|X|\to\infty$ that we will establish in Section~\ref{sec:Psi-asymptotic} precludes the proper definition of scattering data for the Zakharov-Shabat problem with zero boundary conditions as considered in \cite{Zhou89a}.
\end{remark}

The main result of our paper is then the following.
\begin{theorem}[Rogue waves of infinite order --- near-field limit]
Let $\psi_k(x,t)$ denote the fundamental rogue wave of order $k$ (cf., Definition~\ref{def:rogue-wave}).  Then if $k=2n$,
\begin{equation}
n^{-1}\psi_{2n}(n^{-1}X,n^{-2}T) = \Psi^+(X,T) + O(n^{-1}),\quad n\to\infty,
\end{equation}
while if instead $k=2n-1$, 
\begin{equation}
n^{-1}\psi_{2n-1}(n^{-1}X,n^{-2}T)=\Psi^-(X,T)+O(n^{-1}),\quad n\to\infty
\end{equation}
uniformly for $(X,T)$ in compact subsets of $\mathbb{R}^2$.
\label{theorem:main}
\end{theorem}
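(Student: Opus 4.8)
The plan is to run a standard small-norm comparison of Riemann-Hilbert Problem~\ref{rhp:renormalized}, in the rescaled variables, against the model Riemann-Hilbert Problem~\ref{rhp:limit}. With $\Sigma_\circ$ taken to be the circle $|\lambda|=n$ and with the substitutions $x=n^{-1}X$, $t=n^{-2}T$, $\lambda=n\Lambda$ already used to pass from \eqref{eq:N-jump-even}--\eqref{eq:N-jump-odd} to \eqref{eq:jump-approximation}, set
\[
\widetilde{\mathbf{N}}^{(k)}(\Lambda;X,T):=\mathbf{N}^{(k)}(n\Lambda;n^{-1}X,n^{-2}T).
\]
This is the unique solution of the rescaled version of Riemann-Hilbert Problem~\ref{rhp:renormalized}: it is analytic for $|\Lambda|\neq 1$, tends to $\mathbb{I}$ at $\Lambda=\infty$, and its jump matrix $\widetilde{\mathbf{V}}^{(k)}$ across the unit circle is exactly the left-hand side of \eqref{eq:jump-approximation}, with the upper sign when $k=2n$ and the lower sign when $k=2n-1$. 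Let $\mathbf{V}^\pm$ denote the jump matrix of Riemann-Hilbert Problem~\ref{rhp:limit} (equivalently the factor multiplying $\mathbb{I}+O(n^{-1})$ on the right-hand side of \eqref{eq:jump-approximation}), with $\mathbf{P}^+$ chosen for even $k$ and $\mathbf{P}^-$ for odd $k$. Then I would form the error matrix
\[
\mathbf{F}^{(k)}(\Lambda;X,T):=\widetilde{\mathbf{N}}^{(k)}(\Lambda;X,T)\,\mathbf{P}^\pm(\Lambda;X,T)^{-1},
\]
which is analytic off the unit circle, tends to $\mathbb{I}$ at infinity, and satisfies a jump $\mathbf{F}^{(k)}_+=\mathbf{F}^{(k)}_-\mathbf{V}^{(k)}_F$ with
\[
\mathbf{V}^{(k)}_F:=\mathbf{P}^\pm_-\bigl[\widetilde{\mathbf{V}}^{(k)}(\mathbf{V}^\pm)^{-1}\bigr](\mathbf{P}^\pm_-)^{-1}.
\]

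Next I would verify that $\mathbf{F}^{(k)}$ solves a small-norm problem. Estimate \eqref{eq:jump-approximation} gives $\widetilde{\mathbf{V}}^{(k)}(\mathbf{V}^\pm)^{-1}=\mathbb{I}+O(n^{-1})$ uniformly for $|\Lambda|=1$ and $(X,T)$ in a fixed compact set $K\subset\mathbb{R}^2$. The conjugating boundary values $\mathbf{P}^\pm_-$ and $(\mathbf{P}^\pm_-)^{-1}$ are uniformly bounded on $\{|\Lambda|=1\}\times K$ by the a priori estimate \eqref{eq:P-bound} of Proposition~\ref{prop:basic} together with $\det\mathbf{P}^\pm\equiv 1$. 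Hence the weight $w^{(k)}:=\mathbf{V}^{(k)}_F-\mathbb{I}$ obeys $\|w^{(k)}\|_{L^\infty(|\Lambda|=1)}=O(n^{-1})$ uniformly on $K$. Since the jump contour is a fixed smooth compact curve, the Beals--Coifman singular integral operator $C_{w^{(k)}}$ on $L^2(|\Lambda|=1)$ then has operator norm $O(n^{-1})$, so $1-C_{w^{(k)}}$ is invertible by Neumann series for all $n$ large (uniformly in $(X,T)\in K$), and the resolvent estimates of small-norm theory yield $\mathbf{F}^{(k)}(\Lambda;X,T)=\mathbb{I}+O(n^{-1})$ for $\Lambda$ bounded away from the unit circle; in particular its first Laurent moment satisfies $\mathbf{F}^{(k)[1]}(X,T)=O(n^{-1})$, uniformly for $(X,T)\in K$.

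Finally I would extract the potential. Rewriting \eqref{eq:rogue-wave-recover-2} in the rescaled variables gives
\[
n^{-1}\psi_k(n^{-1}X,n^{-2}T)=n^{-1}+2\ii\lim_{\Lambda\to\infty}\Lambda\,\widetilde N^{(k)}_{12}(\Lambda;X,T).
\]
Substituting $\widetilde{\mathbf{N}}^{(k)}=\mathbf{F}^{(k)}\mathbf{P}^\pm$ and matching the coefficient of $\Lambda^{-1}$ using the Laurent expansion \eqref{eq:P-Laurent} for $\mathbf{P}^\pm$ and its analogue for $\mathbf{F}^{(k)}$ gives $\lim_{\Lambda\to\infty}\Lambda\widetilde N^{(k)}_{12}=F^{(k)[1]}_{12}+P^{\pm[1]}_{12}$, while \eqref{eq:Psi-pm-define} identifies $2\ii P^{\pm[1]}_{12}=\Psi^\pm(X,T)$. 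Therefore
\[
n^{-1}\psi_k(n^{-1}X,n^{-2}T)=\Psi^\pm(X,T)+2\ii F^{(k)[1]}_{12}+n^{-1}=\Psi^\pm(X,T)+O(n^{-1}),
\]
uniformly for $(X,T)\in K$, since both $F^{(k)[1]}_{12}=O(n^{-1})$ and the leftover background term $n^{-1}$ are of the asserted order. Taking $\mathbf{P}^+$ for $k=2n$ and $\mathbf{P}^-$ for $k=2n-1$ reproduces the two formulas of the theorem.

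The main obstacle is establishing the required \emph{uniformity} in $(X,T)$ over the compact set $K$. The smallness of $\widetilde{\mathbf{V}}^{(k)}(\mathbf{V}^\pm)^{-1}-\mathbb{I}$ from \eqref{eq:jump-approximation} transfers to $\mathbf{V}^{(k)}_F$ only because Proposition~\ref{prop:basic} provides the a priori uniform bound \eqref{eq:P-bound} on the model solution $\mathbf{P}^\pm$ and its boundary values; without this control the conjugation could amplify the $O(n^{-1})$ error in an $(X,T)$-dependent way and the Neumann-series step would not be uniform. A secondary point demanding care is that the $O(n^{-1})$ in \eqref{eq:jump-approximation} is genuinely uniform on the fixed contour $|\Lambda|=1$: it is assembled from $\mathbf{E}(n\Lambda)=\mathbb{I}+O(n^{-1})$, from $\bigl((n\Lambda-\ii)/(n\Lambda+\ii)\bigr)^{\pm n}=\ee^{\mp 2\ii\Lambda^{-1}}(1+O(n^{-2}))$, and from $\rho(n\Lambda)(n^{-1}X+n^{-1}\Lambda T)=\Lambda X+\Lambda^2 T+O(n^{-2})$, each of which is uniform precisely because $|\Lambda|=1$ keeps $\Lambda$ away from both $0$ and $\infty$.
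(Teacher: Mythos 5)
Your proposal is correct and takes essentially the same approach as the paper's proof: the same error matrix $\mathbf{F}^{(k)}(\Lambda;X,T)=\mathbf{N}^{(k)}(n\Lambda;n^{-1}X,n^{-2}T)\mathbf{P}^\pm(\Lambda;X,T)^{-1}$, the same use of \eqref{eq:jump-approximation} combined with the a priori bound \eqref{eq:P-bound} and $\det(\mathbf{P}^\pm)\equiv 1$ to obtain a uniformly small-norm jump on the unit circle, and the same extraction of the potential from the first Laurent coefficient via \eqref{eq:rogue-wave-recover-2} and \eqref{eq:Psi-pm-define}. The only difference is that you spell out the standard small-norm machinery (Beals--Coifman operator, Neumann series) and the source of uniformity in \eqref{eq:jump-approximation}, which the paper invokes implicitly as ``standard theory.''
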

\begin{proof}
Consider the matrix $\mathbf{F}(\Lambda;X,T):=\mathbf{N}^{(k)}(n\Lambda;n^{-1}X,n^{-2}T)\mathbf{P}^\pm(\Lambda;X,T)^{-1}$, where if $k=2n$ we choose the $+$ sign and if $k=2n-1$ we choose the $-$ sign.  This matrix is analytic for $|\Lambda|\neq 1$ and tends to $\mathbb{I}$ as $\Lambda\to\infty$.  On the unit circle, according to \eqref{eq:jump-approximation} we have the jump condition 
\begin{equation}
\mathbf{F}_+(\Lambda;X,T)=\mathbf{F}_-(\Lambda;X,T)\mathbf{P}^\pm_-(\Lambda;X,T)(\mathbb{I}+O(n^{-1}))\mathbf{P}^\pm_-(\Lambda;X,T)^{-1},\quad |\Lambda|=1.
\end{equation}
Selecting a compact $K\subset\mathbb{R}^2$ and applying $\det(\mathbf{P}_-^\pm(\Lambda;X,T))\equiv 1$ along with \eqref{eq:P-bound} shows that  $\mathbf{F}_+(\Lambda;X,T)=\mathbf{F}_-(\Lambda;X,T)(\mathbb{I}+O(n^{-1}))$ holds uniformly for $(X,T)\in K$ and $|\Lambda|=1$.  Therefore $\mathbf{F}$ satisfies the conditions of a small-norm Riemann-Hilbert problem, and from standard theory it follows that $\mathbf{F}(\Lambda;X,T)=\mathbb{I}+O(n^{-1})$ holds uniformly for $(X,T)\in K$ and $\Lambda\in\mathbb{C}\setminus S^1$.  Moreover, every coefficient in the convergent Laurent series $\mathbb{I}+\mathbf{F}^{[1]}(X,T)\Lambda^{-1} + \cdots$ of $\mathbf{F}(\Lambda;X,T)$ about $\Lambda=\infty$ is also $O(n^{-1})$ uniformly for $(X,T)\in K$.  Therefore, from \eqref{eq:rogue-wave-recover-2},
\begin{equation}
\begin{split}
n^{-1}\psi_k(n^{-1}X,n^{-2}T)&=n^{-1}+2\ii n^{-1}\lim_{\lambda\to\infty}\lambda N^{(k)}_{12}(\lambda,n^{-1}X,n^{-2}T)\\ & = 
n^{-1}+2\ii n^{-1}\lim_{\Lambda\to\infty}n\Lambda\left[F_{11}(\Lambda;X,T)P^\pm_{12}(\Lambda;X,T) + F_{12}(\Lambda;X,T)P^\pm_{22}(\Lambda;X,T)\right]\\
&= n^{-1}+2\ii\lim_{\Lambda\to\infty}\Lambda P^\pm_{12}(\Lambda;X,T) + 2\ii F^{[1]}_{12}(X,T)\\
&= \Psi^\pm(X,T) + O(n^{-1})
\end{split}
\end{equation}
holds uniformly for $(X,T)\in K$, which completes the proof.
\end{proof}

Theorem~\ref{theorem:main} justifies calling the special solutions $\Psi(X,T)=\Psi^\pm(X,T)$ of the focusing nonlinear Schr\"odinger equation in the form \eqref{eq:Psi-NLS} the \emph{rogue waves of infinite order}, with the sign ``$+$'' referring to infinite even order and the sign ``$-$'' referring to infinite odd order.  Some plots of rogue waves of infinite order obtained by numerically solving Riemann-Hilbert Problem~\ref{rhp:limit} can be found in Section~\ref{sec:plots-of-Psi-plus}, and 
a computational comparison between finite-order rogue waves and the corresponding rogue wave of infinite order can be found in Section~\ref{sec:plots-of-finite-vs-infinite-order}.

\section{Exact Properties of the Near-Field Limit}
\label{sec:Exact-Properties}
To study $\Psi^\pm(X,T)$ further, it is helpful to reformulate Riemann-Hilbert Problem~\ref{rhp:limit}.  To this end, consider the matrix $\mathbf{R}^\pm(\Lambda;X,T)$ related to $\mathbf{P}^\pm(\Lambda;X,T)$ by the following explicit formula:
\begin{equation}
\mathbf{R}^\pm(\Lambda;X,T):=\begin{cases}
\mathbf{P}^\pm(\Lambda;X,T)\ee^{-\ii (\Lambda X+\Lambda^2 T)\sigma_3}\mathbf{Q}\ee^{\ii (\Lambda X+\Lambda^2T)\sigma_3},&\quad |\Lambda|<1\\
\mathbf{P}^\pm(\Lambda;X,T)\ee^{\pm 2\ii \Lambda^{-1}\sigma_3},&\quad |\Lambda|>1.
\end{cases}
\end{equation}
Noting that the matrix factors above are analytic in their respective domains and that $\ee^{\pm 2\ii \Lambda^{-1}\sigma_3}\to\mathbb{I}$ as $\Lambda\to\infty$, we see that $\mathbf{R}^\pm(\Lambda;X,T)$ satisfies the following Riemann-Hilbert problem.
\begin{rhp}[Rogue waves of infinite order --- Reformulation]
Let $(X,T)\in\mathbb{R}^2$ be arbitrary parameters.  Find a $2\times 2$ matrix $\mathbf{R}^\pm(\Lambda;X,T)$ with the following properties:
\begin{itemize}
\item[]\textbf{Analyticity:}  $\mathbf{R}^\pm(\Lambda;X,T)$ is analytic in $\Lambda$ for $|\Lambda|\neq 1$, and takes continuous boundary values on the unit circle from the interior and exterior.
\item[]\textbf{Jump condition:}  Assuming clockwise orientation of the unit circle $|\Lambda|=1$, the boundary values are related by
\begin{equation}
\mathbf{R}^\pm_+(\Lambda;X,T)=\mathbf{R}^\pm_-(\Lambda;X,T)\ee^{-\ii (\Lambda X+\Lambda^2T\pm 2\Lambda^{-1})\sigma_3}\mathbf{Q}^{-1}\ee^{\ii (\Lambda X+\Lambda ^2 T\pm 2\Lambda^{-1})\sigma_3},\quad |\Lambda|=1.
\label{eq:R-jump}
\end{equation}
\item[]\textbf{Normalization:}  $\mathbf{R}^\pm(\Lambda;X,T)\to\mathbb{I}$ as $\Lambda\to\infty$.
\end{itemize}
\label{rhp:limit-simpler}
\end{rhp}
Comparing with \eqref{eq:Psi-pm-define}, we may recover $\Psi^\pm(X,T)$ from the solution of this problem by a similar formula:
\begin{equation}
\Psi^\pm(X,T)=2\ii\lim_{\Lambda\to\infty} \Lambda R^\pm_{12}(\Lambda;X,T).
\label{eq:Psi-R}
\end{equation}
\subsection{Basic symmetries}
\label{sec:symmetries}
The formulation of Riemann-Hilbert Problem~\ref{rhp:limit-simpler} makes it easy to relate explicitly $\mathbf{R}^+(\Lambda;X,T)$ and $\mathbf{R}^-(\Lambda;X,T)$.
\begin{proposition}
We have the identity
\begin{equation}
\mathbf{R}^\mp(\Lambda;X,T)=\begin{cases}
\sigma_3\mathbf{R}^\pm(\Lambda;X,T)\ee^{\mp 4\ii \Lambda^{-1}\sigma_3}\sigma_3,&\quad |\Lambda|>1\\
\sigma_3\mathbf{R}^\pm(\Lambda;X,T)\ee^{-2\ii (\Lambda X+\Lambda^2T)\sigma_3}(\ii\sigma_2)\sigma_3,&\quad |\Lambda|<1.
\end{cases}
\label{eq:involution}
\end{equation}
\end{proposition}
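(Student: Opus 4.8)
The plan is to establish \eqref{eq:involution} by uniqueness. Riemann-Hilbert Problem~\ref{rhp:limit-simpler} is uniquely solvable: its solution $\mathbf{R}^\pm$ is obtained from the unique solution $\mathbf{P}^\pm$ of Riemann-Hilbert Problem~\ref{rhp:limit} (Proposition~\ref{prop:basic}) by the explicit, pointwise-invertible substitution displayed just above its statement, so $\mathbf{R}^\mp$ is the only matrix meeting the analyticity, jump \eqref{eq:R-jump} (with $\pm$ replaced by $\mp$), and normalization conditions. It therefore suffices to let $\widetilde{\mathbf{R}}(\Lambda;X,T)$ denote the right-hand side of \eqref{eq:involution} --- a matrix built entirely from the known $\mathbf{R}^\pm$ --- and to check that $\widetilde{\mathbf{R}}$ solves that same problem; uniqueness then gives $\widetilde{\mathbf{R}}=\mathbf{R}^\mp$.

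Analyticity and normalization are immediate. For $|\Lambda|>1$ the factor $\ee^{\mp 4\ii\Lambda^{-1}\sigma_3}$ is analytic (its only singularity is at $\Lambda=0$), and for $|\Lambda|<1$ the factor $\ee^{-2\ii(\Lambda X+\Lambda^2T)\sigma_3}$ is entire; since $\mathbf{R}^\pm$ is analytic for $|\Lambda|\neq 1$ and $\sigma_3,\ii\sigma_2$ are constant, $\widetilde{\mathbf{R}}$ is analytic off the unit circle with continuous boundary values. As $\Lambda\to\infty$ only the exterior branch is relevant, and there $\mathbf{R}^\pm\to\mathbb{I}$ and $\ee^{\mp 4\ii\Lambda^{-1}\sigma_3}\to\mathbb{I}$, whence $\widetilde{\mathbf{R}}\to\sigma_3\mathbb{I}\sigma_3=\mathbb{I}$.

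The heart of the matter is the jump computation. Under the clockwise orientation the $+$ and $-$ boundary values are the exterior and interior limits respectively, so I would write $\widetilde{\mathbf{R}}_+=\sigma_3\mathbf{R}^\pm_+\ee^{\mp 4\ii\Lambda^{-1}\sigma_3}\sigma_3$ and $\widetilde{\mathbf{R}}_-=\sigma_3\mathbf{R}^\pm_-\ee^{-2\ii\vartheta\sigma_3}(\ii\sigma_2)\sigma_3$ with $\vartheta:=\Lambda X+\Lambda^2T$, solve the latter for $\mathbf{R}^\pm_-$ (using $(\ii\sigma_2)^{-1}=-\ii\sigma_2$), and substitute the jump relation \eqref{eq:R-jump} for $\mathbf{R}^\pm$. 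After the outer $\sigma_3^2=\mathbb{I}$ factors cancel, the induced jump of $\widetilde{\mathbf{R}}$ becomes $\sigma_3(-\ii\sigma_2)\ee^{2\ii\vartheta\sigma_3}V^\pm\ee^{\mp 4\ii\Lambda^{-1}\sigma_3}\sigma_3$, where $V^\pm$ denotes the jump matrix in \eqref{eq:R-jump}. Two elementary facts collapse this to the desired $V^\mp$: first, the anticommutation of $\sigma_2$ with $\sigma_3$ gives $\sigma_2\ee^{\ii s\sigma_3}=\ee^{-\ii s\sigma_3}\sigma_2$ for any scalar $s$, which both absorbs the stray phase $\ee^{2\ii\vartheta\sigma_3}$ and converts the $\ee^{\mp 4\ii\Lambda^{-1}\sigma_3}$ factor into exactly the replacement $\pm 2\Lambda^{-1}\mapsto\mp 2\Lambda^{-1}$ inside the exponents; and second, the identity $\sigma_3(-\ii\sigma_2)\mathbf{Q}^{-1}\sigma_3=\mathbf{Q}^{-1}$ (a direct $2\times 2$ computation, in the spirit of the $\sigma_2$-conjugation invariance of $\mathbf{Q}$ used in the proof of Proposition~\ref{prop:basic}) restores the central factor $\mathbf{Q}^{-1}$. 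The outcome is precisely \eqref{eq:R-jump} with $\pm$ replaced by $\mp$.

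I expect the only real difficulty to be bookkeeping rather than conceptual: correctly matching $+/-$ to exterior/interior under the clockwise orientation, and carefully tracking signs through the Pauli anticommutations so that the phase shift emerges as $\mp$ and not $\pm$. There is no analytic obstacle --- the argument is an algebraic verification backed by the uniqueness already established. Because the signs propagate consistently, this single verification yields both relations in the $\pm/\mp$ family at once; inserting \eqref{eq:involution} into \eqref{eq:Psi-R} then records the corresponding symmetry of $\Psi^\pm$.
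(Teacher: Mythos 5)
Your proposal is correct and takes essentially the same route as the paper's own proof: both verify that the right-hand side of \eqref{eq:involution} is analytic for $|\Lambda|\neq 1$, tends to $\mathbb{I}$ as $\Lambda\to\infty$, and satisfies the jump condition \eqref{eq:R-jump} with $\pm$ replaced by $\mp$ --- via the same exponential recombinations, the anticommutation of $\sigma_2$ with $\sigma_3$, and the identity $\sigma_3(-\ii\sigma_2)\mathbf{Q}^{-1}\sigma_3=\mathbf{Q}^{-1}$ --- and then conclude by uniqueness. Your extra remarks (why Riemann-Hilbert Problem~\ref{rhp:limit-simpler} inherits unique solvability from Riemann-Hilbert Problem~\ref{rhp:limit}, and that $+$/$-$ boundary values correspond to exterior/interior limits under the clockwise orientation) simply make explicit what the paper leaves implicit.
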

\begin{proof}
The right-hand side of \eqref{eq:involution} is analytic for $|\Lambda|\neq 1$ and tends to the identity as $\Lambda\to\infty$.  It remains only to check the jump condition for $\mathbf{R}^\pm$ using \eqref{eq:R-jump} for $\mathbf{R}^\pm$:
\begin{equation}
\begin{split}
\mathbf{R}^\mp_+(\Lambda;X,T)&=\sigma_3\mathbf{R}^\pm_+(\Lambda;X,T)\ee^{\mp 4\ii \Lambda^{-1}\sigma_3}\sigma_3\\
&=\sigma_3\mathbf{R}^\pm_-(\Lambda;X,T)\ee^{-\ii(\Lambda X+\Lambda^2T\pm 2\Lambda^{-1})\sigma_3}\mathbf{Q}^{-1}\ee^{\ii (\Lambda X+\Lambda^2T\pm 2\Lambda^{-1})\sigma_3}\ee^{\mp 4\ii \Lambda^{-1}\sigma_3}\sigma_3\\
&=\sigma_3\mathbf{R}^\pm_-(\Lambda;X,T)\ee^{-\ii(\Lambda X+\Lambda^2T\pm 2\Lambda^{-1})\sigma_3}\mathbf{Q}^{-1}\ee^{\ii (\Lambda X+\Lambda^2T\mp 2\Lambda^{-1})\sigma_3}\sigma_3\\
&=\mathbf{R}^\mp_-(\Lambda;X,T)\sigma_3(-\ii\sigma_2)\ee^{2\ii (\Lambda X+\Lambda^2T)\sigma_3}
\ee^{-\ii(\Lambda X+\Lambda^2T\pm 2\Lambda^{-1})\sigma_3}\mathbf{Q}^{-1}\ee^{\ii (\Lambda X+\Lambda^2T\mp 2\Lambda^{-1})\sigma_3}\sigma_3\\
&=\mathbf{R}^\mp_-(\Lambda;X,T)\sigma_3(-\ii\sigma_2)\ee^{\ii(\Lambda X+\Lambda^2T\mp 2\Lambda^{-1})\sigma_3}\mathbf{Q}^{-1}\ee^{\ii (\Lambda X+\Lambda^2T\mp 2\Lambda^{-1})\sigma_3}\sigma_3\\
&=\mathbf{R}^\mp_-(\Lambda;X,T)\ee^{-\ii(\Lambda X+\Lambda^2T\mp 2\Lambda^{-1})\sigma_3}\sigma_3(-\ii\sigma_2)\mathbf{Q}^{-1}\sigma_3\ee^{\ii(\Lambda X+\Lambda^2T\mp 2\Lambda^{-1})\sigma_3}\\
&=\mathbf{R}^\mp_-(\Lambda;X,T)\ee^{-\ii (\Lambda X+\Lambda^2T\mp 2\Lambda^{-1})\sigma_3}\mathbf{Q}^{-1}\ee^{\ii(\Lambda X+\Lambda^2T\mp 2\Lambda^{-1})\sigma_3}
\end{split}
\end{equation}
because $\sigma_3(-\ii \sigma_2)\mathbf{Q}^{-1}\sigma_3=\mathbf{Q}^{-1}$, so this indeed matches \eqref{eq:R-jump} for $\mathbf{R}^\mp$.
\end{proof}
\begin{corollary}
$\Psi^-(X,T)=-\Psi^+(X,T)$ holds for all $(X,T)\in\mathbb{R}^2$.
\label{cor:Psi-pm}
\end{corollary}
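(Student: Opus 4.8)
The plan is to read the identity off the exterior branch ($|\Lambda|>1$) of the involution \eqref{eq:involution} just established, since the recovery formula \eqref{eq:Psi-R} extracts the coefficient of $\Lambda^{-1}$ in the large-$\Lambda$ expansion of the solution, and this expansion is governed by the behavior in the region $|\Lambda|>1$. Thus I would begin from the line
\[
\mathbf{R}^\mp(\Lambda;X,T)=\sigma_3\mathbf{R}^\pm(\Lambda;X,T)\ee^{\mp 4\ii\Lambda^{-1}\sigma_3}\sigma_3,\quad |\Lambda|>1,
\]
and substitute convergent Laurent series about $\Lambda=\infty$ on both sides. Writing $\mathbf{R}^\pm(\Lambda;X,T)=\mathbb{I}+\mathbf{R}^{\pm[1]}(X,T)\Lambda^{-1}+O(\Lambda^{-2})$ (the existence and term-by-term validity of such an expansion follows from Proposition~\ref{prop:basic} applied to $\mathbf{R}^\pm$, which is obtained from $\mathbf{P}^\pm$ by multiplication by factors analytic and $\to\mathbb{I}$ at $\infty$) and expanding $\ee^{\mp 4\ii\Lambda^{-1}\sigma_3}=\mathbb{I}\mp 4\ii\Lambda^{-1}\sigma_3+O(\Lambda^{-2})$, I would collect the coefficient of $\Lambda^{-1}$.

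The resulting relation between the first Laurent coefficients is
\[
\mathbf{R}^{\mp[1]}(X,T)=\sigma_3\mathbf{R}^{\pm[1]}(X,T)\sigma_3\mp 4\ii\sigma_3.
\]
Now I would simply read off the $(1,2)$-entry. Conjugation by $\sigma_3$ flips the sign of the off-diagonal entries, so $\left(\sigma_3\mathbf{R}^{\pm[1]}\sigma_3\right)_{12}=-R^{\pm[1]}_{12}$, while the additive correction $\mp 4\ii\sigma_3$ is diagonal and therefore contributes nothing to the $(1,2)$-entry. This yields $R^{\mp[1]}_{12}(X,T)=-R^{\pm[1]}_{12}(X,T)$. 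Finally, using \eqref{eq:Psi-R} in the form $\Psi^\pm(X,T)=2\ii R^{\pm[1]}_{12}(X,T)$ gives $\Psi^\mp(X,T)=-\Psi^\pm(X,T)$; taking the upper choice of sign proves $\Psi^-(X,T)=-\Psi^+(X,T)$ for all $(X,T)\in\mathbb{R}^2$.

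There is no genuinely hard step here: all the analytic content resides in the previously-established involution \eqref{eq:involution}, so the corollary is essentially bookkeeping at order $\Lambda^{-1}$. The only points requiring a little care are to work with the exterior branch rather than the interior one (so that the expansion matches the normalization at $\Lambda=\infty$ used in \eqref{eq:Psi-R}), and to observe that the diagonal correction term $\mp 4\ii\sigma_3$ is invisible to the off-diagonal entry and hence drops out of the relation between $\Psi^+$ and $\Psi^-$.
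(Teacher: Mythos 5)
Your proposal is correct and follows essentially the same route as the paper: both read the identity off the exterior ($|\Lambda|>1$) branch of the involution \eqref{eq:involution}, use that conjugation by $\sigma_3$ flips the sign of the off-diagonal entries, and note that the diagonal exponential factor cannot contribute to the $(1,2)$-entry at order $\Lambda^{-1}$. Your version merely makes explicit the Laurent-coefficient bookkeeping that the paper's two-sentence proof leaves implicit.
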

\begin{proof}
The conjugation by $\sigma_3$ changes the signs of the off-diagonal entries.  The diagonal factors mediating between $\mathbf{P}^\pm(\Lambda;X,T)$ and $\mathbf{R}^\pm(\Lambda;X,T)$ and between $\mathbf{R}^+(\Lambda;X,T)$ and $\mathbf{R}^-(\Lambda;X,T)$ for $|\Lambda|>1$ have no effect on the leading off-diagonal entries as $\Lambda\to\infty$.
\end{proof}
An even easier result stems from considering a change of spectral parameter $\Lambda\mapsto -\Lambda$:
\begin{proposition}
$\mathbf{R}^\mp(-\Lambda;-X,T)=\mathbf{R}^\pm(\Lambda;X,T)$.
\label{prop:minus-z}
\end{proposition}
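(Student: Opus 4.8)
The plan is to show that the left-hand side of the asserted identity, regarded as a function of $\Lambda$ with $(X,T)$ held as parameters, solves exactly the Riemann-Hilbert Problem~\ref{rhp:limit-simpler} that characterizes $\mathbf{R}^\pm(\Lambda;X,T)$, and then to invoke uniqueness. Concretely, I would set
\begin{equation}
\widetilde{\mathbf{R}}(\Lambda;X,T):=\mathbf{R}^\mp(-\Lambda;-X,T)
\end{equation}
and verify in turn the three bullet points of Riemann-Hilbert Problem~\ref{rhp:limit-simpler} for $\widetilde{\mathbf{R}}$ with sign $\pm$. The analyticity and normalization are immediate: the substitution $\Lambda\mapsto-\Lambda$ fixes the unit circle $|\Lambda|=1$ setwise and fixes the point at infinity, so $\widetilde{\mathbf{R}}$ is analytic for $|\Lambda|\neq 1$ and tends to $\mathbb{I}$ as $\Lambda\to\infty$ because $\mathbf{R}^\mp(\,\cdot\,;-X,T)$ does.

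The only step requiring care is the jump condition, and here I would make two observations. First, $\Lambda\mapsto-\Lambda$ is a rotation of the plane by $\pi$, hence orientation-preserving; it therefore carries the clockwise-oriented unit circle to itself preserving the clockwise orientation, and carries the exterior $|\Lambda|>1$ to the exterior. Consequently the boundary values are not interchanged, i.e., $\widetilde{\mathbf{R}}_\pm(\Lambda;X,T)=\mathbf{R}^\mp_\pm(-\Lambda;-X,T)$. Second, I would substitute the point $\mu=-\Lambda$ and the parameters $(-X,T)$ into the jump relation \eqref{eq:R-jump} written for $\mathbf{R}^\mp$ (so with sign $\mp$ in front of the $2\mu^{-1}$ term) and simplify the exponent. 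The even-in-$\mu$ contributions are unchanged, $(-\Lambda)(-X)=\Lambda X$ and $(-\Lambda)^2T=\Lambda^2T$, whereas the odd contribution flips sign, $\mp 2(-\Lambda)^{-1}=\pm 2\Lambda^{-1}$. Since $\mathbf{Q}^{-1}$ is untouched, this sign flip is exactly what converts the $\mp$ jump for $\mathbf{R}^\mp$ into the $\pm$ jump \eqref{eq:R-jump} for $\mathbf{R}^\pm$, so that
\begin{equation}
\widetilde{\mathbf{R}}_+(\Lambda;X,T)=\widetilde{\mathbf{R}}_-(\Lambda;X,T)\ee^{-\ii(\Lambda X+\Lambda^2T\pm 2\Lambda^{-1})\sigma_3}\mathbf{Q}^{-1}\ee^{\ii(\Lambda X+\Lambda^2T\pm 2\Lambda^{-1})\sigma_3}.
\end{equation}

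Having matched all three conditions, I would conclude by the uniqueness of the solution, which holds because Riemann-Hilbert Problem~\ref{rhp:limit-simpler} is merely a reformulation of Riemann-Hilbert Problem~\ref{rhp:limit} (via the explicit invertible change of variables defining $\mathbf{R}^\pm$), and the latter is uniquely solvable by Proposition~\ref{prop:basic}. Thus $\widetilde{\mathbf{R}}(\Lambda;X,T)=\mathbf{R}^\pm(\Lambda;X,T)$, which is precisely the claimed identity. I do not anticipate a genuine obstacle here — consistent with the remark that this is an ``even easier'' symmetry; the main thing to get right is the orientation bookkeeping under $\Lambda\mapsto-\Lambda$ and the crucial sign reversal of the $2\Lambda^{-1}$ term that interchanges the roles of the $\pm$ problems.
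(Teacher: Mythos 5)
Your proof is correct and is precisely the argument the paper intends: the paper states this proposition without writing out the proof, noting only that it follows from the change of spectral parameter $\Lambda\mapsto-\Lambda$, and your verification of analyticity, normalization, and the jump condition (with the key sign flip $\mp 2(-\Lambda)^{-1}=\pm 2\Lambda^{-1}$ exchanging the roles of the two signs) followed by uniqueness is exactly that argument. The orientation bookkeeping — that the rotation preserves the clockwise orientation and hence the $+/-$ boundary-value assignment — is handled correctly as well.
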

\begin{corollary}
$\Psi^\pm(-X,T)=\Psi^\pm(X,T)$.
\label{cor:Psi-even}
\end{corollary}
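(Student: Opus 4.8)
The plan is to obtain Corollary~\ref{cor:Psi-even} as an immediate consequence of Proposition~\ref{prop:minus-z}, the recovery formula \eqref{eq:Psi-R}, and the sign relation already recorded in Corollary~\ref{cor:Psi-pm}; no new Riemann--Hilbert analysis is required, so this is purely a matter of combining results in hand.

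First I would extract the $(1,2)$ entry from the matrix identity $\mathbf{R}^\mp(-\Lambda;-X,T)=\mathbf{R}^\pm(\Lambda;X,T)$ of Proposition~\ref{prop:minus-z}, namely $R^\pm_{12}(\Lambda;X,T)=R^\mp_{12}(-\Lambda;-X,T)$, and substitute it into \eqref{eq:Psi-R} to write $\Psi^\pm(X,T)=2\ii\lim_{\Lambda\to\infty}\Lambda R^\mp_{12}(-\Lambda;-X,T)$. The one substantive manipulation is the change of spectral variable $\Lambda\mapsto -\Lambda$ in this limit: setting $w=-\Lambda$, so that $w\to\infty$ as $\Lambda\to\infty$ and the normalization prefactor $\Lambda$ becomes $-w$, the limit evaluates to $-2\ii\lim_{w\to\infty} w R^\mp_{12}(w;-X,T)=-\Psi^\mp(-X,T)$ by another application of \eqref{eq:Psi-R}. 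This gives the intermediate identity $\Psi^\pm(X,T)=-\Psi^\mp(-X,T)$.

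To finish I would invoke Corollary~\ref{cor:Psi-pm}, which asserts that the two signs differ only by an overall factor $-1$, i.e.\ $\Psi^\mp(-X,T)=-\Psi^\pm(-X,T)$. Inserting this into the intermediate identity cancels the two minus signs and produces $\Psi^\pm(X,T)=\Psi^\pm(-X,T)$, which is the claim. The argument is essentially bookkeeping, and the only point demanding genuine care is the simultaneous tracking of two independent sources of sign: the explicit factor of $-1$ picked up from the $\Lambda\mapsto -\Lambda$ rescaling of the leading coefficient, and the exchange of the even/odd labels $\pm\leftrightarrow\mp$ coming from Proposition~\ref{prop:minus-z}, the latter of which must then be removed using Corollary~\ref{cor:Psi-pm}. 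Getting both signs right at once is the main (and only) pitfall; there is no analytic difficulty.
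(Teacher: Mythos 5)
Your proof is correct and follows essentially the same route as the paper: both rest on Proposition~\ref{prop:minus-z}, the recovery formula \eqref{eq:Psi-R}, the change of variable $\Lambda\mapsto-\Lambda$ in the limit (with its sign on the prefactor), and a final appeal to Corollary~\ref{cor:Psi-pm}. The only difference is cosmetic: the paper evaluates $\Psi^\pm(-X,T)$ and arrives at $-\Psi^\mp(X,T)$, while you evaluate $\Psi^\pm(X,T)$ and arrive at $-\Psi^\mp(-X,T)$, which is the same identity up to relabeling $X\mapsto -X$.
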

\begin{proof}
Using \eqref{eq:Psi-R} and Proposition~\ref{prop:minus-z},
\begin{equation}
\Psi^\pm(-X,T)=2\ii\lim_{\Lambda\to\infty}\Lambda R^\pm_{12}(\Lambda;-X,T)=2\ii\lim_{\Lambda\to\infty}\Lambda R^\mp_{12}(-\Lambda;X,T).
\end{equation}
Now replacing $\Lambda$ with $-\Lambda$,
\begin{equation}
\Psi^\pm(-X,T)=-2\ii\lim_{\Lambda\to\infty}\Lambda R^\mp_{12}(\Lambda;X,T) = -\Psi^\mp(X,T)
\end{equation}
from which the result follows by Corollary~\ref{cor:Psi-pm}.
\end{proof}
A related symmetry arises from $\Lambda\mapsto-\Lambda^*$:
\begin{proposition}
$\mathbf{R}^\pm(-\Lambda^*;X,-T)^*=\mathbf{R}^\pm(\Lambda;X,T)$ holds for all $(X,T)\in\mathbb{R}^2$.
\label{prop:minus-conjugate-z}
\end{proposition}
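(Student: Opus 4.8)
The plan is to prove the identity by the uniqueness of the solution of Riemann-Hilbert Problem~\ref{rhp:limit-simpler}, which is inherited from the unique solvability established in Proposition~\ref{prop:basic} through the explicit invertible relation between $\mathbf{R}^\pm$ and $\mathbf{P}^\pm$. Concretely, I would introduce the candidate matrix $\tilde{\mathbf{R}}^\pm(\Lambda;X,T):=\mathbf{R}^\pm(-\Lambda^*;X,-T)^*$ (the superscript $*$ denoting entrywise complex conjugation) and show that, for each fixed sign and each $(X,T)\in\mathbb{R}^2$, it satisfies all three conditions of Riemann-Hilbert Problem~\ref{rhp:limit-simpler} with the \emph{same} sign $\pm$. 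Uniqueness then forces $\tilde{\mathbf{R}}^\pm=\mathbf{R}^\pm$, which is exactly the asserted identity.

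The analyticity and normalization are immediate. The map $\Lambda\mapsto-\Lambda^*$ is reflection across the imaginary axis; it preserves $|\Lambda|$ and hence fixes the unit circle while mapping the interior and exterior regions to themselves. Since $\mathbf{R}^\pm(\,\cdot\,;X,-T)$ is analytic off the unit circle, the Schwarz-reflection-type composition $\Lambda\mapsto\mathbf{R}^\pm(-\Lambda^*;X,-T)^*$ is again analytic for $|\Lambda|\neq1$, and because $-\Lambda^*\to\infty$ as $\Lambda\to\infty$ while $\mathbf{R}^\pm\to\mathbb{I}$, we obtain $\tilde{\mathbf{R}}^\pm\to\mathbb{I}$ at infinity.

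The jump condition is where the actual work lies. First I would record how the boundary values transform: because the reflection preserves the exterior and interior of the circle (and hence the left/right, i.e.\ $+/-$, labeling attached to the clockwise orientation), the exterior and interior limits compose correctly, giving $\tilde{\mathbf{R}}^\pm_\pm(\Lambda;X,T)=\mathbf{R}^\pm_\pm(-\Lambda^*;X,-T)^*$ with matching subscripts. Writing $\mu:=-\Lambda^*$ and conjugating the jump relation \eqref{eq:R-jump} for $\mathbf{R}^\pm$ at the point $\mu$ with time $-T$ then yields $\tilde{\mathbf{R}}^\pm_+(\Lambda;X,T)=\tilde{\mathbf{R}}^\pm_-(\Lambda;X,T)\,[V^\pm(\mu;X,-T)]^*$, where $V^\pm$ is the jump matrix in \eqref{eq:R-jump}. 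It remains to check that $[V^\pm(\mu;X,-T)]^*=V^\pm(\Lambda;X,T)$. Since $\mathbf{Q}$ and $\sigma_3$ are real and the jump is a $\sigma_3$-conjugation of $\mathbf{Q}^{-1}$ by a scalar phase, this reduces to one scalar computation: with $\phi:=\mu X-\mu^2 T\pm2\mu^{-1}$ and using $\mu^*=-\Lambda$ together with $X,T\in\mathbb{R}$, one finds
\[
\phi^*=-\Lambda X-\Lambda^2 T\mp 2\Lambda^{-1}=-\left(\Lambda X+\Lambda^2 T\pm2\Lambda^{-1}\right),
\]
so that conjugating $\ee^{-\ii\phi\sigma_3}\mathbf{Q}^{-1}\ee^{\ii\phi\sigma_3}$ reproduces exactly the jump matrix of Riemann-Hilbert Problem~\ref{rhp:limit-simpler} at $(\Lambda;X,T)$ with the same sign $\pm$.

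The main obstacle I anticipate is purely bookkeeping rather than analytical. One must keep straight that the reflection $\Lambda\mapsto-\Lambda^*$ does \emph{not} interchange the $+$ and $-$ boundary values: it reverses orientation but simultaneously swaps the two geometric sides of the contour, and these effects cancel precisely because it fixes interior and exterior. One must also verify that the sign $\pm$ is genuinely preserved. Here the role of $T\mapsto-T$ becomes transparent: under $\mu\mapsto\mu^*=-\Lambda$ the linear term $\mu X$ and the singular term $2\mu^{-1}$ each flip sign, whereas the quadratic term $\mu^2 T$ does not (the square absorbs the sign), so the additional minus needed to align the quadratic contribution with the rest of the phase is supplied exactly by sending $T\mapsto-T$. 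Once these two points are settled, the identity $[V^\pm(\mu;X,-T)]^*=V^\pm(\Lambda;X,T)$ is immediate, and the proof concludes by invoking uniqueness for Riemann-Hilbert Problem~\ref{rhp:limit-simpler}.
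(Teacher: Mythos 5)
Your proof is correct and follows exactly the approach the paper intends: the paper states this proposition without explicit proof, but its neighboring symmetry results (the $\mathbf{R}^\mp$ identity and the Schwarz symmetry $\sigma_2\mathbf{P}^\pm(\Lambda^*;X,T)^*\sigma_2=\mathbf{P}^\pm(\Lambda;X,T)$ in Proposition~\ref{prop:basic}) are proved by the same scheme you use --- verify the candidate satisfies the identical Riemann-Hilbert problem (analyticity, normalization, and the jump via the computation $\phi^*=-(\Lambda X+\Lambda^2T\pm2\Lambda^{-1})$ with $\mathbf{Q}$ real) and invoke uniqueness. Your attention to the fact that the anti-holomorphic reflection preserves interior/exterior, hence the $+/-$ boundary-value labels, is precisely the bookkeeping point that makes the argument go through.
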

\begin{corollary}
$\Psi^\pm(X,-T)=\Psi^\pm(X,T)^*$.  In particular, $\Psi^\pm(X,0)$ is real-valued.
\label{cor:Psi-real}
\end{corollary}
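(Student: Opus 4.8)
The plan is to obtain Corollary~\ref{cor:Psi-real} as an immediate consequence of the symmetry recorded in Proposition~\ref{prop:minus-conjugate-z}, in precisely the same spirit that Corollary~\ref{cor:Psi-even} follows from Proposition~\ref{prop:minus-z}: the Riemann--Hilbert symmetry does all of the substantive work, and what remains is to combine it with the recovery formula \eqref{eq:Psi-R} and to keep careful track of complex conjugation. The first step is to rewrite Proposition~\ref{prop:minus-conjugate-z} so as to isolate $\mathbf{R}^\pm(\Lambda;X,-T)$. Starting from the identity $\mathbf{R}^\pm(-\Lambda^*;X,-T)^*=\mathbf{R}^\pm(\Lambda;X,T)$, I would conjugate both sides and then replace $\Lambda$ by $-\Lambda^*$; using $(-\Lambda^*)^*=-\Lambda$ this yields the equivalent relation $\mathbf{R}^\pm(\Lambda;X,-T)=\mathbf{R}^\pm(-\Lambda^*;X,T)^*$, valid for all $\Lambda$ with $|\Lambda|\neq 1$.

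The second step feeds this into the recovery formula. By \eqref{eq:Psi-R},
\begin{equation}
\Psi^\pm(X,-T)=2\ii\lim_{\Lambda\to\infty}\Lambda R^\pm_{12}(\Lambda;X,-T)=2\ii\lim_{\Lambda\to\infty}\Lambda\left[R^\pm_{12}(-\Lambda^*;X,T)\right]^*.
\end{equation}
Setting $\mu:=-\Lambda^*$, so that $\Lambda=-\mu^*$ and $\mu\to\infty$ as $\Lambda\to\infty$, and using continuity of complex conjugation to interchange it with the limit, the right-hand side becomes
\begin{equation}
\Psi^\pm(X,-T)=-2\ii\lim_{\mu\to\infty}\mu^*\left[R^\pm_{12}(\mu;X,T)\right]^*=-2\ii\left(\lim_{\mu\to\infty}\mu R^\pm_{12}(\mu;X,T)\right)^*.
\end{equation}
Recognizing the bracketed limit as $\Psi^\pm(X,T)/(2\ii)$ by a second appeal to \eqref{eq:Psi-R}, the prefactors $-2\ii$ and $(2\ii)^{-1*}=-(2\ii)^{-1}$ combine to leave exactly $\Psi^\pm(X,-T)=\Psi^\pm(X,T)^*$, which is the asserted identity. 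The ``in particular'' clause is then automatic: setting $T=0$ forces $\Psi^\pm(X,0)=\Psi^\pm(X,0)^*$, so $\Psi^\pm(X,0)$ is real-valued.

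Since the result is so short given Proposition~\ref{prop:minus-conjugate-z}, I expect the only genuine pitfall to be the sign bookkeeping in the second step: one must confirm that the sign arising from $\Lambda=-\mu^*$ and the sign arising from conjugating the factor $2\ii$ cancel, so that the outcome is $\Psi^\pm(X,T)^*$ rather than $-\Psi^\pm(X,T)^*$. All of the real analytic content lives in Proposition~\ref{prop:minus-conjugate-z} itself, whose proof I would establish separately by the usual uniqueness argument: the candidate $\mathbf{R}^\pm(-\Lambda^*;X,-T)^*$ is holomorphic in $\Lambda$ off the unit circle (an antiholomorphic substitution followed by conjugation is holomorphic), is normalized to $\mathbb{I}$ at $\Lambda=\infty$, and---with due attention to contour orientation---satisfies the jump \eqref{eq:R-jump}, because the phase $\phi(\Lambda;X,T):=\Lambda X+\Lambda^2 T\pm 2\Lambda^{-1}$ obeys $\phi(-\Lambda^*;X,-T)=-\phi(\Lambda;X,T)^*$ while $\mathbf{Q}^{-1}$ is real, so that the conjugated jump matrix reduces back to $V(\Lambda;X,T)$; uniqueness of the solution of Riemann-Hilbert Problem~\ref{rhp:limit-simpler} then forces the claimed identity.
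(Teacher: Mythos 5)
Your proof is correct and follows essentially the same route as the paper's: both combine Proposition~\ref{prop:minus-conjugate-z} with the recovery formula \eqref{eq:Psi-R}, perform the substitution $\Lambda\mapsto-\Lambda^*$ in the limit, and check that the sign from $\Lambda=-\mu^*$ cancels against the sign from conjugating $2\ii$. Your additional sketch of Proposition~\ref{prop:minus-conjugate-z} itself (via the phase symmetry $\phi(-\Lambda^*;X,-T)=-\phi(\Lambda;X,T)^*$, realness of $\mathbf{Q}$, and uniqueness) is also sound, and goes slightly beyond the paper, which states that proposition without proof.
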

\begin{proof}
Using \eqref{eq:Psi-R} and Proposition~\ref{prop:minus-conjugate-z},
\begin{equation}
\Psi^\pm(X,-T)=2\ii\lim_{\Lambda\to\infty}\Lambda R_{12}^\pm(\Lambda;X,-T)=2\ii\lim_{\Lambda\to\infty}
\Lambda R_{12}^\pm(-\Lambda^*;X,T)^*=-\left[2\ii\lim_{\Lambda\to\infty}\Lambda^*R_{12}^\pm(-\Lambda^*;X,T)\right]^*.
\end{equation}
Now replacing $\Lambda$ with $-\Lambda^*$,
\begin{equation}
\Psi^\pm(X,-T)=\left[2\ii\lim_{\Lambda\to\infty}\Lambda R_{12}^\pm(\Lambda;X,T)\right]^*=\Psi^\pm(X,T)^*
\end{equation}
according to \eqref{eq:Psi-R}.
\end{proof}
\begin{proposition}
$\Psi^\pm(0,0)=\pm 4$.
\label{prop:Psi-peak}
\end{proposition}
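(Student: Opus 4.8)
The cleanest plan is to avoid solving any new Riemann--Hilbert problem and instead extract $\Psi^\pm(0,0)$ as a limit of the exact finite-order peak values, combining the two results already in hand: the closed-form evaluation $\psi_k(0,0)=(-1)^k(2k+1)$ from Proposition~\ref{prop:origin-value} and the near-field convergence of Theorem~\ref{theorem:main}, both specialized to the point $(X,T)=(0,0)$. Since $(0,0)$ trivially lies in a compact subset of $\mathbb{R}^2$, Theorem~\ref{theorem:main} applies there with no extra work.

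Concretely, for even order $k=2n$ Proposition~\ref{prop:origin-value} gives $\psi_{2n}(0,0)=(-1)^{2n}(4n+1)=4n+1$, so $n^{-1}\psi_{2n}(0,0)=4+n^{-1}$. On the other hand, evaluating the even case of Theorem~\ref{theorem:main} at $(X,T)=(0,0)$ yields $n^{-1}\psi_{2n}(0,0)=\Psi^+(0,0)+O(n^{-1})$. Equating and letting $n\to\infty$ forces $\Psi^+(0,0)=4$. The odd case is identical: $\psi_{2n-1}(0,0)=(-1)^{2n-1}(4n-1)=-4n+1$, so $n^{-1}\psi_{2n-1}(0,0)=-4+n^{-1}=\Psi^-(0,0)+O(n^{-1})$, giving $\Psi^-(0,0)=-4$. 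Together these are exactly $\Psi^\pm(0,0)=\pm 4$; one could alternatively compute only the even case and obtain the odd value for free from Corollary~\ref{cor:Psi-pm}.

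As a self-contained cross-check I would also carry out a direct solution of Riemann--Hilbert Problem~\ref{rhp:limit} at $(X,T)=(0,0)$, mirroring the argument of Proposition~\ref{prop:origin-value}. Setting $X=T=0$ collapses the conjugating factors $\ee^{\pm\ii(\Lambda X+\Lambda^2T)\sigma_3}$ to $\mathbb{I}$, so the jump in \eqref{eq:P-jump} reduces to $\mathbf{Q}\ee^{\mp 2\ii\Lambda^{-1}\sigma_3}\mathbf{Q}^{-1}$, which is diagonalized by the constant conjugation $\mathbf{D}^\pm(\Lambda):=\mathbf{Q}^{-1}\mathbf{P}^\pm(\Lambda;0,0)\mathbf{Q}$. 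The resulting diagonal (scalar) problem has jump $\ee^{\mp 2\ii\Lambda^{-1}\sigma_3}$, whose entries continue analytically into the exterior $|\Lambda|>1$ and tend to $\mathbb{I}$ at $\Lambda=\infty$; hence by uniqueness $\mathbf{D}^\pm(\Lambda)=\ee^{\mp 2\ii\Lambda^{-1}\sigma_3}$ for $|\Lambda|>1$ and $\mathbf{D}^\pm(\Lambda)=\mathbb{I}$ for $|\Lambda|<1$. Undoing the conjugation and expanding for large $\Lambda$ gives $\mathbf{P}^\pm(\Lambda;0,0)=\mathbb{I}\mp 2\ii\Lambda^{-1}\mathbf{Q}\sigma_3\mathbf{Q}^{-1}+O(\Lambda^{-2})=\mathbb{I}\mp 2\ii\Lambda^{-1}\sigma_1+O(\Lambda^{-2})$ using $\mathbf{Q}\sigma_3\mathbf{Q}^{-1}=\sigma_1$, and then \eqref{eq:Psi-pm-define} returns $\Psi^\pm(0,0)=2\ii\cdot(\mp 2\ii)=\pm 4$.

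There is no serious obstacle in either route. For the limit argument the only thing to verify is that the $O(n^{-1})$ error is genuinely independent of the (fixed) point $(0,0)$, which is immediate. For the direct solution the single point needing care is the clockwise orientation of the unit circle, so that the $+$ boundary value is the exterior one and the scalar factorization is assigned to the exterior (including $\Lambda=\infty$) rather than the interior; once this is fixed the explicit solution and the sign bookkeeping are routine.
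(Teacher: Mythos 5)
Your proposal is correct and takes essentially the same approach as the paper: the paper's proof of Proposition~\ref{prop:Psi-peak} explicitly notes that the result follows by combining Proposition~\ref{prop:origin-value} with Theorem~\ref{theorem:main} (your first route), and then carries out precisely the direct solution you give as a cross-check, except that it works with $\mathbf{R}^\pm$ of Riemann-Hilbert Problem~\ref{rhp:limit-simpler} instead of $\mathbf{P}^\pm$ --- a trivial difference, since the two differ for $|\Lambda|>1$ only by the diagonal factor $\ee^{\pm 2\ii\Lambda^{-1}\sigma_3}$, which does not affect the $(1,2)$-entry at order $\Lambda^{-1}$. Your orientation bookkeeping (with the clockwise circle, the $+$ boundary value is exterior, so the analytic continuation of the jump is assigned to the exterior) and the identity $\mathbf{Q}\sigma_3\mathbf{Q}^{-1}=\sigma_1$ are exactly the ingredients the paper uses.
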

\begin{proof}
This result can be deduced by combining Proposition~\ref{prop:origin-value} with Theorem~\ref{theorem:main}, but we can also give the following independent proof.
If $X=T=0$, it is easy to see that the solution of Riemann-Hilbert Problem~\ref{rhp:limit-simpler} is simply
\begin{equation}
\mathbf{R}^\pm(\Lambda;0,0)=\begin{cases}
\mathbf{Q},&\quad |\Lambda|<1\\
\mathbf{Q}\ee^{\mp 2\ii \Lambda^{-1}\sigma_3}\mathbf{Q}^{-1}\ee^{\pm 2\ii \Lambda^{-1}\sigma_3},&\quad |\Lambda|>1.
\end{cases}
\end{equation}
We observe that 
\begin{equation}
\begin{split}
\mathbf{Q}\ee^{\mp 2\ii \Lambda^{-1}\sigma_3}\mathbf{Q}^{-1}\ee^{\pm 2\ii \Lambda^{-1}\sigma_3}& = \mathbf{Q}(\mathbb{I}\mp 2\ii\sigma_3\Lambda^{-1}+O(\Lambda^{-2}))\mathbf{Q}^{-1}(\mathbb{I}\pm 2\ii\sigma_3\Lambda^{-1}+O(\Lambda^{-2}))\\
& = \mathbb{I} \pm 2\ii(\sigma_3 - \mathbf{Q}\sigma_3\mathbf{Q}^{-1})\Lambda^{-1}+O(\Lambda^{-1})\\
&=\mathbb{I}\pm 2\ii (\sigma_3-\sigma_1)\Lambda^{-1}+O(\Lambda^{-2}),\quad \Lambda\to\infty.
\end{split}
\end{equation}
Therefore,
\begin{equation}
\Psi^\pm(0,0)=2\ii\lim_{\Lambda\to\infty}\Lambda R_{12}^\pm(\Lambda;0,0)=\pm 4
\end{equation}
which completes the proof.
\end{proof}

\subsection{Differential equations}
\label{sec:ODE}
Proposition~\ref{prop:basic} showed that the functions $\Psi^\pm(X,T)$ satisfy the partial differential equation \eqref{eq:Psi-NLS}.  The goal of this section is to show that these special solutions of the focusing nonlinear Schr\"odinger equation also satisfy certain ordinary differential equations in $X$ for each fixed $T$ as well as certain other ordinary differential equations in $T$ for each fixed $X$.
According to Corollary~\ref{cor:Psi-pm}, which explicitly relates $\Psi^-(X,T)$ to $\Psi^+(X,T)$, it suffices to consider the function $\Psi(X,T):=\Psi^+(X,T)$, and we will do so for the rest of this section. 
\subsubsection{Lax systems related to Riemann-Hilbert Problem~\ref{rhp:limit-simpler}}
The function $\Psi(X,T)=\Psi^+(X,T)$ is encoded in the solution $\mathbf{R}(\Lambda;X,T):=\mathbf{R}^+(\Lambda;X,T)$ of Riemann-Hilbert Problem~\ref{rhp:limit-simpler} in the ``$+$'' case.
The latter problem has a jump matrix in which all dependence on $\Lambda$ as well as $(X,T)\in\mathbb{R}^2$ appears only in conjugating exponential factors.  Therefore, as in the proof of Proposition~\ref{prop:basic}, we begin by setting
\begin{equation}
\mathbf{W}(\Lambda;X,T):=\mathbf{R}(\Lambda;X,T)\ee^{-\ii(\Lambda X + \Lambda^2 T + 2\Lambda^{-1})\sigma_3}.
\label{eq:W-def}
\end{equation}
This transformation removes the dependence on all three variables $(\Lambda;X,T)$ from the jump condition, and hence $\mathbf{W}(\Lambda;X,T)$ is analytic for $|\Lambda|\neq 1$ and satisfies the simple jump condition
\begin{equation}
\mathbf{W}_+(\Lambda;X,T) = \mathbf{W}_-(\Lambda;X,T) \mathbf{Q}^{-1},\quad |\Lambda|=1.
\label{eq:W-simple-jump}
\end{equation}
Exactly as in the proof of Proposition~\ref{prop:basic} it follows immediately that $\mathbf{W}(\Lambda;X,T)$ satisfies the Lax pair equations
\begin{equation}
\frac{\partial\mathbf{W}}{\partial X}(\Lambda;X,T)=\mathbf{A}(\Lambda;X,T)\mathbf{W}(\Lambda;X,T)
\label{eq:Lax-X-ODEs}
\end{equation}
and
\begin{equation}
\frac{\partial\mathbf{W}}{\partial T}(\Lambda;X,T)=\mathbf{B}(\Lambda;X,T)\mathbf{W}(\Lambda;X,T)
\label{eq:Lax-T-ODEs}
\end{equation}
where
\begin{equation}
\mathbf{A}(\Lambda;X,T):=\begin{bmatrix}-\ii\Lambda & \Psi(X,T)\\-\Psi(X,T)^* & \ii\Lambda\end{bmatrix}
\end{equation}
and
\begin{equation}
\mathbf{B}(\Lambda;X,T):=\begin{bmatrix}-\ii\Lambda^2 + \tfrac{1}{2}\ii|\Psi(X,T)|^2 & 
\Lambda\Psi(X,T)+\Phi(X,T)\\-\Lambda\Psi(X,T)^* -\Phi(X,T)^* & \ii\Lambda^2-\tfrac{1}{2}\ii |\Psi(X,T)|^2
\end{bmatrix},
\end{equation}
in which the potentials $\Psi(X,T)$ and $\Phi(X,T)$ can be found from the coefficients in the convergent Laurent series
\begin{equation}
\mathbf{R}(\Lambda;X,T)=\mathbb{I}+\sum_{j=1}^\infty \mathbf{R}^{[j]}(X,T)\Lambda^{-j},\quad|\Lambda|>1
\label{eq:R-series-infinity}
\end{equation}
by the formul\ae\
\begin{equation}
\Psi(X,T)=2\ii R^{[1]}_{12}(X,T)\quad\text{and}\quad\Psi(X,T)^*=2\ii R^{[1]}_{21}(X,T),
\end{equation}
\begin{equation}
\begin{split}
\Phi(X,T)&=2\ii\left(R^{[2]}_{12}(X,T)-R^{[1]}_{12}(X,T)R^{[1]}_{22}(X,T)\right)\\
\Phi(X,T)^*&=2\ii\left(R^{[2]}_{21}(X,T)-R^{[1]}_{21}(X,T)R^{[1]}_{11}(X,T)\right).
\end{split}
\end{equation}
In fact, 
\begin{equation}
\mathbf{A}(\Lambda;X,T)=-\ii\Lambda\sigma_3 + \mathbf{A}^{[0]}(X,T),
\end{equation}
where
\begin{equation}
\mathbf{A}^{[0]}(X,T):=\ii[\sigma_3,\mathbf{R}^{[1]}(X,T)]=\begin{bmatrix}0 & \Psi(X,T)\\-\Psi(X,T)^* & 0\end{bmatrix}
\end{equation}
and
\begin{equation}
\mathbf{B}(\Lambda;X,T)=-\ii\Lambda^2\sigma_3 +\mathbf{B}^{[1]}(X,T)\Lambda +\mathbf{B}^{[0]}(X,T),
\end{equation}
where
\begin{equation}
\begin{split}
\mathbf{B}^{[1]}(X,T)&:=\mathbf{A}^{[0]}(X,T)\\
\mathbf{B}^{[0]}(X,T)&:=\ii[\mathbf{R}^{[1]}(X,T),\sigma_3\mathbf{R}^{[1]}(X,T)] + \ii[\sigma_3,\mathbf{R}^{[2]}(X,T)]=
\begin{bmatrix}\tfrac{1}{2}\ii |\Psi(X,T)|^2 & \Phi(X,T)\\-\Phi(X,T)^* & -\tfrac{1}{2}\ii |\Psi(X,T)|^2\end{bmatrix}.
\end{split}
\end{equation}

Since according to \eqref{eq:W-simple-jump} the jump matrix for $\mathbf{W}(\Lambda;X,T)$ is also independent of $\Lambda$, it is possible to obtain an additional Lax equation by differentiating with respect to $\Lambda$.  Thus we find that the matrix $\mathbf{L}(\Lambda;X,T)$ defined by
\begin{equation}
\mathbf{L}(\Lambda;X,T):=\mathbf{W}'(\Lambda;X,T)\mathbf{W}(\Lambda;X,T)^{-1},\quad \prime:=\frac{\dd}{\dd\Lambda}
\label{eq:L-def}
\end{equation}
has no jump across the unit circle and hence may be considered to be analytic in the whole complex $\Lambda$-plane, with the possible exception only of an isolated singularity at $\Lambda=0$ arising from differentiation of the exponential factor $\ee^{-2\ii\Lambda^{-1}\sigma_3}$.

We will now determine $\mathbf{L}(\Lambda;X,T)$. By definition
\begin{equation}
\mathbf{L}(\Lambda;X,T) = \mathbf{R}'(\Lambda;X,T)\mathbf{R}(\Lambda;X,T)^{-1} -\ii(X+2T\Lambda -2 \Lambda^{-2})\mathbf{R}(\Lambda;X,T)\sigma_3\mathbf{R}(\Lambda;X,T)^{-1}
\end{equation}
and from the series \eqref{eq:R-series-infinity}, we obtain the expansion
\begin{equation}
\mathbf{L}(\Lambda;X,T)=\mathbf{L}^{[1]}(X,T)\Lambda +\mathbf{L}^{[0]}(X,T) + \mathbf{L}^{[-1]}(X,T)\Lambda^{-1}+
\mathbf{L}^{[-2]}(X,T)\Lambda^{-2}+O(\Lambda^{-3}),\quad\Lambda\to\infty,
\label{eq:L-infinity}
\end{equation}
where
\begin{equation}
\begin{split}
\mathbf{L}^{[1]}(X,T)&:=-2\ii T\sigma_3\\
\mathbf{L}^{[0]}(X,T)&:= -\ii X\sigma_3+2T\mathbf{B}^{[1]}(X,T)\\
\mathbf{L}^{[-1]}(X,T)&:= X\mathbf{A}^{[0]}(X,T)+2 T\mathbf{B}^{[0]}(X,T)\\
\mathbf{L}^{[-2]}(X,T)&:=2\ii\sigma_3-\mathbf{R}^{[1]}(X,T) + X\mathbf{B}^{[0]}(X,T)\\
&\quad\quad{}+2\ii T\left([\mathbf{R}^{[1]}(X,T),\sigma_3\mathbf{R}^{[2]}(X,T)]+[\mathbf{R}^{[2]}(X,T),\sigma_3\mathbf{R}^{[1]}(X,T)]
\right.\\
&\quad\quad\quad\quad\quad\quad\left.{}+[\sigma_3\mathbf{R}^{[1]}(X,T)^2,\mathbf{R}^{[1]}(X,T)]+[\sigma_3,\mathbf{R}^{[3]}(X,T)]\right).
\end{split}
\end{equation}
Likewise, from the Taylor expansions at the origin
\begin{equation}
\begin{split}
\mathbf{R}(\Lambda;X,T)&=\mathbf{R}(0;X,T) + \mathbf{R}'(0;X,T)\Lambda + O(\Lambda^2),\quad\Lambda\to 0\\
\mathbf{R}(\Lambda;X,T)^{-1}&=\mathbf{R}(0;X,T)^{-1}-\mathbf{R}(0;X,T)^{-1}\mathbf{R}'(0;X,T)\mathbf{R}(0;X,T)^{-1}\Lambda + O(\Lambda^2),\quad\Lambda\to 0,
\end{split}
\end{equation}
we get 
\begin{multline}
\mathbf{L}(\Lambda;X,T)=2\ii\mathbf{R}(0;X,T)\sigma_3\mathbf{R}(0;X,T)^{-1}\Lambda^{-2} \\+ 
2\ii[\mathbf{R}'(0;X,T)\mathbf{R}(0;X,T)^{-1},\mathbf{R}(0;X,T)\sigma_3\mathbf{R}(0;X,T)^{-1}]\Lambda^{-1}+O(1),\quad\Lambda\to 0.
\label{eq:L-zero}
\end{multline}
Comparing \eqref{eq:L-infinity} with \eqref{eq:L-zero} shows that $\mathbf{L}(\Lambda;X,T)$ is the Laurent polynomial
\begin{equation}
\mathbf{L}(\Lambda;X,T) = \mathbf{L}^{[1]}(X,T)\Lambda + \mathbf{L}^{[0]}(X,T) + \mathbf{L}^{[-1]}(X,T)\Lambda^{-1} + \mathbf{L}^{[-2]}(X,T)\Lambda^{-2}.
\label{eq:def-L}
\end{equation}
Moreover, we obtain an equivalent representation for $\mathbf{L}^{[-2]}(X,T)$, namely
\begin{equation}
\mathbf{L}^{[-2]}(X,T)=2\ii\mathbf{R}(0;X,T)\sigma_3\mathbf{R}(0;X,T)^{-1}
\end{equation}
which shows that $\mathrm{tr}(\mathbf{L}^{[-2]}(X,T))=0$ and $\det(\mathbf{L}^{[-2]}(X,T))=4$.  Taking into account the Schwarz symmetry satisfied by $\mathbf{R}(\Lambda;X,T)$:
\begin{equation}
\mathbf{R}(\Lambda;X,T)=\sigma_2\mathbf{R}(\Lambda^*;X,T)^*\sigma_2,\quad |\Lambda|\neq 1,\quad (X,T)\in\mathbb{R}^2,
\end{equation}
it follows that $\mathbf{L}^{[-2]}(X,T)$ is a matrix with the form
\begin{equation}
\mathbf{L}^{[-2]}(X,T)=\begin{bmatrix}\ii a(X,T) & \ii b(X,T)\\\ii b(X,T)^* & -\ii a(X,T)\end{bmatrix},\quad \text{where}\quad a:\mathbb{R}^2\to\mathbb{R}\quad\text{and}\quad a(X,T)^2 + |b(X,T)|^2 = 4.
\label{eq:cons-law}
\end{equation}
With $\mathbf{L}(\Lambda;X,T)$ defined in this way, we reinterpret the definition \eqref{eq:L-def}
as the Lax system
\begin{equation}
\mathbf{W}'(\Lambda;X,T)=\mathbf{L}(\Lambda;X,T)\mathbf{W}(\Lambda;X,T),
\label{eq:Lax-Lambda-ODEs}
\end{equation}

\subsubsection{Ordinary differential equations in $X$}
Since $\mathbf{W}(\Lambda;X,T)$ is simultaneously a fundamental solution matrix of the first-order linear Lax systems \eqref{eq:Lax-X-ODEs} and \eqref{eq:Lax-Lambda-ODEs}, the coefficient matrices $\mathbf{A}(\Lambda;X,T)$ and $\mathbf{L}(\Lambda;X,T)$ necessarily satisfy the zero-curvature condition $\mathbf{L}_X-\mathbf{A}_\Lambda + [\mathbf{L},\mathbf{A}]=\mathbf{0}$.  The left-hand side is a Laurent polynomial in $\Lambda$ with powers ranging from $\Lambda^2$ through $\Lambda^{-2}$, and therefore its coefficients must all vanish.  The equation arising from terms proportional to $\Lambda^2$ reads $[\mathbf{L}^{[1]},-\ii\sigma_3]=\mathbf{0}$, which holds automatically.  Similarly, the terms proportional to $\Lambda$ give the equation $\mathbf{L}^{[1]}_X+[\mathbf{L}^{[1]},\mathbf{A}^{[0]}]+[\mathbf{L}^{[0]},-\ii\sigma_3]=\mathbf{0}$, which holds automatically because $\mathbf{B}^{[1]}(X,T)\equiv\mathbf{A}^{[0]}(X,T)$.  The first nontrivial information comes from the terms proportional to $\Lambda^0$, giving the equation $\mathbf{L}^{[0]}_X+\ii\sigma_3 + [\mathbf{L}^{[0]},\mathbf{A}^{[0]}]+[\mathbf{L}^{[-1]},-\ii\sigma_3]=\mathbf{0}$, the diagonal elements of which give no information, but the off-diagonal elements read
\begin{equation}
\begin{split}
T\Psi_X+2\ii T\Phi&=0\\ T\Psi_X^* -2\ii T\Phi^*&=0.
\end{split}
\label{eq:X-eqns-1}
\end{equation}
The terms proportional to $\Lambda^{-1}$ give the equation $\mathbf{L}^{[-1]}_X+[\mathbf{L}^{[-1]},\mathbf{A}^{[0]}]+[\mathbf{L}^{[-2]},-\ii\sigma_3]=\mathbf{0}$.  Here one can easily confirm that the diagonal terms reproduce again the same conditions \eqref{eq:X-eqns-1}, while the off-diagonal terms read
\begin{equation}
\begin{split}
(X\Psi)_X -2b + 2T\Phi_X + 2\ii T|\Psi|^2\Psi&=0\\
(X\Psi^*)_X-2b^*+2T\Phi^*_X-2\ii T|\Psi|^2\Psi^*&=0.
\end{split}
\label{eq:X-eqns-2}
\end{equation}
Finally, the terms proportional to $\Lambda^{-2}$ give the equation $\mathbf{L}^{[-2]}_X + [\mathbf{L}^{[-2]},\mathbf{A}^{[0]}]=\mathbf{0}$, which is equivalent to
\begin{equation}
\begin{split}
a_X-b\Psi^*-b^*\Psi&=0\\ 
b_X+2a\Psi&=0\\ 
b^*_X+2a\Psi^*&=0.
\end{split}
\label{eq:X-eqns-3}
\end{equation}

Our primary interest is in the function $\Psi(X,T)$, so we first observe that the product $T\Phi$ can be explicitly eliminated using \eqref{eq:X-eqns-1}, so that \eqref{eq:X-eqns-2} can be replaced with
\begin{equation}
\begin{split}
(X\Psi)_X-2b+\ii T\Psi_{XX}+2\ii T|\Psi|^2\Psi&=0\\
(X\Psi^*)_X-2b^*-\ii T\Psi^*_{XX}-2\ii T|\Psi|^2\Psi^*&=0.
\end{split}
\label{eq:X-eqns-2-no-Phi}
\end{equation}
Next, \eqref{eq:X-eqns-2-no-Phi} can be used to explicitly eliminate $b$ and $b^*$, so that \eqref{eq:X-eqns-3} becomes
\begin{equation}
\begin{split}
M_X&=|\Psi|^2\\
(X\Psi)_{XX}+2X|\Psi|^2\Psi+2M\Psi + \ii T\Psi_{XXX}+6\ii T|\Psi|^2\Psi_X&=0\\
(X\Psi^*)_{XX}+2X|\Psi|^2\Psi^*+2M\Psi^*-\ii T\Psi^*_{XXX}-6\ii T|\Psi|^2\Psi^*_X&=0,
\end{split}
\end{equation}
where it has become convenient to introduce
\begin{equation}
M(X,T):=2a(X,T)-X|\Psi(X,T)|^2-\ii T\left(\Psi(X,T)^*\Psi_X(X,T)-\Psi(X,T)\Psi_X(X,T)^*\right).
\end{equation}
Finally, dividing by $\Psi$ and taking another derivative allows $M$ to be eliminated, leaving the following fourth-order ordinary differential equation for $\Psi(X,T)$ as a function of $X$ for fixed $T$:
\begin{multline}
X\Psi\Psi_{XXX}+3\Psi\Psi_{XX}-X\Psi_X\Psi_{XX}-2(\Psi_{X})^2 + 4\Psi^3\Psi^*+2X\Psi^2\Psi^*\Psi_X + 2X\Psi^3\Psi_X^* \\{}+\ii T\left(\Psi\Psi_{XXXX}-\Psi_X\Psi_{XXX}+6\Psi^2\Psi_X\Psi_X^*+6\Psi^2\Psi^*\Psi_{XX}\right)=0
\label{eq:Psi-eqn-X}
\end{multline}
and its complex conjugate.  Another ordinary differential equation of lower order can also be obtained by using the conservation law $a^2+|b|^2=4$ (cf., \eqref{eq:cons-law}).  First we use \eqref{eq:X-eqns-3} to express $a^2+|b|^2$ in the form
\begin{equation}
a^2+|b|^2=aa^*+bb^*=\frac{b_Xb^*_X}{4|\Psi|^2}+bb^*,
\end{equation}
and then explicitly eliminate $b$ and $b^*$ using \eqref{eq:X-eqns-2-no-Phi} to find
\begin{equation}
\left|\left((X\Psi)_X+\ii T\Psi_{XX}+2\ii T|\Psi|^2\Psi\right)_X\right|^2 + 4|\Psi|^2\left|(X\Psi)_X+\ii T\Psi_{XX}+2\ii T|\Psi|^2\Psi\right|^2=64|\Psi|^2.
\label{eq:Psi-eqn-X-conservation}
\end{equation}
Therefore we have proved the following.
\begin{theorem}[Ordinary differential equations in $X$ for rogue waves of infinite order]
The rogue wave $\Psi(X,T)=\Psi^+(X,T)$ of infinite order satisfies, for each fixed $T\in\mathbb{R}$, the ordinary differential equations \eqref{eq:Psi-eqn-X} and \eqref{eq:Psi-eqn-X-conservation} with respect to $X$.
\label{theorem:X-ODEs}
\end{theorem}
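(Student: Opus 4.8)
The plan is to obtain both ordinary differential equations as algebraic consequences of the isomonodromy structure already assembled for $\mathbf{W}(\Lambda;X,T)$. The decisive observation is that $\mathbf{W}$ is a \emph{simultaneous} fundamental solution matrix of the two linear systems \eqref{eq:Lax-X-ODEs} and \eqref{eq:Lax-Lambda-ODEs}, so the mixed partials $\mathbf{W}_{X\Lambda}$ and $\mathbf{W}_{\Lambda X}$ must coincide. This forces the zero-curvature compatibility condition
\[
\mathbf{L}_X - \mathbf{A}_\Lambda + [\mathbf{L},\mathbf{A}] = \mathbf{0}
\]
on the coefficient matrices, and the whole content of the theorem is extracted from this single identity. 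The first task is to make the identity usable by exploiting that $\mathbf{L}(\Lambda;X,T)$ is the \emph{Laurent polynomial} \eqref{eq:def-L}: it has no jump across the unit circle (the jump matrix for $\mathbf{W}$ is the constant $\mathbf{Q}^{-1}$), it is at most $O(\Lambda)$ at infinity by \eqref{eq:L-infinity}, and at worst $O(\Lambda^{-2})$ at the origin by \eqref{eq:L-zero}. A Liouville argument then pins down its exact form, and matching the two expansions yields the representation $\mathbf{L}^{[-2]}=2\ii\mathbf{R}(0;X,T)\sigma_3\mathbf{R}(0;X,T)^{-1}$ together with the conserved quantity $a^2+|b|^2=4$ recorded in \eqref{eq:cons-law}.

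With $\mathbf{L}$ and $\mathbf{A}$ both explicit in $\Lambda$, the second step is purely mechanical: substitute into the compatibility identity and collect powers from $\Lambda^2$ down to $\Lambda^{-2}$. The top two orders vanish identically (using $\mathbf{B}^{[1]}\equiv\mathbf{A}^{[0]}$), while the off-diagonal parts of the $\Lambda^0$, $\Lambda^{-1}$, and $\Lambda^{-2}$ coefficients produce respectively the relations \eqref{eq:X-eqns-1}, \eqref{eq:X-eqns-2}, and \eqref{eq:X-eqns-3} among the scalar potentials $\Psi$, $\Phi$, $a$, and $b$. This is bookkeeping, but it must be done attentively, since the diagonal parts merely re-derive earlier relations and the sign conventions in $\mathbf{B}$ and in $\mathbf{L}^{[-2]}$ must be respected.

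The third step is the elimination that closes the system on $\Psi$ alone. I would proceed in the order forced by the algebra: use \eqref{eq:X-eqns-1} to express $T\Phi$ through $T\Psi_X$, so that the term $2T\Phi_X$ in \eqref{eq:X-eqns-2} collapses to $\ii T\Psi_{XX}$, giving \eqref{eq:X-eqns-2-no-Phi}; then solve \eqref{eq:X-eqns-2-no-Phi} for $b$ and $b^*$ and insert these into \eqref{eq:X-eqns-3}. Introducing the auxiliary quantity $M=2a-X|\Psi|^2-\ii T(\Psi^*\Psi_X-\Psi\Psi_X^*)$ to absorb $2a$, one divides by $\Psi$ and differentiates once more to eliminate $M$, arriving at the fourth-order equation \eqref{eq:Psi-eqn-X}. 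For the companion equation I would instead feed the same expressions for $b$, $b^*$ into the conserved quantity $a^2+|b|^2=4$, rewriting $a^2=aa^*=b_Xb_X^*/(4|\Psi|^2)$ via the middle relation of \eqref{eq:X-eqns-3}, which produces \eqref{eq:Psi-eqn-X-conservation} directly.

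The main obstacle I anticipate is not a single deep idea but the sustained, error-prone elimination in the third step, compounded by one analytic subtlety: dividing by $\Psi$ is legitimate only where $\Psi\neq 0$, yet the rogue wave of infinite order has zeros (already visible in the finite-order plots). I would resolve this by observing that \eqref{eq:Psi-eqn-X} and \eqref{eq:Psi-eqn-X-conservation} are polynomial identities in $\Psi$ and its derivatives with real-analytic coefficients; since $\Psi$ is real-analytic by Proposition~\ref{prop:basic}, an identity valid on the open dense set $\{\Psi\neq 0\}$ extends by continuity to all of $\mathbb{R}^2$. The only remaining care is to confirm consistency of the complex-conjugate equations with the Schwarz symmetry of $\mathbf{R}$, which is what guarantees $\Psi^*=2\ii R^{[1]}_{21}$ and the reality of $a$ used throughout the elimination.
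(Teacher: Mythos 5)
Your proposal is correct and follows essentially the same route as the paper's proof: the zero-curvature condition $\mathbf{L}_X-\mathbf{A}_\Lambda+[\mathbf{L},\mathbf{A}]=\mathbf{0}$ from simultaneous solvability of \eqref{eq:Lax-X-ODEs} and \eqref{eq:Lax-Lambda-ODEs}, power-by-power collection from $\Lambda^2$ down to $\Lambda^{-2}$ yielding \eqref{eq:X-eqns-1}--\eqref{eq:X-eqns-3}, elimination of $T\Phi$, then $b$, $b^*$, and the auxiliary quantity $M$ to reach \eqref{eq:Psi-eqn-X}, and insertion of $b$, $b^*$ into the conserved quantity $a^2+|b|^2=4$ to obtain \eqref{eq:Psi-eqn-X-conservation}. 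Your closing observation that the division by $\Psi$ is justified by real-analyticity of $\Psi$ (Proposition~\ref{prop:basic}) together with density of $\{\Psi\neq 0\}$ is a sound refinement of a point the paper passes over silently.
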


When $T\neq 0$, after making some necessary but unimportant rescalings, the compatible linear equations \eqref{eq:Lax-X-ODEs} and \eqref{eq:Lax-Lambda-ODEs} 
fit into the scheme of Sakka \cite{Sakka09} for a hierarchy generalizing the Painlev\'e-III equation.  In particular, for $T\neq 0$ the nonlinear equations \eqref{eq:Psi-eqn-X} and \eqref{eq:Psi-eqn-X-conservation} are connected to the second equation in Sakka's Painlev\'e-III hierarchy (see \cite[Sec.\@ 4, Example 1]{Sakka09} in which, after correcting for a typo, Sakka's matrices $A$ and $B$ correspond with $\mathbf{L}$ and $\mathbf{A}$ in our notation respectively).  The special solution $\Psi(X,T)$ corresponds to Sakka's integrals having values $\gamma_1=\gamma_2=0$ and $\gamma_3=16$.

When $T=0$, the equations \eqref{eq:Lax-X-ODEs} and \eqref{eq:Lax-Lambda-ODEs} correspond instead to the first member of the hierarchy, namely the Painlev\'e-III equation itself.  To make this connection more concrete, we first recall from Corollary~\ref{cor:Psi-real} that $\Psi(X,0)$ is real-valued, so the equations \eqref{eq:Psi-eqn-X} and \eqref{eq:Psi-eqn-X-conservation} take the simpler form
\begin{equation}
X\Psi\Psi_{XXX}+3\Psi\Psi_{XX}-X\Psi_X\Psi_{XX}-2(\Psi_X)^2+4X\Psi^3\Psi_X+4\Psi^4=0,\quad T=0,
\label{eq:Psi-eqn-X-T0}
\end{equation}
and 
\begin{equation}
((X\Psi)_{XX})^2+4\Psi^2((X\Psi)_X)^2=64\Psi^2,\quad T=0.
\label{eq:Psi-eqn-X-conservation-T0}
\end{equation}
Dividing both of these equations through by $\Psi^2$ and introducing $V:=\Psi_X/\Psi$, they can be written respectively as
\begin{equation}
XV_{XX}+2XVV_X+3V_X+V^2+4X\Psi\Psi_X+4\Psi^2=0
\end{equation}
and
\begin{equation}
(XV_X+XV^2+2V)^2+(XV+1)(4X\Psi\Psi_X+4\Psi^2)=64.
\end{equation}
Eliminating $4X\Psi\Psi_X+4\Psi^2$ yields a second-order quasilinear equation on $V$ alone:
\begin{equation}
X^2VV_{XX}+XV_{XX}+XVV_X-X^2(V_X)^2+3V_X-X^2V^4-3XV^3-3V^2+64=0.
\label{eq:V-eqn}
\end{equation}
The motivation for combining \eqref{eq:Psi-eqn-X-T0}--\eqref{eq:Psi-eqn-X-conservation-T0} in such a way as to obtain a single differential equation for $V=\Psi_X/\Psi$ alone can be explained as follows.  When $T=0$, the exponent in Riemann-Hilbert Problem~\ref{rhp:limit-simpler} can be rescaled by 
\begin{equation}
X=-\frac{1}{8}x^2\quad\text{and}\quad \Lambda=\frac{4}{x}\lambda
\label{eq:PIII-scalings}
\end{equation}
so as to yield the identity $\Lambda X+2\Lambda^{-1}=-\tfrac{1}{2}x(\lambda-\lambda^{-1})$; on the right-hand side we now have the exponent appearing in the inverse monodromy problem for the Painlev\'e-III equation (see \cite[Theorem 5.4]{FokasIKN06}) obtained from the Lax pair of Jimbo and Miwa \cite{JimboM81}.  Indeed, combining the Lax systems \eqref{eq:Lax-X-ODEs} and \eqref{eq:Lax-Lambda-ODEs} using \eqref{eq:PIII-scalings} yields the Jimbo-Miwa Lax pair for Painlev\'e-III in the form
\begin{equation}
\frac{\partial\mathbf{W}}{\partial \lambda}=\widehat{\mathbf{L}}(\lambda;x)\mathbf{W}\quad\text{and}\quad
\frac{\partial\mathbf{W}}{\partial x}=\widehat{\mathbf{A}}(\lambda;x)\mathbf{W},
\label{eq:JM}
\end{equation}
in which the coefficient matrices are\footnote{Jimbo and Miwa used a slightly-different parametrization, preferring the combinations $U=st$ and $w:=1/t$ instead of $s$ and $t$.}
\begin{equation}
\begin{split}
\widehat{\mathbf{L}}(\lambda;x):=&\frac{4}{x}\mathbf{L}\left(\frac{4}{x}\lambda;-\frac{1}{8}x^2,0\right)\\
=&\frac{1}{2}\ii x\sigma_3 + \frac{1}{\lambda}\begin{bmatrix}-\tfrac{1}{2}\Theta_\infty & y\\v & \tfrac{1}{2}\Theta_\infty\end{bmatrix} + 
\frac{1}{\lambda^2}\begin{bmatrix}\tfrac{1}{2}\ii x-\ii st & \ii s\\-\ii t(st-x) & -\tfrac{1}{2}\ii x+\ii st\end{bmatrix}
\end{split}
\end{equation}
and
\begin{equation}
\begin{split}
\widehat{\mathbf{A}}(\lambda;x):=&-\frac{4\lambda}{x^2}\mathbf{L}\left(\frac{4}{x}\lambda;-\frac{1}{8}x^2,0\right)-\frac{1}{4}x\mathbf{A}\left(\frac{4}{x}\lambda;-\frac{1}{8}x^2,0\right)\\
=&\frac{1}{2}\ii\lambda\sigma_3 +\frac{1}{x}\begin{bmatrix}0 & y\\v & 0\end{bmatrix}-\frac{1}{x\lambda}
\begin{bmatrix}\tfrac{1}{2}\ii x-\ii st & \ii s\\-\ii t(st-x) & -\tfrac{1}{2}\ii x+\ii st\end{bmatrix},
\end{split}
\end{equation}
where
\begin{equation}
\Theta_\infty:=0, \quad y:=-\frac{1}{8}x^2\Psi,\quad v:=\frac{1}{8}x^2\Psi^*, \quad \frac{1}{2}\ii x-\ii st := \frac{1}{4}\ii xa,\quad s:=\frac{1}{4}xb.
\end{equation}
The combination $u(x):=-y(x)/s(x)$ was shown by Jimbo and Miwa to solve the (generic) Painlev\'e-III equation 
\begin{equation}
\frac{\dd^2u}{\dd x^2}=\frac{1}{u}\left(\frac{\dd u}{\dd x}\right)^2-\frac{1}{x}\frac{\dd u}{\dd x}+\frac{4\Theta_0 u^2+4(1-\Theta_\infty)}{x}+4u^3-\frac{4}{u}
\label{eq:PIII}
\end{equation}
in which the parameter $\Theta_\infty$ appears as an explicit coefficient in the $\lambda$-equation (here $\Theta_\infty=0$) and $\Theta_0$ is obtained as the value of an integral of motion.  Rather than compute this integral, we may simply note that 
\begin{equation}
u(x)=-\frac{y(x)}{s(x)}=\frac{x\Psi(-\tfrac{1}{8}x^2)}{2b(-\tfrac{1}{8}x^2)}=\frac{2x^2\Psi(-\tfrac{1}{8}x^2)}{(x^2\Psi(-\tfrac{1}{8}x^2))_x} = 2\left(\frac{\dd}{\dd x}\ln(x^2\Psi(-\tfrac{1}{8}x^2))\right)^{-1}
\label{eq:u-Psi}
\end{equation}
where we have used \eqref{eq:X-eqns-2-no-Phi} at $T=0$ to eliminate $b$.  Inverting this relationship we may find $V=\Psi_X/\Psi$ in terms of $u$:
\begin{equation}
V=\frac{\Psi_X}{\Psi}=\frac{8}{x}\left(\frac{1}{x}-\frac{1}{u}\right).
\end{equation}
Substituting this formula into \eqref{eq:V-eqn} and using the chain rule to express derivatives in terms of $x$ rather than $X$ yields the following result.
\begin{corollary}
The function $u(x)$ defined explicitly in terms of the rogue wave of infinite order $\Psi(X)=\Psi^+(X,0)$ by \eqref{eq:u-Psi} is a solution of the Painlev\'e-III equation in the standard form \eqref{eq:PIII}
in which both parameters vanish:  $\Theta_\infty=\Theta_0=0$.  
\label{corollary:PIII}
\end{corollary}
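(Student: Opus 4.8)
The strategy is to take everything preceding the corollary as given and reduce the claim to a single scalar verification. From the explicit form of the Jimbo--Miwa Lax pair \eqref{eq:JM}, the parameter $\Theta_\infty$ appears directly as (minus twice) the top-left diagonal entry of the residue of $\widehat{\mathbf{L}}$ at $\lambda=0$; since at $T=0$ the matrix $\mathbf{L}^{[-1]}(X,0)=X\mathbf{A}^{[0]}(X,0)$ is purely off-diagonal, this residue has vanishing diagonal, so $\Theta_\infty=0$ is immediate from the construction. The only substantive assertion is therefore that $\Theta_0=0$ as well, and the plan is to establish this not by evaluating the Jimbo--Miwa integral of motion that nominally fixes $\Theta_0$, but by the direct route already prepared in the text: substitute the explicit relation $V=\Psi_X/\Psi=\frac{8}{x}\left(\frac{1}{x}-\frac{1}{u}\right)$ into the scalar second-order equation \eqref{eq:V-eqn}, and confirm that the result is precisely \eqref{eq:PIII} with both parameters set to zero.

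The computation proceeds by converting all derivatives. Since the scalings \eqref{eq:PIII-scalings} give $X=-\tfrac{1}{8}x^2$, the chain rule yields $\partial_X=-\tfrac{4}{x}\partial_x$ and hence $\partial_X^2=\tfrac{16}{x^2}\partial_x^2-\tfrac{16}{x^3}\partial_x$. I would apply these to the occurrences of $V_X$ and $V_{XX}$ in \eqref{eq:V-eqn}, then insert $V=\tfrac{8}{x^2}-\tfrac{8}{xu}$ together with the $x$-derivatives $V_x$ and $V_{xx}$ computed from it, each differentiation of $1/(xu)$ bringing in $u_x$ and $u_{xx}$ via the quotient rule. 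After clearing the common denominator, multiplying through by an appropriate monomial in $x$ and $u$, the equation should collapse to a polynomial identity whose solved form reads
\[
u_{xx}=\frac{(u_x)^2}{u}-\frac{u_x}{x}+\frac{4}{x}+4u^3-\frac{4}{u},
\]
which is exactly \eqref{eq:PIII} at $\Theta_\infty=\Theta_0=0$.

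The main obstacle is purely the bookkeeping in this substitution: \eqref{eq:V-eqn} carries terms up to $X^2V^4$ and $X^2(V_X)^2$, so after the chain-rule replacement and insertion of the rational expression for $V$ one must track many rational terms in $u$, $u_x$, $u_{xx}$, and $x$, and verify that every contribution not matching the canonical Painlev\'e-III form cancels, while the isolated inhomogeneous term combines to produce the coefficient $4(1-\Theta_\infty)=4$ and, crucially, no residual term of the form $4\Theta_0 u^2/x$ survives. The cleanest way to organize the elimination is to group the terms by total degree in $u$ and by powers of $x$ before simplifying, with a symbolic-algebra check as a safeguard. I expect no conceptual difficulty here, because the identification of \eqref{eq:Lax-X-ODEs} and \eqref{eq:Lax-Lambda-ODEs} with the Jimbo--Miwa pair has already guaranteed that \emph{some} Painlev\'e-III equation is satisfied; the entire content of the corollary is the confirmation of the two specific monodromy-parameter values through this reduction.
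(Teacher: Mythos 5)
Your proposal is correct and follows essentially the same route as the paper: the text's own proof consists precisely of noting that $\Theta_\infty=0$ is built into the Jimbo--Miwa form of the Lax pair (the $\lambda^{-1}$ coefficient of $\widehat{\mathbf{L}}$ being off-diagonal at $T=0$), and then, rather than computing the integral of motion for $\Theta_0$, substituting $V=\Psi_X/\Psi=\frac{8}{x}\left(\frac{1}{x}-\frac{1}{u}\right)$ into \eqref{eq:V-eqn} with the chain rule $\partial_X=-\frac{4}{x}\partial_x$ to recover \eqref{eq:PIII} with $\Theta_\infty=\Theta_0=0$. Your chain-rule formulas and the expected collapsed form of the equation are both correct, so the only remaining work is the algebraic bookkeeping you describe.
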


In general, the inverse monodromy problem for the system \eqref{eq:JM} can be formulated as a Riemann-Hilbert problem in the $\lambda$-plane for a $2\times 2$ matrix unknown that has jumps across two Stokes lines emanating in opposite directions from $\lambda=0$, two Stokes lines tending to $\lambda=\infty$ in opposite directions, as well as a jump relating the solution in a neighborhood of $\lambda=0$ to that in a neighborhood of $\lambda=\infty$ given in terms of a \emph{connection matrix}.  The parameters $\Theta_\infty$ and $\Theta_0$ measure the formal monodromy for solutions of the $\lambda$-equation about $\lambda=\infty$ and $\lambda=0$ respectively.  In the present setting, there is no formal monodromy because $\Theta_0=\Theta_\infty=0$, however, in principle there can still be Stokes phenomenon near $\lambda=\infty$ and $\lambda=0$.  On the other hand, since the jump condition in Riemann-Hilbert Problem~\ref{rhp:limit-simpler} is only across the unit circle, we see that for the particular solution $u(x)$ of \eqref{eq:PIII} with $\Theta_\infty=\Theta_0=0$ related to the function $\Psi(X,0)$ via \eqref{eq:u-Psi}, the Stokes constants all vanish as well, so the only monodromy data is the connection matrix.
\subsubsection{Ordinary differential equations in $T$}
Now we consider instead the compatibility condition $\mathbf{L}_T-\mathbf{B}_\Lambda + [\mathbf{L},\mathbf{B}]=\mathbf{0}$ that holds because $\mathbf{W}(\Lambda;X,T)$ is a simultaneous fundamental solution matrix of the linear problems \eqref{eq:Lax-T-ODEs} and \eqref{eq:Lax-Lambda-ODEs}.  There are now matrix coefficients for powers ranging from $\Lambda^3$ through $\Lambda^{-2}$.  The terms proportional to $\Lambda^3$ read $[\mathbf{L}^{[1]},-\ii\sigma_3]=\mathbf{0}$ which holds trivially.  Likewise, the terms proportional to $\Lambda^2$ read $[\mathbf{L}^{[0]},-\ii\sigma_3]+[\mathbf{L}^{[1]},\mathbf{B}^{[1]}]=\mathbf{0}$, which again is automatically satisfied.  The terms proportional to $\Lambda^1$ yield the equation $\mathbf{L}^{[1]}_T+2\ii\sigma_3 + [\mathbf{L}^{[-1]},-\ii\sigma_3] + [\mathbf{L}^{[0]},\mathbf{B}^{[1]}] + [\mathbf{L}^{[1]},\mathbf{B}^{[0]}]=\mathbf{0}$, which is also trivial.  The first nontrivial equations arise from the terms proportional to $\Lambda^0$, which give the equation $\mathbf{L}^{[0]}_T-\mathbf{B}^{[1]} + [\mathbf{L}^{[-2]},-\ii\sigma_3] + [\mathbf{L}^{[-1]},\mathbf{B}^{[1]}] + [\mathbf{L}^{[0]},\mathbf{B}^{[0]}]=\mathbf{0}$.  The diagonal part of this equation is trivial, but the off-diagonal part gives the equations
\begin{equation}
\begin{split}
2T\Psi_T+\Psi-2b-2\ii X\Phi&=0\\
2T\Psi_T^*+\Psi^*-2b^*+2\ii X\Phi^*&=0.
\end{split}
\label{eq:T-eqns-1}
\end{equation}
The terms proportional to $\Lambda^{-1}$ read $\mathbf{L}^{[-1]}_T + [\mathbf{L}^{[-2]},\mathbf{B}^{[1]}]+[\mathbf{L}^{[-1]},\mathbf{B}^{[0]}]=\mathbf{0}$.  The trace of this equation is trivial, and the difference of the diagonal terms gives an equation that is also implied by \eqref{eq:T-eqns-1}, but the off-diagonal terms yield new differential equations:
\begin{equation}
\begin{split}
X\Psi_T + 2(T\Phi)_T+2\ii a\Psi-\ii X|\Psi|^2\Psi&=0\\
X\Psi^*_T + 2(T\Phi^*)_T-2\ii a\Psi^*+\ii X|\Psi|^2\Psi^*&=0.
\end{split}
\label{eq:T-eqns-2}
\end{equation}
Finally, the terms proportional to $\Lambda^{-2}$ are $\mathbf{L}^{[-2]}_T + [\mathbf{L}^{[-2]},\mathbf{B}^{[0]}]=\mathbf{0}$.  The trace is trivial, but we obtain three additional equations:
\begin{equation}
\begin{split}
a_T -b\Phi^*-b^*\Phi&=0\\
b_T + 2a\Phi-\ii|\Psi|^2b&=0\\
b^*_T+2a\Phi^*+\ii|\Psi|^2b^*&=0.
\end{split}
\label{eq:T-eqns-3}
\end{equation}
The equations \eqref{eq:T-eqns-3} are consistent with the identity $a^2+|b|^2=4$.  

Using \eqref{eq:T-eqns-1} to eliminate $b$ and $b^*$, the equations \eqref{eq:T-eqns-3} become
\begin{equation}
\begin{split}
a_T-T(\Phi^*\Psi_T+\Phi\Psi_T^*)-\frac{1}{2}(\Phi^*\Psi+\Phi\Psi^*)&=0\\
(T\Psi_T)_T+\frac{1}{2}\Psi_T-\ii X\Phi_T-\ii T|\Psi|^2\Psi_T + 2a\Phi-\frac{1}{2}\ii|\Psi|^2\Psi-X|\Psi|^2\Phi&=0\\
(T\Psi^*_T)_T+\frac{1}{2}\Psi^*_T+\ii X\Phi^*_T+\ii T|\Psi|^2\Psi_T^*+2a\Phi^*+\frac{1}{2}\ii|\Psi|^2\Psi^*-X|\Psi|^2\Phi^*&=0.
\end{split}
\label{eq:T-eqns-3-no-b}
\end{equation}
The equations \eqref{eq:T-eqns-2} and \eqref{eq:T-eqns-3-no-b} constitute a closed coupled system on $a:\mathbb{R}\to\mathbb{R}$, $\Psi:\mathbb{R}\to\mathbb{C}$, and $\Phi:\mathbb{R}\to\mathbb{C}$ admitting the integral of motion $a^2+|b|^2=4$ with $b$ eliminated via \eqref{eq:T-eqns-1}.  We have not been able to identify it as a known system, but it appears to be integrable via the Lax pair \eqref{eq:Lax-T-ODEs} and \eqref{eq:Lax-Lambda-ODEs}.  This proves the following.
\begin{theorem}[Ordinary differential equations in $T$ for rogue waves of infinite order] 
The rogue wave $\Psi(X,T)=\Psi^+(X,T)$ of infinite order satisfies, for each fixed $X\in\mathbb{R}$, the ordinary differential equations \eqref{eq:T-eqns-2} and \eqref{eq:T-eqns-3-no-b} involving also the auxiliary real-valued field $a(X,T)$ and complex valued field $\Phi(X,T)$.
\label{theorem:Psi-ODEs-T}
\end{theorem}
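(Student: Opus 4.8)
The plan is to extract the stated equations directly from the zero-curvature compatibility of the two linear systems satisfied by $\mathbf{W}(\Lambda;X,T)$, exactly parallel to the derivation of the $X$-ODEs in Theorem~\ref{theorem:X-ODEs}. First I would record that $\mathbf{W}(\Lambda;X,T)$ is a simultaneous fundamental solution matrix of the $T$-system \eqref{eq:Lax-T-ODEs} and the $\Lambda$-system \eqref{eq:Lax-Lambda-ODEs}; since a fundamental solution matrix is invertible, the coefficient matrices must satisfy the compatibility condition $\mathbf{L}_T-\mathbf{B}_\Lambda+[\mathbf{L},\mathbf{B}]=\mathbf{0}$. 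Because both $\mathbf{B}(\Lambda;X,T)$ and $\mathbf{L}(\Lambda;X,T)$ are explicit Laurent polynomials in $\Lambda$ --- $\mathbf{B}$ quadratic with powers $\Lambda^2,\Lambda^1,\Lambda^0$ and $\mathbf{L}$ with powers $\Lambda^1$ through $\Lambda^{-2}$ as in \eqref{eq:def-L} --- the left-hand side is itself a Laurent polynomial in $\Lambda$, here with powers ranging from $\Lambda^3$ down to $\Lambda^{-2}$. Since this must vanish identically in $\Lambda$, each matrix coefficient vanishes separately, giving one matrix equation per power.

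Next I would work through the powers in decreasing order. The top three powers $\Lambda^3,\Lambda^2,\Lambda^1$ should vanish automatically, using $\mathbf{B}^{[1]}=\mathbf{A}^{[0]}$ and the diagonal form $\mathbf{L}^{[1]}=-2\ii T\sigma_3$; these carry no information about $\Psi$. The genuine content sits in the three lowest powers: the $\Lambda^0$ equation $\mathbf{L}^{[0]}_T-\mathbf{B}^{[1]}+[\mathbf{L}^{[-2]},-\ii\sigma_3]+[\mathbf{L}^{[-1]},\mathbf{B}^{[1]}]+[\mathbf{L}^{[0]},\mathbf{B}^{[0]}]=\mathbf{0}$ should yield \eqref{eq:T-eqns-1}; the $\Lambda^{-1}$ equation should yield \eqref{eq:T-eqns-2}; and the $\Lambda^{-2}$ equation $\mathbf{L}^{[-2]}_T+[\mathbf{L}^{[-2]},\mathbf{B}^{[0]}]=\mathbf{0}$ should yield \eqref{eq:T-eqns-3}. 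In each case I would split the matrix equation into its diagonal and off-diagonal parts, discarding the trace (which vanishes identically since the matrices involved are traceless) and checking that the remaining diagonal part is either trivial or already implied by \eqref{eq:T-eqns-1}; the new information is carried only by the off-diagonal entries, which give exactly the listed scalar equations together with their complex conjugates.

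To reach the stated form \eqref{eq:T-eqns-3-no-b}, I would then use \eqref{eq:T-eqns-1} to solve for $b$ and $b^*$ in terms of $\Psi,\Psi^*,\Phi,\Phi^*$ and the variables $X,T$, and substitute into \eqref{eq:T-eqns-3}, eliminating $b,b^*$ along with their $T$-derivatives. The entry structure of $\mathbf{L}^{[-2]}$ supplies the identity $a^2+|b|^2=4$ recorded in \eqref{eq:cons-law}, which one verifies is preserved under \eqref{eq:T-eqns-3}, confirming the claimed integral of motion. The main obstacle is purely computational bookkeeping: carefully forming $\mathbf{B}_\Lambda$ and the commutator $[\mathbf{L},\mathbf{B}]$ as Laurent polynomials and confirming that the diagonal parts of the $\Lambda^0$ and $\Lambda^{-1}$ equations genuinely reduce to \eqref{eq:T-eqns-1} rather than imposing independent constraints. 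Since the identities are consequences of the exact solvability of the Lax pair, no analysis beyond this algebra is required, and the content of the theorem is simply the collection \eqref{eq:T-eqns-2} and \eqref{eq:T-eqns-3-no-b} read off from the compatibility condition.
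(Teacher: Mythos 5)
Your proposal is correct and follows essentially the same route as the paper's own derivation: expanding the zero-curvature condition $\mathbf{L}_T-\mathbf{B}_\Lambda+[\mathbf{L},\mathbf{B}]=\mathbf{0}$ in powers of $\Lambda$ from $\Lambda^3$ down to $\Lambda^{-2}$, finding the top three coefficients trivial, reading \eqref{eq:T-eqns-1}, \eqref{eq:T-eqns-2}, \eqref{eq:T-eqns-3} off the off-diagonal parts of the $\Lambda^0$, $\Lambda^{-1}$, $\Lambda^{-2}$ coefficients respectively, and then eliminating $b,b^*$ via \eqref{eq:T-eqns-1} to obtain \eqref{eq:T-eqns-3-no-b}, with the integral $a^2+|b|^2=4$ consistent with the flow. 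The only cosmetic difference is that you phrase the diagonal parts as trivial or implied by \eqref{eq:T-eqns-1}, which is exactly what the paper verifies.
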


\section{Asymptotic Properties of the Near-Field Limit}
\label{sec:Psi-asymptotic}
\subsection{Asymptotic behavior of $\Psi^\pm(X,T)$ for large $X$}
\label{sec:large-X}
We now study $\Psi^\pm(X,T)$ when $X$ is large.  To this end, we write $X=\sigma |X|$, $T=v|X|^{3/2}$, and $\Lambda=|X|^{-1/2}z$.
The phase conjugating the jump matrix for $\mathbf{R}^\pm(\Lambda;X,T)$ then takes the form
\begin{equation}
\Lambda X+\Lambda^2T\pm 2\Lambda^{-1}=|X|^{1/2}(\sigma z +vz^2\pm 2z^{-1}).
\end{equation}
It is most convenient to deduce the asymptotic behavior of $\Psi^\pm(X,T)$ in the case that the sign $\pm$ coincides with the sign $\sigma$ of $X$, i.e., we shall study $\Psi^\pm(X,T)$ in the limit $X\to\pm\infty$.  In fact, from Corollaries~\ref{cor:Psi-pm} and \ref{cor:Psi-even}, it is sufficient to consider $\Psi^+(X,T)$ as $X\to+\infty$.  We assume that $v\in\mathbb{R}$ is held fixed.
Defining $\mathbf{S}(z;X,v):=\mathbf{R}^+(X^{-1/2}z;X,X^{3/2}v)$ for $X>0$, from \eqref{eq:Psi-R} we have
\begin{equation}
\Psi^+(X,X^{3/2}v)=2\ii X^{-1/2}\lim_{z\to\infty} zS_{12}(z;X,v),\quad X>0.
\label{eq:PsiPlus-S}
\end{equation}
Clearly $\mathbf{S}(z;X,v)\to\mathbb{I}$ as $z\to\infty$ for each $X>0$, and $\mathbf{S}(z;X,v)$ is analytic in the complement of an arbitrary Jordan curve $\Gamma$ surrounding $z=0$ in the clockwise sense, across which the following jump condition holds:
\begin{equation}
\mathbf{S}_+(z;X,v)=\mathbf{S}_-(z;X,v)\ee^{-\ii X^{1/2}\vartheta(z;v)\sigma_3}\mathbf{Q}^{-1}\ee^{\ii X^{1/2}\vartheta(z;v)\sigma_3},\quad z\in\Gamma,\quad\vartheta(z;v):=z+vz^2+2z^{-1}.
\label{eq:S-jump}
\end{equation}
\subsubsection{Exponent analysis and steepest descent}
Given $v\in\mathbb{R}$, the critical points of $\vartheta(z;v)$ are the roots of a real cubic.  The critical points are all real for $|v|$ sufficiently small, but a conjugate pair appears if $|v|$ becomes too large.  The threshold value of $|v|$ is obtained from the cubic discriminant:  $|v|<54^{-1/2}$ is necessary and sufficient for the existence of three real critical points of $\vartheta(z;v)$.  Subject to this inequality on $v$, there exists a component of the level curve $\mathrm{Im}(\vartheta(z;v))=0$ that is a Jordan curve enclosing the origin in the $z$-plane, and that passes through two of the three real critical points, with the remaining critical point in the exterior domain.  See Figure~\ref{fig:v-plots}.
\begin{figure}[h]
\begin{center}
\includegraphics{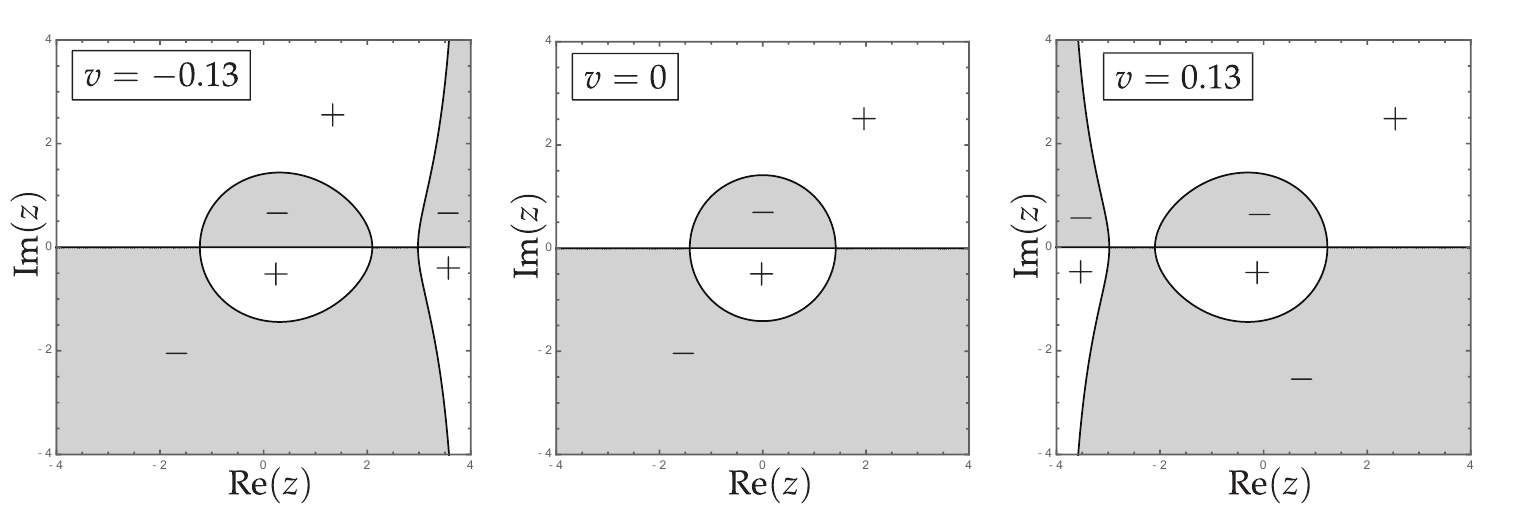}
\end{center}
\caption{Sign charts for $\mathrm{Im}(\vartheta(z;v))$ as $v$ varies over the interval $|v|<54^{-1/2}\approx 0.1361$.}
\label{fig:v-plots}
\end{figure}
We select this curve as the jump contour $\Gamma$ for $\mathbf{S}$ and denote the two real critical points of $\vartheta(z;v)$ through which it passes as $a<b$ where $a=a(v)$ and $b=b(v)$.  The real axis divides $\Gamma$ into an arc $\Gamma^+$ in the upper half-plane and an arc $\Gamma^-$ in the lower half-plane.  We introduce thin lens-shaped domains $L^\pm$ and $R^\pm$ on the left and right sides respectively of $\Gamma^\pm$ whose outer boundary arcs $C_L^\pm$ and $C_R^\pm$ meet the real axis at $45^\circ$ angles as shown in the left-hand panel of Figure~\ref{fig:regions-Tjump}, and along each of which $\mathrm{Im}(\vartheta(z;v))$ has a definite sign.
\begin{figure}[h]
\begin{center}
\includegraphics{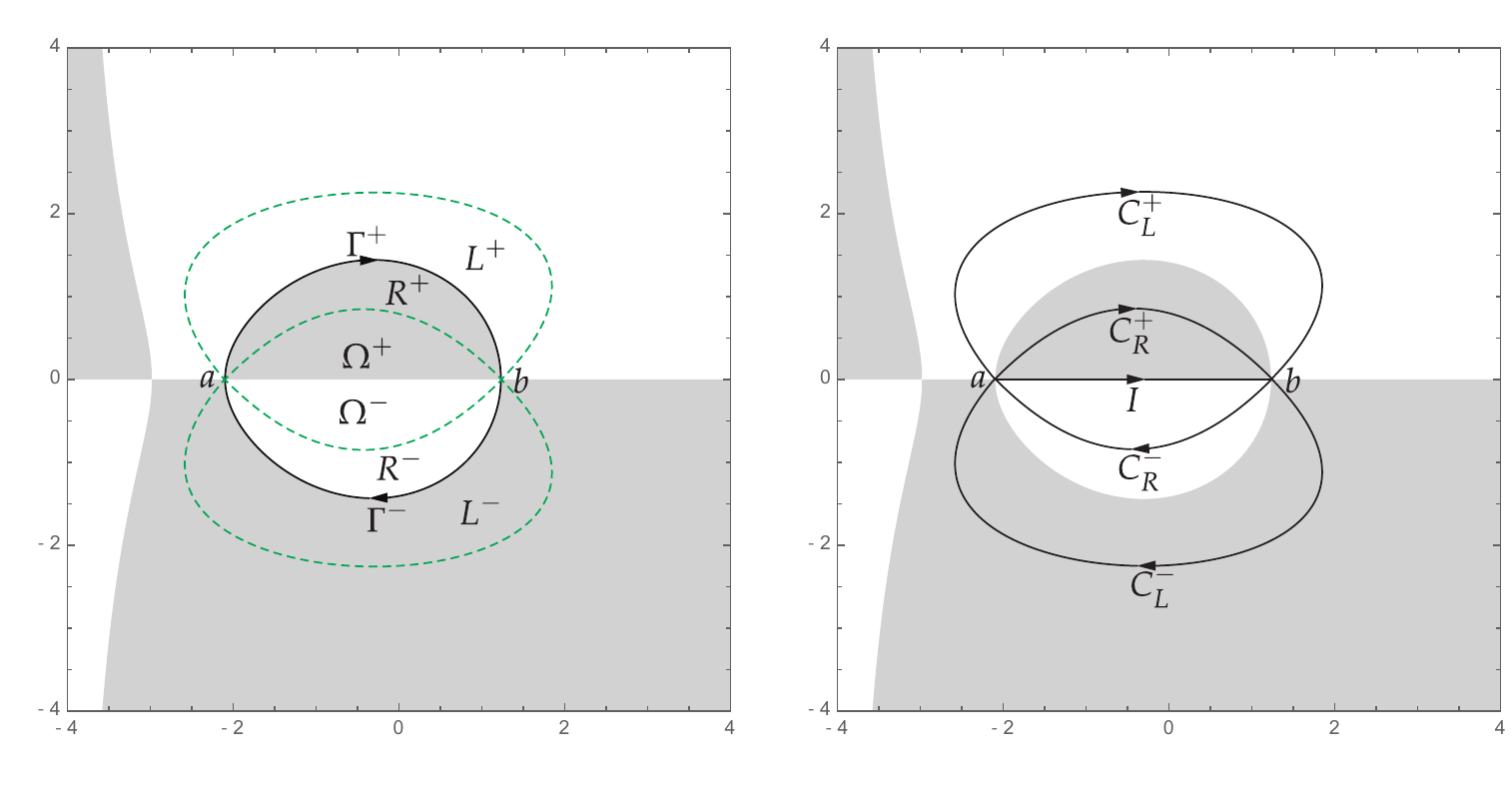}
\end{center}
\caption{Left:  the jump contour $\Gamma=\Gamma^+\cup\Gamma^-$ for $\mathbf{S}$ and the regions $L^\pm$, $R^\pm$, and $\Omega^\pm$.  Right:  the jump contour for $\mathbf{T}$.}
\label{fig:regions-Tjump}
\end{figure}
The region between $C_R^\pm$ and the real axis is denoted $\Omega^\pm$.  We separate the exponential factors $\ee^{\pm 2\ii X^{1/2}\vartheta(z;v)}$ appearing in the jump condition \eqref{eq:S-jump} by the following substitutions:
\begin{equation}
\mathbf{T}(z;X,v):=
\mathbf{S}(z;X,v)\begin{bmatrix}1 & 0\\\ee^{2\ii X^{1/2}\vartheta(z;v)} & 1\end{bmatrix},\quad z\in L^+,
\end{equation}
\begin{equation}
\mathbf{T}(z;X,v):=
\mathbf{S}(z;X,v)2^{\sigma_3/2}\begin{bmatrix}1 & \tfrac{1}{2}\ee^{-2\ii X^{1/2}\vartheta(z;v)}\\0 & 1\end{bmatrix},\quad  z\in R^+,
\end{equation}
\begin{equation}
\mathbf{T}(z;X,v):=
\mathbf{S}(z;X,v)2^{\sigma_3/2},\quad  z\in\Omega^+,
\end{equation}
\begin{equation}
\mathbf{T}(z;X,v):=
\mathbf{S}(z;X,v)2^{-\sigma_3/2},\quad  z\in\Omega^-,
\end{equation}
\begin{equation}
\mathbf{T}(z;X,v):=
\mathbf{S}(z;X,v)2^{-\sigma_3/2}\begin{bmatrix}1 & 0\\-\tfrac{1}{2}\ee^{2\ii X^{1/2}\vartheta(z;v)} & 1\end{bmatrix},\quad  z\in R^-,
\end{equation}
\begin{equation}
\mathbf{T}(z;X,v):=
\mathbf{S}(z;X,v)\begin{bmatrix}1 & -\ee^{-2\ii X^{1/2}\vartheta(z;v)}\\0 & 1\end{bmatrix},\quad  z\in L^-,
\end{equation}
and in the complementary domain exterior to the Jordan curve $C_L^+\cup C_L^-$ we simply take $\mathbf{T}(z;X,v)=\mathbf{S}(z;X,v)$.  One then can check easily that $\mathbf{T}(z;X,v)$ takes equal boundary values from each side on the two arcs of $\Gamma$, so $\mathbf{T}(z;X,v)$ can be considered to be a well-defined analytic function on $\Gamma^+$ and $\Gamma^-$.  The jump contour for $\mathbf{T}(z;X,v)$ is illustrated in the right-hand panel of Figure~\ref{fig:regions-Tjump}.  On the five arcs of the jump contour with the indicated orientation, the jump conditions satisfied by $\mathbf{T}(z;X,v)$ are the following.
\begin{equation}
\mathbf{T}_+(z;X,v)=\mathbf{T}_-(z;X,v)\begin{bmatrix}1 & 0\\-\ee^{2\ii X^{1/2}\vartheta(z;v)} & 1\end{bmatrix},\quad z\in C_L^+,
\label{eq:Tjump-1}
\end{equation}
\begin{equation}
\mathbf{T}_+(z;X,v)=\mathbf{T}_-(z;X,v)\begin{bmatrix}1 & \tfrac{1}{2}\ee^{-2\ii X^{1/2}\vartheta(z;v)}\\0 & 1\end{bmatrix},\quad z\in C_R^+,
\label{eq:Tjump-2}
\end{equation}
\begin{equation}
\mathbf{T}_+(z;X,v)=\mathbf{T}_-(z;X,v)2^{\sigma_3},\quad z\in I,
\label{eq:T-diagonal}
\end{equation}
\begin{equation}
\mathbf{T}_+(z;X,v)=\mathbf{T}_-(z;X,v)\begin{bmatrix}1 & 0\\-\tfrac{1}{2}\ee^{2\ii X^{1/2}\vartheta(z;v)} & 1\end{bmatrix},\quad z\in C_R^-,
\label{eq:Tjump-4}
\end{equation}
and
\begin{equation}
\mathbf{T}_+(z;X,v)=\mathbf{T}_-(z;X,v)\begin{bmatrix}1 & \ee^{-2\ii X^{1/2}\vartheta(z;v)}\\0 & 1\end{bmatrix},\quad z\in C_L^-.
\label{eq:Tjump-5}
\end{equation}
Since $\mathrm{Im}(\vartheta(z;v))>0$ holds on $C_L^+$ and $C_R^-$ while $\mathrm{Im}(\vartheta(z;v))<0$ holds on $C_L^-$ and $C_R^+$, the jump matrices on these four contour arcs are exponentially small (as $X\to +\infty$) perturbations of the identity uniformly except near the endpoints $a$ and $b$.  

\subsubsection{Parametrix construction}
To deal with the jump condition on $I$ as well as the non-uniformity of the exponential decay near $a$ and $b$, we construct a parametrix for $\mathbf{T}(z;X,v)$.  We first define an \emph{outer parametrix} $\dot{\mathbf{T}}^\mathrm{out}(z,v)$ for $z\in\mathbb{C}\setminus I$ by the formula
\begin{equation}
\dot{\mathbf{T}}^\mathrm{out}(z,v):=\left(\frac{z-a(v)}{z-b(v)}\right)^{\ii p\sigma_3},\quad p:=\frac{\ln(2)}{2\pi}>0,\quad z\in\mathbb{C}\setminus I.
\label{eq:T-out}
\end{equation}
Here, the powers $\pm \ii p$ refer to the principal branch, i.e., $w^{\pm \ii p}:=\ee^{\pm \ii p\log(w)}$ where $-\pi<\mathrm{Im}(\log(w))<\pi$; since the locus where $(z-b)/(z-a)$ is negative real coincides precisely with the interval $I$ this gives the indicated domain of analyticity.  Obviously $\mathbf{T}^{\mathrm{out}}(z;v)\to\mathbb{I}$ as $z\to\infty$.  Also, the jump condition 
\begin{equation}
\dot{\mathbf{T}}^\mathrm{out}_+(z;v)=\dot{\mathbf{T}}^\mathrm{out}_-(z;v)2^{\sigma_3},\quad z\in I
\end{equation}
clearly holds (compare with \eqref{eq:T-diagonal}).  

Next, we define \emph{inner parametrices} by finding local matrix functions defined near $z=a,b$ that exactly satisfy the jump conditions and also match well with the outer parametrix at some small distance independent of $X$ from these points.  Noting that while $\vartheta'(a(v);v)=\vartheta'(b(v),v)=0$, for $|v|<54^{-1/2}$ we have $\vartheta''(a(v);v)<0$ and $\vartheta''(b(v);v)>0$, we define conformal mappings $f_a(z;v)$ and $f_b(z;v)$ locally near $z=a$ and $z=b$ respectively by the equations
\begin{equation}
f_a(z;v)^2=2\left(\vartheta(a(v);v)-\vartheta(z;v)\right)\quad\text{and}\quad f_b(z;v)^2=2\left(\vartheta(z;v)-\vartheta(b(v);v)\right)
\label{eq:conformal}
\end{equation}
and we choose the solutions for which $f_a'(a(v);v)<0$ and $f_b(b(v);v)>0$.  Let $\zeta_a:=X^{1/4}f_a$ and $X^{1/4}f_b$ denote rescaled versions of these conformal coordinates.  The jump conditions satisfied by 
\begin{equation}
\mathbf{U}^a:=\mathbf{T}\ee^{-\ii X^{1/2}\vartheta(a;v)\sigma_3}(\ii\sigma_2),\quad\text{near $z=a$}
\end{equation}
and by 
\begin{equation}
\mathbf{U}^b:=\mathbf{T}\ee^{-\ii X^{1/2}\vartheta(b;v)\sigma_3},\quad\text{near $z=b$}
\end{equation}
then take exactly the same form when expressed in terms of the respective variables $\zeta=\zeta_a$ and $\zeta=\zeta_b$ and the jump contours are locally taken to coincide with the five rays $\arg(\zeta)=\pm\pi/4$, $\arg(\zeta)=\pm 3\pi/4$, and $\arg(-\zeta)=0$.  See Figure~\ref{fig:PC-jumps}.
\begin{figure}[h]
\begin{center}
\includegraphics{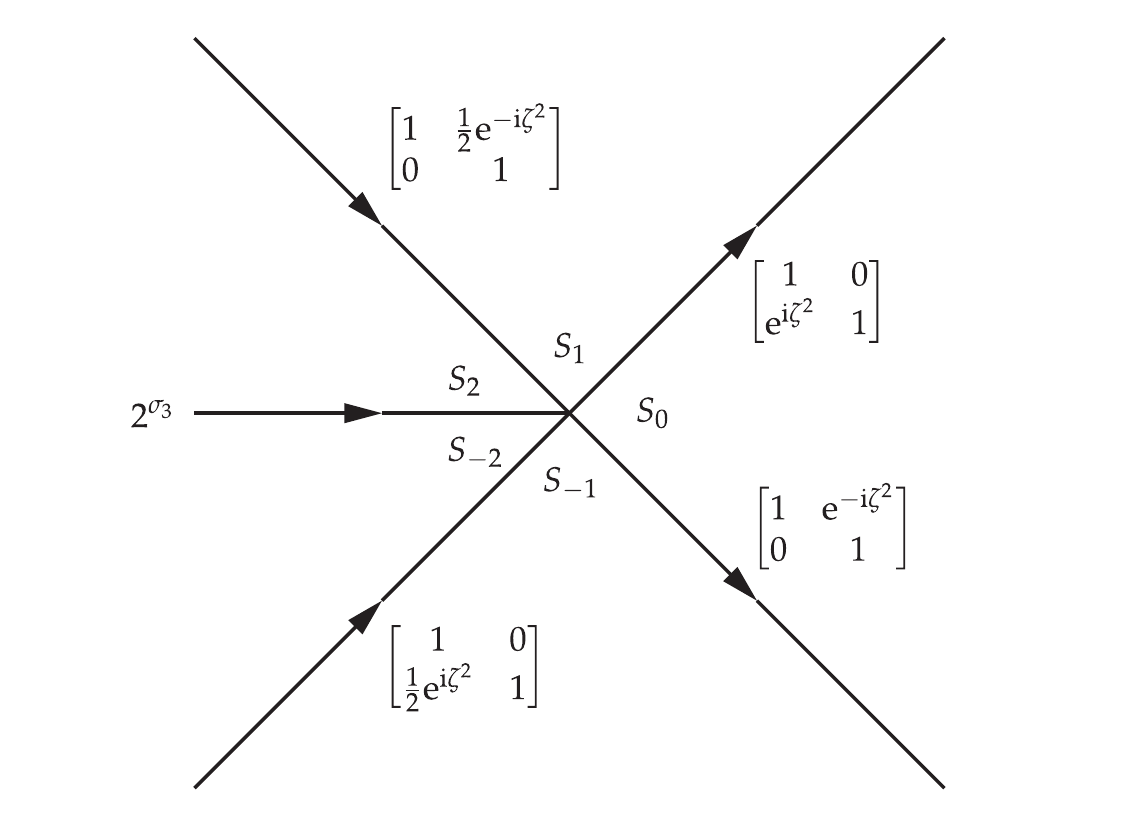}
\end{center}
\caption{The jump conditions satisfied by $\mathbf{U}=\mathbf{U}^a$ near $z=a$ and by $\mathbf{U}=\mathbf{U}^b$ near $z=b$ take exactly the same form when written in terms of the scaled conformal coordinates $\zeta_a$ and $\zeta_b$ respectively, namely $\mathbf{U}_+=\mathbf{U}_-\mathbf{V}^\mathrm{PC}$ where $\mathbf{V}^\mathrm{PC}$ is defined on five rays in the $\zeta$-plane as indicated.}
\label{fig:PC-jumps}
\end{figure}
The jump matrix in Figure~\ref{fig:PC-jumps} corresponds to a special case of the standard parabolic cylinder parametrix typically occurring in the Deift-Zhou steepest descent method \cite{DeiftZ93} for phase functions with simple critical points as is the case here.  The outer parametrix can also be expressed near $z=a$ or $z=b$ in terms of the relevant conformal coordinate:
\begin{multline}
\dot{\mathbf{T}}^\mathrm{out}(z;v)\ee^{-\ii X^{1/2}\vartheta(a;v)\sigma_3}(\ii\sigma_2)= 
X^{-\ii p\sigma_3/4}\ee^{-\ii X^{1/2}\vartheta(a;v)\sigma_3}\mathbf{H}^a(z;v)\zeta_a^{-\ii p\sigma_3},\\
\mathbf{H}^a(z;v):=(b-z)^{-\ii p\sigma_3}\left(\frac{a-z}{f_a(z;v)}\right)^{\ii p\sigma_3}(\ii\sigma_2),
\label{eq:Ha}
\end{multline}
and
\begin{multline}
\dot{\mathbf{T}}^\mathrm{out}(z;v)\ee^{-\ii X^{1/2}\vartheta(b;v)\sigma_3}=
X^{\ii p\sigma_3/4}\ee^{-\ii X^{1/2}\vartheta(b;v)\sigma_3}\mathbf{H}^b(z;v)\zeta_b^{-\ii p\sigma_3},\\
\mathbf{H}^b(z;v):=(z-a)^{\ii p\sigma_3}\left(\frac{f_b(z;v)}{z-b}\right)^{\ii p\sigma_3}.
\label{eq:Hb}
\end{multline}
Once again, all power functions in these formulae are defined as principal branches, so it is easy to confirm that $\mathbf{H}^a(z;v)$ and $\mathbf{H}^b(z;v)$ are analytic matrix-valued functions of $z$ in neighborhoods of $z=a$ and $z=b$ respectively.  Taking into account the last factor on the right in these expressions, $\zeta^{-\ii p\sigma_3}$, we now properly define a matrix $\mathbf{U}(\zeta)$ as the solution of the following Riemann-Hilbert problem.  
\begin{rhp}[Parabolic cylinder parametrix]
Seek a $2\times 2$ matrix-valued function $\mathbf{U}(\zeta)$ with the following properties.
\begin{itemize}
\item[]\textbf{Analyticity:}  $\mathbf{U}(\zeta)$ is analytic for $\zeta$ in the five sectors shown in Figure~\ref{fig:PC-jumps}, namely $S_0$:  $|\arg(\zeta)|<\tfrac{1}{4}\pi$, $S_1$:  $\tfrac{1}{4}\pi<\arg(\zeta)<\tfrac{3}{4}\pi$, $S_{-1}$:  $-\tfrac{3}{4}\pi<\arg(\zeta)<-\tfrac{1}{4}\pi$, $S_2$:  $\tfrac{3}{4}\pi<\arg(\zeta)<\pi$, and $S_{-2}$:  $-\pi<\arg(\zeta)<-\tfrac{3}{4}\pi$.  It takes continuous boundary values on the excluded rays and at the origin from each sector.
\item[]\textbf{Jump conditions:}  $\mathbf{U}_+(\zeta)=\mathbf{U}_-(\zeta)\mathbf{V}^\mathrm{PC}(\zeta)$, where $\mathbf{V}^\mathrm{PC}(\zeta)$ is the matrix function defined on the jump contour shown in Figure~\ref{fig:PC-jumps}.
\item[]\textbf{Normalization:}  $\mathbf{U}(\zeta)\zeta^{\ii p\sigma_3}\to\mathbb{I}$ as $\zeta\to\infty$ uniformly in all directions, where $p=\ln(2)/(2\pi)$.
\end{itemize}
\label{rhp:PC}
\end{rhp}
The solution of this problem can be expressed explicitly in terms of the parabolic cylinder function $U(\cdot,\cdot)$ as defined in \cite[Ch.\@ 12]{DLMF}, but we will not require any details of these formul\ae.  The solution has the following important properties.  The diagonal (resp., off-diagonal) part of $\mathbf{U}(\zeta)\zeta^{\ii p\sigma_3}$ has a complete asymptotic expansion in descending even (resp., odd) integer powers of $\zeta$ as $\zeta\to\infty$, with all coefficients being independent of the sector in which $\zeta\to\infty$.  In particular, the solution satisfies
\begin{equation}
\mathbf{U}(\zeta)\zeta^{\ii p\sigma_3}=\mathbb{I} +\frac{1}{2\ii\zeta}\begin{bmatrix}0 & \alpha\\-\beta & 0\end{bmatrix} + O(\zeta^{-2}),\quad\zeta\to\infty,
\label{eq:U-matrix-asymp}
\end{equation}
where
\begin{equation}
\alpha:=2^{3/4}\sqrt{2\pi}\Gamma\left(\frac{\ii\ln(2)}{2\pi}\right)^{-1}\ee^{\ii\pi/4}\ee^{\ii(\ln(2))^2/(2\pi)}\quad\text{and}\quad
\beta:=-\alpha^*.
\label{eq:alpha-beta-special}
\end{equation}

From $\mathbf{U}(\zeta)$ we define the inner parametrices near $z=a,b$ as follows.  Let $D_z(\delta)$ denote the disk with center $z$ and radius $\delta$.  Then for $\delta$ sufficiently small given $v$ but independent of $X$, we define
\begin{equation}
\dot{\mathbf{T}}^a(z;X,v):=X^{-\ii p\sigma_3/4}\ee^{-\ii X^{1/2}\vartheta(a;v)\sigma_3}\mathbf{H}^a(z;v)
\mathbf{U}(X^{1/4}f_a(z;v))(-\ii\sigma_2)\ee^{\ii X^{1/2}\vartheta(a;v)\sigma_3},\quad z\in D_a(\delta),
\end{equation}
from which it follows that 
\begin{multline}
\dot{\mathbf{T}}^a(z;X,v)\dot{\mathbf{T}}^\mathrm{out}(z;v)^{-1}=X^{-\ii p\sigma_3/4}\ee^{-\ii X^{1/2}\vartheta(a;v)\sigma_3}\mathbf{H}^a(z;v)\mathbf{U}(\zeta_a)\zeta_a^{\ii p\sigma_3}\mathbf{H}^a(z;v)^{-1}\ee^{\ii X^{1/2}\vartheta(a;v)\sigma_3}X^{\ii p\sigma_3/4},\\ \zeta_a=X^{1/4}f_a(z;v),\quad z\in\partial D_a(\delta),
\label{eq:a-mismatch}
\end{multline}
and
\begin{equation}
\dot{\mathbf{T}}^b(z;X,v):=X^{\ii p\sigma_3/4}\ee^{-\ii X^{1/2}\vartheta(b;v)\sigma_3}\mathbf{H}^b(z;v)
\mathbf{U}(X^{1/4}f_b(z;v))\ee^{\ii X^{1/2}\vartheta(b;v)\sigma_3/4},\quad z\in D_b(\delta),
\end{equation}
from which it follows that
\begin{multline}
\dot{\mathbf{T}}^b(z;X,v)\dot{\mathbf{T}}^\mathrm{out}(z;v)^{-1}=X^{\ii p\sigma_3/4}\ee^{-\ii X^{1/2}\vartheta(b;v)\sigma_3}\mathbf{H}^b(z;v)\mathbf{U}(\zeta_b)\zeta_b^{\ii p\sigma_3}\mathbf{H}^b(z;v)^{-1}\ee^{\ii X^{1/2}\vartheta(b;v)\sigma_3}X^{-\ii p\sigma_3/4},\\ \zeta_b=X^{1/4}f_b(z;v),\quad z\in\partial D_b(\delta).
\label{eq:b-mismatch}
\end{multline}

The \emph{global parametrix} for $\mathbf{T}(z;X,v)$ is defined when $|v|<54^{-1/2}$ as follows:
\begin{equation}
\dot{\mathbf{T}}(z;X,v):=\begin{cases}\dot{\mathbf{T}}^a(z;X,v),&\quad z\in D_a(\delta)\\
\dot{\mathbf{T}}^b(z;X,v),&\quad z\in D_b(\delta)\\
\dot{\mathbf{T}}^\mathrm{out}(z;v),&\quad z\in\mathbb{C}\setminus(I\cup \overline{D_a(\delta)}\cup\overline{D_b(\delta)}).
\end{cases}
\label{eq:global-parametrix-large-X}
\end{equation}
Note that $\det(\dot{\mathbf{T}}(z;X,v))=1$.

\subsubsection{Error analysis}
The \emph{error} in approximating $\mathbf{T}$ with its parametrix $\dot{\mathbf{T}}$ is defined by
\begin{equation}
\mathbf{F}(z;X,v):=\mathbf{T}(z;X,v)\dot{\mathbf{T}}(z;X,v)^{-1}
\label{eq:F-def}
\end{equation}
wherever both factors are defined.  The domain of analyticity of $\mathbf{F}(z;X,v)$ is $\mathbb{C}\setminus\Sigma_\mathbf{F}$, where the contour $\Sigma_\mathbf{F}$ consists of (i) the oriented arcs of $C_\mathrm{L}^\pm$ and $C_\mathrm{R}^\pm$ lying in the exterior of $D_a(\delta)$ and $D_b(\delta)$ and (ii) the circular boundaries $\partial D_a(\delta)$ and $\partial D_b(\delta)$ which we take to have clockwise orientation.  The interval $I=[a,b]$ is not part of the jump contour $\Sigma_\mathbf{F}$ because $\mathbf{T}$ and $\dot{\mathbf{T}}$ satisfy exactly the same jump condition across $I$.  Likewise, $\mathbf{F}$ is analytic within the disks $D_a(\delta)$ and $D_b(\delta)$ because the inner parametrices $\dot{\mathbf{T}}^a$ and $\dot{\mathbf{T}}^b$ are exact local solutions of the Riemann-Hilbert jump conditions for $\mathbf{T}$.  Across any arc of $\Sigma_\mathbf{F}$, the jump of $\mathbf{F}$ can be expressed in the form $\mathbf{F}_+=\mathbf{F}_-\mathbf{V}^\mathbf{F}$.  For $z$ in the arcs of $C_\mathrm{L}^\pm$ or $C_\mathrm{R}^\pm$ contained in $\Sigma_\mathbf{F}$, it is convenient to use the fact that $\dot{\mathbf{T}}=\dot{\mathbf{T}}^\mathrm{out}$ is analytic on such arcs to express the jump matrix $\mathbf{V}^\mathbf{F}$ in the form
\begin{equation}
\begin{split}
\mathbf{V}^\mathbf{F}(z;X,v):=&\mathbf{F}_-(z;X,v)^{-1}\mathbf{F}_+(z;X,v)\\
=&\dot{\mathbf{T}}^\mathrm{out}(z;v)\mathbf{T}_-(z;X,v)^{-1}\mathbf{T}_+(z;X,v)\dot{\mathbf{T}}^\mathrm{out}(z;v)^{-1},\quad z\in(C_\mathrm{L}^\pm\cup C_\mathrm{R}^\pm)\cap\Sigma_\mathbf{F},
\end{split}
\end{equation}
where the central two factors are defined in \eqref{eq:Tjump-1}--\eqref{eq:Tjump-2} and \eqref{eq:Tjump-4}--\eqref{eq:Tjump-5}.  Because the exponential factors appearing in the latter jump conditions are restricted to the exterior of the disks $D_a(\delta)$ and $D_b(\delta)$, and since $\dot{\mathbf{T}}^\mathrm{out}(z;v)$ is independent of $X$, there is a positive constant $K(v)>0$ such that
\begin{equation}
\sup_{z\in(C_\mathrm{L}^\pm\cup C_\mathrm{R}^\pm)\cap\Sigma_\mathbf{F}}\|\mathbf{V}^\mathbf{F}(z;X,v)-\mathbb{I}\|=O(\ee^{-X^{1/2}K(v)}),\quad X\to+\infty,
\label{eq:VF-exponential-bound}
\end{equation}
where $\|\cdot\|$ denotes the matrix norm induced from an arbitrary norm on $\mathbb{C}^2$.  On the other hand, for $z\in\partial D_{a,b}(\delta)$, we use the fact that $\mathbf{T}(z;X,v)$ is analytic at all but finitely-many points of the circle while $\dot{\mathbf{T}}_+=\dot{\mathbf{T}}^\mathrm{out}$ and $\dot{\mathbf{T}}_-=\dot{\mathbf{T}}^{a,b}$ to obtain
\begin{equation}
\mathbf{V}^\mathbf{F}(z;X,v)=\dot{\mathbf{T}}^{a,b}(z;X,v)\dot{\mathbf{T}}^\mathrm{out}(z;v)^{-1},\quad z\in\partial D_{a,b}(\delta)\subset\Sigma_\mathbf{F}.
\label{eq:VF-circles}
\end{equation}
The right-hand side is given explicitly by \eqref{eq:a-mismatch} and \eqref{eq:b-mismatch}.  Since $\zeta_{a,b}$ is proportional to $X^{1/4}$ when $z\in\partial D_{a,b}(\delta)$ while the conjugating factors in \eqref{eq:a-mismatch} and \eqref{eq:b-mismatch} are bounded on $\partial D_{a,b}(\delta)$ as $X\to +\infty$, it follows from \eqref{eq:U-matrix-asymp} that 
\begin{equation}
\sup_{z\in\partial D_{a,b}(\delta)}\|\mathbf{V}^\mathbf{F}(z;X,v)-\mathbb{I}\|=O(X^{-1/4}),\quad X\to +\infty.
\label{eq:VF-circles-bound}
\end{equation}

To study $\mathbf{F}(z;X,v)$ we reformulate the jump condition in the form $\mathbf{F}_+-\mathbf{F}_-=\mathbf{F}_-(\mathbf{V}^\mathbf{F}-\mathbb{I})$ and use the fact that both factors in the definition \eqref{eq:F-def} of $\mathbf{F}$ tend to the identity as $z\to\infty$ to obtain from the Plemelj formula
\begin{equation}
\mathbf{F}(z;X,v)=\mathbb{I}+\frac{1}{2\pi\ii}\int_{\Sigma_\mathbf{F}}\frac{\mathbf{F}_-(w;X,v)(\mathbf{V}^\mathbf{F}(w;X,v)-\mathbb{I})}{w-z}\,\dd w,\quad z\in\mathbb{C}\setminus\Sigma_\mathbf{F}.
\label{eq:F-F-minus}
\end{equation}
Letting $z$ tend to a point on an arc of $\Sigma_\mathbf{F}$ from the right side by orientation leads to a closed integral equation for the boundary value $\mathbf{F}_-(z;X,v)$ defined on $\Sigma_\mathbf{F}$ away from self-intersection points:
\begin{equation}
\mathbf{F}_-(z;X,v)=\mathbb{I}+\mathcal{C}^{\Sigma_\mathbf{F}}_-(\mathbf{F}_-(\cdot;X,v)(\mathbf{V}^\mathbf{F}(\cdot;X,v)-\mathbb{I}))(z),\quad z\in\Sigma_\mathbf{F},
\label{eq:F-minus-integral-equation}
\end{equation}
where $\mathcal{C}^{\Sigma_\mathbf{F}}_-(f)$ is the Cauchy projection defined by
\begin{equation}
\mathcal{C}^{\Sigma_\mathbf{F}}_-(f)(z):=\frac{1}{2\pi\ii}\int_{\Sigma_\mathbf{F}}\frac{f(w)\,\dd w}{w-z_-},\quad z\in\Sigma_\mathbf{F}.
\end{equation}
It is now a well-known fact that for a contour such as $\Sigma_\mathbf{F}$ being a finite union of Lipschitz arcs with non-tangential intersections, $\mathcal{C}_-^{\Sigma_\mathbf{F}}$ is a bounded operator $L^2(\Sigma_\mathbf{F})\to L^2(\Sigma_\mathbf{F})$ with respect to arc-length measure.  Its operator norm depends on the contour and hence in our setting on $v$ but not on $X$.  The estimates   
\eqref{eq:VF-exponential-bound} and \eqref{eq:VF-circles-bound} then imply that the integral equation \eqref{eq:F-minus-integral-equation} is uniquely solvable by iteration or Neumann series on $L^2(\Sigma_\mathbf{F})$ for sufficiently large $X>0$, and its solution satisfies
\begin{equation}
\mathbf{F}_-(\cdot;X,v)-\mathbb{I}=O(X^{-1/4}),\quad X\to +\infty
\label{eq:F-minus-L2}
\end{equation}
in the $L^2(\Sigma_\mathbf{F})$ sense.  Note that since $\Sigma_\mathbf{F}$ is a compact contour, we may identify the identity matrix $\mathbb{I}$ with the associated constant function in $L^2(\Sigma_\mathbf{F})$.  Now from \eqref{eq:F-F-minus} we easily obtain the Laurent expansion of $\mathbf{F}(z;X,v)$ convergent for sufficiently large $|z|$:
\begin{equation}
\mathbf{F}(z;X,v)=\mathbb{I} - \frac{1}{2\pi\ii}\sum_{k=1}^\infty z^{-p}\int_{\Sigma_\mathbf{F}}
\mathbf{F}_-(w;X,v)(\mathbf{V}^\mathbf{F}(w;X,v)-\mathbb{I})w^{p-1}\,\dd w,\quad |z|>|\Sigma_\mathbf{F}|:=\sup_{w\in\Sigma_\mathbf{F}}|w|.
\label{eq:F-Laurent-series}
\end{equation}
Now recall \eqref{eq:PsiPlus-S} and the fact that $\mathbf{S}(z;X,v)=\mathbf{T}(z;X,v)=\mathbf{F}(z;X,v)\dot{\mathbf{T}}^\mathrm{out}(z;v)$ holds for $|z|$ sufficiently large; therefore since $\dot{\mathbf{T}}^\mathrm{out}(z;v)$ is a diagonal matrix tending to the identity as $z\to\infty$,
\begin{equation}
\Psi^+(X,X^{3/2}v)=2\ii X^{-1/2}\lim_{z\to\infty}zF_{12}(z;X,v).
\end{equation}
Now using \eqref{eq:F-Laurent-series}, we obtain an expression in terms of the solution of the integral equation \eqref{eq:F-minus-integral-equation}:
\begin{equation}
\Psi^+(X,X^{3/2}v)=-\frac{1}{\pi X^{1/2}}\left[\int_{\Sigma_\mathbf{F}}F_{11-}(w;X,v)V_{12}^\mathbf{F}(w;X,v)\,\dd w
+\int_{\Sigma_\mathbf{F}}F_{12-}(w;X,v)(V_{22}^\mathbf{F}(w;X,v)-1)\,\dd w\right].
\end{equation}
Since on the compact contour $\Sigma_\mathbf{F}$, the $L^1(\Sigma_\mathbf{F})$ norm is subordinate to the $L^2(\Sigma_\mathbf{F})$ norm, combining the $L^\infty(\Sigma_\mathbf{F})$ estimates \eqref{eq:VF-exponential-bound} and \eqref{eq:VF-circles-bound} with the $L^2(\Sigma_\mathbf{F})$ estimate \eqref{eq:F-minus-L2}, we get
\begin{equation}
\Psi^+(X,X^{3/2}v)=-\frac{1}{\pi X^{1/2}}\int_{\Sigma_\mathbf{F}}V_{12}^\mathbf{F}(w;X,v)\,\dd w + O(X^{-1}),\quad X\to +\infty
\label{eq:Psi-Plus-Integral}
\end{equation}
uniformly for $|v|\le 54^{-1/2}-\epsilon$.  Due to the exponential estimate \eqref{eq:VF-exponential-bound} the same formula holds true (with a different implicit constant in the error term) if the integration is taken just over the circles $\partial D_{a,b}(\delta)$.  Furthermore, using \eqref{eq:a-mismatch} and \eqref{eq:b-mismatch} with \eqref{eq:U-matrix-asymp} in \eqref{eq:VF-circles}
shows that as $X\to +\infty$,
\begin{equation}
V^\mathbf{F}_{12}(z;X,v)=\frac{X^{-\ii p/2}\ee^{-2\ii X^{1/2}\vartheta(a;v)}}{2\ii X^{1/4}f_a(z;v)}\left(\alpha H_{11}^a(z;v)^2 + \beta H_{12}^a(z;v)^2\right) + O(X^{-1/2}),\quad z\in\partial D_a(\delta)
\end{equation}
and
\begin{equation}
V^\mathbf{F}_{12}(z;X,v)=\frac{X^{\ii p/2}\ee^{-2\ii X^{1/2}\vartheta(b;v)}}{2\ii X^{1/4}f_b(z;v)}\left(\alpha H_{11}^b(z;v)^2+\beta H_{12}^b(z;v)^2\right) + O(X^{-1/2}),\quad z\in\partial D_b(\delta)
\end{equation}
with both error estimates being uniform on the indicated circles.  The integrals of the explicit leading terms over the respective circles can then be evaluated by residues at $z=a,b$, since $f_{a,b}(z;v)$ has a simple zero at $z=a,b$, while the elements of $\mathbf{H}^{a,b}(z;v)$ are analytic in $D_{a,b}(\delta)$.
Therefore,
\begin{multline}
\Psi^+(X,X^{3/2}v)=X^{-3/4}\left[X^{-\ii p/2}\ee^{-2\ii X^{1/2}\vartheta(a;v)}\frac{\alpha H_{11}^a(a;v)^2+\beta H_{12}^a(a;v)^2}{f'_a(a;v)}\right.\\
\left.{} + X^{\ii p/2}\ee^{- 2\ii X^{1/2}\vartheta(b;v)}
\frac{\alpha H_{11}^b(b;v)^2 + \beta H_{12}^b(b;v)^2}{f'_b(b;v)}\right]+ O(X^{-1}),\quad X\to +\infty.
\end{multline}

It remains to calculate $H_{11}^a(a;v)$, $H_{12}^a(a;v)$, $f'_a(a;v)$, $H_{11}^b(b;v)$, $H_{12}^b(b;v)$, and $f'_b(b;v)$.  Firstly, from \eqref{eq:conformal}, 
\begin{equation}
f_a'(a;v)=-\sqrt{-\vartheta''(a;v)}\quad\text{and}\quad f_b'(b;v)=\sqrt{\vartheta''(b;v)}.
\end{equation}
Then, using \eqref{eq:Ha} and \eqref{eq:Hb} and l'H\^opital's rule,
\begin{equation}
\mathbf{H}^a(a;v)=(b-a)^{-\ii p\sigma_3}\left(\frac{-1}{f_a'(a;v)}\right)^{\ii p\sigma_3}(\ii\sigma_2)
\quad\text{and}\quad
\mathbf{H}^b(b;v)=(b-a)^{\ii p\sigma_3}\left(f_b'(b;v)\right)^{\ii p\sigma_3}.
\end{equation}
Therefore,
\begin{equation}
\begin{split}
\frac{\alpha H_{11}^a(a;v)^2+\beta H_{12}^a(a;v)^2}{f_a'(a;v)}&=-(b-a)^{-2\ii p}(-\vartheta''(a;v))^{-\ii p}\frac{\beta}{\sqrt{-\vartheta''(a;v)}}\\
\frac{\alpha H_{11}^b(b;v)^2+\beta H_{12}^b(b;v)^2}{f_b'(b;v)}&=(b-a)^{2\ii p}\vartheta''(b;v)^{\ii p}\frac{\alpha}{\sqrt{\vartheta''(b;v)}}.
\end{split}
\end{equation}
Finally, since $\beta=-\alpha^*$ and using \cite[Eq.\@ 5.4.3]{DLMF} we have $|\alpha|=\sqrt{2p}$,
we obtain the following result.
\begin{theorem}[Large-$X$ asymptotics of rogue waves of infinite order]
Let $v\in\mathbb{R}$ be fixed with $|v|<54^{-1/2}$, and let $\vartheta(z;v):=z+vz^2+2z^{-1}$.  Then $\vartheta(\cdot;v)$ has three real simple critical points, and 
\begin{multline}
\Psi^+(X,X^{3/2}v)=\frac{\sqrt{2p}}{X^{3/4}}\left(\frac{\ee^{-2\ii X^{1/2}\vartheta(a;v)}(-\vartheta''(a;v))^{-\ii p}}{\sqrt{-\vartheta''(a;v)}}\ee^{\ii\phi(X,v)} + 
\frac{\ee^{-2\ii X^{1/2}\vartheta(b;v)}\vartheta''(b;v)^{\ii p}}{\sqrt{\vartheta''(b;v)}}\ee^{-\ii\phi(X,v)}\right)\\
+O(X^{-1}),\quad X\to +\infty,
\label{eq:Psi-large-X}
\end{multline}
where
\begin{equation}
\phi(X,v):=-\frac{p}{2}\ln(X)-2p\ln(b-a)-\frac{1}{4}\pi-2\pi p^2+\arg\left(\Gamma(\ii p)\right)
\end{equation}
and $p:=\ln(2)/(2\pi)$ while $a=a(v)<b=b(v)$ are the two critical points of $\vartheta(z;v)$ nearest the origin.  The $O(X^{-1})$ estimate is uniform on compact subintervals of $|v|<54^{-1/2}$.
\label{theorem:large-X}
\end{theorem}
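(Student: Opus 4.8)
The plan is to apply the Deift--Zhou nonlinear steepest-descent method to Riemann-Hilbert Problem~\ref{rhp:limit-simpler} in the rescaled variables $\Lambda=X^{-1/2}z$, $T=X^{3/2}v$, which produces the matrix $\mathbf{S}(z;X,v)$ with the jump \eqref{eq:S-jump} governed by the large phase $X^{1/2}\vartheta(z;v)$, $\vartheta(z;v)=z+vz^2+2z^{-1}$. I would first dispatch the claim about critical points: since $\vartheta'(z;v)=1+2vz-2z^{-2}$, the critical points are roots of a real cubic, and the cubic discriminant gives three real simple roots precisely when $|v|<54^{-1/2}$, with the relevant pair $a<b$ nearest the origin persisting throughout this range. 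By Corollaries~\ref{cor:Psi-pm} and~\ref{cor:Psi-even} it suffices to analyze $\Psi^+$ in the limit $X\to+\infty$, matching the sign $+$ to the sign of $X$ as in \eqref{eq:PsiPlus-S}.

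Next I would perform the contour deformation and lens-opening recorded in the passage leading to the jump conditions \eqref{eq:Tjump-1}--\eqref{eq:Tjump-5} for $\mathbf{T}(z;X,v)$. The structural facts to exploit are that the sign of $\mathrm{Im}(\vartheta)$ on the lens boundaries $C_\mathrm{L}^\pm$ and $C_\mathrm{R}^\pm$ forces the off-diagonal jumps there to be exponentially small away from $a,b$ (estimate \eqref{eq:VF-exponential-bound}), so that the only jump surviving to leading order is the constant diagonal matrix $2^{\sigma_3}$ on $I=[a,b]$ in \eqref{eq:T-diagonal}.

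I would then assemble the global parametrix: the outer factor \eqref{eq:T-out} solves the $2^{\sigma_3}$ jump exactly with the crucial exponent $p=\ln(2)/(2\pi)$, while the inner factors near $z=a,b$ are built from the parabolic-cylinder solution $\mathbf{U}(\zeta)$ of Riemann-Hilbert Problem~\ref{rhp:PC} after the conformal changes \eqref{eq:conformal} and the rescaling $\zeta=X^{1/4}f_{a,b}$. The matchings \eqref{eq:a-mismatch}--\eqref{eq:b-mismatch} combined with the expansion \eqref{eq:U-matrix-asymp} show the mismatch on each $\partial D_{a,b}(\delta)$ is $O(X^{-1/4})$. Forming the error $\mathbf{F}=\mathbf{T}\dot{\mathbf{T}}^{-1}$ and using $L^2$-boundedness of the Cauchy projection on the compact contour $\Sigma_\mathbf{F}$, small-norm theory gives unique solvability of \eqref{eq:F-minus-integral-equation} and the bound \eqref{eq:F-minus-L2}.

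The last stage extracts $\Psi^+$ from the leading Laurent coefficient of $\mathbf{F}$ at $z=\infty$, reducing matters to the single contour integral \eqref{eq:Psi-Plus-Integral}; by the exponential estimate this may be taken over the two circles alone and evaluated by residues at $a$ and $b$, since $f_{a,b}$ has a simple zero there and $\mathbf{H}^{a,b}$ are analytic. What remains is the bookkeeping of explicit constants: inserting $f_a'(a;v)=-\sqrt{-\vartheta''(a;v)}$, $f_b'(b;v)=\sqrt{\vartheta''(b;v)}$, the critical-point values of $\mathbf{H}^{a,b}$, and the special-function identities $\beta=-\alpha^*$ and $|\alpha|=\sqrt{2p}$ (via \cite[Eq.\@ 5.4.3]{DLMF}) collapses everything into the two-term oscillatory formula \eqref{eq:Psi-large-X}. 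I expect the genuine obstacle to be precisely this collapse: one must correctly combine four sources of phase --- the oscillations $\ee^{-2\ii X^{1/2}\vartheta(a;v)}$ and $\ee^{-2\ii X^{1/2}\vartheta(b;v)}$, the algebraic factors $X^{\mp\ii p/2}$ coming from the conjugators $X^{\mp\ii p\sigma_3/4}$, the curvature factors $(-\vartheta''(a;v))^{-\ii p}$ and $(\vartheta''(b;v))^{\ii p}$, and the argument of $\Gamma(\ii p)$ hidden in $\alpha$ --- and verify that they assemble into the single real phase $\phi(X,v)$ with constant $-\tfrac14\pi-2\pi p^2+\arg\Gamma(\ii p)$. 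Uniformity of the $O(X^{-1})$ error on compact subintervals of $|v|<54^{-1/2}$ then follows because every ingredient depends continuously on $v$ and the critical points stay bounded away from coalescence there.
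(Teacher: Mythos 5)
Your proposal follows the paper's own proof essentially step for step: the same rescaling to $\mathbf{S}(z;X,v)$ and reduction to $X\to+\infty$ via Corollaries~\ref{cor:Psi-pm} and \ref{cor:Psi-even}, the same lens-opening to $\mathbf{T}$, the same outer parametrix \eqref{eq:T-out} with exponent $\ii p$, the same parabolic-cylinder inner parametrices at $a,b$ with $O(X^{-1/4})$ mismatch, the same $L^2$ small-norm error analysis leading to \eqref{eq:Psi-Plus-Integral}, and the same residue evaluation with $\beta=-\alpha^*$, $|\alpha|=\sqrt{2p}$ to assemble the two-term oscillatory formula. This is a correct outline of exactly the argument the paper gives, with no substantive deviation.
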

In the formula \eqref{eq:Psi-large-X}, we may use the critical point equations $\vartheta'(a;v)=\vartheta'(b;v)=0$ to obtain $\vartheta''(a;v)=6v+2a^{-1}<0$ and $\vartheta''(b;v)=6v+2b^{-1}>0$.  

In the special case of $v=0$, the asymptotic formula \eqref{eq:Psi-large-X} becomes even more explicit because
\begin{equation}
a=a(0)=-\sqrt{2},\quad\theta(a(0);0)=-2\sqrt{2},\quad\theta''(a(0);0)=-\sqrt{2}
\end{equation}
and
\begin{equation}
b=b(0)=\sqrt{2},\quad\theta(b(0);0)=2\sqrt{2},\quad\theta''(b(0);0)=\sqrt{2}.
\end{equation}
Therefore, we have the following.
\begin{corollary}
\begin{multline}
\Psi^+(X,0)=\\
\frac{2^{5/4}}{X^{3/4}}\sqrt{\frac{\ln(2)}{2\pi}}\cos\left(4\sqrt{2}X^{1/2}-\frac{\ln(2)}{4\pi}\ln(X) -\frac{9(\ln(2))^2}{4\pi}-\frac{1}{4}\pi + \arg\left(\Gamma\left(\frac{\ii\ln(2)}{2\pi}\right)\right)\right) + O(X^{-1}),\\\quad X\to +\infty.
\label{eq:Psi-large-X-T0}
\end{multline}
\label{corollary:large-X-T0}
\end{corollary}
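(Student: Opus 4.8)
The plan is to obtain Corollary~\ref{corollary:large-X-T0} as a direct specialization of Theorem~\ref{theorem:large-X} to $v=0$; since the theorem already supplies the asymptotic formula \eqref{eq:Psi-large-X} together with a uniform $O(X^{-1})$ error on compact $v$-intervals, the entire task reduces to evaluating the critical-point data at $v=0$ and then algebraically collapsing the two exponential terms into a single cosine. First I would locate the critical points of $\vartheta(z;0)=z+2z^{-1}$. Since $\vartheta'(z;0)=1-2z^{-2}$ vanishes precisely when $z^2=2$, the degenerate cubic retains only the two real critical points $a=a(0)=-\sqrt{2}$ and $b=b(0)=\sqrt{2}$ nearest the origin (the third critical point has escaped to infinity as $v\to 0$). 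A short computation then gives $\vartheta(a;0)=-2\sqrt{2}$ and $\vartheta(b;0)=2\sqrt{2}$, while $\vartheta''(z;0)=4z^{-3}$ yields $\vartheta''(a;0)=-\sqrt{2}$ and $\vartheta''(b;0)=\sqrt{2}$, so that $\sqrt{-\vartheta''(a;0)}=\sqrt{\vartheta''(b;0)}=2^{1/4}$ and $b-a=2\sqrt{2}$.

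Next I would substitute these values into \eqref{eq:Psi-large-X}, so that the two bracketed contributions become $2^{-1/4}(\sqrt{2})^{\mp\ii p}\ee^{\pm 4\sqrt{2}\ii X^{1/2}}\ee^{\pm\ii\phi(X,0)}$, where the factors $(\sqrt{2})^{\mp\ii p}$ arise from $(-\vartheta''(a;0))^{-\ii p}$ and $\vartheta''(b;0)^{\ii p}$. The crucial structural observation is that the total phase of the second term is exactly the negative of that of the first, so the sum collapses through $\ee^{\ii\Theta}+\ee^{-\ii\Theta}=2\cos\Theta$. The prefactor then simplifies as $X^{-3/4}\sqrt{2p}\cdot 2^{3/4}=2^{5/4}X^{-3/4}\sqrt{\ln(2)/(2\pi)}$, reproducing the amplitude in \eqref{eq:Psi-large-X-T0}. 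For the phase $\Theta$, writing $(\sqrt{2})^{-\ii p}=\ee^{-\ii(\ln 2)^2/(4\pi)}$ and using $\ln(b-a)=\ln(2\sqrt{2})=\tfrac{3}{2}\ln 2$, I would gather the $X$-independent pieces of $\phi(X,0)$ with the $4\sqrt{2}X^{1/2}$ term and $-\tfrac{p}{2}\ln X=-\tfrac{\ln 2}{4\pi}\ln X$ to produce exactly $4\sqrt{2}X^{1/2}-\tfrac{\ln 2}{4\pi}\ln X-\tfrac{9(\ln 2)^2}{4\pi}-\tfrac{\pi}{4}+\arg(\Gamma(\ii p))$.

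The only point requiring genuine care is the phase bookkeeping in the previous step: the coefficient of $(\ln 2)^2$ in $\Theta$ receives three contributions --- $-1/(4\pi)$ from $(\sqrt{2})^{-\ii p}$, $-6/(4\pi)$ from $-2p\ln(b-a)$, and $-2/(4\pi)$ from $-2\pi p^2$ --- which must be checked to sum to $-9/(4\pi)$, while the constants $-\tfrac{1}{4}\pi$ and $\arg(\Gamma(\ii p))$ carry over unchanged. This tally is elementary but is the most error-prone part of the argument. No new analytic input is needed, since the $O(X^{-1})$ remainder and its uniformity are inherited verbatim from Theorem~\ref{theorem:large-X} evaluated at the single point $v=0$, which lies in the open interval $|v|<54^{-1/2}$.
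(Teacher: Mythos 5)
Your proposal is correct and follows essentially the same route as the paper: specialize Theorem~\ref{theorem:large-X} to $v=0$, compute $a(0)=-\sqrt{2}$, $b(0)=\sqrt{2}$, $\vartheta(a;0)=-2\sqrt{2}$, $\vartheta(b;0)=2\sqrt{2}$, $\vartheta''(a;0)=-\sqrt{2}$, $\vartheta''(b;0)=\sqrt{2}$, and collapse the two conjugate-phase terms into a cosine. Your phase bookkeeping (the three contributions $-1/(4\pi)$, $-6/(4\pi)$, $-2/(4\pi)$ summing to $-9/(4\pi)$ for the $(\ln 2)^2$ coefficient, and the amplitude $\sqrt{2p}\cdot 2^{3/4}=2^{5/4}\sqrt{\ln(2)/(2\pi)}$) is exactly the computation the paper leaves implicit, and it checks out.
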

The accuracy of the asymptotic formul\ae\ recorded in Theorem~\ref{theorem:large-X} and Corollary~\ref{corollary:large-X-T0} is illustrated in plots in Section~\ref{sec:plots-large-X}.

\subsection{Asymptotic behavior of $\Psi^\pm(X,T)$ for large $T$}
\label{sec:large-T}
It suffices to analyze $\Psi^+(X,T)$ for $X\ge 0$ and $T>0$ large.  We therefore introduce a non-negative parameter $w\ge 0$ and set $X=wT^{2/3}$ (note that $w=v^{-2/3}$ where $v=TX^{-3/2}$ parametrizes the large-$X$ asymptotics as described in Section~\ref{sec:large-X}), and rescale the spectral parameter $\Lambda$ by $\Lambda=T^{-1/3}z$.  The phase conjugating the jump matrix for $\mathbf{R}^+(\Lambda;X,T)$ then takes the form
\begin{equation}
\Lambda X+\Lambda^2T + 2\Lambda^{-1}=T^{1/3}\theta(z;w),\quad\theta(z;w):=wz+z^2+2z^{-1}.
\end{equation}
Setting $\mathbf{S}(z;T,w):=\mathbf{R}^+(T^{-1/3}z;T^{2/3}w,T)$, from \eqref{eq:Psi-R} we get
\begin{equation}
\Psi^+(T^{2/3}w;T)=2\ii T^{-1/3}\lim_{z\to\infty}zS_{12}(z;T,w).
\label{eq:Psi-T}
\end{equation}
As before, it is easy to see that $\mathbf{S}(z;T,w)\to\mathbb{I}$ as $z\to\infty$ for each $T>0$ and that $\mathbf{S}(z;T,w)$ is analytic in the complement of an arbitrary Jordan curve $\Gamma$ about $z=0$ in the clockwise sense, across which we have the jump condition
\begin{equation}
\mathbf{S}_+(z;T,w)=\mathbf{S}_-(z;T,w)\ee^{-\ii T^{1/3}\theta(z;w)\sigma_3}\mathbf{Q}^{-1}\ee^{\ii T^{1/3}\theta(z;w)\sigma_3},\quad z\in\Gamma.
\end{equation}
Since the analysis in Section~\ref{sec:large-X} is uniformly valid for $|v|$ bounded below the critical value of $54^{-1/2}$, i.e., for $w$ bounded above the corresponding critical value of $54^{1/3}\approx 3.78$, we will henceforth assume that $0\le w < 54^{1/3}$.

\subsubsection{Spectral curve, $g$-function, and steepest descent}  Suppose that $g(z;w)$ is a scalar function bounded and analytic for $z$ in the complement of a finite number of arcs of $\Gamma$ (cuts), that satisfies $g(z;w)\to 0$ as $z\to\infty$, and for which the boundary values taken on each cut from the interior and exterior of $\Gamma$ satisfy
\begin{equation}
g_+(z;w)+g_-(z;w)+2\theta(z;w)=\text{constant}
\end{equation}
where the constant in question can depend parametrically on $w$ and can be different in each cut.  It is straightforward to check that the function $(g'(z;w)+\theta'(z;w))^2$ is necessarily analytic for $z\in\mathbb{C}\setminus\{0\}$.  Expanding for large $z$ shows that
\begin{equation}
(g'(z;w)+\theta'(z;w))^2=4z^2+4wz+w^2+O(z^{-1}),\quad z\to\infty
\end{equation}
because $g'(z;w)=O(z^{-2})$ as $z\to\infty$.  Similarly, expanding for small $z$ shows that
\begin{equation}
(g'(z;w)+\theta'(z;w))^2=4z^{-4}+O(z^{-2}),\quad z\to 0
\end{equation}
because $g$ is analytic at the origin.  By Liouville's theorem it follows that for some coefficients $C_3(w)$ and $C_2(w)$,
\begin{equation}
(g'(z;w)+\theta'(z;w))^2=z^{-4}P(z;w),\quad P(z;w):=4z^6+4wz^5+w^2z^4+C_3(w)z^3+C_2(w)z^2 + 4.
\label{eq:spectral-curve-general}
\end{equation}
This algebraic relation is the relevant \emph{spectral curve} for the problem at hand.  It can take different forms under various additional assumptions on $C_3(w)$ and $C_2(w)$.  

The main case we will be interested in here is that in which $C_3(w)$ and $C_2(w)$ are such that the sextic $P$ factors as the product of the square of a quadratic factor and a second quadratic factor, i.e., $P$ has two double roots and two simple roots:
\begin{equation}
P(z;w)=4(z^2+d_1(w)z+d_0(w))^2(z^2+s_1(w)z+s_0(w)).
\label{eq:P-factors}
\end{equation}
Expanding out the right-hand side and comparing with the determinate coefficients of $z^5$, $z^4$, $z^1$, and $z^0$ obtained from \eqref{eq:spectral-curve-general} on the left-hand side gives the relations
\begin{equation}
\begin{split}4w&=8d_1(w)+4s_1(w)\\
w^2&=4d_1(w)^2+8d_0(w)+8d_1(w)s_1(w)+4s_0(w)\\
0&=4d_0(w)^2s_1(w)+8d_1(w)d_0(w)s_0(w)\\
4&=4d_0(w)^2s_0(w).
\end{split}
\label{eq:abcd-equations}
\end{equation}
From the first, third, and fourth equations, $d_1(w)$, $s_1(w)$, and $s_0(w)$ can be explicitly eliminated in favor of $d_0(w)$ and $w$:
\begin{equation}
d_1(w)=\frac{1}{2}\frac{wd_0(w)^3}{d_0(w)^3-1},\quad s_1(w)=-\frac{w}{d_0(w)^3-1},\quad s_0(w)=d_0(w)^{-2}.
\end{equation}
The second equation then becomes a relation between $d_0(w)$ and $w$ only:
\begin{equation}
8d_0(w)^9-12d_0(w)^6-2w^2d_0(w)^5-w^2d_0(w)^2+4=0.
\end{equation}
Remarkably, this equation factors as a product of three cubics:
\begin{equation}
(2d_0(w)^3+1)(2d_0(w)^3-wd_0(w)-2)(2d_0(w)^3+wd_0(w)-2)=0.
\end{equation}
By taking $d_0(w)=-2^{-1/3}$, the equations \eqref{eq:abcd-equations} have a simple particular solution:
\begin{equation}
d_1(w):=\frac{1}{6}w,\quad
d_0(w):=-2^{-1/3},\quad
s_1(w):=\frac{2}{3}w,\quad
s_0(w):=2^{2/3}.
\label{eq:coefficient-values}
\end{equation}
With these values, the undetermined coefficients $C_3(w)$ and $C_2(w)$ become explicit functions of $w$ via the identity \eqref{eq:P-factors}, but we will not need these going forward.  The double roots of $P(z;w)$ are therefore 
\begin{equation}
a(w):=\frac{1}{2}\left(-\frac{1}{6}w-\sqrt{\frac{w^2}{36}+2^{5/3}}\right)<0\quad\text{and}\quad
b(w):=\frac{1}{2}\left(-\frac{1}{6}w+\sqrt{\frac{w^2}{36}+2^{5/3}}\right)>0.
\end{equation}
Furthermore, the simple roots of $P(z;w)$ form a complex-conjugate pair $z_0,z_0^*$ with $\mathrm{Im}(z_0)>0$ exactly when $0\le w <54^{1/3}$:
\begin{equation}
z_0(w):=\frac{1}{3}\left(-w+\ii\sqrt{54^{2/3}-w^2}\right),\quad 0\le w<54^{1/3}.
\label{eq:z0}
\end{equation}
In this situation, there is only one cut for the $g$-function, namely an arc $\Sigma$ connecting the conjugate pair of simple roots $z_0$ and $z_0^*$ of $P(z;w)$.  Since this cut must be an arc $\Sigma\subset\Gamma$, we choose $\Sigma$ to cross the real axis at the negative value $z=a(w)$, and complete $\Gamma$ with a complementary arc that crosses the real axis at the positive value $z=b(w)$.

Combining \eqref{eq:spectral-curve-general} with \eqref{eq:P-factors}, we then obtain $g'(z;w)$ in the form
\begin{equation}
g'(z;w)=-\theta'(z;w) + 2z^{-2}(z^2+d_1(w)z+d_0(w))R(z;w),\quad R(z;w)^2=z^2+s_1(w)z+s_0(w),\quad z\in\mathbb{C}\setminus\Sigma,
\end{equation}
where $R(z;w)$ is analytic for $z\in\mathbb{C}\setminus\Sigma$ and satisfies $R(z;w)=z+O(1)$ as $z\to\infty$.  Note that any apparent singularity at $z=0$ necessarily cancels since the form of the sextic $P(z;w)$ was predicated on the assumed analyticity of $g'(z;w)$ at the origin.  Similarly, the above formula automatically satisfies $g'(z;w)=O(z^{-2})$ as $z\to\infty$.  Therefore $g(z;w)$ is well-defined for $z\in\mathbb{C}\setminus\Sigma$ by integration from infinity:
\begin{equation}
g(z;w)=\int_\infty^zg'(\zeta;w)\,\dd\zeta,\quad z\in\mathbb{C}\setminus\Sigma
\end{equation}
where the path of integration is arbitrary in the indicated domain.  It remains to specify $\Sigma$ precisely.

To fully determine $\Sigma$, note that the exponent function that will play a key role below is given by
\begin{equation}
h(z;w):=g(z;w)+\theta(z;w).
\end{equation}
The curves in the complex $z$-plane along which $\mathrm{Im}(h(z;w))$ is constant may be described as trajectories of a rational quadratic differential, i.e., they satisfy the condition $h'(z;w)^2\,\dd z^2>0$, where $h'(z;w)^2$ is the rational function 
\begin{equation}
h'(z;w)^2=z^{-4}P(z;w)=4z^{-4}(z-a(w))^2(z-b(w))^2(z-z_0(w))(z-z_0(w)^*).
\end{equation}
Whereas $h(z;w)$ is a multi-valued function with a branch cut $\Sigma$, the trajectories defined by $h'(z;w)^2\,\dd z>0$ form a well-defined system of curves in the $z$-plane.  Indeed, from standard existence/uniqueness theory for ordinary differential equations, it follows that each point $z$ that is not a pole or zero of $h'(z;w)^2$ lies on a unique trajectory.  Local analysis shows that there are precisely three trajectories emanating from each of the simple zeros of $h'(z;w)^2$, i.e., from the points $z_0(w)$ and $z_0(w)^*$.  Similarly, there are precisely four trajectories emanating from each of the double zeros of $h'(z;w)^2$, i.e., from the points $a(w)$ and $b(w)$, two emanating horizontally and two vertically from each.  The real axis in the $z$-plane is the union of trajectories $(-\infty,a(w))$, $(a(w),0)$, $(0,b(w))$, and $(b(w),+\infty)$ and the three exceptional points $a(w)<0<b(w)$.  These are clearly part of the level set $\mathrm{Im}(h(z;w))=0$.  Now, the fact that 
\begin{equation}
\oint_C h'(z;w)\,\dd z=0
\end{equation}
holds when $C$ is any Jordan curve enclosing $\Sigma$ in its interior (because $g'(z;w)=O(z^{-2})$ as $z\to\infty$ and $\theta'(z;w)$ has no residues) can be combined with the generalized Cauchy integral theorem to yield
\begin{equation}
\int_{z_0(w)^*}^{z_0(w)}h'_+(z;w)\,\dd z +\int_{z_0(w)}^{z_0(w)^*}h'_+(z;w)\,\dd z=0
\end{equation}
where the integration is taken along opposite sides of the branch cut $\Sigma$ (the subscript $+$ indicates a boundary value from the left as $\Sigma$ is traversed in the indicated direction).  But since $h'(z;w)$ is proportional to $R(z;w)$, it changes sign across $\Sigma$, and consequently both terms on the left-hand side are equal.  Therefore
\begin{equation}
h(z_0(w)^*;w)=h(z_0(w);w)\quad\implies\quad \mathrm{Im}(h(z_0(w)^*;w))=\mathrm{Im}(h(z_0(w);w)),\quad 0\le w<54^{1/3}.
\end{equation}
Since the two points $z=z_0(w)$ and $z=z_0(w)^*$ lie on the same level set of $\mathrm{Im}(h(z;w))$,  it is possible that they may be connected by a union of trajectories and one of the exceptional points on the real axis, and that this is so is easily confirmed by making plots of the trajectories emanating from $z_0(w)$.  In fact, of the three trajectories emanating from $z_0(w)$, one terminates at $z=a(w)$, one terminates at $z=b(w)$, and the third goes to infinity in the upper half-plane.  Denoting the trajectory joining $z=z_0(w)$ and $z=a(w)$ as $\Sigma^+$, we define $\Sigma$ precisely as the closure of the union of $\Sigma^+$ with its Schwarz reflection $\Sigma^-=(\Sigma^+)^*$.  With this choice, it follows that $\mathrm{Im}(h(z;w))$ can be defined on the whole $z$-plane as a continuous function.  Indeed, no matter where the branch cut $\Sigma$ is placed, it holds that $\mathrm{Im}(h_+(z;w))=-\mathrm{Im}(h_-(z;w))$ for $z\in\Sigma$ because the sum of the boundary values of $h$ is constant along $\Sigma$ and obviously real at the point $\{a(w)\}=\Sigma\cap\mathbb{R}$.  Hence the condition that $\mathrm{Im}(h(z;w))$ is continuous across $\Sigma$ is precisely that $\Sigma$ be a component of the zero level set of $\mathrm{Im}(h(z;w))$.  Note, however, that the normal derivative of $\mathrm{Im}(h(z;w))$ is not continuous across $\Sigma$; indeed $\mathrm{Im}(h(z;w))$ takes the same sign on both sides of $\Sigma^+$.  
The sign chart of $\mathrm{Im}(h(z;w))$ with the above choice of $\Sigma$ is illustrated in Figure~\ref{fig:w-plots}.
\begin{figure}[h]
\begin{center}
\includegraphics{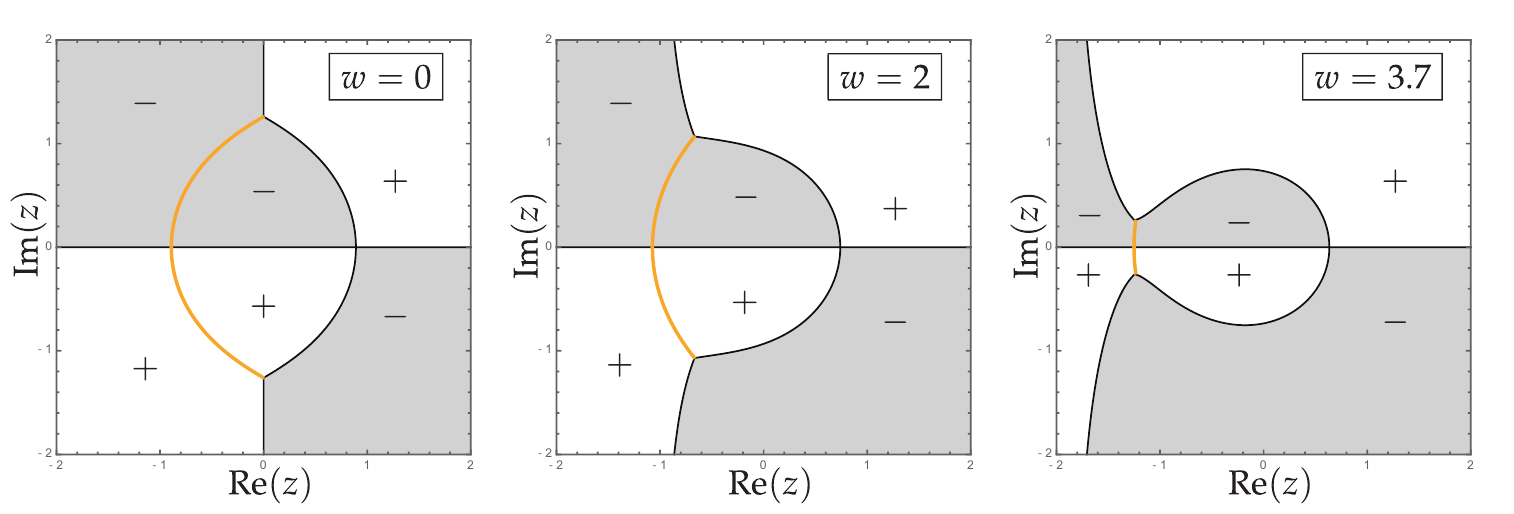}
\end{center}
\caption{Sign charts for $\mathrm{Im}(h(z;w))$ as $w$ varies in the interval $0\le w <54^{1/3}\approx 3.78$.  The orange arc in each plot is $\Sigma$, and it is part of the zero level curve $\mathrm{Im}(h(z;w))=0$ although the sign of $\mathrm{Im}(h(z;w))$ does not change upon crossing it.}
\label{fig:w-plots}
\end{figure}
Thus, $\mathrm{Im}(h(z;w))$ becomes a continuous function on the whole $z$-plane that is harmonic except for $z\in\Sigma$.  

We take the jump contour $\Gamma$ so that $\mathrm{Im}(h(z;w))=0$ holds for $z\in\Gamma$.  It consists of four arcs, $\Gamma^\pm$ and $\Sigma^\pm$ as indicated in the left-hand panel of Figure~\ref{fig:regions-Tjump-Tlim}.
\begin{figure}[h]
\begin{center}
\includegraphics{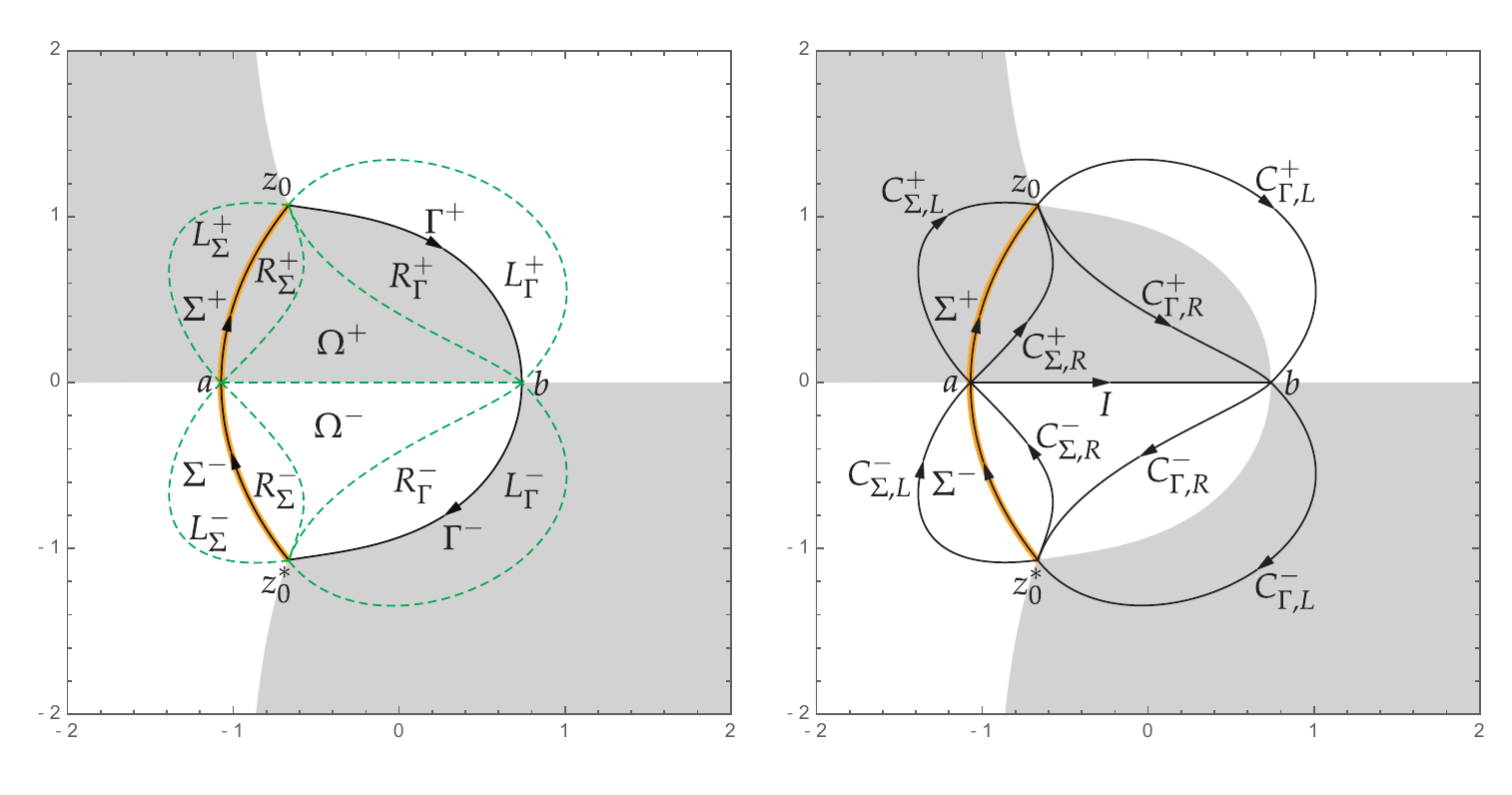}
\end{center}
\caption{Left:  the jump contour $\Gamma=\Gamma^+\cup\Gamma^-\cup\Sigma^+\cup\Sigma^-$ for $\mathbf{S}$ and the regions $L^\pm_\Gamma$, $L^\pm_\Sigma$, $R^\pm_\Gamma$, $R^\pm_\Sigma$, and $\Omega^\pm$.  Right.  The jump contour for $\mathbf{T}$.}
\label{fig:regions-Tjump-Tlim}
\end{figure}
Referring also to the left-hand panel of Figure~\ref{fig:regions-Tjump-Tlim}, we introduce the $g$-function and take advantage of matrix factorizations to separate the exponential factors via the following substitutions:
\begin{equation}
\mathbf{T}(z;T,w):=
\mathbf{S}(z;T,w)\begin{bmatrix}1 & 0\\\ee^{2\ii T^{1/3}\theta(z;w)} & 1\end{bmatrix}\ee^{\ii T^{1/3}g(z;w)\sigma_3},\quad z\in L^+_\Gamma,
\label{eq:T-T-large-first}
\end{equation}
\begin{equation}
\mathbf{T}(z;T,w):=
\mathbf{S}(z;T,w)2^{\sigma_3/2}\begin{bmatrix}1 & \tfrac{1}{2}\ee^{-2\ii T^{1/3}\theta(z;w)}\\0 & 1\end{bmatrix}\ee^{\ii T^{1/3}g(z;w)\sigma_3},\quad z\in R^+_\Gamma,
\end{equation}
\begin{equation}
\mathbf{T}(z;T,w):=
\mathbf{S}(z;T,w)2^{\sigma_3/2}\ee^{\ii T^{1/3}g(z;w)\sigma_3},\quad z\in\Omega^+,
\end{equation}
\begin{equation}
\mathbf{T}(z;T,w):=
\mathbf{S}(z;T,w)2^{-\sigma_3/2}\ee^{\ii T^{1/3}g(z;w)\sigma_3},\quad z\in\Omega^-,
\end{equation}
\begin{equation}
\mathbf{T}(z;T,w):=
\mathbf{S}(z;T,w)2^{-\sigma_3/2}\begin{bmatrix}1 & 0\\-\tfrac{1}{2}\ee^{2\ii T^{1/3}\theta(z;w)} & 1\end{bmatrix}\ee^{\ii T^{1/3}g(z;w)\sigma_3},\quad z\in R^-_\Gamma,
\end{equation}
\begin{equation}
\mathbf{T}(z;T,w):=
\mathbf{S}(z;T,w)\begin{bmatrix}1 & -\ee^{-2\ii T^{1/3}\theta(z;w)}\\0 & 1\end{bmatrix}
\ee^{\ii T^{1/3}g(z;w)\sigma_3},\quad z\in L^-_\Gamma,
\end{equation}
\begin{equation}
\mathbf{T}(z;T,w):=
\mathbf{S}(z;T,w)2^{\sigma_3/2}\begin{bmatrix}1 & -\tfrac{1}{2}\ee^{-2\ii T^{1/3}\theta(z;w)}\\0 & 1
\end{bmatrix}\ee^{\ii T^{1/3}g(z;w)\sigma_3},\quad z\in R^+_\Sigma,
\end{equation}
\begin{equation}
\mathbf{T}(z;T,w):=
\mathbf{S}(z;T,w)\begin{bmatrix}1 & \ee^{-2\ii T^{1/3}\theta(z;w)}\\0 & 1\end{bmatrix}\ee^{\ii T^{1/3}g(z;w)\sigma_3},\quad z\in L^+_\Sigma,
\end{equation}
\begin{equation}
\mathbf{T}(z;T,w):=
\mathbf{S}(z;T,w)2^{-\sigma_3/2}\begin{bmatrix}1 & 0\\\tfrac{1}{2}\ee^{2\ii T^{1/3}\theta(z;w)} & 1\end{bmatrix}\ee^{\ii T^{1/3}g(z;w)\sigma_3},\quad z\in R^-_\Sigma,
\end{equation}
\begin{equation}
\mathbf{T}(z;T,w):=
\mathbf{S}(z;T,w)\begin{bmatrix}1 & 0\\-\ee^{2\ii T^{1/3}\theta(z;w)} & 1\end{bmatrix}\ee^{\ii T^{1/3}g(z;w)\sigma_3},\quad z\in L^-_\Sigma,
\label{eq:T-T-large-last}
\end{equation}
and elsewhere we simply set $\mathbf{T}(z;T,w):=\mathbf{S}(z;T,w)\ee^{\ii T^{1/3}g(z;w)\sigma_3}$.  The jump contour for $\mathbf{T}(z;T,w)$ is illustrated in the right-hand panel of Figure~\ref{fig:regions-Tjump-Tlim}.  As in the large-$X$ analysis of Section~\ref{sec:large-X}, $\mathbf{T}(z;T,w)$ extends continuously and hence analytically to the arcs $\Gamma^\pm$ of the original jump contour.  The remaining jump conditions are the following.
\begin{equation}
\mathbf{T}_+(z;T,w)=\mathbf{T}_-(z;T,w)\begin{bmatrix}1 & 0\\-\ee^{2\ii T^{1/3}h(z;w)} & 1\end{bmatrix},\quad z\in C^+_{\Gamma,L},
\label{eq:Tjump-1-Tlim}
\end{equation}
\begin{equation}
\mathbf{T}_+(z;T,w)=\mathbf{T}_-(z;T,w)\begin{bmatrix}1 & \tfrac{1}{2}\ee^{-2\ii T^{1/3}h(z;w)}\\0 & 1
\end{bmatrix},\quad z\in C^+_{\Gamma,R},
\end{equation}
\begin{equation}
\mathbf{T}_+(z;T,w)=\mathbf{T}_-(z;T,w)2^{\sigma_3},\quad z\in I,
\label{eq:I-jump}
\end{equation}
\begin{equation}
\mathbf{T}_+(z;T,w)=\mathbf{T}_-(z;T,w)\begin{bmatrix}1 & 0\\-\tfrac{1}{2}\ee^{2\ii T^{1/3}h(z;w)} & 1\end{bmatrix},\quad z\in C^-_{\Gamma,R},
\end{equation}
\begin{equation}
\mathbf{T}_+(z;T,w)=\mathbf{T}_-(z;T,w)\begin{bmatrix}1 & \ee^{-2\ii T^{1/3}h(z;w)} \\0 & 1\end{bmatrix},\quad z\in C^-_{\Gamma,L},
\end{equation}
\begin{equation}
\mathbf{T}_+(z;T,w)=\mathbf{T}_-(z;T,w)\begin{bmatrix}1 & -\ee^{-2\ii T^{1/3}h(z;w)}\\0 & 1\end{bmatrix},\quad z\in C^+_{\Sigma,L},
\end{equation}
\begin{equation}
\mathbf{T}_+(z;T,w)=\mathbf{T}_-(z;T,w)\begin{bmatrix}1 & -\tfrac{1}{2}\ee^{-2\ii T^{1/3}h(z;w)}\\0 & 1\end{bmatrix},\quad z\in C^+_{\Sigma,R},
\end{equation}
\begin{equation}
\mathbf{T}_+(z;T,w)=\mathbf{T}_-(z;T,w)\begin{bmatrix}1 & 0\\\tfrac{1}{2}\ee^{2\ii T^{1/3}h(z;w)} & 1\end{bmatrix},\quad z\in C^-_{\Sigma,R},
\end{equation}
\begin{equation}
\mathbf{T}_+(z;T,w)=\mathbf{T}_-(z;T,w)\begin{bmatrix}1 & 0\\\ee^{2\ii T^{1/3}h(z;w)} & 1\end{bmatrix},\quad z\in C^-_{\Sigma,L},
\end{equation}
and finally,
\begin{equation}
\mathbf{T}_+(z;T,w)=\mathbf{T}_-(z;T,w)\begin{bmatrix}0 & \ee^{-\ii T^{1/3}\kappa(w)}\\
-\ee^{\ii T^{1/3}\kappa(w)} & 0\end{bmatrix},\quad z\in\Sigma=\Sigma^+\cup\Sigma^-,
\label{eq:twist-jump}
\end{equation}
where $\kappa(w)$ is the real constant value of $g_++g_-+2\theta$ along $\Sigma$:
\begin{equation}
\kappa(w):=g_+(z;w)+g_-(z;w)+2\theta(z;w)\in\mathbb{R},\quad z\in\Sigma.
\label{eq:kappa-def}
\end{equation}
Since $\mathrm{Im}(h(z;w))>0$ holds on $C^+_{\Gamma,L}$, $C^-_{\Gamma,R}$, $C^-_{\Sigma,L}$, and $C^-_{\Sigma,R}$ while $\mathrm{Im}(h(z;w))<0$ holds on $C^+_{\Gamma,R}$, $C^-_{\Gamma,L}$, $C^+_{\Sigma,L}$, and $C^+_{\Sigma,R}$, the jump matrix on all of these arcs converges exponentially to the identity as $T\to +\infty$, with the convergence being uniform away from the endpoints of the arcs.

\subsubsection{Parametrix construction}
We first construct an outer parametrix $\dot{\mathbf{T}}^\mathrm{out}(z;T,w)$ satisfying exactly the jump conditions on $I$ and $\Sigma$ (cf., \eqref{eq:I-jump} and \eqref{eq:twist-jump}) that do not become asymptotically trivial as $T\to +\infty$.  Recalling from Section~\ref{sec:large-X} the corresponding outer parametrix that satisfies the jump condition \eqref{eq:I-jump} on $I$ we may write $\dot{\mathbf{T}}^\mathrm{out}(z;T,w)$ in the form
\begin{equation}
\dot{\mathbf{T}}^\mathrm{out}(z;T,w)=\mathbf{G}(z;T,w)\left(\frac{z-a(w)}{z-b(w)}\right)^{\ii p\sigma_3},\quad p:=\frac{\ln(2)}{2\pi},
\label{eq:T-out-F}
\end{equation}
where the power function is defined as the principal branch.  Then, $\mathbf{G}(z;T,w)$ extends analytically to $I$, and we will assume that it is bounded near $z=a(w),b(w)$ in particular making it analytic at $z=b(w)$.  Therefore, $\mathbf{G}(z;T,w)$ is analytic for $z\in\mathbb{C}\setminus\Sigma$ and tends to the identity as $z\to\infty$.  Across $\Sigma$, the constant jump condition \eqref{eq:twist-jump} required of $\dot{\mathbf{T}}^\mathrm{out}(z;T,w)$ becomes modified for $\mathbf{G}(z;T,w)$:
\begin{equation}
\mathbf{G}_+(z;T,w)=\mathbf{G}_-(z;T,w)\left(\frac{z-a(w)}{z-b(w)}\right)^{\ii p\sigma_3}
\begin{bmatrix}0 & \ee^{-\ii T^{1/3}\kappa(w)}\\-\ee^{\ii T^{1/3}\kappa(w)} & 0\end{bmatrix}
\left(\frac{z-a(w)}{z-b(w)}\right)^{-\ii p\sigma_3},\quad z\in \Sigma.
\end{equation}
To solve for $\mathbf{G}(z;T,w)$, we will convert this back into a constant jump condition on $\Sigma$ alone by the following substitution:
\begin{equation}
\mathbf{G}(z;T,w)=\mathbf{H}(z;T,w)\ee^{-k(z;w)\sigma_3},
\label{eq:F-G}
\end{equation}
where $k(z;w)$ is given by 
\begin{equation}
k(z;w):=\ii p\log\left(\frac{z-a(w)}{z-b(w)}\right)+\ii p R(z;w)\int_{a(w)}^{b(w)}\frac{\dd s}{R(s;w)(s-z)}+\frac{1}{2}\ii\mu(w),
\end{equation}
in which the logarithm is given by the principal branch $-\pi<\mathrm{Im}(\log(\cdot))<\pi$, and where the constant $\mu(w)$ is given by
\begin{equation}
\mu(w):=2p\int_{a(w)}^{b(w)}\frac{\dd s}{R(s;w)}>0.
\label{eq:mu-def}
\end{equation}
It is straightforward to confirm that $k(z;w)$ has the following properties.  By definition of $\mu(w)$, it satisfies $k(z;w)=O(z^{-1})$ as $z\to\infty$.  Despite appearances, there is no jump across $(a(w),b(w))$ as is easily confirmed by comparing the boundary values of the logarithm and using the Plemelj formula.  The apparent singularities at $z=a(w),b(w)$ are removable, so the domain of analyticity for $k(z;w)$ is $z\in\mathbb{C}\setminus\Sigma$, and $k(z;w)$ takes continuous boundary values on $\Sigma$, including at the endpoints.  These boundary values are related by the condition
\begin{equation}
k_+(z;w)+k_-(z;w)=2\ii p\log\left(\frac{z-a(w)}{z-b(w)}\right)+\ii\mu(w),\quad z\in\Sigma.
\end{equation}
It follows that $\mathbf{H}(z;T,w)$ is a matrix function analytic for $z\in\mathbb{C}\setminus\Sigma$ that tends to $\mathbb{I}$ as $z\to\infty$, and that satisfies the jump condition
\begin{equation}
\mathbf{H}_+(z;T,w)=\mathbf{H}_-(z;T,w)\begin{bmatrix}0 & \ee^{-\ii (T^{1/3}\kappa(w)+\mu(w))}\\
-\ee^{\ii (T^{1/3}\kappa(w)+\mu(w))} & 0\end{bmatrix},\quad z\in\Sigma.
\end{equation}
It is straightforward to solve for $\mathbf{H}(z;T,w)$ by diagonalizing the constant jump matrix, which has eigenvalues $\pm\ii$.  All solutions of the jump condition for $\mathbf{H}(z;T,w)$ have singularities at the endpoints of $\Sigma$, and we select the unique solution with the mildest rate of growth as $z\to z_0(w),z_0(w)^*$:
\begin{equation}
\mathbf{H}(z;T,w)=\ee^{-\ii (T^{1/3}\kappa(w)+\mu(w))\sigma_3/2}\mathbf{U}\left(\frac{z-z_0(w)}{z-z_0(w)^*}\right)^{\sigma_3/4}\mathbf{U}^{-1}\ee^{\ii (T^{1/3}\kappa(w)+\mu(w))\sigma_3/2},\quad\mathbf{U}:=\frac{1}{\sqrt{2}}\begin{bmatrix}1 & 1\\\ii & -\ii\end{bmatrix}.
\label{eq:G}
\end{equation}
Here, the power function in the central factor is defined to be analytic for $z\in\mathbb{C}\setminus\Sigma$ and to tend to $\mathbb{I}$ as $z\to\infty$.  Combining \eqref{eq:T-out-F}, \eqref{eq:F-G}, and \eqref{eq:G} completes the construction of the outer parametrix $\dot{\mathbf{T}}^\mathrm{out}(z;T,w)$.

This problem requires four inner parametrices, $\dot{\mathbf{T}}^a(z;T,w)$, $\dot{\mathbf{T}}^b(z;T,w)$, $\dot{\mathbf{T}}^{z_0}(z;T,w)$, and $\dot{\mathbf{T}}^{z_0^*}(z;T,w)$ to be defined in neighborhoods of $z=a$, $z=b$, $z=z_0$, and $z=z_0^*$ respectively.  Those defined near $z=a,b$ will be constructed in terms of parabolic cylinder functions exactly as in Section~\ref{sec:large-X}.  Those defined near $z=z_0,z_0^*$ can be constructed in terms of Airy functions.  In all four cases, the inner parametrix constitutes an exact local solution of the jump conditions for $\mathbf{T}(z;T,w)$.  The inner parametrices will have the following key properties:
\begin{equation}
\sup_{z\in\partial D_{a,b}(\delta)}\|\dot{\mathbf{T}}^{a,b}(z;T,w)\dot{\mathbf{T}}^\mathrm{out}(z;T,w)^{-1}-\mathbb{I}\|=O(T^{-1/6}),\quad T\to +\infty,
\label{eq:T-PC-mismatch}
\end{equation}
\begin{equation}
\sup_{z\in\partial D_{z_0,z_0^*}(\delta)}\|\dot{\mathbf{T}}^{z_0,z_0^*}(z;T,w)\dot{\mathbf{T}}^\mathrm{out}(z;T,w)^{-1}-\mathbb{I}\|=O(T^{-1/3}),\quad T\to +\infty,
\label{eq:T-Airy-mismatch}
\end{equation}
with both estimates\footnote{It is standard that for parabolic cylinder (resp., Airy) parametrices the mismatch error is proportional to the large parameter in the exponent, here $T^{1/3}$, to the power $-1/2$  (resp. $-1$).} holding uniformly for $w\ge 0$ bounded below the critical value of $54^{1/3}$.  
The global parametrix $\dot{\mathbf{T}}(z;T,w)$ is defined as in Section~\ref{sec:large-X} by setting $\dot{\mathbf{T}}(z;T,w)$ equal to  $\dot{\mathbf{T}}^\mathrm{out}(z;T,w)$ outside of the four disks and defining $\dot{\mathbf{T}}(z;T,w)$ within each of the four disks as the corresponding inner parametrix.  As in Section~\ref{sec:large-X}, the global parametrix has unit determinant.

\subsubsection{Error analysis}
It is straightforward to confirm that the error matrix $\mathbf{F}(z;T,w):=\mathbf{T}(z;T,w)\dot{\mathbf{T}}(z;T,w)^{-1}$ satisfies all of the necessary conditions of a small-norm Riemann-Hilbert problem.  The jump contour $\Sigma_\mathbf{F}$ for $\mathbf{F}(z;T,w)$ consists of the restrictions of the arcs $C_{\Sigma,L}^\pm$, $C_{\Sigma,R}^\pm$, $C_{\Gamma,L}^\pm$, and $C_{\Gamma,R}^\pm$ to the exterior of all four disks together  with the boundaries of all four disks.  The dominant contribution to the jump discrepancy $\mathbf{V}^\mathbf{F}-\mathbb{I}$ for $\mathbf{F}(z;T,w)$ lies on the boundaries of the disks $D_{a,b}(\delta)$, leading to the estimate
\begin{equation}
\sup_{z\in\Sigma_\mathbf{F}}\|\mathbf{V}^\mathbf{F}(z;T,w)-\mathbb{I}\|=O(T^{-1/6}),\quad T\to +\infty
\end{equation}
holding uniformly for $w\ge 0$ bounded below the critical value of $54^{1/3}$.  By the $L^2(\Sigma_\mathbf{F})$ theory of small-norm Riemann-Hilbert problems, some of which was described in Section~\ref{sec:large-X}, it follows that every coefficient $\mathbf{F}^n(T,w)$ in the Laurent series for $\mathbf{F}(z;T,w)$ convergent for sufficiently large $|z|$:
\begin{equation}
\mathbf{F}(z;T,w)=\mathbb{I}+\sum_{n=1}^\infty z^{-n}\mathbf{F}^n(T,w)
\end{equation}
satisfies $\|\mathbf{F}^n(T,w)\|=O(T^{-1/6})$ as $T\to +\infty$ uniformly for $w\ge 0$ bounded below $54^{1/3}$.  Since $\mathbf{S}(z;T,w)=\mathbf{T}(z;T,w)\ee^{-\ii T^{1/3}g(z;w)\sigma_3}$ and $\dot{\mathbf{T}}(z;T,w)=\dot{\mathbf{T}}^\mathrm{out}(z;T,w)$ both hold for $|z|$ sufficiently large, 
from \eqref{eq:Psi-T} we have
\begin{equation}
\begin{split}
\Psi^+(T^{2/3}w,T)&=2\ii T^{-1/3}\lim_{z\to\infty}zT_{12}(z;T,w)\ee^{\ii T^{1/3}g(z;w)}\\
&=2\ii T^{-1/3}\lim_{z\to\infty}z\left[F_{11}(z;T,w)\dot{T}^\mathrm{out}_{12}(z;T,w)+
F_{12}(z;T,w)\dot{T}^\mathrm{out}_{22}(z;T,w)\right]\ee^{\ii T^{1/3}g(z;w)}\\
&=2\ii T^{-1/3}\lim_{z\to\infty}z\left[\dot{T}^\mathrm{out}_{12}(z;T,w)+F_{12}(z;T,w)\right]\\
&=2\ii T^{-1/3}\lim_{z\to\infty}z\dot{T}^\mathrm{out}_{12}(z;T,w)+O(T^{-1/2}),\quad T\to +\infty.
\end{split}
\end{equation}
Explicitly substituting for the outer parametrix and using (from \eqref{eq:z0}) $\mathrm{Im}(z_0(w))=\tfrac{1}{3}\sqrt{54^{2/3}-w^2}$ completes the proof of the following result.
\begin{theorem}[Large-$T$ asymptotics of rogue waves of infinite order]
Let $0\le w<54^{1/3}$ be fixed.  Then
\begin{equation}
\Psi^+(T^{2/3}w,T)=-\ii T^{-1/3}\frac{1}{3}\sqrt{54^{2/3}-w^2}\ee^{-\ii (T^{1/3}\kappa(w)+\mu(w))} + O(T^{-1/2}),\quad T\to +\infty,
\end{equation}
where $\kappa(w)$ is defined by \eqref{eq:kappa-def} and $\mu(w)$ is defined by \eqref{eq:mu-def}.
The estimate $O(T^{-1/2})$ is uniform for $w$ in compact subintervals of $[0,54^{1/3})$.  
\label{theorem:large-T}
\end{theorem}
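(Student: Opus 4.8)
The plan is to read off the stated asymptotics directly from the outer parametrix, since the steepest-descent reduction and the small-norm error analysis assembled above have already done the substantive work. The mismatch estimates \eqref{eq:T-PC-mismatch} and \eqref{eq:T-Airy-mismatch} force every Laurent coefficient of the error $\mathbf{F}(z;T,w)=\mathbf{T}\dot{\mathbf{T}}^{-1}$ to be $O(T^{-1/6})$, so upon insertion into the recovery formula \eqref{eq:Psi-T} the error matrix contributes only at order $T^{-1/3}\cdot T^{-1/6}=O(T^{-1/2})$. Hence $\Psi^+(T^{2/3}w,T)$ equals $2\ii T^{-1/3}$ times the coefficient of $z^{-1}$ in the $(1,2)$ entry of $\dot{\mathbf{T}}^{\mathrm{out}}(z;T,w)$ as $z\to\infty$, plus $O(T^{-1/2})$, and the entire problem collapses to extracting that one number.

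First I would expand the outer parametrix at $z=\infty$ through its factorization $\dot{\mathbf{T}}^{\mathrm{out}}=\mathbf{H}\ee^{-k\sigma_3}((z-a)/(z-b))^{\ii p\sigma_3}$ coming from \eqref{eq:T-out-F} and \eqref{eq:F-G}. The two rightmost factors are diagonal and of the form $\mathbb{I}+O(z^{-1})$ with a purely diagonal $O(z^{-1})$ part, since $k(z;w)=O(z^{-1})$ and the power of $(z-a)/(z-b)$ expands along the diagonal. They therefore cannot generate an off-diagonal $z^{-1}$ term, so the sought coefficient is produced entirely by $\mathbf{H}(z;T,w)$ as given explicitly in \eqref{eq:G}.

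The core computation is the large-$z$ expansion of $\mathbf{H}$. I would record the algebraic identity $\mathbf{U}\sigma_3\mathbf{U}^{-1}=\sigma_2$ and the scalar expansion $((z-z_0)/(z-z_0^*))^{\sigma_3/4}=\mathbb{I}-\tfrac{1}{2}\ii\,\mathrm{Im}(z_0)\sigma_3 z^{-1}+O(z^{-2})$, using $z_0^*-z_0=-2\ii\,\mathrm{Im}(z_0)$. Conjugating by $\mathbf{U}$ converts $\sigma_3$ into $\sigma_2$, and the outer conjugation by $\ee^{\pm\ii(T^{1/3}\kappa(w)+\mu(w))\sigma_3/2}$ attaches a pure phase through the anticommutation $\sigma_3\sigma_2=-\sigma_2\sigma_3$, namely $\ee^{-\ii c\sigma_3/2}\sigma_2\ee^{\ii c\sigma_3/2}=\sigma_2\ee^{\ii c\sigma_3}$ with $c:=T^{1/3}\kappa(w)+\mu(w)$. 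Reading off the $(1,2)$ entry of $\sigma_2\ee^{\ii c\sigma_3}$ gives $H_{12}(z;T,w)=-\tfrac{1}{2}\mathrm{Im}(z_0)\ee^{-\ii c}z^{-1}+O(z^{-2})$, hence $\lim_{z\to\infty}z\dot{T}^{\mathrm{out}}_{12}(z;T,w)=-\tfrac{1}{2}\mathrm{Im}(z_0)\ee^{-\ii c}$. Substituting into \eqref{eq:Psi-T} and using $\mathrm{Im}(z_0(w))=\tfrac{1}{3}\sqrt{54^{2/3}-w^2}$ from \eqref{eq:z0} yields the claimed formula with remainder $O(T^{-1/2})$.

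Given the machinery already in place, the remaining work is essentially algebraic; the one point deserving care is the bookkeeping of the nested conjugations in \eqref{eq:G}, so that both the $T$-dependent phase carried by $\kappa(w)$ (from \eqref{eq:kappa-def}) and the constant phase $\mu(w)$ (from \eqref{eq:mu-def}) land correctly in the exponent and the off-diagonal coefficient carries the right sign. Finally, the uniformity of the $O(T^{-1/2})$ remainder on compact subintervals of $[0,54^{1/3})$ follows because the mismatch bounds \eqref{eq:T-PC-mismatch}--\eqref{eq:T-Airy-mismatch} and the operator norm of the relevant Cauchy projection are all uniform for $w$ bounded away from the critical value $54^{1/3}$.
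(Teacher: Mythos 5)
Your proposal is correct and follows essentially the same route as the paper: the small-norm error analysis reduces the recovery formula \eqref{eq:Psi-T} to the $z^{-1}$ coefficient of the $(1,2)$ entry of the outer parametrix modulo $O(T^{-1/2})$, and your explicit expansion of $\mathbf{H}$ via $\mathbf{U}\sigma_3\mathbf{U}^{-1}=\sigma_2$ correctly produces $-\tfrac{1}{2}\mathrm{Im}(z_0(w))\,\ee^{-\ii(T^{1/3}\kappa(w)+\mu(w))}$, which combined with $\mathrm{Im}(z_0(w))=\tfrac{1}{3}\sqrt{54^{2/3}-w^2}$ gives the stated formula. The only difference is that you carry out in detail the outer-parametrix expansion (including the observation that the diagonal factors $\ee^{-k\sigma_3}$ and $((z-a)/(z-b))^{\ii p\sigma_3}$ cannot contribute to the off-diagonal $z^{-1}$ coefficient), a computation the paper compresses into its final sentence.
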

Simplifying the formula in the special case of $w=0$ we obtain:
\begin{equation}
\kappa(0)=-108^{1/3}\quad \text{and}\quad \mu(0)=\frac{2}{\pi}\ln(2)\mathrm{Arcsinh}(2^{-1/2}).
\end{equation}
Thus, we have the following corollary.
\begin{corollary}
\begin{equation}
\Psi^+(0,T)=\left(\frac{2}{T}\right)^{1/3}\exp\left(\ii\left[(108T)^{1/3}-\frac{1}{2}\pi-\frac{2}{\pi}\ln(2)\mathrm{Arcsinh}(2^{-1/2})\right]\right) + O(T^{-1/2}),\quad T\to +\infty.
\label{eq:Psi-large-T-X0}
\end{equation}
\label{corollary:large-T-X0}
\end{corollary}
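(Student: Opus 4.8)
The plan is to derive Corollary~\ref{corollary:large-T-X0} simply by specializing Theorem~\ref{theorem:large-T} to $w=0$, since the condition $X=0$ is precisely $X=T^{2/3}w=0$, i.e.\ $w=0$. The amplitude prefactor then simplifies at once: writing $54=2\cdot 27$ gives $\tfrac{1}{3}\sqrt{54^{2/3}}=\tfrac{1}{3}\cdot 54^{1/3}=\tfrac{1}{3}\cdot 3\cdot 2^{1/3}=2^{1/3}$, so that $-\ii T^{-1/3}\cdot\tfrac{1}{3}\sqrt{54^{2/3}-0}=-\ii(2/T)^{1/3}$. It remains to evaluate the constants $\kappa(0)$ and $\mu(0)$ and to collect the phase factors.

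The constant $\mu(0)$ is elementary. At $w=0$ the spectral data collapse to $a(0)=-2^{-1/6}$, $b(0)=2^{-1/6}$, and (from \eqref{eq:coefficient-values}) $s_1(0)=0$, $s_0(0)=2^{2/3}$, so $R(z;0)=\sqrt{z^2+2^{2/3}}$. The defining integral \eqref{eq:mu-def} then becomes $\mu(0)=2p\int_{-2^{-1/6}}^{2^{-1/6}}(s^2+2^{2/3})^{-1/2}\,\dd s=2p\cdot 2\,\mathrm{Arcsinh}(2^{-1/6}/2^{1/3})=4p\,\mathrm{Arcsinh}(2^{-1/2})$, and with $p=\ln(2)/(2\pi)$ this is exactly $\mu(0)=\tfrac{2}{\pi}\ln(2)\mathrm{Arcsinh}(2^{-1/2})$.

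The constant $\kappa(0)$ is the genuine computation, and I expect it to be the main obstacle. Since $h':=g'+\theta'=2z^{-2}(z^2+d_1(w)z+d_0(w))R(z;w)$ is a single-valued rational multiple of $R(\cdot;w)$, it changes sign across $\Sigma$, whereas $h=g+\theta$ itself stays continuous at the branch endpoint $z_0$; hence \eqref{eq:kappa-def} gives $\kappa(w)=h_+(z_0)+h_-(z_0)=2h(z_0(w);w)$. At $w=0$ one has $z_0(0)=2^{1/3}\ii$ (from \eqref{eq:z0}) and $g'(z;0)=-\theta'(z;0)+2z^{-2}(z^2-2^{-1/3})\sqrt{z^2+2^{2/3}}$, so the task reduces to evaluating $h(z_0(0);0)=\theta(z_0(0);0)+\int_\infty^{z_0(0)}g'(\zeta;0)\,\dd\zeta$ along a path in $\mathbb{C}\setminus\Sigma$. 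Because the radicand is only quadratic, this quadrature is elementary (logarithmic and inverse-hyperbolic, with no elliptic contributions); the delicate points are fixing the branch of $R(\cdot;0)$ by $R(z;0)=z+O(1)$ at infinity, normalizing the constant of integration so that $g(z;0)\to 0$ at infinity, and verifying that the imaginary parts of $\theta(z_0(0);0)$ and of the integral cancel so that the result is real as required by \eqref{eq:kappa-def}. Carrying this out yields $\kappa(0)=-108^{1/3}$.

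Finally I would assemble the pieces. Writing $-\ii=\ee^{-\ii\pi/2}$ and using $\kappa(0)=-108^{1/3}$, so that $-T^{1/3}\kappa(0)=(108T)^{1/3}$, the prefactor and phase collapse as
\begin{equation}
-\ii(2/T)^{1/3}\ee^{-\ii(T^{1/3}\kappa(0)+\mu(0))}=\left(\frac{2}{T}\right)^{1/3}\exp\left(\ii\left[(108T)^{1/3}-\tfrac{1}{2}\pi-\mu(0)\right]\right).
\end{equation}
Inserting the value of $\mu(0)$ obtained above gives exactly \eqref{eq:Psi-large-T-X0}, with the error term $O(T^{-1/2})$ carried over unchanged from Theorem~\ref{theorem:large-T}.
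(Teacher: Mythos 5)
Your proposal is correct and follows essentially the same route as the paper: the paper likewise obtains the corollary by setting $w=0$ in Theorem~\ref{theorem:large-T}, recording the values $\kappa(0)=-108^{1/3}$ and $\mu(0)=\tfrac{2}{\pi}\ln(2)\mathrm{Arcsinh}(2^{-1/2})$, and collecting the phase via $-\ii=\ee^{-\ii\pi/2}$ so that $-T^{1/3}\kappa(0)=(108T)^{1/3}$. In fact you supply more detail than the paper does (the explicit $\mathrm{Arcsinh}$ quadrature for $\mu(0)$ from \eqref{eq:mu-def} with $a(0)=-2^{-1/6}$, $b(0)=2^{-1/6}$, $R(z;0)=\sqrt{z^2+2^{2/3}}$, and the reduction $\kappa(w)=2h(z_0(w);w)$ used to evaluate $\kappa(0)$), and all of your evaluations check out.
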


\subsection{Transitional asymptotic behavior}
\label{sec:Painleve}
The large-$X$ analysis of Section~\ref{sec:large-X} fails as $v\uparrow 54^{-1/2}$ while the large-$T$ analysis of Section~\ref{sec:large-T} fails as $w\uparrow 54^{1/3}$.  These two upper bounds actually correspond to the same curve in the $(X,T)$-plane, namely $T=\pm 54^{-1/2}|X|^{3/2}$.  In this section, we obtain transitional asymptotics of $\Psi^+(X,T)$ uniformly valid for $v=T|X|^{-3/2}$ in the neighborhood of the critical value $v=v_\mathrm{c}:=54^{-1/2}$ with either $X$ or $T$ taken to be large.  Since we are taking $v$ as the parameter, we return to the setting of Section~\ref{sec:large-X} and try to extend that approach to a neighborhood of the threshold $v=v_\mathrm{c}$.

\begin{remark}
The curves $T=\pm 54^{-1/2}|X|^{3/2}$ appear to also be relevant in the ``far-field'' asymptotic description of fundamental rogue waves or large order.  Indeed, given a value of $n$ (recall $k=2n$ or $k=2n-1$) these curves can be plotted in the $(x,t)$-plane via the substitutions $T=n^2t$ and $X=nx$; these can be seen as the red curves in Figure~\ref{fig:density-plots}.  The results in this paper do not justify any connection between these red curves and the behavior of $\psi_k(x,t)$ for $k$ large except in a neighborhood of the origin where $X$ and $T$ are bounded so that Theorem~\ref{theorem:main} applies.  The asymptotic analysis of fundamental rogue waves outside of this small neighborhood is the subject of ongoing work \cite{BilmanLMT18} that we hope to be able to report on soon.
\end{remark}

When $v\approx v_\mathrm{c}$, there is one real critical point $b(v)$ of $\vartheta(z;v)$ near $b_\mathrm{c}:=b(v_\mathrm{c})=\sqrt{\tfrac{3}{2}}$
and a pair of critical points (real for $v<v_\mathrm{c}$ and complex-conjugate for $v>v_\mathrm{c}$) near the double critical point $a_\mathrm{c}:=a(v_\mathrm{c})=-\sqrt{6}$.  Note that $\vartheta(a_\mathrm{c},v_\mathrm{c})=-\sqrt{6}$.  The Taylor expansion of $\vartheta(z;v)$ about $z=a_\mathrm{c}$ reads
\begin{multline}
\vartheta(z;v)=-\sqrt{6}+6(v-v_\mathrm{c})-2\sqrt{6}(v-v_\mathrm{c})(z-a_\mathrm{c}) + (v-v_\mathrm{c})(z-a_\mathrm{c})^2 \\{}-\frac{1}{18}(z-a_\mathrm{c})^3 -\frac{1}{18\sqrt{6}}(z-a_\mathrm{c})^4+O((z-a_\mathrm{c})^5),\quad z\to a_\mathrm{c},
\label{eq:Taylor}
\end{multline}
and therefore at the critical value of $v=v_\mathrm{c}$ one has
\begin{equation}
\vartheta(z;v_\mathrm{c})=-\sqrt{6}-\frac{1}{18}(z-a_\mathrm{c})^3-\frac{1}{18\sqrt{6}}(z-a_\mathrm{c})^4+O((z-a_\mathrm{c})^5),\quad z\to a_\mathrm{c}.
\end{equation}
Following \cite{ChesterFU57}, we may define a Schwarz-symmetric conformal mapping $z\mapsto W$ in the neighborhood of $z=a_\mathrm{c}$ and $v=v_\mathrm{c}$ by the equation
\begin{equation}
2\vartheta(z;v)=W^3+rW-s,\quad W=W(z;v),\quad r=r(v),\quad s=s(v)
\label{eq:CFU}
\end{equation}
where $r$ and $s$ are real analytic functions of $v$ near $v_\mathrm{c}$ determined so that the two critical points of the left-hand side near $z=a_\mathrm{c}$ are mapped onto the two critical points of the cubic on the right-hand side, and where $r_\mathrm{c}:=r(v_\mathrm{c})=0$, $s_\mathrm{c}:=s(v_\mathrm{c})=2\sqrt{6}$, and $W'_\mathrm{c}:=W'(a_\mathrm{c};v_\mathrm{c})=-9^{-1/3}<0$.  We denote by $z_*(v)$ the pre-image of $W=0$.  It is an analytic function of $v$ that satisfies $z_*(v_\mathrm{c})=a_\mathrm{c}$.  Taking the derivative of \eqref{eq:CFU} with respect to $v$ and evaluating at $z=a_\mathrm{c}$ and $v=v_\mathrm{c}$ gives $s'_\mathrm{c}:=s'(v_\mathrm{c})=-12$.  Similarly, comparing the mixed second derivative of \eqref{eq:CFU} with respect to $z$ and $v$ with the third derivative of the same with respect to $z$ at $z=a_\mathrm{c}$ and $v=v_\mathrm{c}$ with the help of the Taylor expansion \eqref{eq:Taylor} one finds easily that $r'_\mathrm{c}:=r'(v_\mathrm{c})=4\cdot 6^{1/2}9^{1/3}>0$.

\subsubsection{Parametrix modification}
To extend the analysis from Section~\ref{sec:large-X} to this situation, we need only replace the outer parametrix formerly defined by \eqref{eq:T-out} by the slightly-modified definition
\begin{equation}
\dot{\mathbf{T}}^\mathrm{out}(z;v):=\left(\frac{z-z_*(v)}{z-b(v)}\right)^{\ii p\sigma_3},\quad p:=\frac{\ln(2)}{2\pi}>0,\quad z\in\mathbb{C}\setminus [z_*(v),b(v)],
\label{eq:T-out-PII}
\end{equation}
and then it is necessary to replace the inner parametrix formerly defined near $z=a$ in terms of parabolic cylinder functions with another one that takes into account the collision of critical points.  Let $\zeta=\zeta_a:=X^{1/6}W$ and $y:=X^{1/3}r$.  The jump conditions satisfied by $\mathbf{U}^a:=\mathbf{T}\ee^{\ii X^{1/2}s(v)\sigma_3/2}(\ii\sigma_2)$ near $z=a_\mathrm{c}$ can then be written in the form indicated in Figure~\ref{fig:PII} when the jump contours are locally taken to coincide with the five rays $\arg(\zeta)=\pm\pi/2$, $\arg(\zeta)=\pm 5\pi/6$, and $\arg(-\zeta)=0$.
\begin{figure}[h]
\begin{center}
\includegraphics{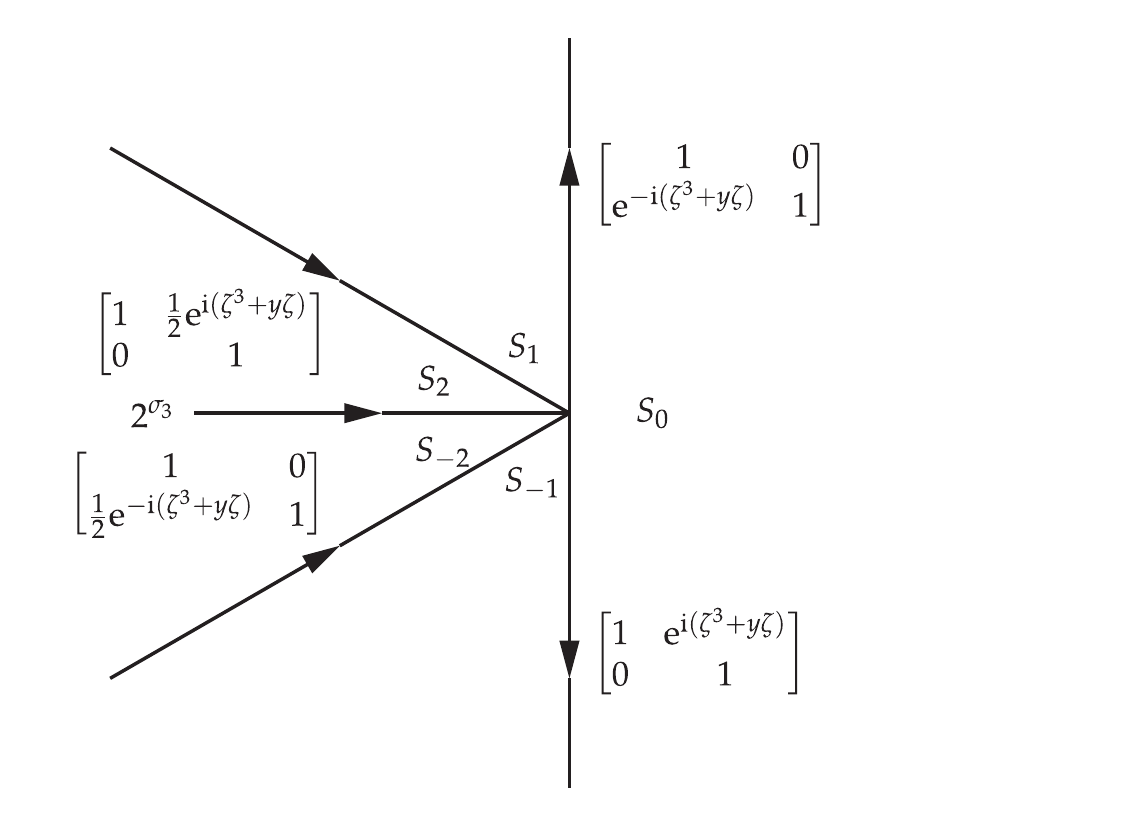}
\end{center}
\caption{The jump conditions satisfied by $\mathbf{U}^a$ take the form $\mathbf{U}^a_+=\mathbf{U}^a_-\mathbf{V}^\mathrm{PII}$ where the jump matrix $\mathbf{V}^\mathrm{PII}$ is defined on five rays in the $\zeta$-plane as shown.}
\label{fig:PII}
\end{figure}
As usual, we write the outer parametrix $\dot{\mathbf{T}}^\mathrm{out}(z;v)$ from Section~\ref{sec:large-X} locally near $z=a_\mathrm{c}$ in terms of the conformal coordinate $\zeta_a$:
\begin{multline}
\dot{\mathbf{T}}^\mathrm{out}(z;v)\ee^{\ii X^{1/2}s(v)\sigma_3/2}(\ii\sigma_2)=X^{-\ii p\sigma_3/6}\ee^{\ii X^{1/2}s(v)\sigma_3/2}
\mathbf{H}^a(z;v)\zeta_a^{-\ii p\sigma_3},\\
\mathbf{H}^a(z;v):=(b(v)-z)^{-\ii p\sigma_3}\left(\frac{z_*(v)-z}{W(z;v)}\right)^{\ii p\sigma_3}(\ii\sigma_2).
\end{multline}
As before, $\mathbf{H}^a(z;v)$ with the above modified definition is an analytic function near $z=a_\mathrm{c}$ and $v=v_\mathrm{c}$ and it is independent of $X$.  Taking into account the final factor on the right-hand side of this expression for $\dot{\mathbf{T}}^\mathrm{out}(z;v)$, we properly formulate a Riemann-Hilbert problem that is the analogue in the present setting of Riemann-Hilbert Problem~\ref{rhp:PC}.
\begin{rhp}[Painlev\'e-II parametrix]
Given $y\in\mathbb{R}$, seek a $2\times 2$ matrix-valued function $\mathbf{W}(\zeta;y)$ with the following properties.
\begin{itemize}
\item[]\textbf{Analyticity:}  $\mathbf{W}(\zeta;y)$ is analytic for $\zeta$ in the five sectors shown in Figure~\ref{fig:PII}, namely $S_0$:  $|\arg(\zeta)|<\tfrac{1}{2}\pi$, $S_1$:  $\tfrac{1}{2}\pi<\arg(\zeta)<\frac{5}{6}\pi$, $S_{-1}$:  $-\frac{5}{6}\pi<\arg(\zeta)<-\tfrac{1}{2}\pi$, $S_2$:  $\frac{5}{6}\pi<\arg(\zeta)<\pi$, and $S_{-2}$:  $-\pi<\arg(\zeta)<-\frac{5}{6}\pi$.  It takes continuous boundary values on the excluded rays and at the origin from each sector.
\item[]\textbf{Jump conditions:}  $\mathbf{W}_+(\zeta;y)=\mathbf{W}_-(\zeta;y)\mathbf{V}^{\mathrm{PII}}(\zeta;y)$, where $\mathbf{V}^\mathrm{PII}(\zeta;y)$ is the matrix defined on the jump contour shown in Figure~\ref{fig:PII}.  
\item[]\textbf{Normalization:}  $\mathbf{W}(\zeta;y)\zeta^{\ii p\sigma_3}\to\mathbb{I}$ as $\zeta\to\infty$ uniformly in all directions, where $p=\ln(2)/(2\pi)$.
\end{itemize}
\label{rhp:PII}
\end{rhp}
In \cite{Miller18} it is shown that 
this problem has a unique solution for all $y$ real.  The product $\mathbf{W}(\zeta;y)\zeta^{\ii p\sigma_3}$ admits a complete asymptotic expansion of the form
\begin{equation}
\mathbf{W}(\zeta;y)\zeta^{\ii p\sigma_3}\sim\mathbb{I}+\sum_{j=1}^\infty \mathbf{W}^j(y)\zeta^{-j},\quad \zeta\to\infty
\label{eq:W-expansion}
\end{equation}
uniformly in all directions of the complex $\zeta$-plane.  Furthermore (see \cite[Corollary 1]{Miller18}), the function $\mathcal{V}(y)$ defined by the formula
\begin{equation}
\mathcal{V}(y):=\lim_{\zeta\to\infty}\zeta W_{21}(\zeta;y)\zeta^{\ii p} = W^1_{21}(y)
\label{eq:cal-V-def}
\end{equation}
can be equivalently represented as follows.  There exists a unique \emph{tritronqu\'ee} solution $\mathcal{Q}(y)$ of the Painlev\'e-II differential equation 
\begin{equation}
\frac{\dd^2\mathcal{Q}}{\dd y^2}+\frac{2}{3}y\mathcal{Q}-2\mathcal{Q}^3-\frac{2}{3}\ii p-\frac{1}{3}=0
\label{eq:Q-PII}
\end{equation}
determined by the asymptotic behavior
\begin{equation}
\mathcal{Q}(y)=\ii\left(-\frac{y}{3}\right)^{\tfrac{1}{2}} -\left(\frac{1}{4}+\ii \frac{p}{2}\right)\frac{1}{y}+O(|y|^{-5/2}),\quad y\to\infty,\quad
|\arg(-y)|<\frac{2}{3}\pi.
\label{eq:Q-asymp-sector}
\end{equation}
This solution is asymptotically pole-free in maximally-wide sector of opening angle $4\pi/3$ of the complex $y$-plane, and it is also analytic for all $y\in\mathbb{R}$ and has trigonometric/algebraic asymptotic behavior as $y\to+\infty$, whereas if $y\to\infty$ in any other direction of the complementary sector $|\arg(y)|<\pi/3$, $\mathcal{Q}(y)$ behaves like an elliptic function.  The alternate formula for $\mathcal{V}(y)$ is then
\begin{equation}
\mathcal{V}(y)=\begin{cases}\displaystyle\frac{\ii\beta}{2}
\ee^{-\tfrac{2}{9}\sqrt{3}\ii(-y)^{\tfrac{3}{2}}}(-3y)^{-(\tfrac{1}{4}+\ii \tfrac{p}{2})}\exp\left(\int_{-\infty}^y\left[\mathcal{Q}(\eta)-\ii\left(-\frac{\eta}{3}\right)^{\tfrac{1}{2}}+\left(\frac{1}{4}+\ii \frac{p}{2}\right)\frac{1}{\eta}\right]\,\dd\eta\right),& y<0\\
\displaystyle
\mathcal{V}(-1)\exp\left(\int_{-1}^y\mathcal{Q}(\eta)\,\dd\eta\right),& y\ge 0.
\end{cases}
\label{eq:cal-V-alternate}
\end{equation}
Finally, $\mathcal{V}(y)$ has the asymptotic behavior
\begin{equation}
\mathcal{V}(y)=-\sqrt{\frac{y}{6}}\left(\frac{y}{6}\right)^{\ii p} + O(y^{-1/4}),\quad y\to +\infty.
\end{equation}

From the solution of Riemann-Hilbert Problem~\ref{rhp:PII} we define the inner parametrix near $z=a_\mathrm{c}$ as follows:
\begin{equation}
\dot{\mathbf{T}}^a(z;X,v):=X^{-\ii p\sigma_3/6}\ee^{\ii X^{1/2}s(v)\sigma_3/2}\mathbf{H}^a(z;v)\mathbf{W}(X^{1/6}W(z;v);X^{1/3}r(v))(-\ii\sigma_2)\ee^{-\ii X^{1/2}s(v)\sigma_3/2},\quad z\in D_{a_\mathrm{c}}(\delta).
\label{eq:T-a-PII}
\end{equation}
The analogue of \eqref{eq:a-mismatch} is then
\begin{multline}
\dot{\mathbf{T}}^a(z;X,v)\dot{\mathbf{T}}^\mathrm{out}(z;v)^{-1}=X^{-\ii p\sigma_3/6}\ee^{\ii X^{1/2}s(v)\sigma_3/2}\mathbf{H}^a(z;v)\mathbf{W}(\zeta_a;y)\zeta_a^{\ii p\sigma_3}\mathbf{H}^a(z;v)^{-1}\ee^{-\ii X^{1/2}s(v)\sigma_3/2}X^{\ii p\sigma_3/6},\\
\zeta_a=X^{1/6}W(z;v),\quad y=X^{1/3}r(v),\quad z\in \partial D_{a_\mathrm{c}}(\delta).
\label{eq:a-mismatch-PII}
\end{multline}
The inner parametrix near $z=b(v)$ is constructed from parabolic cylinder functions and installed in the disk $D_b(\delta)$ exactly as in Section~\ref{sec:large-X}.  The global parametrix is again given by the piecewise definition \eqref{eq:global-parametrix-large-X} with the understanding that $\dot{\mathbf{T}}^\mathrm{out}(z;v)$ has a slightly different definition (cf., \eqref{eq:T-out-PII}) and that $\dot{\mathbf{T}}^a(z;X,v)$ is built from Riemann-Hilbert Problem~\ref{rhp:PII} via \eqref{eq:T-a-PII} in the present situation.

\subsubsection{Error analysis}
The error $\mathbf{F}(z;X,v)$ is defined in terms of the relevant global parametrix exactly as in \eqref{eq:F-def}.  The analysis of the corresponding jump matrix $\mathbf{V}^\mathbf{F}(z;X,v)$ is exactly as in Section~\ref{sec:large-X} except that the dominant contribution to $\mathbf{V}^\mathbf{F}(z;X,v)-\mathbb{I}$ now arises only from the boundary of the disk $D_{a_\mathrm{c}}(\delta)$ and it is large compared to $X^{-1/4}$, proportional to $X^{-1/6}$.  Indeed, since $\zeta_a$ is proportional to $X^{1/6}$ when $z\in\partial D_{a_\mathrm{c}}(\delta)$ while $\mathbf{V}^\mathbf{F}(z;X,v)=\dot{\mathbf{T}}^a(z;X,v)\dot{\mathbf{T}}^\mathrm{out}(z;v)^{-1}$ and the conjugating factors in \eqref{eq:a-mismatch-PII} are bounded, this is a consequence of the expansion \eqref{eq:W-expansion}.  Therefore once again we have a small-norm Riemann-Hilbert problem for $\mathbf{F}(z;X,v)$ solvable by Neumann series applied to a corresponding singular integral equation for $\mathbf{F}_-(z;X,v)$ (cf., \eqref{eq:F-minus-integral-equation}).  The estimate $\mathbf{F}_-(\cdot;X,v)-\mathbb{I}=O(X^{-1/6})$ therefore holds in the $L^2(\Sigma_\mathbf{F})$ sense, and it follows that \eqref{eq:Psi-Plus-Integral} holds in which the error term is $O(X^{-5/6})$ instead of $O(X^{-1})$.  Without changing the order of the error we may then take the integration to be over the clockwise-oriented circle $\partial D_{a_\mathrm{c}}(\delta)$ instead of all of $\Sigma_\mathbf{F}$.  Using the first three terms in the expansion \eqref{eq:W-expansion} in \eqref{eq:a-mismatch-PII} shows that
\begin{equation}
V_{12}^\mathbf{F}(z;X,v)=X^{-1/6}X^{-\ii p/3}\ee^{\ii X^{1/2}s(v)}(\mathbf{H}^a(z;v)\mathbf{W}^1(X^{1/3}r(v))\mathbf{H}^a(z;v)^{-1})_{12}W(z;v)^{-1}+O(X^{-1/3})
\end{equation}
holds uniformly for $z\in\partial D_{a_\mathrm{c}}(\delta)$ as $X\to+\infty$, assuming that $y$ is bounded.  Therefore, evaluating an integral by residues at $z=z_*(v)$ where $W(z;v)$ has its only (simple) zero within $D_{a_\mathrm{c}}(\delta)$, 
\begin{equation}
\begin{split}
\Psi^+(X,X^{3/2}v)&=-\frac{1}{\pi X^{1/2}}\int_{\partial D_{a_\mathrm{c}}(\delta)}V^\mathbf{F}(w;X,v)\,\dd w + O(X^{-5/6})\\
&=\frac{2\ii X^{-\ii p/3}\ee^{\ii X^{1/2}s(v)}}{X^{2/3}W'(z_*(v);v)}(\mathbf{H}^a(z_*(v);v)\mathbf{W}^1(X^{1/3}r(v))\mathbf{H}^a(z_*(v);v)^{-1})_{12}+O(X^{-5/6}).
\end{split}
\end{equation}
By l'H\^opital's rule,
\begin{equation}
\mathbf{H}^a(z_*(v);v) = \left(-W'(z_*(v);v)(b(v)-z_*(v))\right)^{-\ii p\sigma_3}(\ii\sigma_2),
\end{equation}
and therefore
\begin{equation}
\Psi^+(X,X^{3/2}v)=-\frac{2\ii X^{-\ii p/3}\ee^{\ii X^{1/2}s(v)}}{X^{2/3}W'(z_*(v);v)}(-W'(z_*(v);v)(b(v)-z_*(v)))^{-2\ii p}\mathcal{V}(X^{1/3}r(v)) + O(X^{-5/6})
\end{equation}
where we recall the notation that $\mathcal{V}(y):=W^1_{21}(y)$. This holds uniformly for $v$ sufficiently close to $v_\mathrm{c}$ if also $\mathcal{V}(X^{1/3}r(v))$ remains bounded as $X\to +\infty$.  If we assume that $v-v_c=O(X^{-1/3})$, then the formula simplifies to
\begin{equation}
\Psi^+(X,X^{3/2}v)=-\frac{2\ii X^{-\ii p/3}\ee^{\ii X^{1/2}s_\mathrm{c}}\ee^{\ii X^{1/2}s'_\mathrm{c}\cdot(v-v_\mathrm{c})}}{X^{2/3}W'_\mathrm{c}}(-W'_\mathrm{c}\cdot(b_\mathrm{c}-a_\mathrm{c}))^{-2\ii p}\mathcal{V}(X^{1/3}r'_\mathrm{c}\cdot(v-v_\mathrm{c})) + O(X^{-5/6}).
\end{equation}
Using $p=\ln(2)/(2\pi)$ along with
\begin{equation}
s_\mathrm{c}=2^{3/2}3^{1/2},\quad s'_\mathrm{c}=-2^23,\quad a_\mathrm{c}=-2^{1/2}3^{1/2},\quad b_\mathrm{c}=2^{-1/2}3^{1/2},\quad r'_\mathrm{c}=2^{5/2}3^{7/6},\quad W'_\mathrm{c}=-3^{-2/3},
\end{equation}
we obtain the following result.
\begin{theorem}[Transitional asymptotics of rogue waves of infinite order]
Let $\mathcal{V}(y)$ be defined from Riemann-Hilbert Problem~\ref{rhp:PII} by \eqref{eq:cal-V-def} or
equivalently in terms of the tritronqu\'ee solution $\mathcal{Q}(y)$ of the Painlev\'e-II equation \eqref{eq:Q-PII} by \eqref{eq:cal-V-alternate}.  Then
\begin{equation}
\Psi^+(X,X^{3/2}v)=\frac{2\cdot 3^{2/3}}{X^{2/3}}\ee^{\ii\phi(X;v)}\mathcal{V}(X^{1/3}2^{5/2}3^{7/6}(v-v_c))+O(X^{-5/6}),\quad X\to\infty,\quad v-v_c=O(X^{-1/3}),
\end{equation}
where $v_\mathrm{c}:=54^{-1/2}$ and where the phase is
\begin{equation}
\phi(X;v):=2^{3/2}3^{1/2}X^{1/2}-2^23X^{1/2}(v-v_c)-\frac{\ln(2)}{6\pi}\ln(X) +\frac{\pi}{2}
-\frac{5\ln(2)\ln(3)}{6\pi}+\frac{(\ln(2))^2}{2\pi}.
\end{equation}
\label{theorem:PII}
\end{theorem}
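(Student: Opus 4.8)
The plan is to adapt the Deift--Zhou steepest-descent analysis of Section~\ref{sec:large-X} to the regime where the parameter $v=T|X|^{-3/2}$ approaches the critical value $v_\mathrm{c}=54^{-1/2}$, at which two of the three real critical points of $\vartheta(\cdot;v)$ coalesce at $z=a_\mathrm{c}=-\sqrt{6}$ while the third remains simple near $b_\mathrm{c}=\sqrt{3/2}$. The starting point is the exact reduction $\mathbf{S}(z;X,v):=\mathbf{R}^+(X^{-1/2}z;X,X^{3/2}v)$ together with the recovery formula \eqref{eq:PsiPlus-S}, so that everything reduces to controlling $\lim_{z\to\infty}zF_{12}$ for an error matrix $\mathbf{F}$. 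Because the parabolic-cylinder inner parametrix of Section~\ref{sec:large-X} degenerates as the two critical points merge, the first step is to replace it near $z=a_\mathrm{c}$ by one that correctly captures the merging; following Chester--Friedman--Ursell I would introduce the Schwarz-symmetric cubic conformal map $W=W(z;v)$ defined by \eqref{eq:CFU}, which sends the colliding critical points to those of $W^3+rW$ and records the collision through the analytic functions $r(v)$ and $s(v)$ with $r(v_\mathrm{c})=0$.

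With this map in hand, the second step is to build the modified global parametrix: the outer parametrix is adjusted to \eqref{eq:T-out-PII} so that its branch points track $z_*(v)$ and $b(v)$; the inner parametrix at the simple critical point $z=b(v)$ remains of parabolic-cylinder type exactly as before; and the inner parametrix at $z=a_\mathrm{c}$ is assembled from the solution $\mathbf{W}(\zeta;y)$ of the Painlev\'e-II Riemann-Hilbert Problem~\ref{rhp:PII} via \eqref{eq:T-a-PII}, with $\zeta_a=X^{1/6}W$ and $y=X^{1/3}r$. Here I would invoke the solvability of Riemann-Hilbert Problem~\ref{rhp:PII}, the full uniform expansion \eqref{eq:W-expansion}, and the identification \eqref{eq:cal-V-def} of $W^1_{21}(y)$ with the tritronqu\'ee-encoded function $\mathcal{V}(y)$, all established in \cite{Miller18}.

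The third step is the error analysis. The matching \eqref{eq:a-mismatch-PII} shows that the mismatch $\mathbf{V}^\mathbf{F}-\mathbb{I}$ on $\partial D_{a_\mathrm{c}}(\delta)$ is now of order $X^{-1/6}$ rather than $X^{-1/4}$ (because $\zeta_a\sim X^{1/6}$), dominating both the exponentially small contributions on the lens boundaries and the $O(X^{-1/4})$ contribution from the disk about $b(v)$; the $L^2(\Sigma_\mathbf{F})$ small-norm theory then yields $\mathbf{F}_-(\cdot;X,v)-\mathbb{I}=O(X^{-1/6})$ uniformly for $v$ near $v_\mathrm{c}$. Substituting the leading term of \eqref{eq:a-mismatch-PII} into the integral representation analogous to \eqref{eq:Psi-Plus-Integral} and evaluating by residues at the unique simple zero $z=z_*(v)$ of $W(\cdot;v)$ inside the disk produces an explicit leading term proportional to $X^{-2/3}\mathcal{V}(X^{1/3}r(v))$.

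The final step is specialization and bookkeeping. Restricting to $v-v_\mathrm{c}=O(X^{-1/3})$ keeps $y=X^{1/3}r(v)$ bounded (so $\mathcal{V}$ stays bounded and the reduction of the integral to the single circle is legitimate) and lets me linearize $r(v)\approx r'_\mathrm{c}(v-v_\mathrm{c})$ and $s(v)\approx s_\mathrm{c}+s'_\mathrm{c}(v-v_\mathrm{c})$; inserting the computed constants $s_\mathrm{c},s'_\mathrm{c},a_\mathrm{c},b_\mathrm{c},r'_\mathrm{c},W'_\mathrm{c}$ together with $p=\ln(2)/(2\pi)$ collapses the prefactor and phase into the stated form. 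I expect the genuinely hard part to lie not in this assembly but in the input from \cite{Miller18}: constructing the Painlev\'e-II parametrix, proving its uniform asymptotics \eqref{eq:W-expansion}, and tying $\mathcal{V}(y)$ to the specific globally pole-free tritronqu\'ee solution \eqref{eq:Q-PII}--\eqref{eq:cal-V-alternate}. Securing the error estimates \emph{uniformly} as $v\uparrow v_\mathrm{c}$ (equivalently as $y$ ranges over a fixed bounded set) is the other delicate point, since the conformal map and the radius $\delta$ must be chosen independently of $X$.
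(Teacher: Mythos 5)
Your proposal is correct and follows essentially the same route as the paper's own proof: the Chester--Friedman--Ursell cubic conformal map \eqref{eq:CFU}, the modified outer parametrix \eqref{eq:T-out-PII}, the Painlev\'e-II inner parametrix \eqref{eq:T-a-PII} built from Riemann-Hilbert Problem~\ref{rhp:PII} with the inputs from \cite{Miller18}, the $O(X^{-1/6})$ small-norm estimate dominated by $\partial D_{a_\mathrm{c}}(\delta)$, the residue evaluation at $z_*(v)$, and the final linearization of $r(v)$ and $s(v)$ under $v-v_\mathrm{c}=O(X^{-1/3})$. No gaps beyond the points you yourself flag as delegated to \cite{Miller18}.
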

\section{Numerical Computation of Rogue Waves of Infinite Order}
\label{sec:numerics}
\subsection{Numerical methods for Riemann-Hilbert problems}
In order to compute $\Psi^{+}(X,T)$ numerically, we make use of three Riemann-Hilbert problems that are considered in Section~\ref{sec:near-field} and Section~\ref{sec:Psi-asymptotic}, namely
\begin{itemize}
\item Riemann-Hilbert Problem~\ref{rhp:limit-simpler},
\item the large-$X$ deformation of Riemann-Hilbert Problem~\ref{rhp:limit-simpler}, satisfied by $\mathbf{T}(z;X,v)$ with the jump conditions given in \eqref{eq:Tjump-1} through \eqref{eq:Tjump-4} (see Figure~\ref{fig:regions-Tjump}) and the normalization $\mathbf{T}(z;X,v)\to\mathbb{I}$ as $z\to\infty$,
\item the large-$T$ deformation of Riemann-Hilbert Problem~\ref{rhp:limit-simpler}, satisfied by $\mathbf{T}(z;T,w)$ with the jump conditions given in \eqref{eq:Tjump-1-Tlim} through \eqref{eq:twist-jump} (see Figure~\ref{fig:regions-Tjump-Tlim}) and the normalization $\mathbf{T}(z;T,w)\to\mathbb{I}$ as $z\to\infty$.
\end{itemize}
These Riemann-Hilbert problems can be treated numerically with the aid of \texttt{RHPackage} \cite{RHPackage} in context of the numerical methodology developed in \cite{TrogdonO16-book} (see also \cite{Olver12} and \cite{TrogdonO13}).  The basic idea is to discretize the underlying singular integral equation associated with the given Riemann-Hilbert problem; an in-depth description and analysis of the accuracy of the numerical method employed can be found in \cite{TrogdonO13} and \cite[Chapter 2 and Chapter 7]{TrogdonO16-book}. 

Note that for a given $T>0$, the large-$X$ deformation algebraically makes sense only when $X>X_*(T)$, where $X_*(T)={54}^{1/3}T^{2/3}$, and the large-$T$ deformation algebraically makes sense only when $X<X_*(T)$. For $X>X_*(T)$ and large, we numerically encode the jump conditions associated with the deformed jump contour illustrated in Figure~\ref{fig:regions-Tjump} and compute the solution of the resulting Riemann-Hilbert problem (satisfied by $\mathbf{T}(z;X,v)$) to compute $\Psi^{+}(X,T)$. In practice, the jump contours are truncated if the jump matrix supported on these contours differs from the identity matrix by at most \texttt{machine epsilon}. See Figure~\ref{fig:large-X-contours} for these numerical contours. 
\begin{figure}[ht]
\begin{center}
\includegraphics{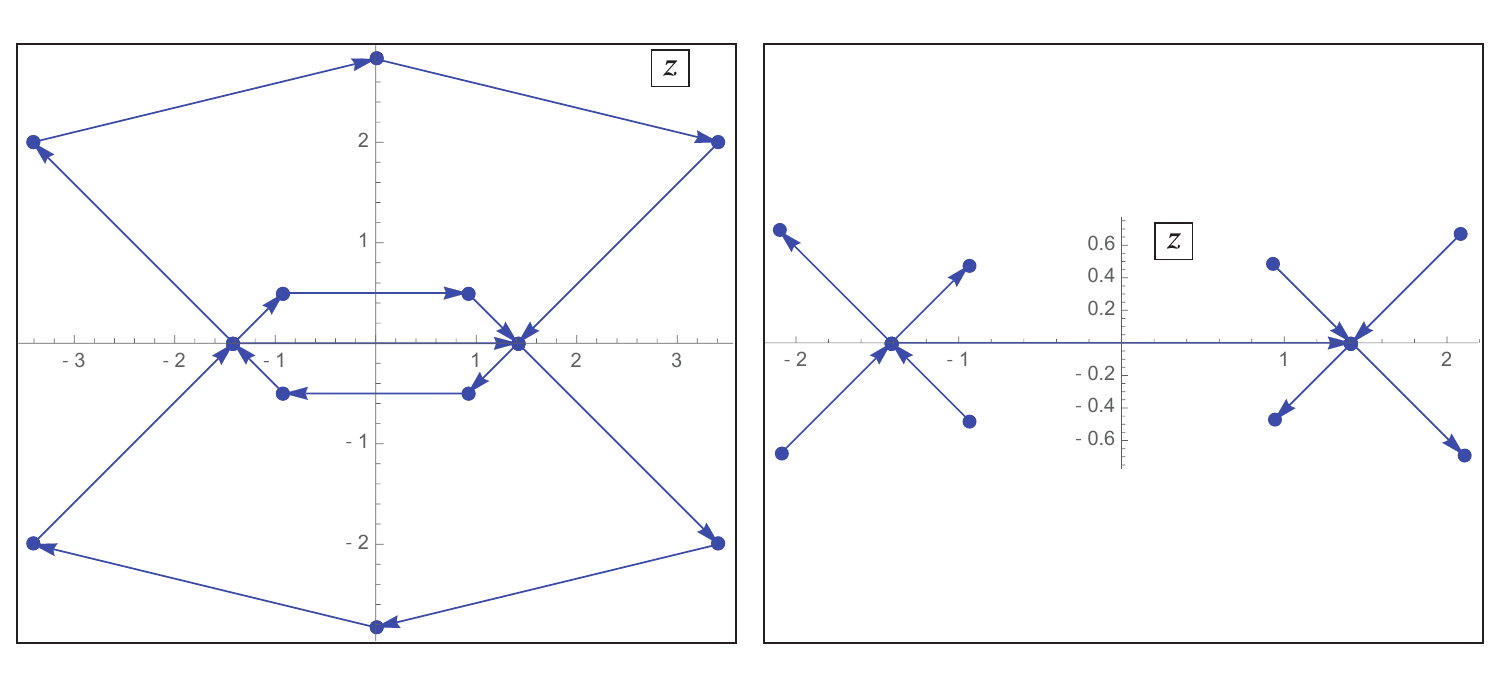}
\end{center}
\caption{Left: numerical parametrizations of the jump contours for the large-$X$ deformation. Right: truncated jump contours are used in practice if $X$ is large. For both plots, $X=2000$ and $v=0$.  Note the difference in scale of the two plots.}
\label{fig:large-X-contours}
\end{figure}

As $X$ becomes large, although the jump matrices tend to the identity matrix rapidly away from the stationary phase points $a=a(v)$ and $b=b(v)$ of the exponent $\vartheta(z;v)$, their Sobolev norms (derivatives with respect to $z$) grow and this presents a numerical challenge, which is overcome by a rescaling algorithm in the \texttt{RHPackage} (see \cite[Algorithm 7.1]{TrogdonO16-book}. Thus, in order to compute $\Psi^{+}(X,T)$ for large values of $X$ in the region $X>X_{*}(T)$ in a way that is \emph{asymptotically robust}, one needs to remove the \emph{connecting} jump condition \eqref{eq:T-diagonal} on the contour $I$ although the jump matrix is bounded there (in fact, a constant diagonal matrix). A detailed discussion on this issue and the method can be found in \cite[Chapter 7]{TrogdonO16-book}. As pointed out in Section~\ref{sec:Psi-asymptotic}, the outer parametrix given in \eqref{eq:T-out} by
\begin{equation}
\dot{\mathbf{T}}^\mathrm{out}(z,v)=\left(\frac{z-a(v)}{z-b(v)}\right)^{\ii p\sigma_3},\quad p:=\frac{\ln(2)}{2\pi}>0,\quad z\in\mathbb{C}\setminus I.
\end{equation}
exactly satisfies the jump condition
\begin{equation}
\dot{\mathbf{T}}^\mathrm{out}(z;v) =\dot{\mathbf{T}}^\mathrm{out}(z;v)2^{\sigma_3},\quad z\in I,
\end{equation}
and it is normalized as $\dot{\mathbf{T}}^\mathrm{out}(z;v)\to \mathbb{I}$ as $z\to\infty$. Thus, setting $\hat{\mathbf{T}}(z;X,v):= \mathbf{T}(z;X,v)\dot{\mathbf{T}}^\mathrm{out}(z;v)^{-1}$ for $z\in \mathbb{C}\setminus I$ removes the jump condition across $I$ while conjugating the existing other jump matrices given in \eqref{eq:Tjump-1} through \eqref{eq:Tjump-4} by $\dot{\mathbf{T}}^\mathrm{out}(z;v)$. However, $\dot{\mathbf{T}}^\mathrm{out}(z;v)$ has bounded singularities at $z=a$ and $z=b$.
As the remaining jump contours also pass from $a$ and $b$, this transformation introduces bounded singularities in the jump matrices at $z=a$ and $z=b$. To remedy this, we center small circles at the points $z=a$ and $z=b$ with counter-clockwise orientation and remove the jump matrices on the line segments inside these circles at the cost of having jump conditions on arcs of these circles connecting the endpoints of these line segments. While doing this removes the singular jump conditions, some components of the new jump matrices supported on the little circles centered at $z=a(v)$ and $z=b(v)$ now grow exponentially as $X\to +\infty$. Noting that for $\xi=a,b$
\begin{equation}
\vartheta(z;v) - \vartheta(\xi;v) = \frac{\vartheta''(\xi;v)}{2}(z-\xi)^2 + O((z-\xi)^3), \quad z\to \xi,
\end{equation}
we have
\begin{equation}
\ee^{\pm \ii |X|^{1/2}\vartheta(z;v)}=O(1),\quad X\to+\infty
\end{equation}
if $|z-\xi|=O(|X|^{-1/4})$ as $X\to+\infty$ for both $\xi =a(v)$ and $\xi=b(v)$. Therefore, we scale the common radius of these circles by $|X|^{-1/4}$ as $X$ becomes large. While shrinking the circles at a faster rate ensures boundedness of the exponentials supported on them, it also moves the support of the jump matrices closer to singularities at a faster rate and hence should be avoided. The jump contours of the Riemann-Hilbert problem used to compute $\Psi^{+}(X,T)$ numerically for large values of $X>X_{*}(T)$ is given in Figure~\ref{fig:large-X-truncated-contours}.

\begin{figure}[ht]
\begin{center}
\includegraphics{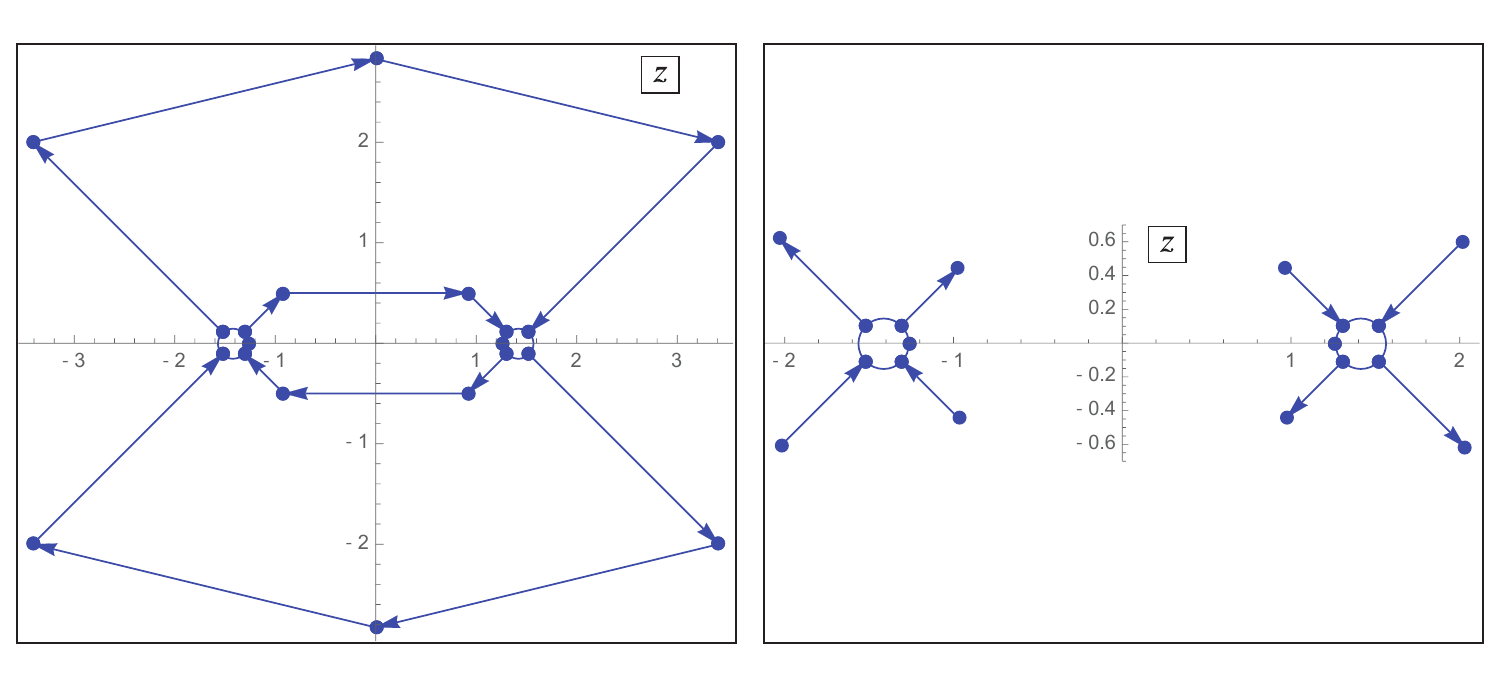}
\end{center}
\caption{Left: jump contours used in numerical solution of the Riemann-Hilbert problem satisfied by $\mathbf{T}(z;X,v)$, which is asymptotically and numerically well-adapted for large $X$. Right: truncated jump contours that are used in practice if $X$ is large. For both plots, $X=2000$ and $v=0$.  All circles are taken to have counterclockwise orientation.}
\label{fig:large-X-truncated-contours}
\end{figure}

Computing $\Psi^{+}(X,T)$ for $X<X_*(T)$ by solving the Riemann-Hilbert problem resulting from the large-$T$ contour deformation employed in Section~\ref{sec:large-T} (illustrated in Figure~\ref{fig:regions-Tjump-Tlim}) requires more machinery. The fundamental difference from the large-$X$ problem is the use of the function $g(z;w)$ and hence the appearance of the exponent function $h(z;w)=g(z;w)+ \theta(z;w)$ in the jump conditions \eqref{eq:Tjump-1-Tlim} through \eqref{eq:twist-jump}. Recall that $h(z;w)$ has a branch cut across the contour $\Sigma$, with end points $z_0(w)$ and its complex conjugate $z_0(w)^*$, along which $\mathrm{Im}(h(z;w))$ vanishes but does not change sign as $z$ crosses from left to right of $\Sigma$. Since
\begin{equation}
h(z;w)-h(\zeta;w)=O\Big(\big(z-\zeta\big)^{3/2}\Big),\quad z\to \zeta,~\zeta=z_0(w),\,z_0(w)^*,
\label{eq:h-rate}
\end{equation}
all of the jump matrices \eqref{eq:Tjump-1-Tlim} through \eqref{eq:twist-jump} involving the function $h(z;w)$ exhibit half-integer power type singularities at the points $z=z_0(w)$ and $z=z_0(w)^*$. These singularities can again be removed from the problem by introducing small clockwise-oriented circles around these points and defining $\hat{\mathbf{T}}(z;T,w)=\mathbf{T}(z;T,w)\ee^{-\ii T^{1/3}h(z;w)\sigma_3}$ inside these circles. Doing so results in constant jump matrices on the existing subarcs that lie inside the small circles and introduces a jump condition on the circles themselves where the corresponding jump matrix is given by $\ee^{\ii T^{1/3}h(z;w)\sigma_3}$. The rate in \eqref{eq:h-rate} implies
\begin{equation}
\ee^{\ii T^{1/3}h(z;w)\sigma_3} = O(1),\quad T\to+\infty
\end{equation}
if $|z-\zeta|T^{2/9}=O(1)$, $\zeta=z_0(w),\,z_0(w)^*$ as $T\to+\infty$, and hence we scale the common radius of the small circles centered at $z_0(w)$ and $z_0(w)^*$ by $T^{-2/9}$ as $T$ becomes large. A plot of numerical jump contours encoding the deformations is given in Figure~\ref{fig:large-T-contours}.
\begin{figure}
\begin{center}
\includegraphics{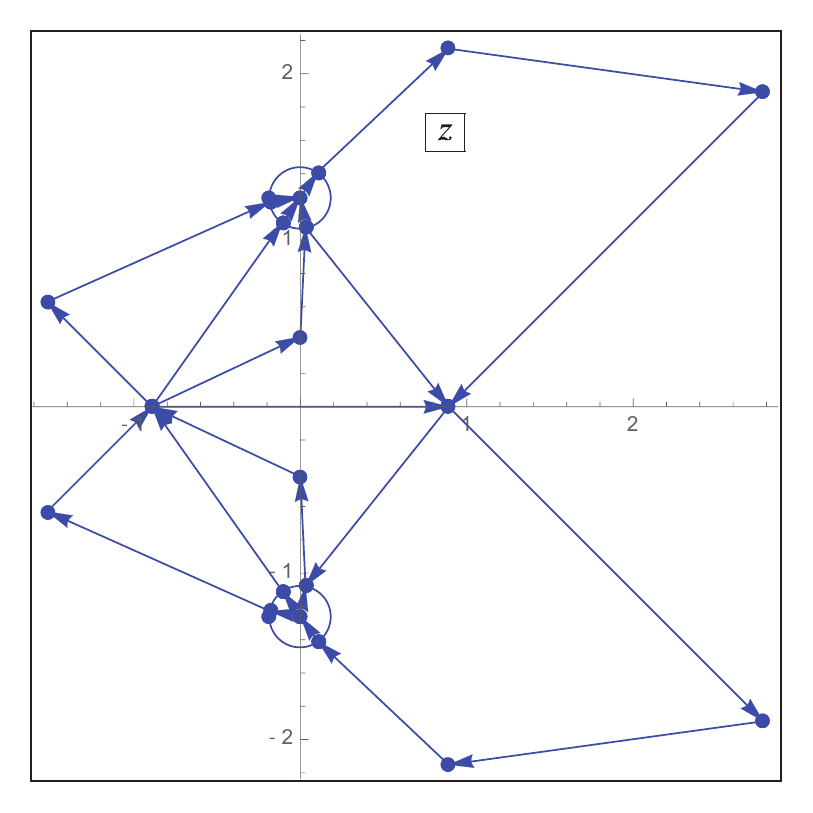}
\end{center}
\caption{Numerical jump contours for the large-$T$ deformation when $T=2000$ and $w=0$.  The circles have counterclockwise orientation.}
\label{fig:large-T-contours}
\end{figure}
A similar treatment for a Riemann-Hilbert problem with circular jump contours was done in \cite[Sections 4.3 and 4.4]{BilmanT17}, see also \cite{TrogdonO16-book} and the references therein, in particular, \cite{TrogdonOD14}. The numerical routines that are used to generate the data in this work are available from the \texttt{rogue-waves} online repository\footnote{\texttt{https://github.com/bilman/rogue-waves}}.

We note that for practical purposes it suffices to implement a square root function that has a branch cut consisting of the union of line segments connecting $z_0(w)$ to $a(w)$ and $a(w)$ to $z_0(w)^*$. We can explicitly compute $h(z;w)$ by finding an exact antiderivative using this square root function, and in this case $h(z;w)$ satisfies a jump condition on these line segments rather than on $\Sigma$ given in Figure~\ref{fig:regions-Tjump-Tlim}. This results in moving the jump condition \eqref{eq:twist-jump} from $\Sigma$ to the line segments described above (see Figure~\ref{fig:large-T-contours}).

It turns out that because the exponent function $h(z;w)$ is not analytic at the endpoints $z_0(w),z_0(w)^*$, the numerical solution is not as accurate when the circles are small as in the large-$X$ deformation case.  More work is therefore necessary to compute the solution of the large-$T$ problem in a way that is robust for large values of $T$. This involves removal of the constant jumps on $\Sigma$ and $I$, and contour truncation, although a solution that is more difficult to code but more elegant is simply to implement the Airy parametrices alluded to in Section~\ref{sec:large-T} (the latter approach avoids shrinking disks altogether). Such a refinement, together with the implementation for the transition region $X\approx X_*(T)$ described in Section~\ref{sec:Painleve} will appear in a forthcoming paper, where the special function $\Psi^{+}(X,T)$ will be computed accurately on the entire $(X,T)$ plane, including arbitrarily large values of the parameters, using different Riemann-Hilbert problems. The modules developed in these works will be merged and incorporated in the \texttt{ISTPackage} \cite{ISTPackage}.

\subsection{Plots of rogue waves of infinite order}
\label{sec:plots-of-Psi-plus}

For small values of $(X,T)$, e.g.\@ $0\leq X, T \leq 2$, Riemann-Hilbert Problem~\ref{rhp:limit-simpler} can be solved reliably without any deformations at all since the Sobolev norms of the jump matrices on $|\Lambda|=1$ remain small enough for numerical purposes. Thus, for $T>0$ small, one can cross-validate the computations by comparing numerical solutions of two different Riemann-Hilbert problems, i.e.\ calculating the difference $|\Psi^{+,X}(X,T)-\Psi^{+,\text{RHP-4}}(X,T)|$ for $X>X_*(T)$ and $|\Psi^{+,T}(X,T)-\Psi^{+,\text{RHP-4}}(X,T)|$ for $X<X_*(T)$, where $\Psi^{+,\alpha}(X,T)$ denotes the solution computed numerically using the deformed Riemann-Hilbert problem adapted to large-$\alpha$. The results of such a cross-validation are presented in Table~\ref{table:rhp-validation}. For $X$ and $T$ small, the transition region addressed in Section~\ref{sec:Painleve} can be avoided and Riemann-Hilbert~\ref{rhp:limit-simpler} can be used instead to compute $\Psi^{+}(X,T)$ when $X$ is near $X_*(T)$.

\begin{table}[ht]
\begin{tabular}{r|c|c|c|c|}
 & $T$=0.2 & $T$=0.5 & $T$=1 & $T=3$ \\\hline
RHP~\ref{rhp:limit-simpler} and large-$X$ deformation & $1.43046\times 10^{-15}$ & $7.85046\times 10^{-17}$ & $1.66279\times 10^{-15}$ & $2.09448\times 10^{-14}$\\\hline
RHP~\ref{rhp:limit-simpler} and large-$T$ deformation  & $6.86276\times 10^{-15}$& $4.40781\times10^{-15}$ & $8.3037\times10^{-15}$ & $1.40576\times10^{-14}$ \\\hline
\end{tabular}
\caption{Difference between solutions computed numerically by different methods in the overlapping regions for small values of $X$ and $T$. We use $X= 1.15 X_*(T)$ when comparing the numerical solution of Riemann-Hilbert Problem~\ref{rhp:limit-simpler} with that of the large-$X$ deformed problem, and we use $X= 0.85 X_*(T)$ when comparing with the solution computed from the large-$T$ deformation.}
\label{table:rhp-validation}
\end{table}

With this validation in hand, to compute the special function $\Psi^+(X,T)$ at a fixed small value of $T\geq 0$, we solve Riemann-Hilbert Problem~\ref{rhp:limit-simpler} numerically when $X\leq X_*(T)$, but we switch to the numerical solution of the large-$X$ deformation when $X>X_*(T)$. The results of such computations allow us to display reliable graphs of $\Psi^+(X,T)$ for the first time.  See Figures~\ref{fig:psi-snapshots-1} and \ref{fig:psi-snapshots-2} for plots of $\Psi^{+}(X,T)$ for $-10\leq X \leq 10$ computed at various values of $T\in[0,2]$.
\begin{figure}[ht]
\begin{center}
\includegraphics{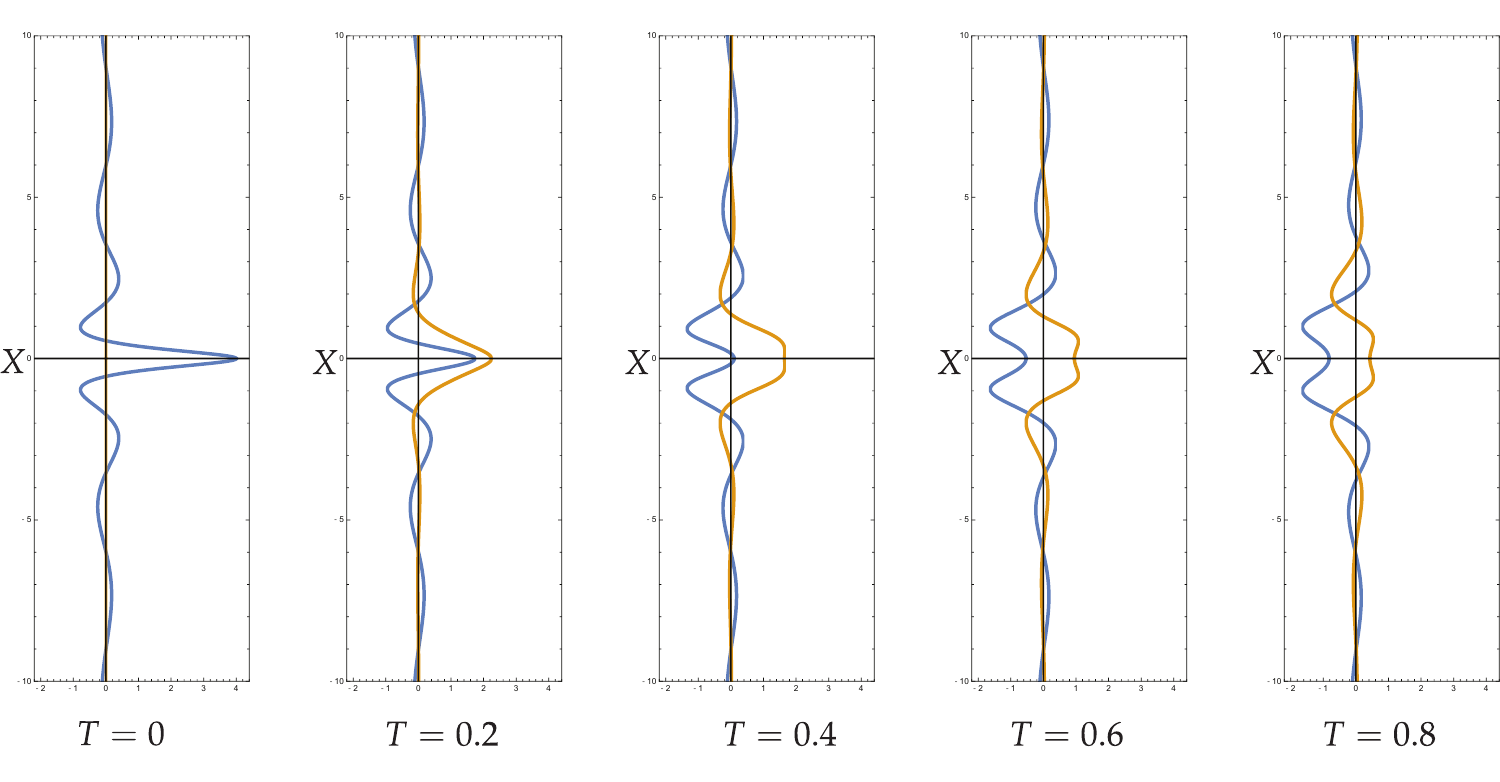}
\end{center}
\caption{Graphs of $\mathrm{Re}(\Psi^{+}(\cdot,T))$ (blue) and $\mathrm{Im}(\Psi^{+}(\cdot,T))$ (maize), from left to right for $T=0,0.2,0.4,0.6,0.8$.  In all plots the vertical axis measures 
$-10\leq X \leq 10$, and the horizontal axis measures the real and imaginary parts of $\Psi^+$. }
\label{fig:psi-snapshots-1}
\end{figure}
\begin{figure}[ht]
\begin{center}
\includegraphics{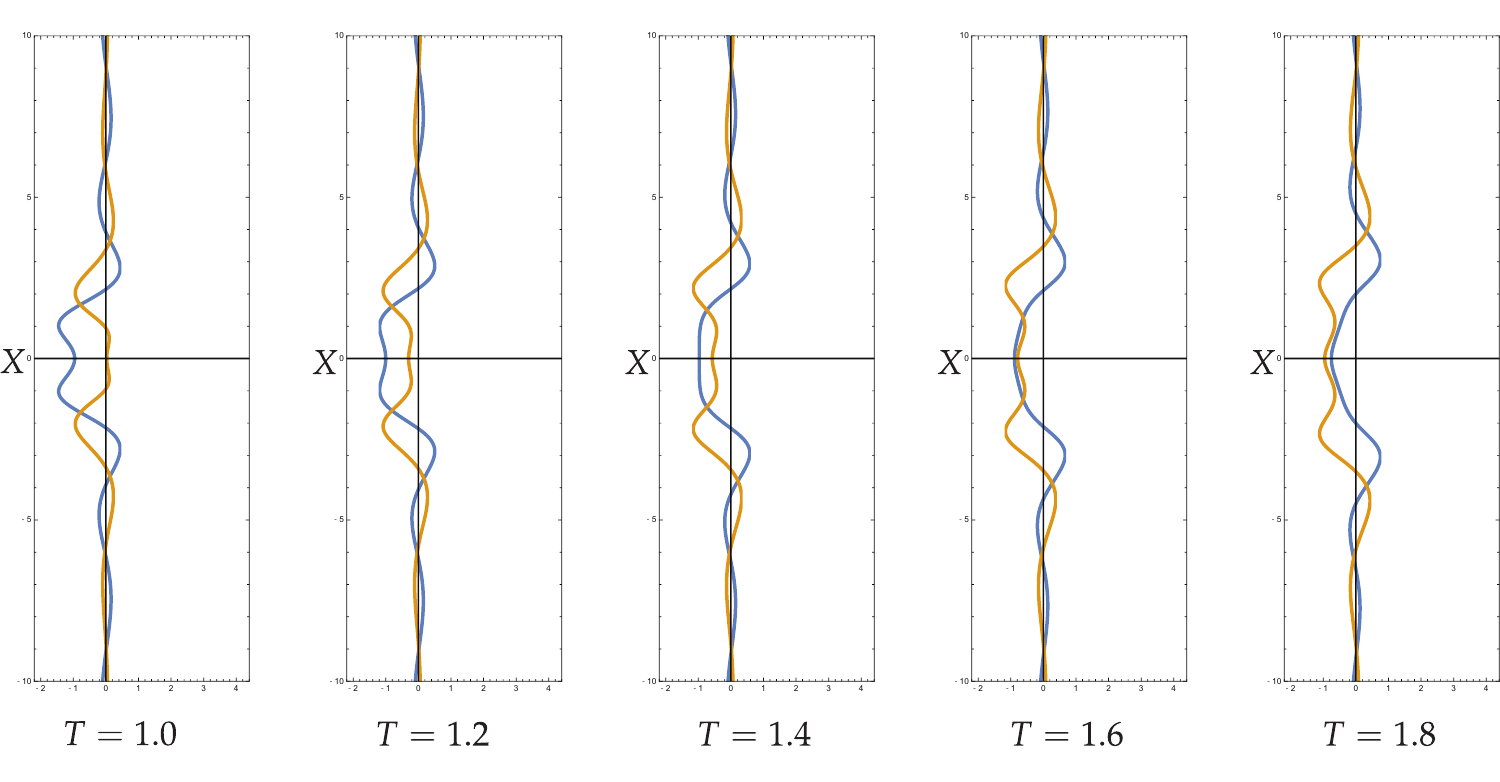}
\end{center}
\caption{As in Figure~\ref{fig:psi-snapshots-1} except for $T=1,1.2,1.4,1.6, 1.8$ left to right.}
\label{fig:psi-snapshots-2}
\end{figure}
A movie showing the evolution of $\Psi^{+}(X,T)$ from $T=0$ to $T=2$ can be found at \texttt{https://github.com/bilman/rogue-waves/blob/master/PsiTfrom0to2.gif} (see also the \texttt{rogue-waves} online repository\footnote{\texttt{https://github.com/bilman/rogue-waves}} for an \texttt{mpg} version.)

\subsection{Numerical validation of Theorem~\ref{theorem:main}}
\label{sec:plots-of-finite-vs-infinite-order}
The ability to reliably compute the special function $\Psi^+(X,T)$ at least for bounded $T$ allows us to illustrate the fundamental convergence result 
given in Theorem~\ref{theorem:main}.  We fix a compact subset of $\mathbb{R}^2$, $K:= [-2,2]\times[-2,2]$, and by evaluation on a suitably fine grid of values of $(X,T)\in K$, we compute
\begin{equation}
\mathcal{E}_{K}(n):= \sup_{(X,T)\in K} \left| \Psi^{+}(X,T) - \psi_{2n}(X n^{-1},T n^{-2})n^{-1}\right|
\end{equation}
for increasing values of $n$ chosen from the set $\{4, 8,10,16, 20, 25 \}$.   Here, $\Psi^+(X,T)$ is computed in the same manner as was used to make the plots in Figures~\ref{fig:psi-snapshots-1}--\ref{fig:psi-snapshots-2}, and $\psi_{2n}(x,t)$ is obtained from finite-dimensional linear algebra using a variant of Definition~\ref{def:rogue-wave}.  Note that by Proposition~\ref{prop:origin-value} and Proposition~\ref{prop:Psi-peak}, 
we have 
\begin{equation}
\left|\Psi^+(0,0)-\psi_{2n}(0,0)n^{-1}\right| = \frac{4n+1}{n} - 4=\frac{1}{n}.
\end{equation}
Therefore, as $(0,0)\in K$, the lower bound $\mathcal{E}_{K}(n)\geq n^{-1}$ must hold. Our numerical results show that this lower bound is the exact value of $\mathcal{E}_K(n)$, i.e., the maximum error over $K$ is achieved (at least) at the origin when the latter lies within $K$. In particular, the $O(n^{-1})$ error term in Theorem~\ref{theorem:main} is optimal.  Figure~\ref{fig:validate-convergence} shows a plot of $\ln(\mathcal{E}_{K}(n))$ versus $\ln(n)$. Performing a linear regression the data produces the best-fit line $\ln(\mathcal{E}_{K}(n))=9.19876*10^{-16} -  \ln(n)$ with the slope exactly equal to $-1$ and the intercept vanishing to machine precision.  The regression algorithm yields the $R^2$-value equal exactly to 1; this is the claimed numerical evidence that in fact $\mathcal{E}_K(n)=n^{-1}$ holds exactly for the indicated $K$ containing $(0,0)$.  

\begin{figure}[ht]
\begin{center}
\includegraphics{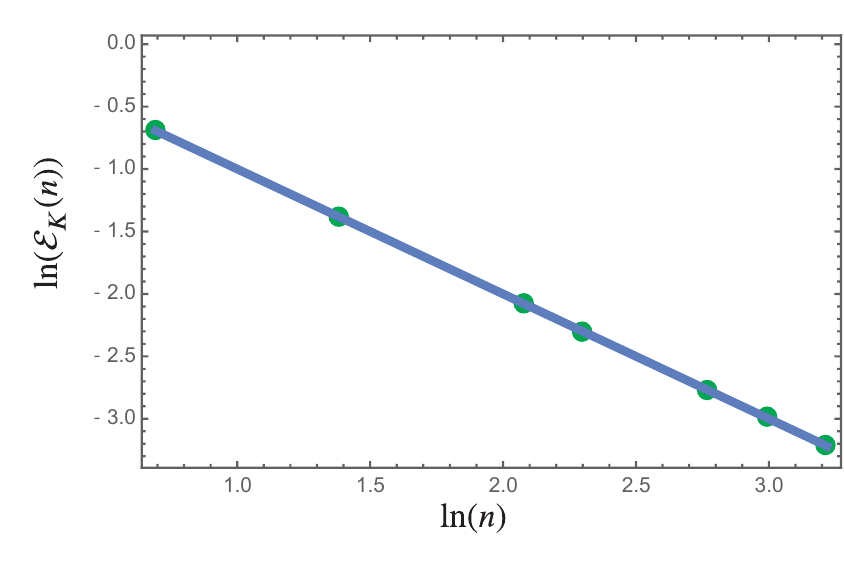}
\end{center}
\label{fig:validate-convergence}
\caption{A scatter plot of $\ln(\mathcal{E}_{K}(n))$ versus $\ln(n)$ for $n\in\{ 4, 8, 10, 16, 20, 25 \}$ (green dots) and the best-fit line (blue).}
\end{figure}

\subsection{Numerical validation of Theorem~\ref{theorem:large-X} and Corollary~\ref{corollary:large-X-T0}}
\label{sec:plots-large-X}
Since the numerical computation of the special function $\Psi^+(X,T)$ is reliable when $X>0$ is large and the parameter $v=TX^{-3/2}$ is sufficiently small, we can also illustrate the accuracy of the asymptotic results developed in Section~\ref{sec:large-X}.  
For notational convenience we let $L^{[X]}(X,v)$ denote the leading term in the asymptotic formula \eqref{eq:Psi-large-X} (i.e., the sum of the explicit terms on the first line of the right-hand side). Below we display plots and regression data for verification of the results in Theorem~\ref{theorem:large-X} and Corollary~\ref{corollary:large-X-T0}. We first fix $v=0.05$. The plots in Figure~\ref{fig:validate-large-X} compare real and imaginary parts of $L^{[X]}(X,v=0.05)$ and $\Psi^{+}(X,T)$, $v=0.05=TX^{-3/2}$, where $\Psi^+(X,T)$ is computed numerically using the large-$X$ deformation method. The graphs of the real and imaginary parts of $L^{[X]}(X,v=0.05)$ are plotted along with shaded strips centered on the graphs and having width $X^{-1}$ which is the size of the error term predicted in the formula \eqref{eq:Psi-large-X}. Superimposed in thicker dashed curves are the corresponding graphs of numerical computation of $\Psi^+(X,T)$, which not only lie within the strips but are indistinguishable to the eye from the predicted limits.
This is a striking illustration of the accuracy of Theorem~\ref{theorem:large-X}.
\begin{figure}[ht]
\begin{center}
\includegraphics{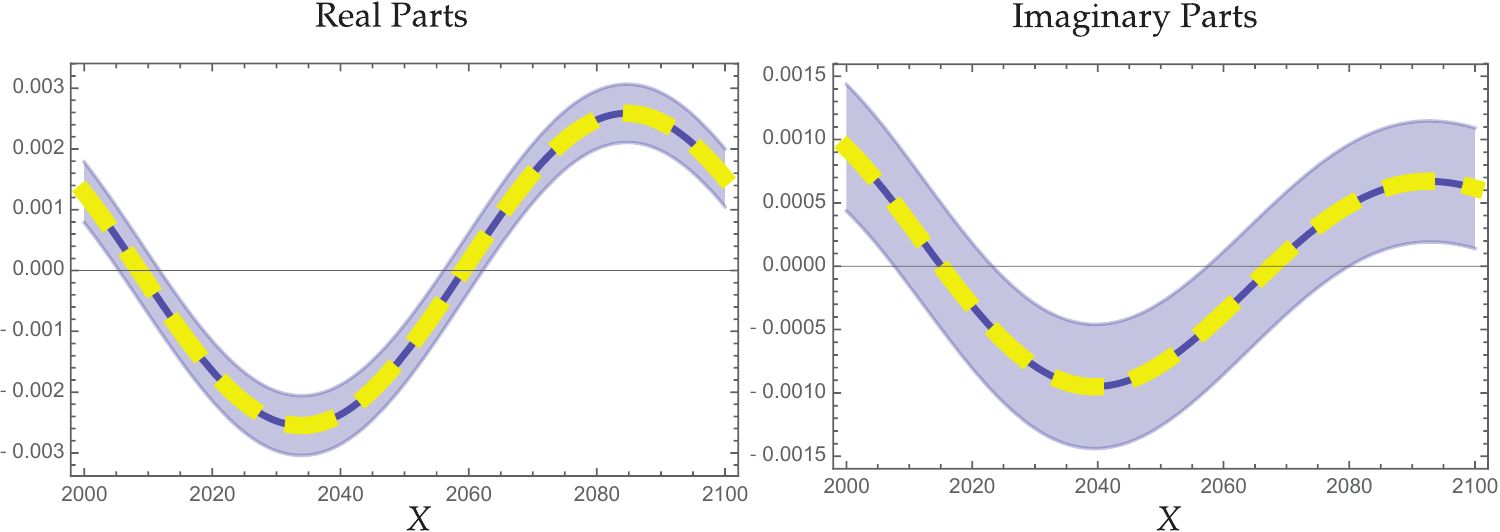}
\end{center}
\caption{Numerically computed solution $\Psi^{+}(X,T)$ (dashed yellow) and the leading term $L^{[X]}(X,v=0.05)$ (solid blue, centered in a shaded strip of width $X^{-1}$) of the asymptotic formula \eqref{eq:Psi-large-X}, plotted over $2000\leq X\leq 2100$ on the horizontal axis. Left: real parts.  Right: imaginary parts. }
\label{fig:validate-large-X}
\end{figure}
To illustrate Corollary~\ref{corollary:large-X-T0}, we set $v=0$ (see \eqref{eq:Psi-large-X-T0} for $L^{[X]}(X,0)$) in which case $\Psi^{+}(X,0)$ is real-valued. Figure~\ref{fig:validate-large-X-T0} shows a similar comparison of $L^{[X]}(X,0)$ and $\Psi^{+}(X,0)$.
\begin{figure}[ht]
\begin{center}
\includegraphics{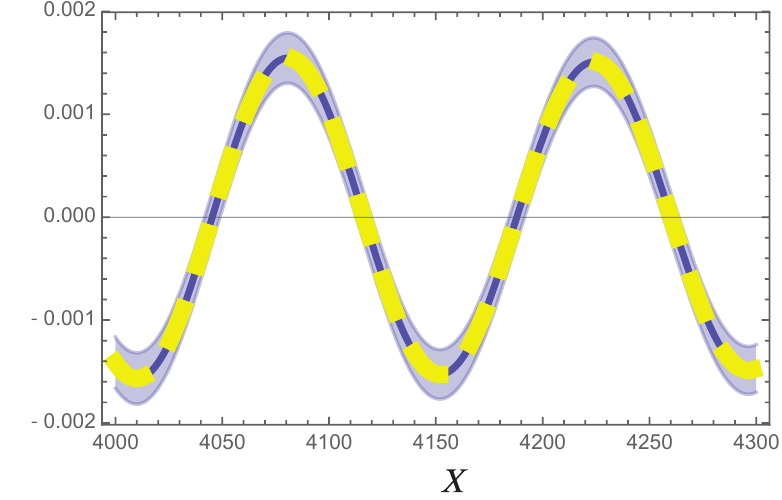}
\end{center}
\caption{As in Figure~\ref{fig:validate-large-X}, but with $v=0$ so all quantities are real-valued and the simpler asymptotic formula for $L^{[X]}(X,0)$ from Corollary~\ref{corollary:large-X-T0} can be used.  Here the plot range is 
 $4000\leq X\leq 4300$ on the horizontal axis.}
\label{fig:validate-large-X-T0}
\end{figure}
Finally, we use the data from the latter experiment to numerically recover the exponent in the error term in \eqref{eq:Psi-large-X-T0}. This is done by plotting $\ln(|\Psi^{+}(X,0) - L^{[X]}(X,0)|)$ versus $\ln(X)$ and performing linear regression, which yields the best-fit line $\ln(|\Psi^{+}(X,0) - L^{[X]}(X,0)|) = -4.82864 - 1.05592 \ln(X)$; see Figure~\ref{fig:largeX-loglog-plot}. The slope of this line gives as desired approximately the exponent of $-1$ as predicted in the error terms in the formul\ae\ \eqref{eq:Psi-large-X} and \eqref{eq:Psi-large-X-T0}. 
\begin{figure}[ht]
\begin{center}
\includegraphics{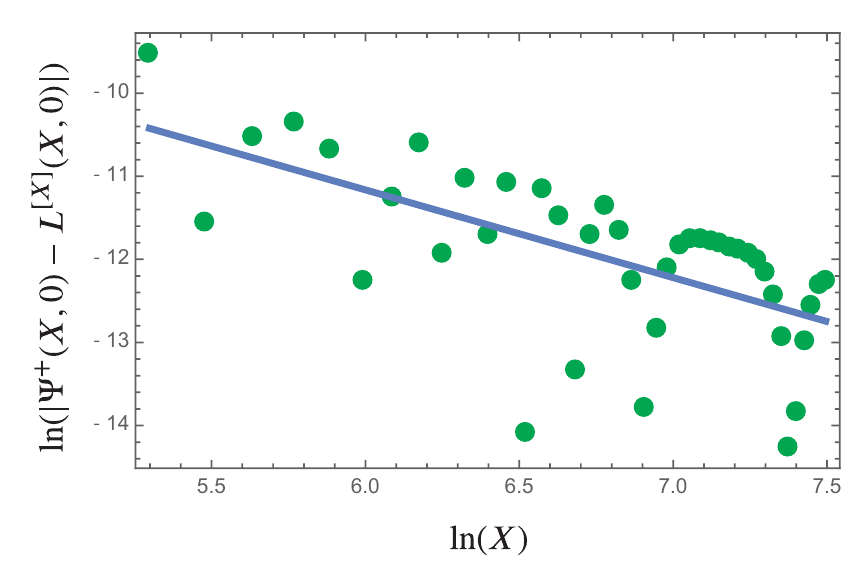}
\end{center}
\caption{A scatter plot of $\ln(|\Psi^{+}(X,0) - L^{[X]}(X,0)|)$ (vertical axis) versus $\ln(X)$ (horizontal axis) for $X=200+40(k-1)$, $k=0,1,\dots,41$ (green dots) and the best-fit line (blue). }
\label{fig:largeX-loglog-plot}
\end{figure}

\subsection{A larger domain of convergence for the near-field limit of rogue waves}

Recall that Theorem~\ref{theorem:main} establishes the locally uniform convergence of rescaled rogue waves of order $k=2n$, $\psi_k(X n^{-1}, T n^{-2})n^{-1}$, to the rogue wave of infinite order $\Psi^{+}(X,T)$, with an accuracy proportional to $n^{-1}$ (and similar convergence to $\Psi^{-}(X,T)$ if $k=2n-1$). Here we investigate whether this convergence might be valid on a larger domain in the $(X,T)$-plane that expands as $n\geq 0$ grows at a suitable rate, possibly with a reduced rate of decay of the error.  For simplicity, we restrict our study here to convergence along the $X$-axis ($T=0$) and $T$-axis ($X=0$).

\subsubsection{Restriction to $T=0$} Note that the size of the leading term $L^{[X]}(X,v=0)$ in the large-$X$ asymptotic formula \eqref{eq:Psi-large-X-T0} is proportional to $X^{-3/4}$ as $X\to+\infty$. Therefore, we shall study the \emph{relative} error between the leading term of \eqref{eq:Psi-large-X-T0} and the rescaled rogue wave of order $2n$ defined as 
\begin{equation}
\mathcal{R}_n^{[X]}(X):= \left|L^{[X]}(X,0) - \psi_{2n}(Xn^{-1},0)n^{-1} \right|X^{3/4}.
\label{eq:RnX-def}
\end{equation}
In Figure~\ref{fig:relative-errors} we plot $\mathcal{R}^{[X]}_n(X)$ over the domain $0\leq X \leq 2n^{8/5}$ for increasing values of $n$ chosen from the set $\{4,8,10,16,20,25\}$. We compute $X_1(n):= \min \{ X\colon \mathcal{R}^{[X]}_n(X)=1,\, 1\leq X \leq 2n^{8/5}\}$, the smallest value of $X$ at which $\mathcal{R}^{[X]}_n(X)=1$ in the interval $1\leq X \leq 2n^{8/5}$. See 
Figure~\ref{fig:relative-errors}. 
\begin{figure}[ht]
\begin{center}
\includegraphics{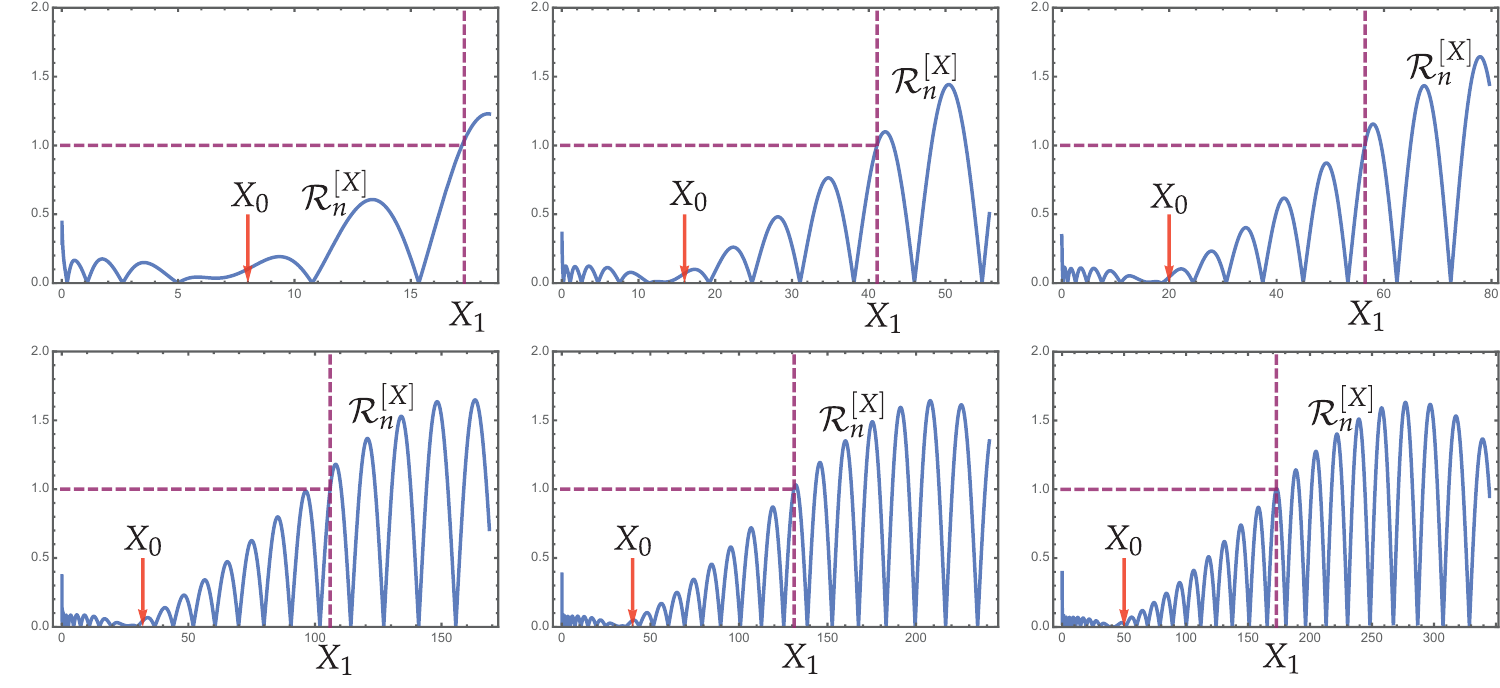}
\end{center}
\caption{Plots of $\mathcal{R}^{[X]}_n(X)$ versus $X$ for values of $n$ from left to right $n=4,\, 8,\,10$ in the first row, $n=16,\, 20,\,25$ in the second row. $0\leq X\leq 2n^{8/5}$ on the horizontal axis for each plot. The values of $X_0(n)$ and $X_1(n)$ are indicated on each plot. }
\label{fig:relative-errors}
\end{figure}
We deduce numerically that $X_1(n)$ obeys a power law as $n$ grows. To see this, we perform a linear regression on the data set $\ln(X_1(n))$ versus $\ln(n)$. See Figure~\ref{fig:overlap-T0-regression} for a plot comparing the data with the best-fit regression line
given by $\ln(X_1(n)) = 1.10384 +1.26525 \ln(n)$ with the $R^2$-value by $0.9988$. The slope of this line indicates that $X_1(n)$ grows roughly as $n^{5/4}$ as $n$ becomes large.
\begin{figure}[ht]
\begin{center}
\includegraphics{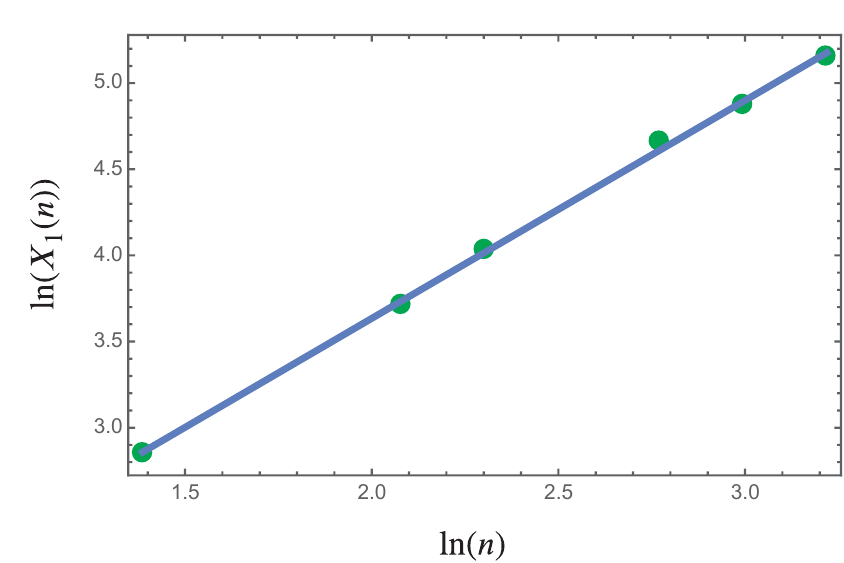}
\end{center}
\caption{Scatter plot of the data set $\ln(X_1(n))$ versus $\ln(n)$ (green dots) for $n\in\{4,8,10,16, 20,25\}$ and the best-fit line (blue).}
\label{fig:overlap-T0-regression}
\end{figure}

This analysis suggests that the right endpoint of an interval on which the rescaled rogue wave of order $2n$ is approximated accurately at $T=t=0$ by the rogue wave of infinite order $\Psi^+(X,0)$ can grow with $n$ at most at a rate that is slower than $n^{5/4}$. Thus we are led to study the relative error $\mathcal{R}^{[X]}_n(X)$ on intervals with right endpoints of the form
\begin{equation}
X^\text{R}_n(q):= C n^{5/4 - q},\quad q>0,
\end{equation}
where the constant $C\approx \ee^{1.1}$ is taken from the regression analysis so that $X^\text{R}_n(0)\approx X_1(n)$. On the other hand, as can be seen from the plots in Figure~\ref{fig:relative-errors}, there exists a region around $X=0$ on which $\mathcal{R}^{[X]}_n(X)$ is not small when $n$ is large. The reason for this is that the asymptotic formula $L^{[X]}(X,0)$ appearing in the definition of $\mathcal{R}^{[X]}_n(X)$ is a poor approximation\footnote{It would be better to use $\Psi^+(X,0)$ itself in place of $L^{[X]}(X,0)$ in \eqref{eq:RnX-def} but it would also be more computationally expensive and simultaneously less relevant to the present study which of course is concerned with investigating accuracy of the infinite-order rogue wave approximation precisely where Theorem~\ref{theorem:main} makes no prediction.} of $\Psi^{+}(X,0)$ unless $X$ is large. We observe numerically that the boundary of this region, denoted by $X=X_0(n)$, may be taken as $X_0(n):=2 n$, which as shown in the plots in Figure~\ref{fig:relative-errors} is approximately where $\mathcal{R}^{[X]}_n(X)$ seems to be minimal on $[0,X_1(n)]$. 

Therefore, we expect to see convergence of $\psi_{2n}(X n^{-1},0)n^{-1}$ to $\Psi^{+}(X,0)$ on the interval $|X|\le C n^{5/4 - q}$ as $n\to\infty$, for any $q>0$. For $0<q<5/4$ this interval expands as $n$ grows rather than remaining bounded  as in the premise of Theorem~\ref{theorem:main}. To be able to verify such convergence on this set numerically and make use of the large-$X$ asymptotic formula $L^{[X]}(X,0)$ to approximate $\Psi^+(X,0)$, we need to make sure to consider values of $X>X_0(n)$ for each $n$ and hence we study the quantity
\begin{equation}
\bar{\mathcal{R}}^{[X]}_n(q) := \sup_{X_0(n)\leq X \leq X^\text{R}_n(q)}| \mathcal{R}^{[X]}_n(X) |,\quad 0< q<1/4,
\end{equation} 
as $n$ becomes large (the upper bound on $q$ here ensures that the interval is not empty). See Figure~\ref{fig:different-qs-T0}, where we plot the data set $\ln(\bar{\mathcal{R}}^{[X]}_n(q))$ versus $\ln(n)$ to illustrate the convergence as $n$ grows for different values of $0< q<1/4$. Observe that for $q\in(0,1/4)$ fixed, linear regression predicts a power-law relationship between $\bar{\mathcal{R}}^{[X]}_n(q)$ and $n$ as $n\to\infty$. These results strongly suggest that the infinite-order rogue wave $\Psi^+(X,0)$ accurately approximates the rescaled finite-order rogue wave $\psi_{2n}(Xn^{-1},0)n^{-1}$ in the large-$n$ limit provided that $|X|\ll n^{5/4}$.

\begin{figure}[ht]
\begin{center}
\includegraphics{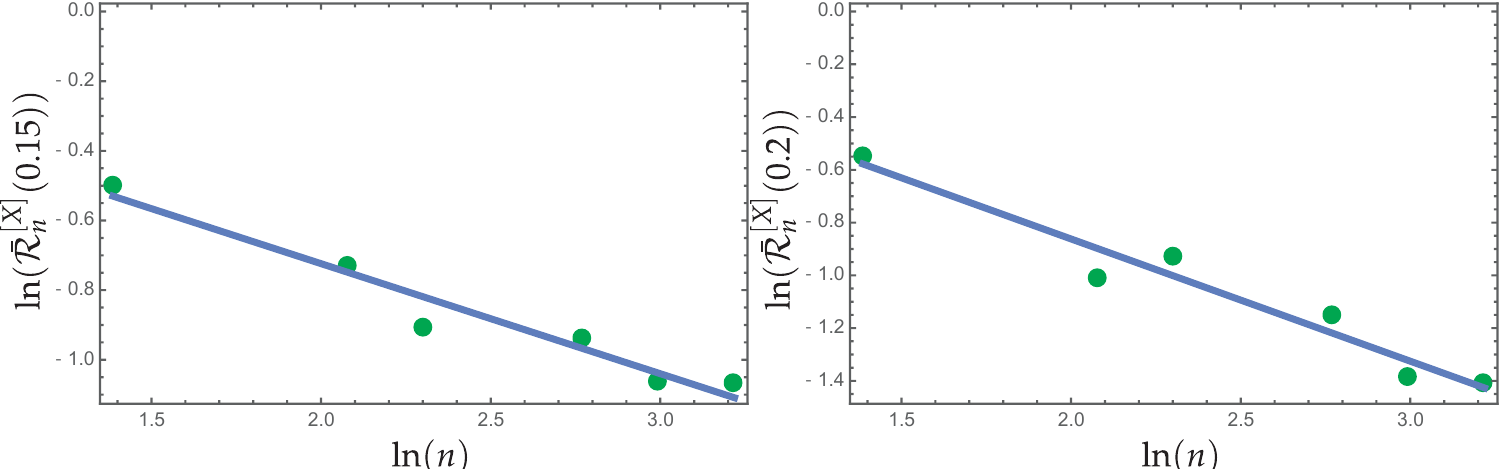}
\end{center}
\caption{Scatter plot of the data set $\ln(\bar{\mathcal{R}}^{[X]}_n(q))$ versus $\ln(n)$ for $n\in\{4,8,10,16,20,25\}$ (green dots) and the best-fit line (blue). Here $q=0.15$ (left panel) and $q=0.2$ (right panel). Note the difference in scales of the vertical axes in each panel.   
For $q=0.15$, the best-fit line is has intercept: $-0.0919092$, slope: $- 0.315997$, and $R^2=0.948347$. For $q=0.2$, the best-fit line has intercept: $0.064779$, slope: $-0.463413$, and $R^2= 0.946802$.}
\label{fig:different-qs-T0}
\end{figure}

\subsubsection{Restriction to $X=0$} We perform the same type of analysis when $X=0$, using the leading term $L^{[T]}(0,T)$ in the asymptotic formula \eqref{eq:Psi-large-T-X0} whose size is proportional to $T^{1/3}$ as $T\to+\infty$. Thus we consider the relative error
\begin{equation}
\mathcal{R}^{[T]}_n(T):= \left|L^{[T]}(0,T) - \psi_{2n}(0,T n^{-2})n^{-1} \right|T^{1/3},
\end{equation}
and plot $\mathcal{R}^{[T]}_n(X)$ over the domain $0\leq T \leq 2n^{11/5}$ for increasing values of $n$ chosen from $\{4,8,10,16,20,25\}$. As in the case on the $X$-axis, we compute $T_1(n):= \min \{T\colon \mathcal{R}^{[T]}_n(T)=1,\, 1\leq T \leq 2n^{11/5}\} $, the smallest value of $T$ at which $\mathcal{R}^{[T]}_n(T)=1$ in the interval $1\leq T \leq 2n^{11/5}$. See Figure~\ref{fig:relative-errors-X0}. 

\begin{figure}[ht]
\begin{center}
\includegraphics{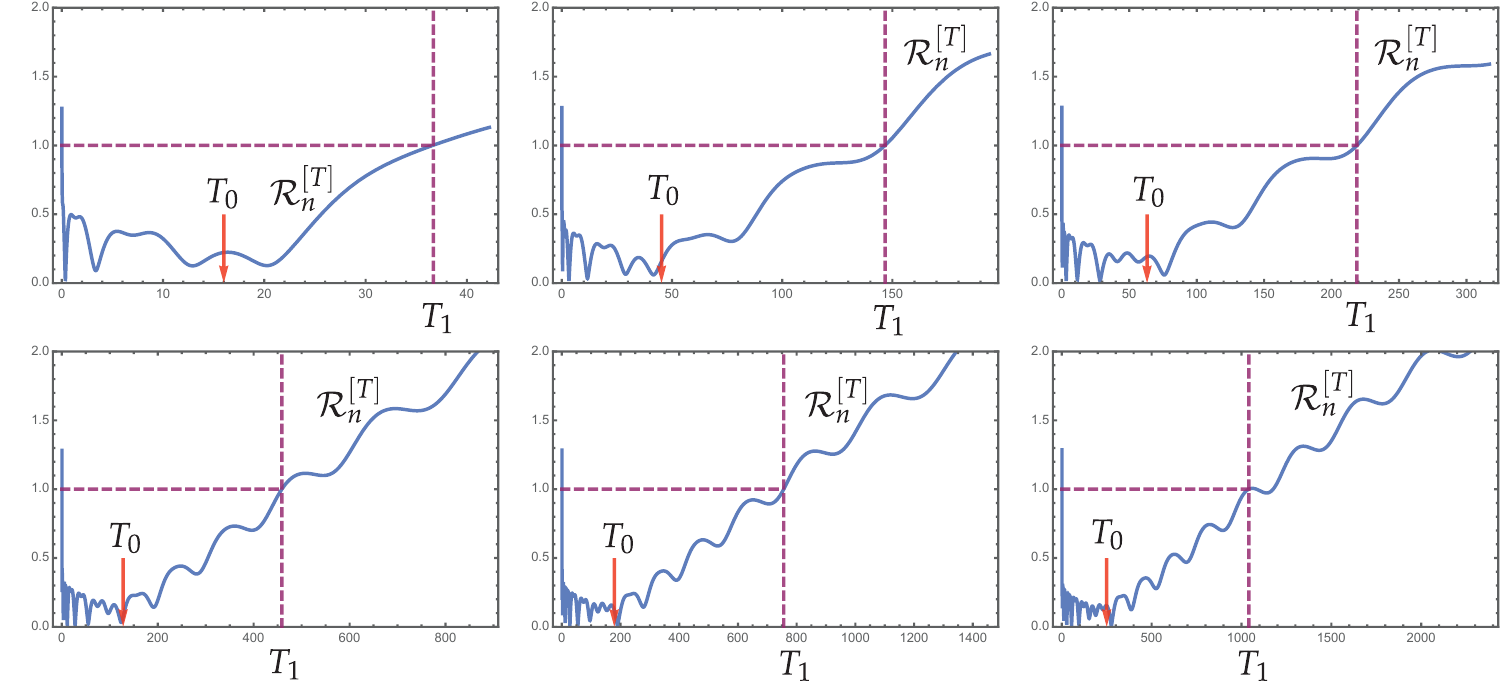}
\end{center}
\caption{Plots of $\mathcal{R}^{[T]}_n(T)$ versus $T$ for values of $n$ from left to right $n=4,\, 8,\,10$ in the first row, $n=16,\, 20,\,25$ in the second row. In each plot, the horizontal axis is the interval $0\leq T\leq 2n^{11/5}$. The values of $T_0(n):=2n^{3/2}$ and $T_1(n)$ 
are indicated on each plot. }
\label{fig:relative-errors-X0}
\end{figure}

Again by a linear regression performed on the data set $\ln(T_1(n))$ versus $\ln(n)$, we observe that $T_1(n)$ obeys a power law as $n$ grows. See Figure~\ref{fig:overlap-X0-regression} for the $\log$-$\log$ plot and the best-fit line predicted by the regression which is $\ln(T_1(n))=1.14431 + 1.81749 \ln(n)$. The slope of this line indicates that $T_1(n)$ grows proportional to roughly $n^{9/5}$ as $n$ becomes large. 

\begin{figure}[ht]
\begin{center}
\includegraphics{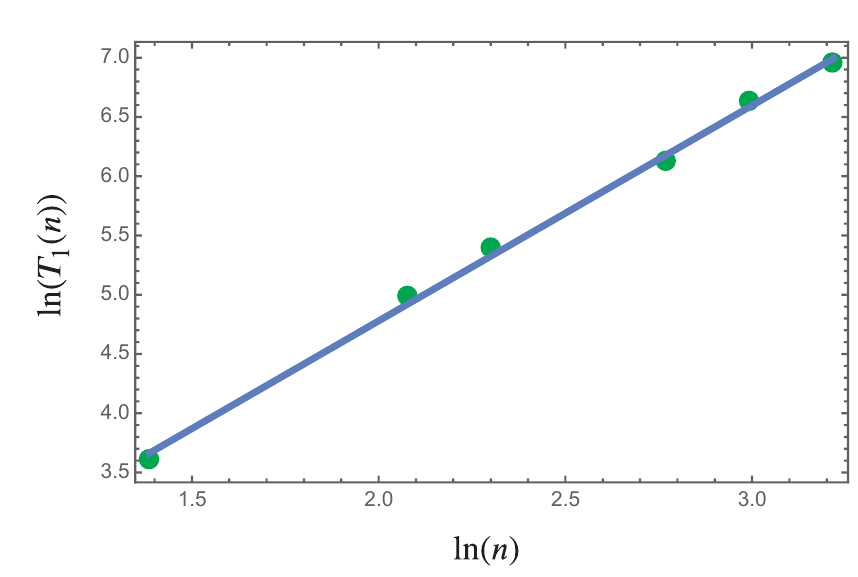}
\end{center}
\caption{Scatter plot of the data set $\ln(T_1(n))$ versus $\ln(n)$ for $n\in\{4, 8,10,16, 20,25\}$ (green dots) and the best-fit line $\ln(T_1(n)) = \ln(T_1(n))=1.14431 + 1.81749 \ln(n) $ (blue). }
\label{fig:overlap-X0-regression}
\end{figure}

Thus, we expect that to have convergence of the rescaled rogue wave to the rogue wave of infinite order, it is necessary to consider values of $T$ small compared to $T_1(n)\sim n^{9/5}$.  Consequently we study the relative error $\mathcal{R}^{[T]}_n(T)$ on intervals of $T$ with right endpoint given by
\begin{equation}
T^\text{R}_n(q):= C n^{9/5 - q},\quad q>0,
\end{equation}
where the constant $C\approx \ee^{1.14}$ is taken from the regression analysis so that $T^\text{R}_n(0) \approx T_1(n)$. Accordingly, we expect to see convergence of $\psi_{2n}(0,Tn^{-2})n^{-1}$ to $\Psi^{+}(0,T)$ on the interval $|T|\le C n^{9/5 - q}$ as $n\to\infty$, for any $q>0$. This is again a growing set in $n$ for $0<q<9/5$ rather than fixed as in the premise of Theorem~\ref{theorem:main}. To verify such convergence while still exploiting the simplicity of the large-$T$ asymptotic formula $L^{[T]}(0,T)$ in place of $\Psi^+(0,T)$, we need to consider $\mathcal{R}^{[T]}_n(T)$ for $T$ away from the region surrounding $T=0$ on which the leading term $L^{[T]}(0,T)$ given in \eqref{eq:Psi-large-T-X0} is a poor approximation of $\Psi^{+}(0,T)$. From the plots in Figure~\ref{fig:relative-errors-X0} we infer that the boundary of this region expands as $n^{3/2}$ as $n$ becomes large.  The points $T=T_0(n) := 2 n^{3/2}$, which roughly mark the boundary of this region are indicated with orange arrows in Figure~\ref{fig:relative-errors-X0}. Thus we numerically study
\begin{equation}
\bar{\mathcal{R}}^{[T]}_n(q) := \sup_{T_0(n)\leq T \leq T^\text{R}_n(q)}| \mathcal{R}^{[T]}_n(T) |,\quad  0< q<3/10,
\end{equation} 
for different values of $q$ in the range given above.  We indeed observe convergence as $n\to\infty$, illustrating that the rogue wave of infinite order is a good approximation for rescaled rogue waves provided $|T|\ll n^{9/5}$.

\begin{figure}[ht]
\begin{center}
\includegraphics{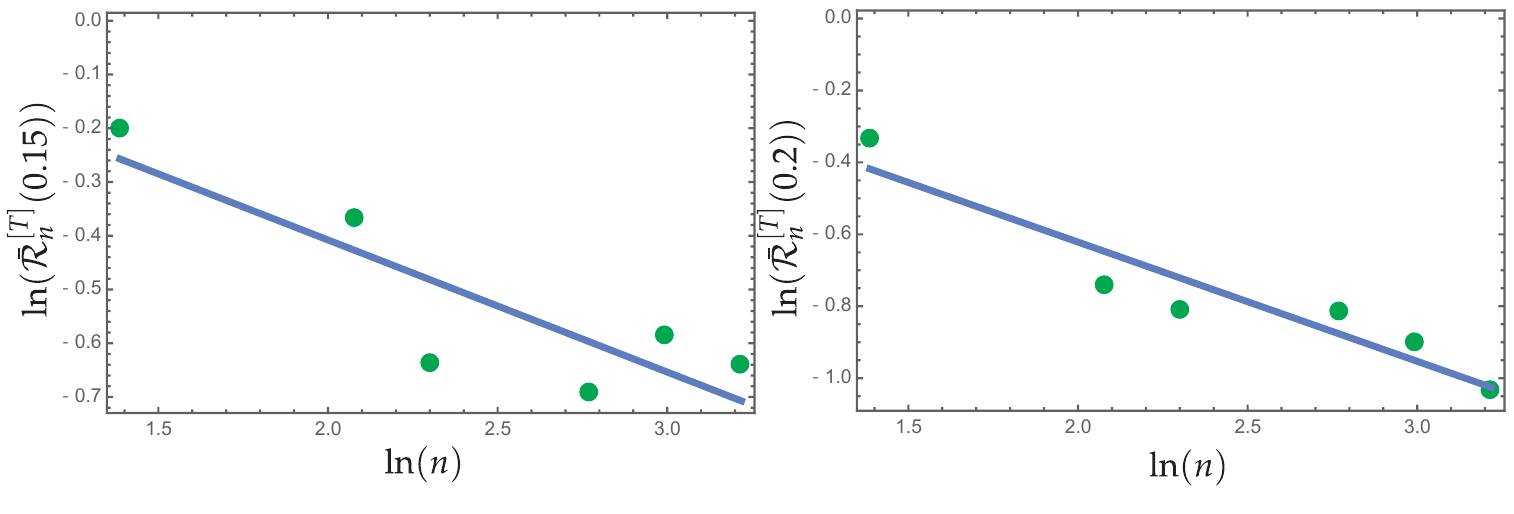}
\end{center}
\caption{Scatter plot of the data set $\ln(\bar{\mathcal{R}}^{[T]}_n(q))$ versus $\ln(n)$ for $n\in\{4,8,10,16,20,25\}$ (green dots) and the best-fit line (blue). Left:  $q=0.15$. Right:  $q=0.2$. Note the difference in scales of the vertical axes in each panel.%
For $q=0.15$, the best-fit line has intercept: $0.0838235$, slope: $-0.245828$, and $R^2=0.736517$. For $q=0.2$, the best-fit line has intercept: $0.0413682$, slope: $-0.331435$, and $R^2=0.890735$.}
\label{fig:different-qs-T0}
\end{figure}

\subsubsection{Conjecture formulation}
The preceding numerical results suggest the following, which is a generalization of Theorem~\ref{theorem:main}.
\begin{conjecture}[Generalized accuracy of the near-field limit]
Let $\{K_n\subset\mathbb{R}^2\}_{n=1}^\infty$ be a sequence of compact sets with the property that 
\begin{equation}
\lim_{n\to\infty}n^{-5/4}\sup_{(X,T)\in K_n}|X|=0\quad\text{and}\quad
\lim_{n\to\infty}n^{-9/5}\sup_{(X,T)\in K_n}|T|=0.
\end{equation}
Then 
\begin{equation}
\lim_{n\to\infty}\sup_{(X,T)\in K_n}\left|n^{-1}\psi_{2n}(n^{-1}X,n^{-2}T)-\Psi^+(X,T)\right|=0
\end{equation}
and
\begin{equation}
\lim_{n\to\infty}\sup_{(X,T)\in K_n}\left|n^{-1}\psi_{2n-1}(n^{-1}X,n^{-2}T)-\Psi^-(X,T)\right|=0.
\end{equation}
\label{conjecture:numerical}
\end{conjecture}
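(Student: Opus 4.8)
The plan is to upgrade the small-norm argument behind Theorem~\ref{theorem:main} so that it survives the expansion of $K_n$. The obstruction is that the comparison matrix $\mathbf{F}(\Lambda;X,T):=\mathbf{N}^{(2n)}(n\Lambda;n^{-1}X,n^{-2}T)\mathbf{P}^+(\Lambda;X,T)^{-1}$ used there is controlled only on fixed compacta: while the two jump matrices on $|\Lambda|=1$ differ pointwise by $\mathbb{I}+O(n^{-1})$, the conjugating factor $\mathbf{P}^+_-(\Lambda;X,T)$ grows (exponentially) with $|X|,|T|$, since the exponentials $\ee^{\pm\ii(\Lambda X+\Lambda^2T)\sigma_3}$ in \eqref{eq:P-jump} are unbounded on the unit circle; the bound \eqref{eq:P-bound} then degenerates and the estimate $\mathbf{F}_+=\mathbf{F}_-(\mathbb{I}+O(n^{-1}))$ is lost. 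By Corollaries~\ref{cor:Psi-pm}, \ref{cor:Psi-even}, and \ref{cor:Psi-real} it suffices to treat $\Psi^+$ for $X,T\ge 0$ (the odd-order statement following from $\Psi^-=-\Psi^+$ and the analogous reduction of $\mathbf{P}^-$), and I would organize the argument by the position of $v:=T|X|^{-3/2}$ relative to $v_\mathrm{c}=54^{-1/2}$.

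The central idea is to run the Deift--Zhou reductions of Sections~\ref{sec:large-X}, \ref{sec:large-T}, and \ref{sec:Painleve} on the finite-order problem itself and to compare only afterwards. For a point with $v<v_\mathrm{c}$, I would set $\lambda=n\Lambda$ and $\Lambda=|X|^{-1/2}z$ and carry out verbatim the $g$-function and lens-opening steps and install the same outer and parabolic-cylinder parametrices $\dot{\mathbf{T}}$ constructed for $\Psi^+(X,X^{3/2}v)$, producing a finite-order transformed matrix $\mathbf{T}_\mathrm{fin}$ and error $\mathbf{F}_\mathrm{fin}:=\mathbf{T}_\mathrm{fin}\dot{\mathbf{T}}^{-1}$. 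Here $\mathbf{T}_\mathrm{fin}$ differs from its infinite-order counterpart only by replacing limiting data with exact data: the conjugator $\mathbf{E}(n\Lambda)=\mathbb{I}+O(|X|^{1/2}n^{-1})$, the exact phase $\rho(n\Lambda)(n^{-1}X+n\Lambda n^{-2}T)=\Lambda X+\Lambda^2T+O(|X|^{3/2}n^{-2})$, and the exact factor $((n\Lambda-\ii)/(n\Lambda+\ii))^n=\ee^{-2\ii\Lambda^{-1}}(1+O(|X|^{3/2}n^{-2}))$ in place of their limits. Since the steepest-descent contours and the parametrix disks stay an $O(1)$ distance from the images of the branch points $\lambda=\pm\ii$ --- namely $z=\pm\ii|X|^{1/2}/n$ in the large-$X$ scaling and $z=\pm\ii T^{1/3}/n$ in the large-$T$ scaling, both collapsing to the origin under the hypotheses --- these corrections are analytic and uniformly small on the deformed contour.

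The comparison is then effected downstream of the descent, where both transformed problems are genuinely small-norm. Because $\mathbf{T}_\mathrm{fin}$ and the infinite-order $\mathbf{T}$ obey exactly the same constant jump $2^{\sigma_3}$ on $I$, and twist jumps on $\Sigma$ differing only by the finite-order phase correction, the ratio $\mathbf{D}:=\mathbf{F}_\mathrm{fin}\mathbf{F}_\mathrm{lim}^{-1}$ --- with $\mathbf{F}_\mathrm{lim}=\mathbb{I}+O(X^{-1/4})$ the controlled infinite-order error of \eqref{eq:F-minus-L2} --- carries a jump equal to the (genuinely small) infinite-order discrepancy multiplied by the finite-order corrections above and conjugated by the bounded factor $\mathbf{F}_{\mathrm{lim},-}$. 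Thus $\mathbf{D}$ solves a bona fide small-norm problem; reading off the coefficient of $z^{-1}$ and restoring the prefactor from \eqref{eq:PsiPlus-S} gives an absolute error that is a finite combination of positive powers of $|X|$ (resp.\ $|T|$) divided by powers of $n$ --- heuristically $O(n^{-1}+|X|n^{-2})$ in the large-$X$ deformation and $O(n^{-1}+|T|^{2/3}n^{-2})$ in the large-$T$ deformation. Each term tends to zero uniformly on $K_n$ under $n^{-5/4}\sup|X|\to 0$ and $n^{-9/5}\sup|T|\to 0$ (indeed with room to spare, the small-norm analysis itself persisting as long as $|X|=o(n^{4/3})$ and $|T|=o(n^2)$). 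The transitional band $v\approx v_\mathrm{c}$ is handled identically, with the Painlev\'e-II parametrix of Riemann-Hilbert Problem~\ref{rhp:PII} and the estimates of Theorem~\ref{theorem:PII} replacing the parabolic-cylinder parametrix at the colliding critical points near $z=a_\mathrm{c}$.

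The hard part is twofold. First, these estimates must be rendered uniform over all of $K_n$, which for a single $n$ typically meets the large-$X$, large-$T$, and transitional regimes simultaneously; this forces a careful patching of the three analyses across the thresholds $v\to v_\mathrm{c}^-$ and $w\to(54^{1/3})^-$, where the parametrices degenerate and the implied constants threaten to diverge. Second, and more delicately, one must verify that the finite-order corrections remain subordinate throughout the Painlev\'e-II window, where the conformal map $W(z;v)$ and the coordinate $\zeta_a=X^{1/6}W$ compress the local scale near $z=a_\mathrm{c}$ and interact with the nearby singularities at $z=\pm\ii|X|^{1/2}/n$; confirming that they do not degrade the $O(X^{-5/6})$ accuracy of the PII parametrix is the crux of the proof. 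This mechanism is also what should locate the sharp thresholds --- numerically $5/4$ and $9/5$ --- at which the near-field approximation ceases to be accurate in the relative sense; the conjecture asserts only the conservative $o$-statements, which follow once subordinacy of the catalogued corrections is established.
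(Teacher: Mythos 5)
The first thing to recognize is that the statement you were asked to prove is not a theorem of the paper at all: it is Conjecture~\ref{conjecture:numerical}, which the authors support purely by numerical evidence (linear regressions on the computed quantities $\mathcal{E}_K(n)$, $\mathcal{R}^{[X]}_n$, and $\mathcal{R}^{[T]}_n$, which is also where the exponents $5/4$ and $9/5$ come from) and explicitly decline to prove, remarking that a proof ``would require control of the solution of Riemann-Hilbert Problem~\ref{rhp:limit-simpler} for parameters $(X,T)$ in growing sets $K_n$,'' that the asymptotics of Section~\ref{sec:Psi-asymptotic} ``may be exploited in this effort,'' but that ``more work would be needed because the three asymptotic regimes themselves do not quite overlap.'' Your proposal is precisely this programme: run the Deift--Zhou deformations of Sections~\ref{sec:large-X}, \ref{sec:large-T}, and \ref{sec:Painleve} on the finite-order problem itself and compare with the infinite-order problem only after both have been reduced to small-norm form. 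Your diagnosis of why the proof of Theorem~\ref{theorem:main} does not extend is correct (the conjugating exponentials in \eqref{eq:P-jump} are unbounded in $(X,T)$, so \eqref{eq:P-bound} degenerates), and your catalogue of finite-order corrections is plausible. But what you have written is a research plan, not a proof: the two items you yourself label ``the hard part'' are exactly the missing steps that make the statement a conjecture, and you do not supply them.

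Concretely, the gap is this. Theorem~\ref{theorem:large-X} is uniform only on compact subintervals of $|v|<54^{-1/2}$, Theorem~\ref{theorem:large-T} only for $w$ in compact subintervals of $[0,54^{1/3})$, and Theorem~\ref{theorem:PII} only in the window $v-v_\mathrm{c}=O(X^{-1/3})$. Hence for every large $n$ the set $K_n$ contains points whose parameter $v$ lies in the annular gaps $CX^{-1/3}\le|v-v_\mathrm{c}|\le\epsilon$ that none of the three analyses controls; saying the transitional band ``is handled identically'' does not close these gaps, and closing them requires a genuinely uniform coalescing-critical-point analysis (parabolic-cylinder degenerating into Painlev\'e-II with error constants uniform in the coalescence parameter), which neither you nor the paper carries out. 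Second, every quantitative claim in your argument is, by your own admission, heuristic: the bounds $O(n^{-1}+|X|n^{-2})$ and $O(n^{-1}+|T|^{2/3}n^{-2})$ and the persistence ranges $|X|=o(n^{4/3})$, $|T|=o(n^{2})$ are never derived. In particular, you do not show that the discrepancies created by the finite-order corrections ($\mathbf{E}(n\Lambda)=\mathbb{I}+O(|X|^{1/2}n^{-1})$, phase and power-factor errors $O(|X|^{3/2}n^{-2})$, which also obstruct a truly ``verbatim'' reuse of the lens factorizations built for the limiting jump) survive conjugation by the inner parametrices on the shrinking disks and by the factor $\mathbf{F}_{\mathrm{lim},-}$, which the paper controls only in the $L^2(\Sigma_\mathbf{F})$ sense (cf.\ \eqref{eq:F-minus-L2}), with constants uniform over all of $K_n$ including the seams between regimes. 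Until those estimates are actually established --- work the paper explicitly leaves to the future --- your proposal should be read as a reasonable outline of a possible proof, coinciding with the authors' own suggested route, rather than as a proof of the conjecture.
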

This conjecture therefore asserts uniform convergence of rescaled finite-order rogue waves to their infinite-order near-field limits on sets that can grow with $n$ in the $(X,T)$-plane but with a diameter small compared to $n^{5/4}$ in the $X$-direction and a diameter small compared to $n^{9/5}$ in the $T$-direction.  It is useful to express these scales in terms of the original independent variables $(x,t)$ of the finite-order rogue wave:  $X=o(n^{5/4})$ is equivalent to $x=o(n^{1/4})$ and $T=o(n^{9/5})$ is equivalent to $t=o(n^{-1/5})$.  Thus, the conjecture asserts that the near-field limit of high-order rogue waves is in fact valid on $x$-intervals whose diameter grows at a suitable rate with $n$, but $t$ is still required to be small compared to $n^{-1/5}$.

Proving this conjecture by generalizing the method of proof of Theorem~\ref{theorem:main} would require control of the solution of Riemann-Hilbert Problem~\ref{rhp:limit-simpler} for parameters $(X,T)$ in growing sets $K_n$.  The large-$X$, large-$T$, and transitional asymptotics described in Section~\ref{sec:Psi-asymptotic} may be exploited in this effort, but more work would be needed because the three asymptotic regimes themselves do not quite overlap.  Thus we leave any proof of Conjecture~\ref{conjecture:numerical} to the future.

\appendix
\section*{Appendix.  Equivalence of the Riemann-Hilbert and Determinantal Representations of Fundamental Rogue Waves}
\label{app:Equivalence}
In this appendix, we prove Proposition~\ref{prop:Equivalence}.  First, note that since Riemann-Hilbert Problems~\ref{rhp:rogue-wave} and \ref{rhp:renormalized} are equivalent due to \eqref{eq:N-M-relation} in which $\mathbf{M}^{(0)}(\lambda;x,t)$ is explicitly given by \eqref{eq:M-zero} and is a matrix with unit determinant, it suffices to deduce the determinantal representation \eqref{eq:psi-k-determinants} of $\psi_k(x,t)$ from Riemann-Hilbert Problem~\ref{rhp:renormalized}.  We will show that the latter problem can be solved by a suitable rational ansatz that is based on the theory of generalized Darboux transformations \cite{GuoLL12}.  Just as Riemann-Hilbert Problem~\ref{rhp:renormalized} splits into two cases depending on whether $k$ is even or odd, so does the choice of rational ansatz.  We shall consider the two cases separately.

\subsection*{The rational ansatz for $k=2n$ with $n\in\mathbb{Z}_{\ge 0}$}
We assume that the matrix $\mathbf{N}^{(k)}(\lambda;x,t)$ can be represented for $\lambda$ in the exterior of $\Sigma_\circ$ by an expression of the form
\begin{equation}
\mathbf{N}^{(k)}(\lambda;x,t)=\left(\frac{\lambda+\ii}{\lambda-\ii}\right)^n\left(\mathbb{I}-\mathbf{Y}(x,t)\mathbf{D}(\lambda)\mathbf{X}(x,t)^\top\right),\quad\text{$\lambda$ in the exterior of $\Sigma_\circ$,}
\label{eq:rational-ansatz}
\end{equation}
where $\mathbf{X}(x,t)$ and $\mathbf{Y}(x,t)$ are $2\times k$ matrices to be determined, and where $\mathbf{D}(\lambda)$ is the $k\times k$ matrix
\begin{equation}
\mathbf{D}(\lambda):=\begin{bmatrix}(\lambda+\ii)^{-1} &0 &0&0 & \cdots &0\\
(\lambda+\ii)^{-2} & (\lambda+\ii)^{-1} & 0 &0& \cdots & 0\\
(\lambda+\ii)^{-3}&(\lambda+\ii)^{-2} & (\lambda+\ii)^{-1} &0&\cdots & 0\\
\vdots & \ddots & \ddots &\ddots & \cdots & 0\\
(\lambda+\ii)^{-k} & (\lambda+\ii)^{-k+1}& (\lambda+\ii)^{-k+2} & (\lambda+\ii)^{-k+3} & \cdots &(\lambda+\ii)^{-1}\end{bmatrix}.
\label{eq:D-matrix-define}
\end{equation} 
In other words, $\mathbf{D}(\lambda)$ is the Toeplitz matrix of dimension $k\times k$ with symbol $d(z;\lambda):=(\lambda+\ii-z^{-1})^{-1}$, so that if $|\lambda+\ii|<1$, the Fourier coefficients $d_m(\lambda)$ are given by
\begin{equation}
d_m(\lambda):=\frac{1}{2\pi}\int_{-\pi}^\pi d(\ee^{\ii\theta};\lambda)\ee^{-\ii m\theta}\,\dd\theta = \frac{1}{2\pi\ii}\oint_{|z|=1}d(z;\lambda)z^{-m-1}\,\dd z = (\lambda+\ii)^{-m-1}\delta_{m\ge 0},
\end{equation}
and hence $D_{ij}(\lambda)=d_{i-j}(\lambda)$.  Without further conditions on $\mathbf{X}(x,t)$ and $\mathbf{Y}(x,t)$, we see that  $\mathbf{N}^{(k)}(\lambda;x,t)\to\mathbb{I}$ as $\lambda\to\infty$, and (since the circle $\Sigma_\circ$ encloses the points $\lambda=\pm\ii$) is analytic in the exterior domain.  Therefore, it only remains to define $\mathbf{N}^{(k)}(\lambda;x,t)$ as an analytic function in the \emph{interior} domain in such a way that the jump condition \eqref{eq:N-jump-even} relevant for  $k=2n$ even holds on $\Sigma_\circ$.  Since $\mathbf{U}(\lambda;x,t)$ defined by \eqref{eq:entire} is an entire function of $\lambda$ with unit determinant, the jump condition \eqref{eq:N-jump-even} or \eqref{eq:N-jump-odd} actually serves to define $\mathbf{N}^{(k)}(\lambda;x,t)$ in the interior domain by analytic continuation, except possibly for the points $\lambda=\pm\ii$ which are pole singularities of both the exterior domain ansatz \eqref{eq:rational-ansatz} and the jump matrix.  Therefore, the ansatz \eqref{eq:rational-ansatz} gives the (necessarily unique) solution of Riemann-Hilbert Problem~\ref{rhp:renormalized} in the exterior domain for $k=2n$, provided that $\mathbf{X}(x,t)$ and $\mathbf{Y}(x,t)$ are chosen so that
\begin{equation}
\left(\frac{\lambda+\ii}{\lambda-\ii}\right)^n\left(\mathbb{I}-\mathbf{Y}(x,t)\mathbf{D}(\lambda)\mathbf{X}(x,t)^\top\right)\mathbf{U}(\lambda;x,t)\mathbf{Q}\left(\frac{\lambda+\ii}{\lambda-\ii}\right)^{n\sigma_3}\mathbf{Q}^{-1}\mathbf{U}(\lambda;x,t)^{-1}\quad
\text{is analytic at $\lambda=\pm\ii$.}
\end{equation}
The final factors of $\mathbf{Q}^{-1}\mathbf{U}(\lambda;x,t)^{-1}$ are analytic and invertible near $\lambda=\pm\ii$ in particular, so they may be omitted.  The product of remaining terms in general has poles of order $n$ at each of the two points $\lambda=\pm\ii$.  Since the factor $((\lambda+\ii)/(\lambda-\ii))^{n\sigma_3}$ now appears as the right-most term in the product, we may easily formulate the conditions on $\mathbf{X}(x,t)$ and $\mathbf{Y}(x,t)$ necessary to remove the singularities as conditions on column vectors.  Writing $\mathbf{Q}=[\mathbf{q}^{(1)};\mathbf{q}^{(2)}]$, these conditions now read as follows:  
\begin{equation}
\left(\mathbb{I}-\mathbf{Y}(x,t)\mathbf{D}(\lambda)\mathbf{X}(x,t)^\top\right)\mathbf{U}(\lambda;x,t)\mathbf{q}^{(2)}\quad\text{should be analytic at $\lambda=-\ii$, and}
\label{eq:k-even-singularity}
\end{equation}
\begin{equation}
\left(\mathbb{I}-\mathbf{Y}(x,t)\mathbf{D}(\lambda)\mathbf{X}(x,t)^\top\right)\mathbf{U}(\lambda;x,t)\mathbf{q}^{(1)}\quad\text{should vanish to order $k$ at $\lambda=\ii$.}
\label{eq:k-even-kernel}
\end{equation}
In the literature on Darboux transformations, the conditions \eqref{eq:k-even-singularity} 
are frequently called \emph{residue conditions} while the conditions \eqref{eq:k-even-kernel} 
are frequently called \emph{kernel conditions}, which are imposed on a \emph{Darboux matrix} of the form $\mathbb{I}-\mathbf{Y}\mathbf{D}\mathbf{X}^\top$.

A key observation in the theory of Darboux transformations is that the residue conditions \eqref{eq:k-even-singularity}
 can be satisfied by making an explicit choice of the $2\times k$ matrix $\mathbf{X}(x,t)$.  Firstly, we observe that since $\mathbf{U}(\lambda;x,t)$ is analytic at $\lambda=-\ii$, any potential singularities come from the second term $-\mathbf{Y}(x,t)\mathbf{D}(\lambda)\mathbf{X}(x,t)^\top$, and for satisfying these conditions it is sufficient to drop the $\lambda$-independent prefactor of $\mathbf{Y}(x,t)$.  Thus, we will replace \eqref{eq:k-even-singularity} by the \emph{sufficient} condition
\begin{equation}
\mathbf{D}(\lambda)\mathbf{X}(x,t)^\top\mathbf{U}(\lambda;x,t)\mathbf{q}^{(2)}\quad\text{should be analytic at $\lambda=-\ii$}
\label{eq:sufficient-condition}
\end{equation}
which, taking separately into account the $k$ rows of $\mathbf{D}(\lambda)$ and introducing the notation 
\begin{equation}
\mathbf{X}(x,t)=[\mathbf{x}^{(1)}(x,t);\mathbf{x}^{(2)}(x,t);\cdots;\mathbf{x}^{(k)}(x,t)], 
\end{equation}
is equivalent to the $k$ conditions
\begin{equation}
\left(\sum_{\ell=0}^m\mathbf{x}^{(\ell+1)}(x,t)^\top(\lambda+\ii)^\ell\right)\mathbf{U}(\lambda;x,t)\mathbf{q}^{(2)}\quad\text{vanishes at $\lambda=-\ii$ to order $m+1$},\quad m=0,\dots,k-1.
\end{equation}

It is easy to check that all $k$ of these conditions are actually implied just by the final condition for $m=k-1$.  To obtain $\mathbf{x}^{(j)}(x,t)$ such that this final condition holds, we first introduce notation $\{\mathbf{v}^{(j)\pm}_\ell(x,t)\}_{\ell=0}^\infty$, $j=1,2$, for the Taylor coefficients of $\mathbf{U}(\lambda;x,t)\mathbf{q}^{(j)}$ about $\lambda=\pm\ii$:
\begin{equation}
\mathbf{U}(\lambda;x,t)\mathbf{q}^{(j)} = \sum_{\ell=0}^\infty\mathbf{v}^{(j)\pm}_\ell(x,t)(\lambda\mp\ii)^\ell, \quad j=1,2.
\label{eq:Uq-Taylor}
\end{equation}
Then we choose $\mathbf{x}^{(j)}(x,t):=\ii\sigma_2\mathbf{v}^{(2)-}_{j-1}(x,t)$.  With this choice,
\begin{multline}
\left(\sum_{\ell=0}^{k-1}x^{(\ell+1)}(x,t)^\top(\lambda+\ii)^\ell\right)\mathbf{U}(\lambda;x,t)\mathbf{q}^{(2)}\\
\begin{aligned}
&=
\left(\sum_{\ell=0}^{k-1}(\ii\sigma_2\mathbf{v}_\ell^{(2)-}(x,t))^\top(\lambda+\ii)^\ell\right)\sum_{\ell=0}^\infty\mathbf{v}_\ell^{(2)-}(\lambda+\ii)^\ell\\
&=-\left(\sum_{\ell=0}^{k-1}\mathbf{v}_\ell^{(2)-}(x,t)(\lambda+\ii)^\ell\right)^\top\ii\sigma_2\left(\sum_{\ell=0}^{k-1}\mathbf{v}_\ell^{(2)-}(x,t)(\lambda+\ii)^\ell +O((\lambda+\ii)^{k})\right)\\
&=O((\lambda+\ii)^{k}),\quad\lambda\to-\ii,
\end{aligned}
\end{multline}
because $\mathbf{x}^\top\ii\sigma_2\mathbf{x}=0$ for all vectors $\mathbf{x}\in\mathbb{C}^2$.  Therefore,
\begin{equation}
\mathbf{x}^{(j)}(x,t):=\ii\sigma_2\mathbf{v}_{j-1}^{(2)-}(x,t),\quad j=1,\dots,k\quad\implies\quad
\text{the residue conditions \eqref{eq:k-even-singularity} hold.}
\label{eq:x-vector-define-even}
\end{equation}

With the matrix $\mathbf{X}(x,t)$ so-determined, the kernel conditions \eqref{eq:k-even-kernel} 
imply a square linear system on the elements of the $2\times k$ matrix $\mathbf{Y}(x,t)$.  In fact, we shall obtain a system of size $k\times k$ separately for each of the two rows of $\mathbf{Y}(x,t)$.  This is important because only the first row of $\mathbf{Y}(x,t)$ is needed to construct the rogue wave $\psi_k(x,t)$ of order $k=2n$, since combining \eqref{eq:rogue-wave-recover-2} with \eqref{eq:rational-ansatz} yields the formula
\begin{equation}
\psi_k(x,t)=1-2\ii\sum_{j=1}^{k}Y_{1j}(x,t)X_{2j}(x,t) = 1+\mathbf{a}(x,t)^\top \mathbf{y}(x,t),
\label{eq:psi-k-vector-formula}
\end{equation}
where 
\begin{equation}
\begin{split}
\mathbf{a}(x,t)&:=[-2\ii X_{21}(x,t);-2\ii X_{22}(x,t);\cdots;-2\ii X_{2,k}(x,t)]^\top\\
\mathbf{y}(x,t)&:=[Y_{11}(x,t); Y_{12}(x,t); \cdots ; Y_{1,k}(x,t)]^\top.
\end{split}
\label{eq:y-column-vector}
\end{equation}

To express the kernel conditions \eqref{eq:k-even-kernel},
we need to find the Taylor expansion about $\lambda=\ii$ of the product $(\mathbb{I}-\mathbf{Y}(x,t)\mathbf{D}(\lambda)\mathbf{X}(x,t)^\top)\mathbf{U}(\lambda;x,t)\mathbf{q}^{(j)}$, which means combining \eqref{eq:Uq-Taylor} with the expansion
\begin{equation}
\mathbf{D}(\lambda)=\sum_{\ell=0}^\infty \mathbf{D}_\ell(\lambda-\ii)^\ell,
\end{equation}
where $\mathbf{D}_\ell$ is the $k\times k$ Toeplitz matrix with elements $D_{\ell,ij}:=\gamma_{\ell,i-j+1}\delta_{i\ge j}$, and where
\begin{equation}
\gamma_{\ell m}:=\frac{(-1)^\ell}{(2\ii)^{\ell+m}}\binom{\ell+m-1}{\ell},\quad \ell=0,1,2,3,\dots,\quad m=1,2,3,\dots.
\end{equation}
Then it is easy to see that
\begin{multline}
\left(\mathbb{I}-\mathbf{Y}(x,t)\mathbf{D}(\lambda)\mathbf{X}(x,t)^\top\right)\mathbf{U}(\lambda;x,t)\mathbf{q}^{(j)}=\\
\sum_{m=0}^{k-1}\left(\mathbf{v}^{(j)+}_m(x,t)-\mathbf{Y}(x,t)\sum_{\ell=0}^m\mathbf{D}_\ell\mathbf{X}(x,t)^\top\mathbf{v}_{m-\ell}^{(j)+}(x,t)\right)(\lambda-\ii)^m+O((\lambda-\ii)^{k}),\quad\lambda\to\ii,\quad j=1,2.
\end{multline}

Setting to zero the coefficients of $(\lambda-\ii)^m$ for $m=0,\dots,k-1$ to enforce the kernel conditions \eqref{eq:k-even-kernel} 
then yields a linear system on the column vector $\mathbf{y}(x,t)$ (cf., \eqref{eq:y-column-vector}) of the form
\begin{equation}
\widetilde{\mathbf{K}}(x,t)\mathbf{y}(x,t)=\mathbf{b}(x,t),
\label{eq:My=b}
\end{equation}
where
\begin{equation}
\widetilde{K}_{pq}(x,t):=\sum_{\ell=0}^{p-1}\sum_{m=1}^q\gamma_{\ell,q-m+1}\mathbf{x}^{(q)}(x,t)^\top\mathbf{v}^{(1)+}_{p-\ell}(x,t),\quad b_p(x,t):=v^{(1)+}_{p-1,1}(x,t),\quad 1\le p,q\le k,
\label{eq:M-b-even}
\end{equation}
in which $\mathbf{x}^{(q)}(x,t)$ is defined by \eqref{eq:x-vector-define-even}.  
Since $\mathbf{y}(x,t)=\widetilde{\mathbf{K}}(x,t)^{-1}\mathbf{b}(x,t)$, 
applying the matrix determinant lemma $\det(\widetilde{\mathbf{K}}+\mathbf{b}\mathbf{a}^\top)=(1+\mathbf{a}^\top\widetilde{\mathbf{K}}^{-1}\mathbf{b})\det(\widetilde{\mathbf{K}})$ to \eqref{eq:psi-k-vector-formula} yields the fundamental rogue wave of order $k=2n$ 
as a ratio of $k\times k$ determinants:
\begin{equation}
\psi_k(x,t)=
\frac{\det(\widetilde{\mathbf{K}}(x,t)+\mathbf{b}(x,t)\mathbf{a}(x,t)^\top)}{\det(\widetilde{\mathbf{K}}(x,t))},\quad k=2n.
\label{eq:psi-k-even-case}
\end{equation}

Recall now the explicit formula \eqref{eq:entire} for $\mathbf{U}(\lambda;x,t)$.  It is natural to introduce the Taylor coefficients of the entire functions $(x+\lambda t)\sin(\theta)/\theta$ and $\cos(\theta)$ appearing in \eqref{eq:entire} about $\lambda=\ii$ as follows (recall $\theta=\rho(\lambda)(x+\lambda t)$):
\begin{equation}
S(\lambda;x,t):=(x+\lambda t)\frac{\sin(\theta)}{\theta} = \sum_{\ell=0}^\infty S_\ell(x,t)(\lambda-\ii)^\ell\quad\text{and}\quad
C(\lambda;x,t):=\cos(\theta)=\sum_{\ell=0}^\infty C_\ell(x,t)(\lambda-\ii)^\ell.
\label{eq:SC-expand}
\end{equation}
Since $S(\lambda;x,t)$ and $C(\lambda;x,t)$ are both Schwarz-symmetric functions of $\lambda$ for $(x,t)\in\mathbb{R}^2$, it also holds that
\begin{equation}
S(\lambda;x,t) = \sum_{\ell=0}^\infty S_\ell(x,t)^*(\lambda+\ii)^\ell\quad\text{and}\quad
C(\lambda;x,t)=\sum_{\ell=0}^\infty C_\ell(x,t)^*(\lambda+\ii)^\ell.
\label{eq:SC-conjugate-expand}
\end{equation}
Since $\lambda=\pm\ii +(\lambda\mp \ii)$, combining \eqref{eq:Q-define} with \eqref{eq:entire} and \eqref{eq:Uq-Taylor} gives
\begin{equation}
\mathbf{v}^{(1)+}_\ell(x,t)=\frac{1}{\sqrt{2}}\left(-(2S_\ell(x,t)-\ii S_{\ell-1}(x,t))\begin{bmatrix}-1\\1\end{bmatrix}+C_\ell(x,t)\begin{bmatrix}1\\1\end{bmatrix}\right),\quad \ell=0,1,2,\dots,
\end{equation}
and
\begin{equation}
\mathbf{v}^{(2)-}_\ell(x,t)=\frac{1}{\sqrt{2}}\left((2S_\ell(x,t)-\ii S_{\ell-1}(x,t))^*\begin{bmatrix}1\\1\end{bmatrix}+C_\ell(x,t)^*\begin{bmatrix}-1\\1\end{bmatrix}\right),\quad \ell=0,1,2,\dots,
\end{equation}
where we have adopted the notational convention that $S_{-1}(x,t):=0$. Actually, one can notice that $\widetilde{S}_\ell(x,t):=2S_\ell(x,t)-\ii S_{\ell-1}(x,t)$ are the Taylor coefficients of the function $\widetilde{S}(\lambda;x,t):=(1-\ii\lambda)S(\lambda;x,t)$ about $\lambda=\ii$.  Further setting $\widetilde{S}_\ell(x,t)=(\tfrac{1}{2}\ii)^\ell F_\ell(x,t)$ and $C_\ell(x,t)=(\tfrac{1}{2}\ii)^\ell G_\ell(x,t)$ for $\ell=0,1,2,\dots$, using \eqref{eq:x-vector-define-even} in \eqref{eq:M-b-even} gives
\begin{equation}
\begin{split}
\widetilde{K}_{pq}(x,t)&=\sum_{\ell=0}^{p-1}\sum_{m=1}^q\gamma_{\ell,q-m+1}(-\tfrac{1}{2}\ii)^{m-1}(\tfrac{1}{2}\ii)^{p-\ell-1}\left(F_{m-1}(x,t)^*F_{p-\ell-1}(x,t)+G_{m-1}(x,t)^*G_{p-\ell-1}(x,t)\right)\\
&=
(\tfrac{1}{2}\ii)^{p-1}(-\tfrac{1}{2}\ii)^q\sum_{\mu=0}^{p-1}\sum_{\nu=0}^{q-1}\binom{\mu+\nu}{\mu}
\left(F_{q-\nu-1}(x,t)^*F_{p-\mu-1}(x,t)+G_{q-\nu-1}(x,t)^*G_{p-\mu-1}(x,t)\right),
\end{split}
\end{equation}
for $1\le p,q\le k$ and 
\begin{equation}
b_p(x,t)=\frac{1}{\sqrt{2}}(\tfrac{1}{2}\ii)^{p-1}(F_{p-1}(x,t)+G_{p-1}(x,t)),\quad p=1,\dots,k.
\end{equation}
Finally, we combine \eqref{eq:x-vector-define-even} with \eqref{eq:y-column-vector} and $X_{2j}(x,t)=x^{(j)}_2(x,t)$, $j=1,\dots,k$, to find
\begin{equation}
\begin{split}
a_p(x,t)&=-2\ii X_{2p}(x,t)\\&=-2\ii x^{(p)}_2(x,t)\\&=2\ii v^{(2)-}_{p-1,1}(x,t)\\&=\frac{2\ii}{\sqrt{2}}(-\tfrac{1}{2}\ii)^{p-1}(F_{p-1}(x,t)^*-G_{p-1}(x,t)^*),\quad p=1,\dots,k.
\end{split}
\end{equation}
Therefore,
\begin{equation}
(\mathbf{b}(x,t)\mathbf{a}(x,t)^\top)_{pq}=-2(\tfrac{1}{2}\ii)^{p-1}(-\tfrac{1}{2}\ii)^{q}(F_{p-1}(x,t)+G_{p-1}(x,t))(F_{q-1}(x,t)^*-G_{q-1}(x,t)^*),\quad 1\le p,q\le k.
\end{equation}
By factoring off the invertible diagonal multipliers $\mathrm{diag}((\tfrac{1}{2}\ii)^{p-1})_{p=1}^k$ and $\mathrm{diag}((-\tfrac{1}{2}\ii)^{q})_{q=1}^k$ on the left and right, respectively, from $\widetilde{\mathbf{K}}(x,t)$ and $\mathbf{b}(x,t)\mathbf{a}(x,t)^\top$, one then sees that the formula \eqref{eq:psi-k-even-case} coincides with \eqref{eq:psi-k-determinants} in the case $k=2n$.

\subsection*{The rational ansatz for $k=2n-1$ with $n\in\mathbb{Z}_{>0}$}
In this case, we modify the rational ansatz by assuming that
\begin{equation}
\mathbf{N}^{(k)}(\lambda;x,t)=\left(\frac{\lambda+\ii}{\lambda-\ii}\right)^n\left(\mathbb{I}-\sigma_3\mathbf{Y}(x,t)\mathbf{D}(\lambda)\mathbf{X}(x,t)^\top\sigma_3\right)\mathbf{T}(\lambda),\quad\text{$\lambda$ in the exterior of $\Sigma_\circ$,}
\label{eq:ansatz-odd}
\end{equation}
where again $\mathbf{X}(x,t)$ and $\mathbf{Y}(x,t)$ are $2\times k$ matrices to be determined and  the $k\times k$ Toeplitz matrix $\mathbf{D}(\lambda)$ is defined by \eqref{eq:D-matrix-define}, and where
\begin{equation}
\mathbf{T}(\lambda):=\mathbb{I}-\frac{2\ii}{\lambda+\ii}\mathbf{q}^{(2)}\mathbf{q}^{(2)\top}=\mathbb{I}-\frac{\ii}{\lambda+\ii}\begin{bmatrix}1 & -1\\-1 & 1\end{bmatrix}.
\label{eq:T-define}
\end{equation}
As before, this ansatz is analytic in the exterior domain and satisfies $\mathbf{N}^{(k)}(\lambda;x,t)\to\mathbb{I}$ as $\lambda\to\infty$ for any choices of the matrices $\mathbf{X}(x,t)$ and $\mathbf{Y}(x,t)$.  Insisting that the jump condition \eqref{eq:N-jump-odd} relevant for $k=2n-1$ holds on $\Sigma_\circ$, thereby connecting the ansatz \eqref{eq:ansatz-odd} having poles of order $n$ at $\lambda=\pm\ii$ with a matrix analytic in the interior domain, we arrive at the following conditions on $\mathbf{X}(x,t)$ and $\mathbf{Y}(x,t)$:   the residue conditions that
\begin{equation}
\left(\mathbb{I}-\sigma_3\mathbf{Y}(x,t)\mathbf{D}(\lambda)\mathbf{X}(x,t)^\top\sigma_3\right)\mathbf{T}(\lambda)\mathbf{U}(\lambda;x,t)\mathbf{q}^{(1)}\quad\text{should be analytic at $\lambda=-\ii$, and}
\label{eq:odd-residue}
\end{equation}
the kernel conditions that
\begin{equation}
\left(\mathbb{I}-\sigma_3\mathbf{Y}(x,t)\mathbf{D}(\lambda)\mathbf{X}(x,t)^\top\sigma_3\right)\mathbf{T}(\lambda)\mathbf{U}(\lambda;x,t)\mathbf{q}^{(2)}\quad\text{should vanish to order $2n$ at $\lambda=\ii$.}
\label{eq:odd-kernel}
\end{equation}
Assuming that $\mathbf{X}(x,t)$ and $\mathbf{Y}(x,t)$ are chosen so that these conditions hold, $\mathbf{N}^{(k)}(\lambda;x,t)$ is the solution of Riemann-Hilbert Problem~\ref{rhp:renormalized}, and the rogue wave of order $k=2n-1$ is given by combining \eqref{eq:rogue-wave-recover-2} with \eqref{eq:ansatz-odd}:
\begin{equation}
\psi_k(x,t)=-1+2\ii\sum_{j=1}^kY_{1j}(x,t)X_{2j}(x,t)=-(1+\mathbf{a}(x,t)^\top\mathbf{y}(x,t)),
\label{eq:psi-k-odd}
\end{equation}
where the $k\times 1$ column vectors $\mathbf{a}(x,t)$ and $\mathbf{y}(x,t)$ are given in terms of $\mathbf{X}(x,t)$ and $\mathbf{Y}(x,t)$ by \eqref{eq:y-column-vector}.

\begin{lemma}
The matrix $\mathbf{T}(\lambda)$ has the following properties:
\begin{itemize}
\item[(i)] $\sigma_3\mathbf{T}(\lambda)$ commutes with $\mathbf{U}(\lambda;x,t)$:
\begin{equation}
\sigma_3\mathbf{T}(\lambda)\mathbf{U}(\lambda;x,t)=\mathbf{U}(\lambda;x,t)\sigma_3\mathbf{T}(\lambda).
\end{equation}
\item[(ii)] $\mathbf{T}(\lambda)$ acts on the columns of $\mathbf{Q}$ as follows:
\begin{equation}
\mathbf{T}(\lambda)\mathbf{q}^{(1)}=\mathbf{q}^{(1)}\quad\text{and}\quad
\mathbf{T}(\lambda)\mathbf{q}^{(2)}=\frac{\lambda-\ii}{\lambda+\ii}\mathbf{q}^{(2)}.
\end{equation}
\end{itemize}
\label{lemma:T}
\end{lemma}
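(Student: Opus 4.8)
The two assertions are elementary consequences of the rank-one structure of $\mathbf{T}(\lambda)$ recorded in \eqref{eq:T-define} together with the special form \eqref{eq:entire} of $\mathbf{U}(\lambda;x,t)$, so the plan is to carry out two short and essentially independent computations. For part (ii) I would use the representation $\mathbf{T}(\lambda)=\mathbb{I}-\tfrac{2\ii}{\lambda+\ii}\mathbf{q}^{(2)}\mathbf{q}^{(2)\top}$ directly, exploiting the orthonormality of the columns of $\mathbf{Q}$, namely $\mathbf{q}^{(2)\top}\mathbf{q}^{(1)}=0$ and $\mathbf{q}^{(2)\top}\mathbf{q}^{(2)}=1$, both of which follow from $\mathbf{Q}^{-1}=\mathbf{Q}^\top$ in \eqref{eq:Q-define}. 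Applying $\mathbf{T}(\lambda)$ to $\mathbf{q}^{(1)}$ then annihilates the rank-one term and returns $\mathbf{q}^{(1)}$ unchanged, while applying it to $\mathbf{q}^{(2)}$ produces the scalar factor $1-\tfrac{2\ii}{\lambda+\ii}=\tfrac{\lambda-\ii}{\lambda+\ii}$ multiplying $\mathbf{q}^{(2)}$. This requires no further idea.

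For part (i) the crux is a single algebraic simplification. Writing $\mathbf{J}(\lambda):=\begin{bmatrix}-\ii\lambda & 1\\-1 & \ii\lambda\end{bmatrix}$ for the matrix factor appearing in \eqref{eq:entire}, I would compute $\sigma_3\mathbf{T}(\lambda)$ from the explicit form of $\mathbf{T}(\lambda)$ and observe the remarkable identity
\begin{equation}
\sigma_3\mathbf{T}(\lambda)=\frac{\ii}{\lambda+\ii}\begin{bmatrix}-\ii\lambda & 1\\-1 & \ii\lambda\end{bmatrix}=\frac{\ii}{\lambda+\ii}\mathbf{J}(\lambda),
\end{equation}
which reduces to the two scalar relations $\tfrac{\ii}{\lambda+\ii}(-\ii\lambda)=\tfrac{\lambda}{\lambda+\ii}=1-\tfrac{\ii}{\lambda+\ii}$. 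Thus $\sigma_3\mathbf{T}(\lambda)$ is merely a scalar multiple of $\mathbf{J}(\lambda)$. Since for each fixed $(x,t)$ the matrix $\mathbf{U}(\lambda;x,t)=(x+\lambda t)\tfrac{\sin\theta}{\theta}\mathbf{J}(\lambda)+\cos(\theta)\mathbb{I}$ is a scalar-coefficient linear combination of $\mathbf{J}(\lambda)$ and $\mathbb{I}$, and a matrix commutes with any scalar multiple of itself as well as with $\mathbb{I}$, the commutativity $\sigma_3\mathbf{T}(\lambda)\mathbf{U}(\lambda;x,t)=\mathbf{U}(\lambda;x,t)\sigma_3\mathbf{T}(\lambda)$ follows immediately.

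I expect no genuine obstacle: both parts are bookkeeping once the right form is used. The only point meriting attention is recognizing the identity $\sigma_3\mathbf{T}(\lambda)=\tfrac{\ii}{\lambda+\ii}\mathbf{J}(\lambda)$, which is what makes part (i) transparent. Should one prefer not to spot this collapse, an equivalent route is to expand both $\sigma_3\mathbf{T}(\lambda)$ and $\mathbf{J}(\lambda)$ in the basis $\{\sigma_2,\sigma_3\}$ (using $\sigma_3\sigma_1=\ii\sigma_2$, so that $\mathbf{T}(\lambda)=(1-c)\mathbb{I}+c\sigma_1$ with $c:=\tfrac{\ii}{\lambda+\ii}$ gives $\sigma_3\mathbf{T}(\lambda)=(1-c)\sigma_3+\ii c\,\sigma_2$, while $\mathbf{J}(\lambda)=-\ii\lambda\,\sigma_3+\ii\,\sigma_2$) and to check that their coefficient vectors are proportional; the commutator is then a multiple of $[\sigma_3,\sigma_2]$ whose scalar coefficient $\ii(1-c)-c\lambda$ vanishes by the same relation $c(\lambda+\ii)=\ii$. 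Either way the computation is routine and self-contained.
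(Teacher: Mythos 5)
Your proposal is correct, and it is essentially the argument the paper has in mind: the paper omits the proof entirely, remarking only that it is ``elementary, combining the definition \eqref{eq:T-define} of $\mathbf{T}(\lambda)$ with the formul\ae\ \eqref{eq:entire} and \eqref{eq:Q-define},'' and your two computations are precisely that verification. Your observation that $\sigma_3\mathbf{T}(\lambda)=\tfrac{\ii}{\lambda+\ii}\begin{bmatrix}-\ii\lambda & 1\\-1 & \ii\lambda\end{bmatrix}$ is a scalar multiple of the matrix factor in \eqref{eq:entire} is a particularly clean way to organize part (i), making the commutation immediate rather than a brute-force check.
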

The proof is elementary, combining the definition \eqref{eq:T-define} of $\mathbf{T}(\lambda)$ with the formul\ae\ \eqref{eq:entire} and \eqref{eq:Q-define} for $\mathbf{U}(\lambda;x,t)$ and $\mathbf{Q}$ respectively.  Using Lemma~\ref{lemma:T} and the fact that 
\begin{equation}
\sigma_3\mathbf{q}^{(1)}=-\mathbf{q}^{(2)}\quad\text{and}\quad\sigma_3\mathbf{q}^{(2)}=-\mathbf{q}^{(1)},
\end{equation}
we rewrite the residue conditions \eqref{eq:odd-residue} as
\begin{multline}
\left(\mathbb{I}-\sigma_3\mathbf{Y}(x,t)\mathbf{D}(\lambda)\mathbf{X}(x,t)^\top\sigma_3\right)\mathbf{T}(\lambda)\mathbf{U}(\lambda;x,t)\mathbf{q}^{(1)}\\
\begin{aligned}
&=
\left(\mathbb{I}-\sigma_3\mathbf{Y}(x,t)\mathbf{D}(\lambda)\mathbf{X}(x,t)^\top\sigma_3\right)
\sigma_3\mathbf{U}(\lambda;x,t)\sigma_3\mathbf{T}(\lambda)\mathbf{q}^{(1)}\\
&=\left(\mathbb{I}-\sigma_3\mathbf{Y}(x,t)\mathbf{D}(\lambda)\mathbf{X}(x,t)^\top\sigma_3\right)
\sigma_3\mathbf{U}(\lambda;x,t)\sigma_3\mathbf{q}^{(1)}\\&=
-\left(\mathbb{I}-\sigma_3\mathbf{Y}(x,t)\mathbf{D}(\lambda)\mathbf{X}(x,t)^\top\sigma_3\right)
\sigma_3\mathbf{U}(\lambda;x,t)\mathbf{q}^{(2)}\quad\text{should be analytic at $\lambda=-\ii$,}
\end{aligned}
\label{eq:odd-residue-rewrite}
\end{multline}
or, since $\mathbf{U}(\lambda;x,t)$ is entire and $\sigma_3^2=\mathbb{I}$, equivalently,
\begin{equation}
\mathbf{Y}(x,t)\mathbf{D}(\lambda)\mathbf{X}(x,t)^\top\mathbf{U}(\lambda;x,t)\mathbf{q}^{(2)}\quad
\text{should be analytic at $\lambda=-\ii$.}
\label{eq:odd-residue-rewrite-again}
\end{equation}
It is therefore \emph{sufficient} to enforce the condition \eqref{eq:sufficient-condition}.
Similarly, we rewrite the kernel conditions \eqref{eq:odd-kernel} as
\begin{multline}
\left(\mathbb{I}-\sigma_3\mathbf{Y}(x,t)\mathbf{D}(\lambda)\mathbf{X}(x,t)^\top\sigma_3\right)\mathbf{T}(\lambda)\mathbf{U}(\lambda;x,t)\mathbf{q}^{(2)}\\
\begin{aligned}
&=\left(\mathbb{I}-\sigma_3\mathbf{Y}(x,t)\mathbf{D}(\lambda)\mathbf{X}(x,t)^\top\sigma_3\right)\sigma_3\mathbf{U}(\lambda;x,t)\sigma_3\mathbf{T}(\lambda)\mathbf{q}^{(2)}\\
&=\frac{\lambda-\ii}{\lambda+\ii}\left(\mathbb{I}-\sigma_3\mathbf{Y}(x,t)\mathbf{D}(\lambda)\mathbf{X}(x,t)^\top\sigma_3\right)\sigma_3\mathbf{U}(\lambda;x,t)\sigma_3\mathbf{q}^{(2)}\\
&=-\frac{\lambda-\ii}{\lambda+\ii}\left(\mathbb{I}-\sigma_3\mathbf{Y}(x,t)\mathbf{D}(\lambda)\mathbf{X}(x,t)^\top\sigma_3\right)\sigma_3\mathbf{U}(\lambda;x,t)\mathbf{q}^{(1)}\quad\text{should vanish to order $2n$ at $\lambda=\ii$,}
\end{aligned}
\label{eq:odd-kernel-rewrite}
\end{multline}
or, equivalently, since $k=2n-1$ and $\sigma_3^2=\mathbb{I}$, the condition \eqref{eq:k-even-kernel} is required (although now with the interpretation that $k=2n-1$ is odd).  Since the residue and kernel conditions \eqref{eq:odd-residue}--\eqref{eq:odd-kernel} have thus been reduced to exactly the same conditions \eqref{eq:k-even-kernel}--\eqref{eq:sufficient-condition} as were used in the case $k=2n$ to determine the matrices $\mathbf{X}(x,t)$ and $\mathbf{Y}(x,t)$ (although again now taking $k$ to be odd), we may again define the columns of $\mathbf{X}(x,t)$ by \eqref{eq:x-vector-define-even} and obtain the first row of $\mathbf{Y}(x,t)$ rewritten as a column vector $\mathbf{y}(x,t)$ by the solution of the linear system \eqref{eq:My=b}--\eqref{eq:M-b-even}.  As in the case $k=2n$, we then obtain
\begin{equation}
\psi_k(x,t)=-\frac{\det(\widetilde{\mathbf{K}}(x,t)+\mathbf{b}(x,t)\mathbf{a}(x,t)^\top)}{\det(\widetilde{\mathbf{K}}(x,t))},\quad k=2n-1.
\end{equation}
Once again, by factoring off suitable diagonal multipliers, this formula is equivalent to \eqref{eq:psi-k-determinants} when $k$ is odd.

\subsection*{Solvability of the linear system \eqref{eq:My=b}--\eqref{eq:M-b-even}}
Note that the matrix $\widetilde{\mathbf{K}}(x,t)$ can be written in the form
\begin{multline}
\widetilde{\mathbf{K}}(x,t)=\mathrm{diag}((\tfrac{1}{2}\ii)^0,(\tfrac{1}{2}\ii)^1,\dots,(\tfrac{1}{2}\ii)^{k-1})\\{}\cdot
\left(\mathbf{F}(x,t)\mathbf{S}\mathbf{F}(x,t)^\dagger + \mathbf{G}(x,t)\mathbf{S}\mathbf{G}(x,t)^\dagger\right)
\mathrm{diag}((-\tfrac{1}{2}\ii)^1,(-\tfrac{1}{2}\ii)^2,\dots,(-\tfrac{1}{2}\ii)^{k}),
\end{multline}
where $\mathbf{F}(x,t)$ and $\mathbf{G}(x,t)$ are Toeplitz matrices:
\begin{equation}
\mathbf{F}(x,t):=\begin{bmatrix}F_0(x,t) & 0&\cdots & \cdots & 0\\
F_1(x,t) & F_0(x,t) & 0 &\cdots  &0\\
\vdots & \ddots & \ddots & \ddots & \vdots \\
F_{k-2}(x,t) & F_{k-3}(x,t) &\cdots & F_0(x,t) & 0\\
F_{k-1}(x,t) & F_{k-2}(x,t) & \cdots & F_1(x,t) & F_0(x,t)\end{bmatrix},
\end{equation}
\begin{equation}
\mathbf{G}(x,t):=\begin{bmatrix}G_0(x,t) & 0&\cdots & \cdots & 0\\
G_1(x,t) & G_0(x,t) & 0 &\cdots &0\\
\vdots & \ddots & \ddots &\ddots&\vdots\\
G_{k-2}(x,t) & G_{k-3}(x,t) & \cdots & G_0(x,t) & 0\\
G_{k-1}(x,t) & G_{k-2}(x,t) & \cdots & G_1(x,t) & G_0(x,t)\end{bmatrix},
\end{equation}
and $\mathbf{S}$ is the symmetric Pascal matrix of binomial coefficients:
\begin{equation}
\mathbf{S}:=\begin{bmatrix}\binom{0}{0} & \binom{1}{0} &\binom{2}{0}&\cdots & \binom{k-1}{0}\\
\binom{1}{1} & \binom{2}{1} & \binom{3}{1} &\cdots& \binom{k}{1}\\
\binom{2}{2} &\binom{3}{2} & \binom{4}{2} & \cdots &\binom{k+1}{2}\\
\vdots & \vdots &\vdots &&\vdots\\
\binom{k-1}{k-1} & \binom{k}{k-1} & \binom{k+1}{k-1} &\cdots &\binom{2k-2}{k-1}
\end{bmatrix}.
\end{equation}
Since $F_0(x,t)=\widetilde{S}_0(x,t)=2S_0(x,t)=2(x+\ii t)$ and $G_0(x,t)=C_0(x,t)=1$, $\mathbf{F}(x,t)$ is invertible except when $x=t=0$ and $\mathbf{G}(x,t)$ is invertible for all $(x,t)\in\mathbb{R}^2$.  Furthermore, $\mathbf{S}$ is positive definite because $\det((S_{jk})_{j,k=1,\dots,p})=1$ for all $p=1,\dots,k$.  Therefore, $\mathbf{F}(x,t)\mathbf{S}\mathbf{F}(x,t)^\dagger$ is positive semidefinite while $\mathbf{G}(x,t)\mathbf{S}\mathbf{G}(x,t)^\dagger$ is positive definite.  It follows that $\det(\widetilde{\mathbf{K}}(x,t))\neq 0$ for all $(x,t)\in\mathbb{R}^2$.  Therefore $\mathbf{Y}(x,t)$ exists for all $(x,t)\in\mathbb{R}^2$ and so the relevant rational ansatz (\eqref{eq:rational-ansatz} for $k=2n$ or \eqref{eq:ansatz-odd} for $k=2n-1$) furnishes a solution of Riemann-Hilbert Problem~\ref{rhp:renormalized}.  It is standard that the solution is unique if it exists.  This completes the proof of Proposition~\ref{prop:Equivalence}.

\end{document}